\newtheorem{theorem}{Theorem}
\newtheorem{corollary}{Corollary}[theorem]
\title{Exact Algorithms for Distance to Unique Vertex Cover}
\author[1]{Foivos Fioravantes}
\author[1]{Dušan Knop}
\author[1]{Nikolaos Melissinos}
\author[1]{Michal Opler}
\author[2]{Manolis Vasilakis}
\affil[1]{Department of Theoretical Computer Science, Faculty of Information Technology, Czech Technical University in Prague, Czech Republic}
\affil[2]{Universit\'{e} Paris-Dauphine, PSL University, CNRS UMR7243, LAMSADE, Paris, France}
\date{}
\newcommand{\defproblem}[3]{
    \vspace{3mm}
    \noindent\fbox{
        \begin{minipage}{.95\columnwidth}
            #1\newline
            \textbf{Input:} #2\\
            \textbf{Task:} #3
        \end{minipage}
    }
    \vspace{3mm}
}
\newif\iflong
\newif\ifshort
\newcommand{\PAUVC}{\textsc{Pre-Assignment for Unification of Minimum Vertex Cover}\xspace}
\newcommand{\MUVC}{\textsc{Modulator to Unique Minimum Vertex Cover}\xspace}
\newcommand{\PAUVCShort}{\textsc{PAU-VC}\xspace}
\newcommand{\MUVCShort}{\textsc{MU-VC}\xspace}
\newcommand{\UQPSAT}{\textsc{UQ Planar 1-in-3 SAT}\xspace}
\newcommand{\true}{\mathtt{true}}
\newcommand{\false}{\mathtt{false}}
\DeclareMathOperator{\tw}{tw}
\DeclareMathOperator{\full}{full}
\DeclareMathOperator{\lab}{lab}
\DeclareMathOperator{\DPt}{DP}
\newcommand{\DPcw}{\DPt^\mathsf{cw}}
\newcommand{\DPtw}{\DPt^\mathsf{tw}}
\newcommand{\DPtree}{\DPt^\mathsf{t}}
\newcommand{\DPltree}{\DPt^\mathsf{lt}}
\newtheorem{observation}{Observation}
\newtheorem{claim}{Claim}
\newenvironment{proofclaim}{\noindent{\em Proof of the claim.}}{\qedclaim}
\newcommand{\qedclaim}{\hfill $\diamond$ \medskip}
\newenvironment{sketch}{\noindent{\it Sketch of proof.}}{\qedclaim}
\begin{document}

\maketitle

\begin{abstract}
    Horiyama et al. (AAAI 2024) studied the problem of generating graph instances that possess a unique minimum vertex cover under specific conditions. Their approach involved pre-assigning certain vertices to be part of the solution or excluding them from it. Notably, for the \textsc{Vertex Cover} problem, pre-assigning a vertex is equivalent to removing it from the graph. Horiyama et al.~focused on maintaining the size of the minimum vertex cover after these modifications. In this work, we extend their study by relaxing this constraint: 
    our goal is to ensure a unique minimum vertex cover, even if the removal of a vertex may not incur a decrease on the size of said cover.


    Surprisingly, our relaxation introduces significant theoretical challenges.
    We observe that the problem is $\Sigma^2_P$-complete, and remains so even for planar graphs of maximum degree 5.
    Nevertheless, we provide a linear time algorithm for trees, which is then further leveraged to show that MU-VC is in \textsf{FPT} when parameterized by the combination of treewidth and maximum degree. Finally, we show that MU-VC is in \textsf{XP} when parameterized by clique-width while it is fixed-parameter tractable (FPT) if we add the size of the solution as part of the parameter. 
\end{abstract}

\section{Introduction}

Addressing \NP-hard problems has long been a central challenge in computer science, driving advancements in algorithmic design and computational theory. These problems, being inherently computationally intractable, have inspired diverse approaches to developing efficient and scalable solutions. Algorithmic strategies for \NP-hard problems are critical to theoretical research and finding applications in areas such as network optimization, scheduling, and data analysis.

In recent years, the intersection of artificial intelligence and traditional algorithm design has opened new pathways for tackling these challenges. AI-driven techniques, including heuristic optimization, machine learning, and hybrid algorithms, offer innovative frameworks for navigating the complexities of \NP-complete problems. These approaches aim to complement conventional methods, enhancing performance and adaptability in real-world applications.

A fundamental aspect of algorithmic research is the construction of robust benchmark datasets. These datasets serve for evaluating and comparing the effectiveness of different algorithms. By providing a standardized testing ground, traditional benchmarks such as TSPLIB~\cite{tsplib}, UCI Machine Learning Repository~\cite{asuncion2007uci}, SATLIB~\cite{hoos2000satlib}, MIPLIB~\cite{koch2011miplib}, LIBSVM~\cite{chang2011libsvm}, and NetworkX graph datasets~\cite{hagberg2008exploring} have laid the groundwork for progress in this field.

More recently, modern datasets tailored to AI-driven algorithm design have emerged.
Datasets like NPHardEval~\cite{nphardeval,nphardeval4v} provide a dynamic benchmark for assessing the reasoning capabilities of large language models through algorithmic questions, including \NP-hard problems.
GraphArena~\cite{grapharena} offers a comprehensive suite of real-world graph-based tasks, enabling the evaluation of AI algorithms in diverse computational challenges, from social networks to molecular structures.
Meanwhile, MaxCut-Bench~\cite{benchmarkmaximumcut} provides an open-source platform to benchmark heuristics and methods based on machine learning specifically for the \textsc{Maximum Cut} problem.
These benchmarks not only expand the scope of algorithmic research, but also bridge traditional approaches with cutting-edge AI innovations.

Recently, the \PAUVC (\PAUVCShort) problem was introduced by Horiyama et al.~\cite{horiyama2024pauvc}, and further studied in~\cite{an2024pre}.
This problem was motivated by the need to create challenging datasets for algorithmic evaluation.
Intuitively, ensuring a solution is unique adds significant complexity, as solvers have no margin for error in identifying the correct solution.
A set $S$ of vertices in a graph $G$ is called a \emph{vertex cover} if $S$ intersects all edges of~$G$.
The objective of \textsc{Minimum Vertex Cover} is to compute a vertex cover of $G$ with the smallest cardinality possible.
The \textsc{Unique Vertex Cover} problem extends this by ensuring that the input graph has a unique minimum vertex cover.
This guarantee imposes additional constraints, making the problem particularly challenging from both theoretical and practical perspectives. 

It is important to note that, in the context of vertex cover, selecting a vertex for inclusion in the cover is equivalent to deleting it from the graph along with all its incident edges.
However, ensuring that a set of vertices belongs to the deletion set can inadvertently increase the size of the minimum vertex cover under this constraint, introducing further complexity.
This leads to the definition of the \MUVC (\MUVCShort for short) problem (refer to \Cref{fig:PAUVCvsMUVC} for an illustration of the difference in the nature of the two problems): given a graph $G=(V,E)$ and an integer $k$, find a set $S\subseteq V$ such that $G-S$ has a unique minimum vertex cover and $|S|\leq k$.

Unlike \PAUVCShort, where the solver must adapt to preselected vertices to enforce uniqueness, \MUVCShort simplifies this process by reverting to solving the \textsc{Minimum Vertex Cover} problem on $G - S$.
This distinction makes \MUVCShort particularly appealing, as it maintains the standard problem formulation while achieving uniqueness.
Although this property holds for \textsc{Vertex Cover}, it does not necessarily extend to other problems, such as \textsc{Dominating Set}.

\begin{figure}
\centering
\scalebox{1.2}{
    \begin{tikzpicture}[every node/.style={circle, draw}]
    \begin{scope}
        \foreach [count=\i] \x/\y in {0/0, -1/1, -1/2, 0/1, 0/2, 1/1, 1/2, -1/-.3, 1/-.3} {
            \node (\i) at (\x, \y) {$\i$};
        }
        
        \foreach \source/\target in {1/2,1/4,1/6,1/8,1/9,2/3,4/5,6/7} {
            \draw (\source) -- (\target);
        }
    \end{scope}
    \begin{scope}[xshift = 4.1cm]
        \foreach [count=\i] \x/\y in {0/0, -1/1, -1/2, 0/1, 0/2, 1/1, 1/2, -1/-.3, 1/-.3} {
            \node (\i) at (\x, \y) {$\i$};
        }
        
        \foreach \source/\target in {1/2,1/4,1/6,1/8,1/9,2/3,4/5,6/7} {
            \draw (\source) -- (\target);
        }

        \begin{scope}[on background layer]
            \foreach \name in {2,4,6} {
                \node[circle, fill=blue!30, inner sep=6pt] at (\name) {};
            }
        \end{scope}
    \end{scope}

    \begin{scope}[xshift = 8.2cm]
        \foreach [count=\i] \x/\y in {0/0, -1/1, -1/2, 0/1, 0/2, 1/1, 1/2} {
            \node (\i) at (\x, \y) {$\i$};
        }
        
        \foreach \source/\target in {1/2,1/4,1/6,2/3,4/5,6/7} {
            \draw (\source) -- (\target);
        }

        \begin{scope}[on background layer]
            \foreach \name in {2,4,6} {
                \node[circle, fill=blue!30, inner sep=6pt] at (\name) {};
            }
        \end{scope}
    \end{scope}
    \end{tikzpicture}
}
    \caption{Left: a graph $G$ with minimum vertex cover of size 4.
    There are many such -- vertex $1$ and then any set of vertices of size $3$ that intersects the pairs $\{2,3\}$, $\{4,5\}$, and $\{6,7\}$.
    Middle: \PAUVCShort solution of size 3 (the only vertex missing at this point is $1$).
    Right: \MUVCShort solution (deleted) $8,9$, and the minimum size of a vertex cover is now 3 (i.e., 5 in total); namely, $2,4,6$ which is now unique.} 
    \label{fig:PAUVCvsMUVC}
\end{figure}
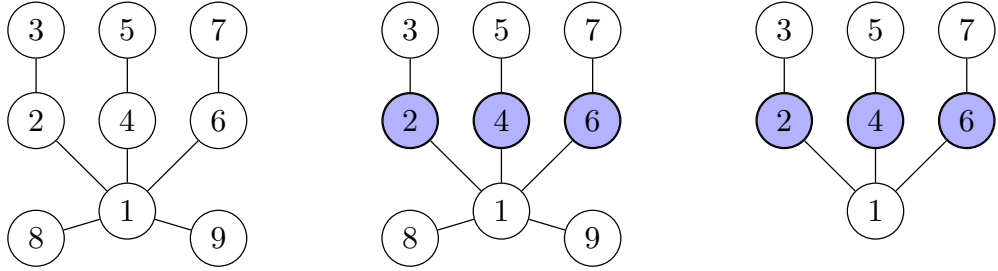

\paragraph{Our Contribution.}

It is easy to see that \MUVCShort is \NP-hard, as it generalizes the \textsc{Unique Optimal Vertex Cover} problem (for $k=0$),
which is known to be at least as hard as \textsc{Unique SAT} which is \NP-hard~\cite{HudryL19}.
Our first result is to precisely determine its complexity and show that \MUVCShort is actually $\Sigma_2^P$-complete,
as is the case for \PAUVCShort~\cite{horiyama2024pauvc};
in fact, in \Cref{thm:sigma_2P} we show that both problems remain so even for very restricted graph classes, namely planar graphs of maximum degree $5$ (which also improves the corresponding result in~\cite{horiyama2024pauvc}).
Motivated by these negative results we proceed to examine \MUVCShort when the input graph $G$ has an even simpler structure, that of a tree. We initially present a polynomial-time algorithm for this case in \Cref{subsec:trees_poly_algo}, which is then improved into an (optimal) linear time algorithm in \Cref{thm:linear-trees}. We then tackle the problem through the parameterized complexity point of view. 
Given that the natural parameterization by $k$ cannot yield any positive results (as evidenced by the case $k=0$),
we proceed by taking into account the structure of $G$. We first consider the parameterization by treewidth, the most well-studied structural parameter. Building upon the algorithm of \Cref{thm:linear-trees}, in \Cref{thm:xp-tw} we employ DP to construct an {\XP} algorithm with running time $n^{O(2^w)}$, where $w$ is the treewidth of $G$, which is then improved into an {\FPT} algorithm in \Cref{thm:fpt-tw+Delta} when paramaterized also by the maximum degree of $G$. Next, we consider the parameterization by clique-width and, in \Cref{thm:xp-cw}, we develop a DP algorithm showing that the problem belongs to {\XP}
and is solvable in $n^{O(2^d)}$ time, where $d$ denotes the clique-width of $G$.
Additionally parameterizing by the natural parameter $k$ lifts the problem to \FPT,
and in \Cref{thm:fpt-cw+k} we develop such an algorithm of running time $k^{O(2^d)} \cdot n$.

\section{Preliminaries}
Throughout the paper we use standard graph notation~\cite{Diestel17}.
All graphs considered are simple, undirected without loops.
Given a graph $G$ and a subset of its vertices $S \subseteq V(G)$, $G[S]$ denotes the subgraph induced by $S$, while $G - S$ denotes $G[V(G) \setminus S]$.
For $x, y \in \mathbb{Z}$, let $[x, y] = \{z \in \mathbb{Z} \mid x \leq z \leq y\}$, while $[x] = [1,x]$. Finally, a $k$-labeled graph~$G$ is a triple $(V, E, \lab_G)$ where $\lab_G \colon V \to [k]$.
Let us now formally define the two problems considered in this paper.

\defproblem{\PAUVC (\PAUVCShort)}
{Graph $G=(V,E)$, integer $k$.}
{Find a set $S\subseteq V$ of size $|S|\leq k$
such that there exists a unique minimum vertex cover of $G$ that contains $S$.}

\defproblem{\MUVC (\MUVCShort)}
{Graph $G=(V,E)$, integer $k$.}
{Find a set $S\subseteq V$, such that $G-S$ has a unique minimum vertex cover and $|S|\leq k$.}

\iflong
Before moving on, allow us to further comment on the difference between \PAUVCShort and \MUVCShort.
In \Cref{fig:PAUVCvsMUVC} we exhibit a graph $G$ where the solutions of \PAUVCShort and \MUVCShort differ by one.
In fact, we can build a family of graphs such that this difference is arbitrarily large. 

\begin{theorem}
    For every $k \geq 3$, there exists a graph with
    minimum solutions $P$ and $M$ for \PAUVCShort and \MUVCShort respectively,
    where $|P| \ge k$ and $|M| = 2$.
\end{theorem}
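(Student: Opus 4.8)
The plan is to generalize the graph of \Cref{fig:PAUVCvsMUVC}, which already realizes the case $k = 3$. For a fixed $k \geq 3$ I would build a graph $G_k$ with a central vertex $c$, together with $k$ pendant edges $a_ib_i$ whose endpoints $a_i$ are all joined to $c$ (so $ca_i, a_ib_i \in E$ for $i \in [k]$), plus two further pendant vertices $p, q$ adjacent only to $c$. The role of the two pendants is exactly as in the figure: they are cheap to delete (good for \MUVCShort) but force the expensive inclusion of $c$ in every minimum vertex cover (bad for \PAUVCShort). The first step is to pin down the minimum vertex covers of $G_k$: taking $c$ covers all edges incident to $c$, leaving one vertex per pair to be chosen, for a cover of size $k+1$, whereas any cover avoiding $c$ must contain $p$, $q$ and every $a_i$, costing $k+2$. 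Hence every minimum vertex cover has size $k+1$ and is exactly $\{c\} \cup \{x_1, \dots, x_k\}$ with $x_i \in \{a_i, b_i\}$, giving $2^k$ of them.

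For the \PAUVCShort lower bound I would argue that any valid solution $S$ must meet every pair $\{a_i, b_i\}$. Indeed $S$ must be contained in some minimum vertex cover $C = \{c\} \cup \{x_1, \dots, x_k\}$, and if $S \cap \{a_j, b_j\} = \emptyset$ for some $j$, then swapping $x_j$ for the other endpoint of the $j$-th pair yields a second minimum vertex cover still containing $S$, contradicting uniqueness. Since the pairs are pairwise disjoint this immediately gives $|P| \geq k$ (and equality is witnessed by $S = \{a_1, \dots, a_k\}$).

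For the \MUVCShort upper bound I would take $M = \{p, q\}$: in $G_k - \{p,q\}$ any cover must pick one vertex per pair, hence has size at least $k$, and a cover of size exactly $k$ must avoid $c$ and therefore contain every $a_i$ to cover $ca_i$, forcing the single cover $\{a_1, \dots, a_k\}$; thus $|M| \leq 2$. The main work, and the step I expect to be the real obstacle, is the matching lower bound $|M| \geq 2$, i.e.\ ruling out deleting zero or one vertices. The zero-deletion case is immediate from the count $2^k > 1$ above. For a single deletion I would check each vertex type in turn: deleting $c$ disconnects the $k$ pairs and leaves $2^k$ optimal covers; deleting a pendant $p$ (or $q$) leaves two optimal configurations, namely taking $c$ or taking the remaining pendant together with all $a_i$, and hence many optimal covers; and deleting any $a_j$ or $b_j$ removes only one pair while leaving $2^{k-1} \geq 4$ optimal covers.

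Since $k \geq 3$, every branch of the case analysis produces at least two distinct minimum vertex covers, so no single deletion suffices and $|M| = 2$, while $|P| \geq k$, as claimed. The only delicate points are bookkeeping the cover sizes in each deletion case correctly and verifying that in the "avoid $c$" configurations the pendants are genuinely forced; both are routine once the structural description of the minimum vertex covers of $G_k$ is fixed.
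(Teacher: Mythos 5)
Your proof is correct, and your graph $G_k$ is in fact the very same graph the paper uses (the paper builds it inductively from the $k=3$ example of \Cref{fig:PAUVCvsMUVC} by repeatedly attaching pendant paths $u u_k u'_k$ to the center, which yields exactly your star with $k$ subdivided edges plus the two leaves $p,q$), but your argument for the \PAUVCShort lower bound takes a genuinely different route. The paper proves $|P(G_k)| \ge k$ by induction on $k$: assuming a solution of size at most $k-1$, it splits on whether the solution meets $\{u_k, u'_k\}$, deriving a non-uniqueness contradiction in one case and a violation of the induction hypothesis (via restriction to $G_{k-1}$) in the other. You instead characterize all minimum vertex covers directly --- each is $\{c\} \cup \{x_1,\dots,x_k\}$ with $x_i \in \{a_i,b_i\}$ --- and run a swap argument: any valid pre-assignment missing a pair $\{a_j,b_j\}$ admits two minimum covers containing it, so the $k$ disjoint pairs force $|P| \ge k$ at once, and your witness $\{a_1,\dots,a_k\}$ even pins down $|P| = k$ exactly, which the paper does not claim. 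Your approach buys a shorter, non-inductive proof with a sharper conclusion, at the cost of needing the full structural description of the minimum covers up front; the paper's induction is more local and would adapt more easily if the gadget attached at each step were varied. On the \MUVCShort side you also go further than the paper: where it merely asserts that no single deletion yields a unique minimum cover, you verify this with an explicit case analysis over all vertex types (center, leaf, $a_j$, $b_j$), each case correctly exhibiting at least $2^{k-1} \ge 4$ or otherwise multiple optimal covers, so the one place the paper hand-waves is the place you are most careful.
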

\begin{proof}
    To ease the exposition, we denote by $P(G)$ ($M(G)$ resp.) a minimum \PAUVCShort (\MUVCShort resp.) solution of $G$.
    We will construct a graph $G_k$, for $k\geq 3$, such that $|P(G_k)|\geq k$ and $|M(G_k)|= 2$.
    For $k=3$, the graph $G_3$ coincides with the one illustrated in \Cref{fig:PAUVCvsMUVC};
    let us denote by $u$ the vertex labeled $1$ and by $v_1$ and $v_2$ the vertices labeled $8$ and $9$ respectively in that figure.
    Then, starting from the graph $G_{k-1}$, we construct the graph $G_k$ by introducing vertices $u_k,u'_k$ and adding the edges
    $u u_k$ and $u_k u'_k$.
    Before moving on, observe that for all $k \ge 3$, every minimum vertex cover of $G_k$ must contain the vertex $u$,
    as otherwise it would have to contain both the vertices $v_1$ and $v_2$. 

    We argue that for all $k \ge 3$, it holds that $|M(G_k)|= 2$.
    First observe that $|M(G_k)|\leq 2$.
    Indeed, the graph $G'=G_k-\{v_1,v_2\}$ is a star with $k$ leaves whose edges have been subdivided once.
    Thus, any minimum vertex cover of $G'$ has at least $k$ vertices.
    The only such vertex cover of $G'$ is the one that contains its $k$ vertices of degree $2$,
    as this is the only set that simultaneously covers all the $k$ edges that are incident to $u$ and the $k$ edges that are incident to leaves.
    It is also straightforward to observe that deleting any single vertex from $G_k$ does not result in a graph with a unique minimum vertex cover. 
    Thus, $|M(G_k)|=2$ for all $k \geq 3$. 
    
    We now argue that for all $k \ge 3$, it holds that $|P(G_k)| \geq k$.
    The proof of the statement is done by induction on $k$.
    For $k=3$ the statement is correct, as illustrated by \Cref{fig:PAUVCvsMUVC}.
    Assume now that the statement is correct for the graph $G_i$, for every $i \in [3,k-1]$;
    we will show it is also correct for $G_k$.
    Towards a contradiction, assume that $|P(G_k)|\leq k-1$. We distinguish two cases: 
    \begin{itemize}
        \item $P(G_k)\cap \{u_k,u'_k\}= \emptyset$.
        In other words, none of the vertices added to $G_{k-1}$ to build $G_k$ are in the \PAUVCShort solution of $G_k$.
        Then, any minimum vertex cover of $G_k$ will have to contain at least one of $u_k$ and $u'_k$.
        Since this vertex cover is also forced to contain $u$, both of these options are valid.
        This is a contradiction to the uniqueness of the vertex cover induced by $P(G_k)$.
        \item $P(G_k)\cap \{u_k,u'_k\}\neq \emptyset$. 
        Then, since $|P(G_k)|\leq k-1$, we have that $|P'|=|P(G_k)\cap V(G_{k-1})|\leq k-2$.
        This is a contradiction, as in this case $P'$ would be a \PAUVCShort solution of $G_{k-1}$ and
        we would have that $|P(G_{k-1})| \leq |P'| \leq k-2$, contradicting the induction hypothesis.
    \end{itemize}
    This completes the proof. 
\end{proof}
\fi
\paragraph*{Parametrized Complexity.}
The toolkit of parametrized complexity allows us to circumvent many of the limitations of classical measures of time (and space) complexity. This is achieved by considering additional measures that can affect the running time of an algorithm; these additional measures are exactly what we refer to as \textit{parameters}. The goal here is to construct exact algorithms that run in time $f(k)\cdot\operatorname{poly}(n)$, where $f$ is a computable function, $n$ is the size of the input and $k$ is the parameter; such algorithms are referred to as \textit{fixed-parameter tractable} (FPT). A problem admitting such an algorithm is said to belong in \FPT. Failing to achieve such an algorithm for a problem, we can instead try to show that it is \emph{slicewise polynomial}, i.e., that it can be determined in $n^{f(k)}$ time.
Such a problem then belongs to the class \XP.
We refer the interested reader to now classical monographs~\cite{CyganFKLMPPS15,downey2012parameterized,FlumG06,Niedermeier06} for a more comprehensive introduction to this topic.

\paragraph*{Structural Parameters.}

The most well-known structural parameter is that of \textit{treewidth}. Allow us to define it properly. 

A \emph{tree-decomposition} of $G$ is a pair $(T,\{B_x \mid x \in V(T) \})$, where~$T$ is a tree rooted at a node $r\in V(T)$, each node~$x$ in~$T$ is assigned a \emph{bag} $B_x$, and the following conditions hold:
\begin{itemize}
	\item for every edge $\{u,v\}\in E(G)$ there is a node $x\in V(T)$ such that $u,v\in B_x$, and
	\item for every vertex $v\in V$, the set of nodes $x$ with $v\in B_x$ induces a connected subtree of $T$.
\end{itemize}
The \emph{width} of a tree-decomposition $(T,\{B_x \mid x \in V(T) \})$ is $\max_{x\in V(T)} |B_x|-1$, and the treewidth $\tw(G)$ of a graph~$G$ is the minimum width of a tree-decomposition of~$G$.
It is known that computing a tree-decomposition of minimum width is in {\FPT} when parameterized by the treewidth~\cite{Bodlaender96,Kloks94}, and even more efficient algorithms exist for obtaining near-optimal tree-decompositions~\cite{KorhonenL23}.

A tree-decomposition $(T,\{B_x \mid x \in V(T) \})$ is \emph{nice} if every node $x$ in $T$ is exactly of one of the following four types: (i) Leaf: $x$ is a leaf of $T$ and $|B_x|=0$. (ii) Introduce: $x$ has a unique child $y$ and there exists $v\in V$ such that $B_x=B_y\cup \{v\}$. (iii) Forget: $x$ has a unique child $y$ and there exists $v\in V$ such that $B_y=B_x\cup \{v\}$. (iv) Join: $x$ has exactly two children $y,z$ and $B_x=B_y=B_z$.
Every graph $G$ admits a nice tree-decomposition of width $\tw(G)$~\cite{B98}.

The \textit{clique-width} of a graph $G$ is an important parameter that generalizes the treewidth of $G$~\cite{CourcelleO00}.
A graph of clique-width $d$ can be constructed through a sequence of the following operations on vertices that are labeled with at most $d$ different labels. 
We can use (1) introducing a single vertex~$v$ of an arbitrary label~$i$, denoted $i(v)$, (2) disjoint union of two labeled graphs, denoted $H_1 \oplus H_2$, (3) introducing edges between \emph{all} pairs of vertices of two distinct labels $i$ and $j$ in a labeled graph~$H$, denoted $\eta_{i,j}(H)$, and (4) changing the label of \emph{all} vertices of a given label~$i$ ina labeled graph~$H$ to a different label~$j$ (i.e., collapsing the pair of labels~$i$ and~$j$), denoted $\rho_{i \to j}(H)$.
An expression describes a graph~$G$ if $G$ the final graph given by the expression (after we remove all the labels).
The \emph{width} of an expression is the number of different labels it uses.
The clique-width of a graph is the minimum width of an expression describing it.

\section{Trees}\label{sec:trees}

\begin{theorem} \label{thm:linear-trees}
The \MUVCShort problem can be solved in linear time $O(n)$ on trees.
\end{theorem}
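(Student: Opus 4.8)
The plan is to root the tree $T$ at an arbitrary vertex and design a bottom-up dynamic program that, for every subtree, records the cheapest way (in terms of deleted vertices) to turn it into a \emph{rigid} piece, i.e.\ one whose constrained minimum vertex cover is forced to be unique. The guiding principle is that, since we do not constrain the \emph{size} of the minimum vertex cover of $T-S$ but only its uniqueness, the quantity we optimise is the number of deletions, while uniqueness is a hard constraint that we enforce by a parallel counting argument. For a forest the number of minimum vertex covers is computable by the standard leaf-to-root recurrence, and non-uniqueness arises exactly at \emph{ties}: a vertex whose subtree is equally good whether the vertex is taken into the cover or left out, or a base configuration that already admits two optimal covers. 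The algorithm's job is to delete as few vertices as possible so that every such tie disappears.

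I would attach to each vertex $v$ three statuses: $v$ is in the cover ($I$), out of the cover ($O$), or deleted ($D$); deleting $v$ removes it together with the edge to its parent and splits its children into independent subtrees. For each status $s$ the table entry at $v$ stores the minimum number of deletions inside the subtree $T_v$ that makes the constrained minimum vertex cover of $T_v-S$ unique, together with the two pieces of information needed to interact with the parent: the constrained minimum cover size (needed to decide whether the parent creates a fresh tie) and a flag recording whether that constrained optimum is currently unique. The combination step at an internal vertex $v$ is then the usual vertex-cover recurrence --- if $v\in O$ every undeleted child must be in $I$, whereas if $v\in I$ each child may independently choose its cheaper status --- augmented so that deletion counts add, the uniqueness flag is the conjunction over the children, and, crucially, whenever a child is free to choose between two equally cheap statuses we are forced to spend deletions to break the resulting tie.

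The first milestone is the polynomial-time version (the algorithm of \Cref{subsec:trees_poly_algo}): carry the number of deletions as an explicit coordinate $j\in[0,|T_v|]$ and compute, for each status, the minimum cover size and the capped count $\min(\mathrm{count},2)$ achievable with exactly $j$ deletions; combining children is a tree-knapsack convolution over the budget, giving an $O(n^2)$ running time, and the answer is the least $j\le k$ for which the root yields count $1$. To reach linear time I would prove a structural lemma stating that the budget coordinate is unnecessary: breaking a single tie never requires more than one well-chosen deletion, these deletions can be localised to the vertex (or pendant leaf) responsible for the tie, and the data that must propagate upward --- essentially the sign of the difference between the ``in'' and ``out'' costs together with the uniqueness flag --- is of constant size. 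With such a lemma each child is processed in $O(1)$ time, every vertex is touched once, and the total running time becomes $O(n)$, which is optimal since the input must be read.

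The main obstacle is precisely the interaction between \emph{minimising deletions} and the \emph{non-local} character of uniqueness. The constrained minimum cover sizes of the ``in'' and ``out'' configurations can differ by an unbounded amount (for instance at a high-degree star centre), so a naive state that records these sizes is not of constant size; the heart of the argument is to show that only bounded, sign-level information is ever needed to detect and repair ties, and that an optimal deletion set can always be chosen to act locally. Establishing this monotonicity/locality lemma --- and verifying that it is consistent across the $I$, $O$ and $D$ statuses and with the base cases at the leaves (where a pendant edge is the prototypical tie that a single deletion removes) --- is where the real work lies; once it is in place, the linear-time bound follows routinely.
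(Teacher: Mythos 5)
Your high-level instincts match the paper's: a bottom-up DP, uniqueness tracked by capped counts, and a compression showing that only a three-valued cost-difference plus uniqueness flags must propagate (this is exactly the paper's reduced characteristic $\delta=\min(2,\alpha(0)-\alpha(1))\in\{0,1,2\}$). However, the DP state you define has a genuine flaw. A single deletion set $S$ inside a subtree $T_v$ simultaneously determines \emph{four} coupled quantities: the minimum cover size and (capped) multiplicity under the constraint ``root in'' and under ``root out''. The paper's characteristic $(\alpha(0),\alpha(1),\beta(0),\beta(1))$ records these jointly for one $S$, and the table is indexed by the achievable profile, with the stored value being $\min|S|$. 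Your tables are indexed the other way around — by status (and budget), storing one outcome per status — which breaks the coupling: the cheapest $S$ making the in-optimum unique is in general a different set from the cheapest $S$ making the out-optimum unique, so the parent recurrence cannot legitimately combine your per-status entries. Moreover, ``the minimum cover size and capped count achievable with exactly $j$ deletions'' is not a well-defined pair: distinct deletion sets of size $j$ realize incomparable outcomes (say, out-size $5$ with count $2$ versus out-size $6$ with count $1$), and a smaller constrained size is \emph{not} always preferable, because uniqueness at the parent hinges on comparisons of the form $\alpha(0)$ versus $\alpha(1)+1$; sometimes you need one side to be strictly \emph{larger} to avoid manufacturing a tie above. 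One must therefore retain the whole family of achievable profiles, which is precisely what indexing the table by the characteristic accomplishes.

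A second gap is your framing of subtrees as ``rigid pieces'': enforcing uniqueness of the constrained optimum within every subtree is neither necessary nor optimal. A tie among out-covers of $T_v$ is harmless whenever the in-alternative is strictly cheaper and is the one used globally; the paper keeps non-unique profiles ($\beta(i)=2$) as first-class DP states and only tests feasibility at the very end, via \Cref{claim:tree-final}, where which flag matters depends on $\delta$. Insisting on per-subtree rigidity can strictly overshoot the optimum number of deletions. Finally, the ``locality lemma'' you identify as the crux of the linear-time bound (every tie repaired by one well-placed deletion) is left unproven, and the paper never needs any such repair argument: there is no budget coordinate to eliminate (the budget \emph{is} the table value), and linearity follows because the reduced characteristics form only $3\cdot 4=12$ states whose exact transformations under the $\mathsf{Leaf}$, $\mathsf{Extend}$, and $\mathsf{Join}$ operations are verified directly and proved correct by bottom-up induction (\Cref{subsec:trees_linear_algo}). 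To turn your plan into a correct proof you would need to (i) couple the in/out data per deletion set, (ii) admit non-unique profiles as states, and (iii) swap the roles of budget and profile in the table — after which no locality or monotonicity lemma is required at all.
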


We shall be working with rooted trees where a \emph{rooted tree} is a pair $(T,r)$ such that~$T$ is a tree and $r \in V(T)$.
Similarly, a \emph{rooted forest} is just a pair $(F, r)$ such that $F$ is a forest and $r$ is its one designated vertex.
Let us define a way of representing rooted trees as algebraic terms similar to nice tree decompositions that allows us to moreover keep only one special vertex (the root) in our ``bag''.
First, we denote by $\mathsf{Leaf}(r)$ the singleton rooted tree~$(T, r)$, i.e., $V(T) = \{r\}$ and $E(T) = \emptyset$.
For a rooted tree~$(T', r')$ and $r \notin V(T)$, let $\mathsf{Extend}((T',r'),r)$ be the rooted tree~$(T, r)$ obtained from~$T'$ by adding~$r$ as the new root and joining it to~$r'$ via an edge, i.e., $V(T) = V(T') \cup \{r\}$ and $E(T) = E(T') \cup \{\{r',r\}\}$.

We remark that in terms of nice tree decompositions, the operation $\mathsf{Extend}$ corresponds to introducing the new root~$r$ and immediately afterwards forgetting the old root~$r'$.
For two rooted trees~$(T_1, r)$ and $(T_2, r)$ such that $V(T_1) \cap V(T_2) = \{r\}$, let $\mathsf{Join}((T_1,r), (T_2, r))$ be the rooted tree $(T,r)$ where $T$ is the union of $T_1$ and $T_2$, i.e., $V(T) = V(T_1) \cup V(T_2)$ and $E(T) = E(T_1) \cup E(T_2)$.
This corresponds to the usual join node in nice tree decompositions only restricted to graphs with bags of size exactly one.

We say that an expression built out of the operations $\mathsf{Leaf}$, $\mathsf{Extend}$, and $\mathsf{Join}$ is a \emph{neat tree decomposition}.
Furthermore, we say that a graph~$G$ \emph{admits a neat tree decomposition} if there is such decomposition whose result is exactly~$G$.
It is easy to see that every rooted tree admits a neat tree decomposition that can be efficiently computed.

\begin{observation}\label{obs:rooted-trees}
Every rooted tree~$(T,r)$ admits a neat tree decomposition that can be, moreover, computed in time $O(|V(T)|)$.
\end{observation}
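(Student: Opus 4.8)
The plan is to construct the neat tree decomposition recursively, mirroring the standard procedure for building a tree from its root by combining the subtrees rooted at its children. I would first root the tree at $r$ (if it is not already given rooted) and then argue that a neat tree decomposition exists by structural induction on the size of the subtree.

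\medskip

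\noindent\textbf{Existence.} First I would define, for each vertex $v$ of $T$, a neat tree decomposition $D_v$ whose result is the subtree $(T_v, v)$ hanging below $v$. The base case is a leaf $v$ of the rooted tree: here $D_v = \mathsf{Leaf}(v)$, since $T_v$ is the single vertex $v$. For the inductive step, let $v$ have children $c_1, \dots, c_m$ in the rooted tree, and assume we have already built decompositions $D_{c_1}, \dots, D_{c_m}$ for the subtrees rooted at the children. For each child $c_i$, applying $\mathsf{Extend}(D_{c_i}, v)$ produces a decomposition of the subtree consisting of $v$ together with all of $T_{c_i}$, joined by the edge $\{v, c_i\}$ and rooted at $v$. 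Since all these $m$ extended trees share exactly the vertex $v$ and are otherwise disjoint, I can fold them together with $\mathsf{Join}$: setting
\[
D_v = \mathsf{Join}\bigl(\mathsf{Extend}(D_{c_1}, v), \mathsf{Join}\bigl(\mathsf{Extend}(D_{c_2}, v), \dots \bigr)\bigr)
\]
yields a neat tree decomposition whose result is exactly $(T_v, v)$. Taking $v = r$ gives the desired decomposition of the whole tree. A minor edge case worth noting is a vertex with a single child, where no $\mathsf{Join}$ is needed and $D_v = \mathsf{Extend}(D_{c_1}, v)$ suffices; the empty $\mathsf{Join}$ degenerates naturally.

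\medskip

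\noindent\textbf{Running time.} For the $O(|V(T)|)$ bound, I would observe that the recursion visits each vertex once, and at vertex $v$ it performs one $\mathsf{Extend}$ per child and one $\mathsf{Join}$ per child (beyond the first), i.e.\ $O(\deg^+(v))$ term-building operations, where $\deg^+(v)$ is the number of children. Summing $\sum_v \deg^+(v)$ over all vertices counts each tree edge exactly once, giving a total of $O(|V(T)|)$ operations. The key point to make precise is that the resulting expression is represented by pointers/references so that each $\mathsf{Join}$ and $\mathsf{Extend}$ is a constant-time node-creation step rather than copying the subexpressions; with this representation the construction is a single bottom-up pass over the tree.

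\medskip

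I do not anticipate a genuine obstacle here, as this is essentially a correctness-and-bookkeeping argument. The one step requiring a little care is verifying that the $\mathsf{Join}$ operation is always legitimately applicable, namely that the two rooted trees being joined share \emph{exactly} the root $v$ and no other vertex; this follows because the subtrees $T_{c_1}, \dots, T_{c_m}$ are vertex-disjoint in a tree and each $\mathsf{Extend}$ only adds the common new root $v$. Beyond this, the argument is a routine induction combined with the standard handshake-style edge count for the time bound.
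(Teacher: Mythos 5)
Your proposal is correct and matches the argument the paper has in mind: the paper states this as an observation without proof, considering the bottom-up construction (leaves via $\mathsf{Leaf}$, attaching each child's decomposition via $\mathsf{Extend}$, and folding siblings via nested $\mathsf{Join}$, with the handshake count for the $O(|V(T)|)$ bound) to be routine. Your write-up supplies exactly those details, including the two points that need care — that joined trees share only the root $v$, and that term nodes are created by reference so each operation is $O(1)$ — so nothing is missing.
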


\subsection{Polynomial algorithm}\label{subsec:trees_poly_algo}
We say that a set~$M$ of vertices in a rooted forest $(F, r)$ is of type~1 if $r \in M$ and of type~0 otherwise, i.e., if $r \notin M$.
We define the \emph{reduced size of~$M$} to be~$|M \setminus \{r\}|$, i.e., we do not count the root~$r$.
A pair of functions~$(\alpha, \beta)$ where $\alpha \colon \{0,1\} \to [0, n]$ and $\beta \colon \{0,1\} \to [2]$ is a \emph{characteristic} of a set $S \subseteq V(T)\setminus \{r\}$ if for both $i \in \{0,1\}$,
\begin{itemize}
	\item $\alpha(i)$ is the reduced size of the smallest vertex cover of type~$i$ in the rooted forest~$(T - S, r)$, and
	\item $\beta(i) = 1$ if and only if there is a unique vertex cover of reduced size~$\alpha(i)$ and type~$i$ in $(T - S, r)$.
\end{itemize}
Notice that we only consider sets~$S$ that do not contain the root.
Moreover, observe that a vertex cover~$M$ of type~$0$ in~$(T-S, r)$ can be extended to a vertex cover~$M \cup \{r\}$ of the same reduced size and type~1.
This implies the following inequality for the characteristic of an arbitrary set~$S$.

\begin{observation}\label{obs:tree-alpha-positive}
For a rooted tree $(T,r)$ and a set $S \subseteq T\setminus \{r\}$ with characteristic~$(\alpha, \beta)$, we have $\alpha(1) \le \alpha(0)$.
\end{observation}

Let $G$ be the input tree and let $v \in V(G)$ be its arbitrary vertex.
By~\Cref{obs:rooted-trees}, the rooted tree~$(G,v)$ admits a neat tree decomposition~$\psi$.
The algorithm proceeds by dynamic programming along the neat tree decomposition.
Specifically, for each rooted tree~$(T,r)$ generated by a subexpression of~$\psi$, it stores a dynamic programming table $\DPtree_T[\alpha, \beta]$ such that $(\alpha, \beta)$ is a possible characteristic, and the value of $\DPtree_T[\alpha, \beta]$ contains the size of the smallest set of characteristic~$(\alpha, \beta)$ in~$(T,r)$, or~$\infty$ if no such set exists.

First, we observe that the characteristic of a given set~$S$ carries enough information to decide whether $S$ is a feasible solution to \MUVCShort.
\ifshort
We omit the proof due to space limitations.
\fi

\begin{claim}\label{claim:tree-final}
A set~$S$ is a feasible solution to \MUVCShort in a rooted tree~$(T,r)$ if and only if one of the following holds
\begin{enumerate}
\item $r \notin S$ and  the set $S$ has characteristic $(\alpha, \beta)$ such that either $\alpha(0) < \alpha(1) + 1$ and $\beta(0) = 1$ or $\alpha(1) + 1 < \alpha(0)$ and $\beta(1) = 1$, or
\item $r \in S$ and the set $S \setminus\{r\}$ has characteristic $(\alpha, \beta)$ such that $\beta(1) = 1$.
\end{enumerate}
\end{claim}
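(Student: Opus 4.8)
The plan is to prove both directions at once by directly analyzing the minimum vertex covers of $T - S$, splitting on whether or not the root~$r$ belongs to~$S$, exactly as in the two cases of the statement.

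First I would treat the case $r \notin S$, where $(T - S, r)$ is a rooted tree and every vertex cover is either of type~$0$ (excluding~$r$, so its actual size equals its reduced size) or of type~$1$ (including~$r$, so its actual size is its reduced size plus one). Hence the minimum vertex cover of $T - S$ has size exactly $\min\{\alpha(0), \alpha(1) + 1\}$, witnessed by type-$0$ covers of size~$\alpha(0)$ and by type-$1$ covers of size~$\alpha(1) + 1$. I would then split into three subcases according to the comparison of $\alpha(0)$ and $\alpha(1) + 1$. If $\alpha(0) < \alpha(1) + 1$, then every minimum cover is of type~$0$, so uniqueness is equivalent to $\beta(0) = 1$; symmetrically, if $\alpha(1) + 1 < \alpha(0)$, then every minimum cover is of type~$1$ and uniqueness is equivalent to $\beta(1) = 1$. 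In the remaining subcase $\alpha(0) = \alpha(1) + 1$ there is a minimum cover of type~$0$ and a minimum cover of type~$1$; these are necessarily distinct since exactly one of them contains~$r$, so $T - S$ has at least two minimum vertex covers and $S$ is never feasible. Collecting the subcases yields precisely condition~1.

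Next I would handle the case $r \in S$, writing $S' = S \setminus \{r\}$ so that $(\alpha, \beta)$ is the characteristic of~$S'$ and $T - S = (T - S') - \{r\}$. The key step is to exhibit a bijection between the vertex covers of $T - S$ and the type-$1$ vertex covers of $(T - S', r)$, given by $N \mapsto N \cup \{r\}$ with inverse $M \mapsto M \setminus \{r\}$: the edges of $T - S'$ are those of $T - S$ together with the edges incident to~$r$, and the latter are covered precisely by including~$r$, so $M$ is a type-$1$ cover of $(T - S', r)$ if and only if $M \setminus \{r\}$ is a vertex cover of $T - S$. Since this bijection sends a cover~$N$ of $T - S$ of size~$|N|$ to a type-$1$ cover of reduced size~$|N|$, it restricts to a size-preserving bijection between the minimum covers of $T - S$ and the minimum type-$1$ covers of $(T - S', r)$. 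Therefore $T - S$ has a unique minimum vertex cover if and only if $\beta(1) = 1$, which is condition~2.

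Throughout I would record the auxiliary facts that both $\alpha(0)$ and $\alpha(1)$ are well defined — $V(T - S) \setminus \{r\}$ is always a type-$0$ cover and $V(T - S)$ a type-$1$ cover — so that all three subcases above are meaningful, and I would use \Cref{obs:tree-alpha-positive} (that $\alpha(1) \le \alpha(0)$) to streamline the bookkeeping, noting for instance that the first subcase of case~1 then reduces to $\alpha(0) = \alpha(1)$. I expect the main obstacle to be the equality subcase $\alpha(0) = \alpha(1) + 1$: one must argue carefully that the two witnessing covers are genuinely distinct solutions rather than the same set counted twice, and this is exactly where the separation by type (membership of~$r$) does the work.
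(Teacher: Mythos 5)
Your proposal is correct and follows essentially the same route as the paper's proof: the same case split on whether $r \in S$, the same size accounting via $\min\{\alpha(0), \alpha(1)+1\}$ for type-$0$ versus type-$1$ covers, and the same bijection $M \mapsto M \cup \{r\}$ between vertex covers of $T-S$ and type-$1$ covers of $T-(S\setminus\{r\})$. Your explicit trichotomy with the equality subcase $\alpha(0) = \alpha(1)+1$ (two distinct minimum covers distinguished by membership of~$r$, hence infeasibility) is only a repackaging of what the paper establishes implicitly through the strict inequalities in its two directions.
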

\iflong
\begin{proofclaim}
First, suppose that $r \notin S$ and let $(\alpha, \beta)$ be the characteristic of~$S$ in~$T$.
If $S$ is a feasible solution to \MUVCShort, then there exists a unique minimum vertex cover~$M$ in $T-S$.
If~$M$ is of type~0, then $\alpha(0) = |M|$ and $\beta(0) = 1$.
Furthermore, any vertex cover of type~1 must have size strictly larger than~$|M|$ and thus, its reduced size is strictly larger than $|M| - 1$ and $\alpha(0) - 1 = |M| - 1 <\alpha(1)$.
An analogous argument proves that if $M$ is of type~1, then $\alpha(1) + 1 < \alpha(0)$ and $\beta(1) = 1$.
Conversely, suppose that $\alpha(0) < \alpha(1) + 1$ and $\beta(0) = 1$.
By definition, there is a unique minimum vertex cover~$M$ of size $\alpha(0)$ and type~0 in~$T-S$.
On the other hand, any vertex cover of type~1 has reduced size at least $\alpha(1) > \alpha(0) -1$ and thus, contains at least strictly more than~$\alpha(0) = |M|$ vertices.
It follows that $M$ is a unique minimum vertex cover in~$T-S$.
Again, an analogous argument proves that if $\alpha(1) + 1 < \alpha(0)$ and $\beta(1) = 1$, then there exists a unique minimum vertex cover in~$T-S$ of size $\alpha(1) + 1$.

Now, we deal with the case when $r \in S$.
Let $(\alpha, \beta)$ be the characteristic of $S \setminus \{r\}$ in~$T$.
Observe that a set $M \subseteq V(T)\setminus S$ is a vertex cover of~$T - S$ if and only if $M \cup \{r\}$ is a vertex cover of~$T -(S \setminus \{r\})$.
It follows that $T-S$ has a unique minimum vertex cover if and only if $T -(S \setminus \{r\})$ has a unique minimum vertex cover of type~1, that is, exactly if and only if $\beta(1)=1$.
\end{proofclaim}
\fi

Therefore, after we compute $\DPtree_G[\cdot, \cdot]$ for the input tree~$G$, the algorithm simply returns the minimum value out of (i) $\DPtree_G[\alpha, \beta]$ such that $\alpha(0) < \alpha(1) + 1$ and $\beta(0) = 1$ or $\alpha(1) + 1 < \alpha(0)$ and $\beta(1) = 1$; and (ii) $\DPtree_G[\alpha, \beta] + 1$ such that $\beta(1)=1$.

Now, it remains to show how the individual operations act on the characteristic of a fixed set and how to use this to fill the dynamic programming tables.

\paragraph{Leaf node, $(T,r) = \mathsf{Leaf}(r)$.}
There is only a single possible choice of a set $S \subseteq V(T)\setminus \{r\}$ as the empty set and its characteristic is $(\alpha_{\mathsf{leaf}}, \beta_{\mathsf{leaf}})$ where $\alpha_{\mathsf{leaf}}(0) = \alpha_{\mathsf{leaf}}(1) = 0$ and $\beta_{\mathsf{leaf}}(0) = \beta_{\mathsf{leaf}}(1) = 1$ since there are unique vertex covers of both types with reduced size~0.
We set
\begin{equation}\label{eq:DPT-leaf}
\DPtree_{T}[\alpha, \beta] = \begin{cases}
	0 &\text{if $(\alpha, \beta) = (\alpha_{\mathsf{leaf}}, \beta_{\mathsf{leaf}})$, and}\\
	\infty &\text{otherwise.}
\end{cases}
\end{equation}

\paragraph{Join node, $(T,r) = \mathsf{Join}((T_1, r), (T_2, r))$.}
Let us define a function $f$ acting on pairs of characteristics $(\alpha_1, \beta_1), (\alpha_2, \beta_2)$ as $f((\alpha_1, \beta_1), (\alpha_2, \beta_2)) = (\alpha, \beta)$ where, for both~$i \in \{0,1\}$,
\begin{align}\label{eq:tree-join-char}
\begin{split}
	\alpha(i) &=  \alpha_1(i) + \alpha_2(i) \text{, and}\\
	\beta(i) &= \min(2,\, \beta_1(i) \cdot \beta_2(i)).
\end{split}
\end{align}

\begin{claim}\label{claim:tree-join-correctness}
Let $S_1 \subseteq V(T_1) \setminus \{r\}$ be an arbitrary set of characteristic~$(\alpha_1, \beta_1)$ in~$T_1$, let $S_2 \subseteq V(T_2) \setminus \{r\}$ be an arbitrary set of characteristic~$(\alpha_2, \beta_2)$ in~$T_2$, and let $(\alpha, \beta)$ be the image of $(\alpha_1, \beta_1), (\alpha_2, \beta_2)$ under $f$. Then $S_1 \cup S_2$ has characteristic $(\alpha, \beta)$ in~$T$.
\end{claim}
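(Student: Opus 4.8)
The plan is to prove the claim by analyzing how the minimum vertex covers of $T$ decompose across the two subtrees glued at the common root $r$. The key structural fact is that $T_1$ and $T_2$ share only the vertex $r$ and no edges, so after deleting $S_1 \cup S_2$ the edge sets of $T_1 - S_1$ and $T_2 - S_2$ are disjoint and every edge of $T - (S_1 \cup S_2)$ lies entirely in one of the two parts. I would first establish the bijection underlying the formula: for a fixed type $i \in \{0,1\}$, a set $M \subseteq V(T) \setminus (S_1 \cup S_2)$ is a vertex cover of type $i$ in $T - (S_1 \cup S_2)$ if and only if $M \cap V(T_1)$ is a vertex cover of type $i$ in $T_1 - S_1$ and $M \cap V(T_2)$ is a vertex cover of type $i$ in $T_2 - S_2$. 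The forward direction is immediate since restricting a cover to one part still covers all edges of that part; for the reverse direction, I combine two covers that agree on the shared vertex $r$ (which is exactly why the type $i$ must match on both sides) and observe their union covers all edges because each edge lives in a single part.

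From this correspondence the two displayed equations in \eqref{eq:tree-join-char} follow. For the size equation, note that the reduced size of $M$ equals $|M \setminus \{r\}|$, and since $V(T_1) \cap V(T_2) = \{r\}$, we have $|M \setminus \{r\}| = |(M \cap V(T_1)) \setminus \{r\}| + |(M \cap V(T_2)) \setminus \{r\}|$ — the vertex $r$ is the only possible overlap and it is excluded from the count on both sides, so no double-counting occurs. Minimizing the left side over type-$i$ covers of $T - (S_1 \cup S_2)$ decomposes into independently minimizing each summand, which gives exactly $\alpha(i) = \alpha_1(i) + \alpha_2(i)$.

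For the uniqueness equation, I would argue that the minimum type-$i$ covers of $T - (S_1 \cup S_2)$ are in bijection with pairs $(M_1, M_2)$ of minimum type-$i$ covers of the respective parts. Hence there is a unique minimum type-$i$ cover of $T$ if and only if both parts have a unique minimum type-$i$ cover; this is precisely captured by $\beta(i) = \min(2, \beta_1(i) \cdot \beta_2(i))$, since the product is $1$ exactly when both factors are $1$ and is at least $2$ (capped at $2$) otherwise. I would need to be slightly careful that the characteristic of $S_1 \cup S_2$ is genuinely being computed with respect to $(T, r)$, i.e.\ that $S_1 \cup S_2 \subseteq V(T) \setminus \{r\}$, which holds since neither $S_1$ nor $S_2$ contains $r$.

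I expect the main obstacle to be the uniqueness part: one must verify that distinct pairs $(M_1, M_2)$ yield distinct covers $M_1 \cup M_2$ and that every minimum cover arises this way, and in particular that the minimum on each side is achieved independently so that the count of minimum covers of $T$ factors as a product. The subtlety is ensuring the type-matching on $r$ does not create an extra overlap or force a correlation between the two sides; the earlier \Cref{obs:tree-alpha-positive} and the clean separation of edge sets are what make this factorization valid, and I would lean on them to keep the uniqueness argument rigorous but brief.
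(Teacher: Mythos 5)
Your proposal is correct and follows essentially the same route as the paper's proof: decompose any cover $M$ of $T - (S_1 \cup S_2)$ as $M \cap V(T_1)$ and $M \cap V(T_2)$, observe the types agree with that of $M$ (so the correspondence is a type-preserving bijection), note that reduced sizes add because $r$ is excluded from both counts, and conclude that both the minimum value and uniqueness factor across the two parts. One minor remark: you do not actually need \Cref{obs:tree-alpha-positive} anywhere here — the disjointness of the edge sets and the fixed membership of $r$ (the type $i$) alone make the factorization of minimum covers, and hence of the uniqueness count, go through.
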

\iflong
\begin{proofclaim}
Let $M \subseteq V(T-S)$ and set $M_1 = M \cap V(T_1)$ and $M_2 = M \cap V(T_2)$.
Observe that $M$ is a vertex cover of~$T-S$ if and only if $M_1$ and $M_2$ are vertex covers of $T_1-S_1$ and $T_2-S_2$ respectively and moreover, the type of $M$ in~$T-S$ is the same as the types of $M_1$ and $M_2$ in $T_1-S_1$ and $T_2-S_2$ respectively.
It follows that for both $i \in \{0,1\}$, $M$ is a minimum vertex cover of type~$i$ in~$T-S$ if and only if $M_1$ a minimum vertex cover of type~$i$ in~$T_1-S_1$ and $M_2$ is a minimum vertex cover of type~$i$ in~$T_2-S_2$ and hence, $\alpha(i) = \alpha_1(i) + \alpha_2(i)$.
Here, it is important that $\alpha(i)$ is defined as the reduced size of the minimum vertex cover of type~$i$ and thus, we are not overcounting the root~$r$ for vertex covers of type~$1$.
Finally, there is a unique minimum vertex cover of type~$i$ in~$T-S$ if and only if there are unique minimum vertex covers of type~$i$ in both $T_1-S_1$ and $T_2-S_2$, that is, exactly if and only if $\beta_1(i) = \beta_2(i) = 1$.
\end{proofclaim}
\fi

The computation of $\DPtree_{T}[\cdot, \cdot]$ finds the smallest sum $\DPtree_{T_1}[\alpha_1, \beta_1] + \DPtree_{T_2}[\alpha_2, \beta_2]$ over the preimages of $(\alpha, \beta)$ under~$f$.
That is, we set
\begin{equation}\label{eq:DPT-join}
	\DPtree_{T}[\alpha,\beta] = \min_{((\alpha_1, \beta_1),(\alpha_2,\beta_2)) \in f^{-1}(\alpha, \beta)} \DPtree_{T_1}[\alpha_1, \beta_1] + \DPtree_{T_2}[\alpha_2, \beta_2]
\end{equation}
where we take the minimum over empty set to be~$\infty$.

\paragraph{Extend node, $(T,r) = \mathsf{Extend}((T', r'), r)$.}
In the extend operation, we have two possibilities depending on whether we add the old root~$r'$ to the set~$S$ or not.
Thus, we define two functions $g_{\mathsf{id}}$ and $g_{\mathsf{+}}$ acting on characteristics that describe how the characteristic of a fixed set $S \subseteq V(T') \setminus \{r'\}$ translates to the characteristics of the sets $S$ and $S \cup \{r'\}$ in $T$ respectively.
First, we define $g_{\mathsf{id}}(\alpha', \beta') = (\alpha, \beta)$\footnote{We omit the extra parentheses inside $g_{\mathsf{id}}(\cdot)$ for improved readability.} where
\begin{align}\label{eq:tree-extend-char-1}
\begin{split}
	\alpha(0) &=  \alpha'(1) + 1\\
\alpha(1) &= \min(\alpha'(0),\, \alpha'(1) + 1)\\
\beta(0) &= \beta'(1)\\
\beta(1) &= \begin{cases}
	\beta'(0) &\text{if $\alpha'(0) < \alpha'(1) + 1$,}\\
	\beta'(1) &\text{if $\alpha'(0) > \alpha'(1) + 1$, and}\\
	2 &\text{otherwise.}
\end{cases}
\end{split}
\end{align}
Now, we define $g_{+}(\alpha', \beta') = (\alpha, \beta)$ where
\begin{align}\label{eq:tree-extend-char-2}
\begin{split}
	\alpha(0) &= \alpha(1) =  \alpha'(1) \text{, and}\\
	\beta(0) &= \beta(1) =  \beta'(1).
\end{split}
\end{align}

\begin{claim}\label{claim:tree-extend-correctness}
Let $S \subseteq V(T') \setminus \{r'\}$ be an arbitrary set with characteristic~$(\alpha', \beta')$ in~$T'$. Then $S$ has characteristic $g_\mathsf{id}(\alpha', \beta')$ in~$T$ and $S \cup \{r'\}$ has characteristic $g_+(\alpha', \beta')$ in~$T$.
\end{claim}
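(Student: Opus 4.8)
The plan is to analyze the two target sets separately, relating vertex covers of $T - S$ (resp.\ $T - (S \cup \{r'\})$) to vertex covers of $T' - S$, and exploiting that $T$ is obtained from $T'$ by attaching the fresh root $r$ to $r'$ via the single edge $\{r', r\}$. The guiding principle in both cases is that the restriction $M \cap V(T')$ of any vertex cover $M$ of the larger graph is forced to be a vertex cover of $T' - S$ (all edges of $T'-S$ survive in the larger graph), and conversely any vertex cover of $T' - S$ extends across the new edge in a controlled way. Throughout I would track the \emph{type} in $T$ (membership of $r$) and the \emph{reduced size} (not counting the current root).

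First I would handle $g_{\mathsf{id}}$, i.e.\ the characteristic of $S$ in $T$, where $r, r' \notin S$. For type $0$ covers ($r \notin M$), covering $\{r', r\}$ forces $r' \in M$, so $M \mapsto M$ is a bijection onto the type-$1$ vertex covers of $T' - S$; since $r \notin M$, the reduced size of $M$ in $T$ equals $|M|$, one more than the reduced size of the corresponding type-$1$ cover in $T'$. This yields $\alpha(0) = \alpha'(1) + 1$ and $\beta(0) = \beta'(1)$. For type $1$ covers ($r \in M$) the new edge is already covered by $r$, so $M = M' \cup \{r\}$ for an arbitrary vertex cover $M'$ of $T' - S$ of either type; its reduced size equals $|M'|$, which is $\alpha'(0)$ when $M'$ has type $0$ and $\alpha'(1) + 1$ when $M'$ has type $1$, giving $\alpha(1) = \min(\alpha'(0), \alpha'(1) + 1)$.

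The most delicate point, and the step I expect to demand the most care, is the uniqueness bit $\beta(1)$ for $g_{\mathsf{id}}$, which I would settle by comparing $\alpha'(0)$ with $\alpha'(1) + 1$. If $\alpha'(0) < \alpha'(1) + 1$, every minimum type-$1$ cover of $T - S$ arises from a minimum type-$0$ cover of $T' - S$, so $\beta(1) = \beta'(0)$; symmetrically, if $\alpha'(0) > \alpha'(1) + 1$ then $\beta(1) = \beta'(1)$. In the tie case $\alpha'(0) = \alpha'(1) + 1$, a minimum type-$0$ cover and a minimum type-$1$ cover of $T' - S$ (both of which always exist) produce two \emph{distinct} minimum type-$1$ covers of $T - S$, as they differ on $r'$; hence at least two minimum covers exist and $\beta(1) = 2$, exactly the "otherwise" branch of the definition.

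Finally, for $g_{+}$ (the characteristic of $S \cup \{r'\}$ in $T$) I would first observe that deleting $r'$ detaches $r$, so $r$ is an isolated vertex of $T - (S \cup \{r'\})$ and may be freely included in or excluded from any vertex cover; thus the two types carry identical data and it suffices to compute one of them. The remaining component is $(T' - S) - \{r'\}$, and the bijection $M'' \mapsto M'' \cup \{r'\}$ identifies its vertex covers with the type-$1$ vertex covers of $T' - S$ while matching the relevant sizes (the reduced size of the type-$1$ cover equals $|M''|$). This directly gives $\alpha(0) = \alpha(1) = \alpha'(1)$ and $\beta(0) = \beta(1) = \beta'(1)$; the only thing to double-check is that the freedom to add $r$ creates no spurious minimum covers of a fixed type, but since the presence of $r$ is fixed within each type, uniqueness transfers verbatim from $\beta'(1)$.
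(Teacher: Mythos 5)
Your proof is correct and takes essentially the same route as the paper's: both arguments translate covers of $T-S$ (resp.\ $T-(S\cup\{r'\})$) into covers of $T'-S$ by type, track reduced sizes (the $+1$ when $r'$ stops being the root), and resolve $\beta(1)$ by the trichotomy on $\alpha'(0)$ versus $\alpha'(1)+1$, with your explicit two-distinct-covers argument in the tie case making precise what the paper leaves implicit. Your factoring of the $g_+$ case through the isolated vertex $r$ and the graph $(T'-S)-\{r'\}$ is only a cosmetic variant of the paper's direct bijection $M \mapsto M \cup \{r'\}\setminus\{r\}$.
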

\iflong
\begin{proofclaim}
First, let $(\alpha, \beta)$ be the characteristic of~$S$ in~$T$.
Observe that $M \subseteq V(T)\setminus S$ is a vertex cover of type~0 in~$T-S$ if and only if $M$ is a vertex cover of type~1 in~$T'-S$ since the edge $\{r, r'\}$ has to be covered by~$r'$.
It follows that $\alpha(0) = \alpha'(1) + 1$ and $\beta(0) = \beta'(1)$.
Note that although the size of a minimum vertex cover of type~0 in $T-S$ equals the size of a minimum vertex cover of type~1 in~$T'-S$, the plus~$1$ is due to the fact that $\alpha'(1)$ stores the \emph{reduced} size of the vertex cover, thus not accounting for vertex $r'$.
On the other hand,  $M \subseteq V(T)\setminus S$ is a vertex cover of type~1 in~$T-S$ if and only if $M \setminus \{r\}$ is a vertex cover in~$T'-S$ (of either type).
It follows that $\alpha(1)$ is equal to the smaller of the two values $\alpha'(0)$ and $\alpha'(1) + 1$ where the increase in the latter is again due to $r'$ no longer being the root in~$T$.
Moreover, the minimum vertex cover of type~1 in~$T-S$ is unique if and only if either (i) $\alpha(0) < \alpha(1) + 1$ and there is a unique minimum vertex cover of type~0 in $T'-S$, or (ii) $\alpha(1) + 1 < \alpha(0)$ and there is a unique minimum vertex cover of type~1 in $T'-S$.
This matches exactly the definition of $g_\mathsf{id}(\alpha', \beta')$.

Now, let us denote by $S_{+}$ the set $S \cup \{r'\}$ and let $(\alpha, \beta)$ be its characteristic in~$T$.
Observe that a set $M \subseteq V(T)\setminus S_{+}$ is a vertex cover of~$T - S_{+}$ if and only if $M \cup \{r'\}\setminus\{r\}$ is a vertex cover of~$T' - S$.
It follows that for both $i \in \{0,1\}$, $M \subseteq V(T)\setminus S_{+}$ is a minimum vertex cover of type~$i$ in~$T - S_{+}$ if and only if $M \cup \{r'\}\setminus\{r\}$ is a minimum vertex cover of type~1 in~$T'-S$.
Observe that the reduced size of $M$ in~$T - S_+$ is exactly the same as the reduced size of $M \cup \{r'\}\setminus\{r\}$ in~$T'-S$.
Moreover, the minimum vertex cover of both types is unique if and only if there is a unique minimum vertex cover of type~1 in~$T'-S$.
We obtain $\alpha(i) = \alpha'(1)$ and $\beta(i) = \beta'(1)$ for both $i \in \{0,1\}$ which matches the definition of $g_+(\alpha', \beta')$.
\end{proofclaim}
\fi

The computation of $\DPtree_{T}[\cdot, \cdot]$ finds the minimum between $\DPtree_{T'}[\alpha', \beta']$ over the preimages of $(\alpha, \beta)$ under~$g_\mathsf{id}$
and $\DPtree_{T'}[\alpha', \beta'] + 1$ over the preimages of $(\alpha, \beta)$ under~$g_+$.
That is, we set
\begin{equation}\label{eq:DPT-extend}
	\DPtree_{T}[\alpha,\beta] =\min \left( \begin{gathered}
	\min_{(\alpha', \beta') \in g^{-1}_\mathsf{id}(\alpha, \beta)} \DPtree_{T'}[\alpha', \beta'],\\
	\min_{(\alpha', \beta') \in g^{-1}_+(\alpha, \beta)} \DPtree_{T'}[\alpha', \beta'] + 1
	\end{gathered}   \right)
\end{equation}
where we take the minimum over empty set to be~$\infty$.

\ifshort
\paragraph{}
Let us now discuss the time complexity of the algorithm on an input tree~$G$ with $n$ vertices.
Any neat decomposition of~$G$ has clearly size $O(n)$ and the total number of possible characteristics is $O(n^2)$.
Therefore, a straightforward implementation of the computations in~\eqref{eq:DPT-leaf}, \eqref{eq:DPT-join} and \eqref{eq:DPT-extend} results in a polynomial-time algorithm.
Moreover, we remark that a more careful implementation achieves runtime $O(n^5)$. 
\fi

\iflong
\paragraph{Time complexity.}
Let us now discuss the time complexity of the algorithm on an input tree~$G$ with $n$ vertices.
Any neat decomposition of~$G$ has clearly size $O(n)$ and the total number of possible characteristics is $O(n^2)$.
Moreover, the functions $f$, $g_\mathsf{id}$, and $g_+$ are all evaluated on a single input in constant time.

The final computation and the computation in the leaves of the neat decomposition are both done straightforwardly in $O(n^2)$ time by a single traversal of the respective table~$\DPtree_T[\cdot, \cdot]$.
In an extend node,  a single entry $\DPtree_T[\alpha, \beta]$ is computed in time~$O(n^2)$ by enumerating over all possible characteristics~$(\alpha', \beta')$ (refer to~\eqref{eq:DPT-extend}).
This makes the computation of the whole table~$\DPtree_t[\alpha, \beta]$ finish in $O(n^4)$ time.
Finally, the bottleneck are the join nodes where a trivial implementation would take $O(n^4)$ time per entry (refer to~\eqref{eq:DPT-join}).
To improve over this, we start by initially setting every entry in the table to~$\infty$.
Then we iterate over all possible pairs of characteristics $(\alpha_1, \beta_1)$, $(\alpha_2, \beta_2)$.
For each pair, we first compute the value $f((\alpha_1, \beta_1), (\alpha_2, \beta_2))$, denoted by $(\alpha, \beta)$.
Afterwards, we update $\DPtree_T[\alpha, \beta]$ to $\DPtree_{T_1}[\alpha_1, \beta_1] + \DPtree_{T_2}[\alpha_2, \beta_2]$ but only if it is smaller than its current value.
This takes only $O(n^4)$ time in total.

Overall, the computation takes $O(n^4)$ time for each subexpression of the neat tree decomposition and thus, the whole algorithm terminates in $O(n^5)$ time.
\fi

\paragraph{}
The correctness of the algorithm follows straightforwardly from  Claims~\ref{claim:tree-join-correctness} and~\ref{claim:tree-extend-correctness} by a bottom-up induction over the neat tree decomposition.
\iflong
However, we chose to omit the full proof here and include one only for the more efficient algorithm presented in \Cref{subsec:trees_linear_algo}.
\fi

\subsection{Linear algorithm}\label{subsec:trees_linear_algo}
In order to speed up the algorithm of \Cref{subsec:trees_poly_algo}, we use two ideas that allow us to group the possible characteristics into constantly many classes.
First, the size of the minimum vertex cover is irrelevant in \MUVCShort and therefore, it suffices to capture only the difference $\alpha(0) - \alpha(1)$ to see whether the minimum vertex covers of two possible types have the same size.
This idea together with a reasonable implementation would already bring down the runtime to~$O(n^3)$.
However, we will see that it suffices to remember whether the difference $\alpha(0) - \alpha(1)$ is equal to 0, 1, or whether it is at least~2.

Formally, a \emph{reduced characteristic} of a set $S \subseteq V(T)\setminus \{r\}$ with characteristic $(\alpha, \beta)$ in~$(T, r)$ is a pair~$(\delta, \beta)$ where $\delta = \min(2,\alpha(0) - \alpha(1))$.
By \Cref{obs:tree-alpha-positive}, we see that $\delta \in \{0,1,2\}$.
Moreover, $\beta$ is one of $4$ possible functions and therefore, there are only $3 \cdot 4 = 12$ possible reduced characteristics.

The algorithm follows the same scheme as before.
In particular for each rooted tree~$(T,r)$ generated by a subexpression of the neat tree decomposition, it fills a dynamic programming table $\DPltree_T[\cdot, \cdot]$ such that $\DPltree_T[\delta, \beta]$ contains the size of the smallest set $S \subseteq V(T) \setminus\{r\}$ with reduced characteristic~$(\delta, \beta)$.

We start by observing that the feasibility of a given set~$S$ as a solution to \MUVCShort can be deduced from its reduced characteristic.
\begin{claim}
	A set~$S$ is a feasible solution to \MUVCShort in a rooted tree~$(T,r)$ if and only if one of the following holds
	\begin{enumerate}
		\item $r \notin S$ and  the set $S$ has reduced characteristic $(\delta, \beta)$ such that either $\delta = 0$ and $\beta(0) = 1$, or $\delta = 2$ and $\beta(1) = 1$, or
		\item $r \in S$ and the set $S \setminus\{r\}$ has reduced characteristic $(\delta, \beta)$ such that $\beta(1) = 1$.
	\end{enumerate}
\end{claim}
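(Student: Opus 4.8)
The plan is to reduce this claim directly to \Cref{claim:tree-final}, which already characterizes feasibility in terms of the full characteristic~$(\alpha, \beta)$. The only difference between the two statements is that here the comparisons between $\alpha(0)$ and $\alpha(1)+1$ are replaced by the single value $\delta = \min(2, \alpha(0) - \alpha(1))$. Since the $\beta$-component of a set is identical in its characteristic and in its reduced characteristic, it suffices to show that the $\delta$-conditions faithfully encode the relevant size comparisons, after which both cases fall out by substitution.

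First I would invoke \Cref{obs:tree-alpha-positive} to record that $\alpha(1) \le \alpha(0)$, hence $\alpha(0) - \alpha(1) \ge 0$ and $\delta \in \{0,1,2\}$. The core of the argument is then a short chain of equivalences. The condition $\alpha(0) < \alpha(1) + 1$ appearing in \Cref{claim:tree-final} is equivalent to $\alpha(0) - \alpha(1) \le 0$, which together with $\alpha(0) - \alpha(1) \ge 0$ is equivalent to $\delta = 0$. Symmetrically, $\alpha(1) + 1 < \alpha(0)$ is equivalent to $\alpha(0) - \alpha(1) \ge 2$, which is exactly $\delta = 2$. Substituting these into the two sub-conditions of the first case of \Cref{claim:tree-final} yields precisely ``$\delta = 0$ and $\beta(0) = 1$'' and ``$\delta = 2$ and $\beta(1) = 1$'', establishing the first item. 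The second item, where $r \in S$, depends only on $\beta(1) = 1$, which is untouched by the passage to the reduced characteristic, so it transfers verbatim.

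I do not expect a serious obstacle, as this is essentially a faithful re-encoding of the earlier claim. The one subtlety worth flagging explicitly is the intermediate value $\delta = 1$, i.e.\ $\alpha(0) = \alpha(1) + 1$: here neither inequality of \Cref{claim:tree-final} holds, and indeed $T - S$ then admits a minimum vertex cover of type~$0$ and one of type~$1$ of the same total size (the plus~$1$ accounting for the root in the type-$1$ cover), so no unique minimum vertex cover can exist regardless of~$\beta$. This matches the fact that $\delta = 1$ appears in neither sub-condition of the first case, and it confirms that collapsing every difference $\alpha(0) - \alpha(1) \ge 2$ into the single value $\delta = 2$ discards no information relevant to feasibility.
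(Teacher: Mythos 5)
Your proof is correct and matches the paper's approach exactly: the paper's own proof simply states that the conditions on reduced characteristics syntactically match those on full characteristics in \Cref{claim:tree-final}, and you carry out precisely that check, using \Cref{obs:tree-alpha-positive} to translate $\alpha(0) < \alpha(1)+1$ into $\delta = 0$ and $\alpha(1)+1 < \alpha(0)$ into $\delta = 2$ while the $\beta$-components transfer verbatim. Your remark on the $\delta = 1$ case is a correct (if optional) sanity check confirming that no feasibility information is lost in the collapse to three values.
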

\iflong
\begin{proofclaim}
It suffices to check that the conditions imposed on reduced characteristics exactly match the conditions imposed on full characteristics in \Cref{claim:tree-final}.
\end{proofclaim}
\fi

So after computing the whole table $\DPltree_G[\cdot,\cdot]$ for the input tree~$G$, the algorithm returns the minimum value out of
\ifshort
(i) $\DPltree_G[\delta, \beta]$ where $\delta = 0$ and $\beta(0) = 1$, or $\delta = 2$ and $\beta(1) = 1$; and (ii) $\DPltree_G[\delta, \beta]+1$ where $\beta(1) = 1$.
\fi\iflong
the set
\begin{equation*}
	\left\{\DPltree_G[\delta, \beta] \middle| 
	\begin{gathered}
\delta = 0 \text{ and } \beta(0) = 1, \text{or}\\
\delta = 2 \text{ and } \beta(1) = 1
	\end{gathered}\right\}
	\cup \left\{\DPltree_G[\delta, \beta] + 1 \mid
	\beta(1) = 1\right\}.
\end{equation*}
\fi

The most important part is to show that reduced characteristics still allow similar scheme of computation as the full characteristics.
Namely, we show how the reduced characteristic of a set changes under the $\mathsf{Leaf}$, $\mathsf{Extend}$, and $\mathsf{Join}$ operations.
As before, we omit the proofs of these transformations due to space limitations.

\paragraph{Leaf node, $(T,r) = \mathsf{Leaf}(r)$.}
We have seen that the characteristic of the only possible set~$S \subseteq V(T)\setminus\{r\}$ is $(\alpha_{\mathsf{leaf}}, \beta_{\mathsf{leaf}})$ and thus, its reduced characteristic is $(\delta_{\mathsf{leaf}}, \beta_{\mathsf{leaf}})$ where $\delta_{\mathsf{leaf}} = \alpha_{\mathsf{leaf}}(0) - \alpha_{\mathsf{leaf}}(1) = 0$.
Again, we set
\begin{equation}\label{eq:DPT-reduced-leaf}
\DPltree_T[\delta, \beta] = \begin{cases}
	0 &\text{if $(\delta, \beta) = (\delta_{\mathsf{leaf}}, \beta_{\mathsf{leaf}})$, and}\\
	\infty &\text{otherwise.}
\end{cases}
\end{equation}

\paragraph{Join node, $(T,r) = \mathsf{Join}((T_1, r), (T_2, r))$.}
Let us define a function $f'$ acting on pairs of reduced characteristics $(\delta_1, \beta_1), (\delta_2, \beta_2)$ as $f'((\delta_1, \beta_1), (\delta_2, \beta_2)) = (\delta, \beta)$ where
\begin{align}
\begin{split}
	\delta &=  \min(2,\, \delta_1 + \delta_2) \text{, and}\\
	\beta(i) &= \min(2,\, \beta_1(i) \cdot \beta_2(i)) \text{ for both $i \in \{0,1\}$}.
\end{split}
\end{align}

\begin{claim}\label{claim:tree-reduced-join-correctness}
	Let $S_1 \subseteq V(T_1) \setminus \{r\}$ be an arbitrary set of reduced characteristic~$(\delta_1, \beta_1)$ in~$T_1$, let $S_2 \subseteq V(T_2) \setminus \{r\}$ be an arbitrary set of reduced characteristic~$(\delta_2, \beta_2)$ in~$T_2$. Then $S_1 \cup S_2$ has reduced characteristic $f'((\delta_1, \beta_1),(\delta_2, \beta_2))$ in~$T$.
\end{claim}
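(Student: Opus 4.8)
The plan is to leverage \Cref{claim:tree-join-correctness}, which already describes how \emph{full} characteristics combine under the join operation, and then to observe that passing to reduced characteristics commutes with this combination rule. First I would let $(\alpha_1, \beta_1)$ be the full characteristic of $S_1$ in $T_1$ and $(\alpha_2, \beta_2)$ the full characteristic of $S_2$ in $T_2$; by hypothesis their reduced characteristics are $(\delta_1, \beta_1)$ and $(\delta_2, \beta_2)$, so $\delta_j = \min(2, \alpha_j(0) - \alpha_j(1))$ for $j \in \{1,2\}$. \Cref{claim:tree-join-correctness} then tells us that $S_1 \cup S_2$ has full characteristic $(\alpha, \beta)$ with $\alpha(i) = \alpha_1(i) + \alpha_2(i)$ and $\beta(i) = \min(2, \beta_1(i) \cdot \beta_2(i))$ for both $i \in \{0,1\}$.

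The $\beta$-component then needs no further work: the reduced characteristic of $S_1 \cup S_2$ carries exactly this $\beta$, and the formula $\beta(i) = \min(2, \beta_1(i) \cdot \beta_2(i))$ is verbatim the second line in the definition of $f'$. It remains only to identify the $\delta$-component. Writing $a = \alpha_1(0) - \alpha_1(1)$ and $b = \alpha_2(0) - \alpha_2(1)$, which are nonnegative integers by \Cref{obs:tree-alpha-positive}, we get $\alpha(0) - \alpha(1) = a + b$, so the reduced characteristic of $S_1 \cup S_2$ has $\delta = \min(2, a+b)$, whereas $\delta_1 = \min(2, a)$ and $\delta_2 = \min(2, b)$.

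Thus the entire argument reduces to the elementary clamping identity
\[
	\min(2,\, a + b) = \min\bigl(2,\, \min(2, a) + \min(2, b)\bigr)
\]
for nonnegative integers $a, b$; this is the sole point where any genuine computation occurs and hence the main (though mild) obstacle. I would settle it by a two-case check: if $a + b \le 2$ then $a \le 2$ and $b \le 2$, both inner clamps are vacuous, and both sides equal $a + b$; if $a + b \ge 2$ then the left side is $2$, and $\min(2, a) + \min(2, b) \ge 2$ in every case --- immediately when some term is $\ge 2$, and because $a + b \ge 2$ forces $a = b = 1$ when both lie in $\{0, 1\}$ --- so the right side is $2$ as well. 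This yields $\delta = \min(2, \delta_1 + \delta_2)$, matching the first line of $f'$, and completes the identification of the reduced characteristic of $S_1 \cup S_2$ with $f'((\delta_1, \beta_1), (\delta_2, \beta_2))$.
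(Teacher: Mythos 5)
Your proposal is correct and takes essentially the same route as the paper's proof: both invoke \Cref{claim:tree-join-correctness} to combine the full characteristics, observe that the $\beta$-component is verbatim the definition of $f'$, and then verify that clamping at $2$ commutes with adding the nonnegative differences $\alpha_j(0)-\alpha_j(1)$ (your nonnegativity appeal to \Cref{obs:tree-alpha-positive} matches the paper's). Your two-case split on the size of $a+b$ is merely a reorganization of the paper's case analysis on whether some $\delta_j = 2$ versus both $\delta_j \le 1$.
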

\iflong
\begin{proofclaim}
For both $i \in [2]$, let $(\alpha_i, \beta_i)$ be the full characteristic of $S_i$ in~$T_i$ and moreover, let $(\alpha, \beta)$ and $(\delta, \beta)$ be the full and reduced characteristic of $S_1 \cup S_2$ in~$T$ respectively.
First, observe that the computation of~$\beta$ out of $\beta_1$ and~$\beta_2$ is exactly the same as in \eqref{eq:tree-join-char} and thus, its correctness follows directly from \Cref{claim:tree-join-correctness}.

We can express~$\delta$ as follows
\begin{align*}
\delta &= \min\left(2,\, \alpha(0)-\alpha(1)\right)\\
 &= \min\left(2,\, \alpha_1 (0) + \alpha_2(0) - \alpha_1(1) - \alpha_2(1)\right)\\
  &= \min\left(2,\, \left(\alpha_1 (0) - \alpha_1(1)\right) + \left(\alpha_2(0)  - \alpha_2(1)\right)\right)
\end{align*}
where the first equality is the definition and the second equality is due to \Cref{claim:tree-join-correctness}.

Now, if $\delta_1 = 2$ then $\alpha_1(0) - \alpha_1(1)$ is at least~2 and the same holds for the sum $(\alpha_1 (0) - \alpha_1(1)) + (\alpha_2(0)  - \alpha_2(1))$ since $\alpha_2(0)  - \alpha_2(1)$ is non-negative.
It follows that in this case $\delta = 2$ which matches the definition of $f'$.
The same holds analogously when $\delta_2 = 2$.

It remains to  consider the case when both $\delta_1$ and $\delta_2$ are at most~1.
In that case, we have $\delta_i = \alpha_i(0) - \alpha_i(1)$ for both~$i$ and as a consequence, $\delta = \min(2, \delta_1 + \delta_2)$ which again matches the definition of $f'$.
\end{proofclaim}
\fi
\ifshort
\fi

The computation of $\DPltree_{T}[\cdot, \cdot]$ follows the same scheme as in the previous case.
That is, we set
\begin{equation}\label{eq:DPT-reduced-join}
	\DPltree_{T}[\delta,\beta] =\min_{((\delta_1, \beta_1),(\delta_2,\beta_2)) \in (f')^{-1}(\delta, \beta)} \DPltree_{T_1}[\delta_1, \beta_1] + \DPltree_{T_2}[\delta_2, \beta_2]
\end{equation}

where we take the minimum over empty set to be~$\infty$.

\paragraph{Extend node, $(T,r) = \mathsf{Extend}((T', r'), r)$.}
\iflong
In the extend operation, we have two possibilities depending on whether we add the old root~$r'$ to the set~$S$ or not.
\fi
We again define two functions $g'_{\mathsf{id}}$ and $g'_{\mathsf{+}}$ describing how the reduced characteristics of the set $S$ and $S \cup \{r'\}$ in~$T$ depend on the reduced characteristic of a fixed set $S \subseteq V(T') \setminus \{r'\}$ in~$T'$.
First, we define~$g'_{\mathsf{id}}$ such that $g'_{\mathsf{id}}(\delta', \beta') = (\delta, \beta)$ where
\begin{align}\label{eq:tree-reduced-extend-char-1} 
\begin{split}
\delta &=  \begin{cases}
		0 &\text{if $\delta' \ge 1$,}\\
		1	 &\text{otherwise, i.e., if $\delta' = 0$.}
	\end{cases}\\
	\beta(0) &= \beta'(1), \text{ and}\\
	\beta(1) &= \begin{cases}
		\beta'(0) &\text{if $\delta' = 0$,}\\
		\beta'(1) &\text{if $\delta' \ge 2$, and}\\
		2 &\text{otherwise, i.e., if $\delta' = 1$.}
	\end{cases}
\end{split}
\end{align}
Now, we define~$g'_{+}$ such that $g'_{+}(\delta', \beta') = (\delta, \beta)$ where
\begin{equation}\label{eq:tree-reduced-extend-char-2} 
	\delta = 0, \qquad
	\beta(0) = \beta(1) =  \beta'(1).
\end{equation}

\begin{claim}\label{claim:tree-reduced-extend-correctness}
	Let $S \subseteq V(T') \setminus \{r'\}$ be an arbitrary set with reduced characteristic~$(\delta', \beta')$ in~$T'$. Then $S$ has reduced characteristic $g'_\mathsf{id}(\delta', \beta')$ in~$T$ and $S \cup \{r'\}$ has reduced characteristic $g'_+(\delta', \beta')$ in~$T$.
\end{claim}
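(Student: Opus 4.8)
The plan is to reduce \Cref{claim:tree-reduced-extend-correctness} to the already-proven \Cref{claim:tree-extend-correctness} by showing that the reduced transformations $g'_\mathsf{id}, g'_+$ are exactly what one obtains by composing the full transformations $g_\mathsf{id}, g_+$ with the reduction map $(\alpha, \beta) \mapsto (\min(2, \alpha(0) - \alpha(1)), \beta)$. Concretely, I fix a set $S \subseteq V(T') \setminus \{r'\}$, let $(\alpha', \beta')$ be its full characteristic in $T'$, and let $(\delta', \beta')$ be its reduced characteristic, so that $\delta' = \min(2, \alpha'(0) - \alpha'(1))$. By \Cref{claim:tree-extend-correctness}, $S$ has full characteristic $g_\mathsf{id}(\alpha', \beta') = (\alpha, \beta)$ in $T$, and hence its reduced characteristic is $(\min(2, \alpha(0) - \alpha(1)), \beta)$. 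The whole task is then to verify that this equals $g'_\mathsf{id}(\delta', \beta')$, and the analogous statement for $g_+$ versus $g'_+$.

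For the $g_+$ case this is immediate: from \eqref{eq:tree-extend-char-2} we have $\alpha(0) = \alpha(1) = \alpha'(1)$, so $\alpha(0) - \alpha(1) = 0$, giving $\delta = 0$, while $\beta(0) = \beta(1) = \beta'(1)$ matches \eqref{eq:tree-reduced-extend-char-2} directly. For the $g_\mathsf{id}$ case I would compute $\delta = \min(2, \alpha(0) - \alpha(1))$ using \eqref{eq:tree-extend-char-1}, where $\alpha(0) = \alpha'(1) + 1$ and $\alpha(1) = \min(\alpha'(0), \alpha'(1)+1)$. The key observation is that $\alpha(0) - \alpha(1) = (\alpha'(1)+1) - \min(\alpha'(0), \alpha'(1)+1) = \max(0,\, \alpha'(1) + 1 - \alpha'(0)) = \max(0,\, 1 - (\alpha'(0) - \alpha'(1)))$. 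Thus when $\alpha'(0) - \alpha'(1) = 0$ (i.e.\ $\delta' = 0$) the difference is $1$, and when $\alpha'(0) - \alpha'(1) \ge 1$ (i.e.\ $\delta' \ge 1$) the difference is $0$; this reproduces exactly the two cases defining $\delta$ in \eqref{eq:tree-reduced-extend-char-1}.

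The only genuinely delicate point, and the step I expect to be the main obstacle, is the $\beta(1)$ component of $g_\mathsf{id}$, because its definition in \eqref{eq:tree-extend-char-1} branches on the \emph{three-way} comparison of $\alpha'(0)$ against $\alpha'(1) + 1$, namely $<$, $>$, or $=$. I must check that these three cases are faithfully captured by the coarser information in $\delta'$. The inequality $\alpha'(0) < \alpha'(1) + 1$ is equivalent to $\alpha'(0) - \alpha'(1) < 1$, i.e.\ (using \Cref{obs:tree-alpha-positive} to know the difference is nonnegative) $\alpha'(0) - \alpha'(1) = 0$, which is exactly $\delta' = 0$; the equality case $\alpha'(0) = \alpha'(1) + 1$ corresponds to $\alpha'(0) - \alpha'(1) = 1$, i.e.\ $\delta' = 1$; and the case $\alpha'(0) > \alpha'(1) + 1$ corresponds to $\alpha'(0) - \alpha'(1) \ge 2$, i.e.\ $\delta' = 2$. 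Hence the branching of $\beta(1)$ on $\delta' \in \{0, 1, 2\}$ in \eqref{eq:tree-reduced-extend-char-1} matches \eqref{eq:tree-extend-char-1} in every case, and $\beta(0) = \beta'(1)$ agrees trivially.

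The point worth stressing is \emph{why} three values of $\delta'$ suffice here even though $\alpha'(0) - \alpha'(1)$ ranges over all nonnegative integers: the transformation $g_\mathsf{id}$ only ever distinguishes whether this difference is $0$, exactly $1$, or at least $2$, and the reduced statistic $\delta' = \min(2, \alpha'(0)-\alpha'(1))$ retains precisely that information. With all four component computations ($\delta$, $\beta(0)$, $\beta(1)$ for $g'_\mathsf{id}$ and the trivial components of $g'_+$) matched against the full versions, the claim follows. This verification is routine once the case analysis is set up, and I would present it as a short check rather than a lengthy computation.
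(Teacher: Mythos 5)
Your proposal is correct and follows essentially the same route as the paper: both reduce the statement to \Cref{claim:tree-extend-correctness} and then syntactically verify that $g'_\mathsf{id}$ and $g'_+$ agree with $g_\mathsf{id}$ and $g_+$ after applying the reduction $(\alpha,\beta) \mapsto (\min(2,\alpha(0)-\alpha(1)),\beta)$, including the same three-way case analysis $\delta' \in \{0,1,2\}$ matching $\alpha'(0) < \alpha'(1)+1$, $\alpha'(0) = \alpha'(1)+1$, and $\alpha'(0) > \alpha'(1)+1$ (justified via \Cref{obs:tree-alpha-positive}). Your explicit identity $\alpha(0)-\alpha(1) = \max(0,\,1-(\alpha'(0)-\alpha'(1)))$ is a slightly more compact way of organizing the $\delta$ computation than the paper's two-case check, but it is the same argument in substance.
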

\iflong
\begin{proofclaim}
To prove the claim, it suffices to syntactically verify that the reduced characteristic obtained by $g'_\mathsf{id}$ and $g'_+$ in~\eqref{eq:tree-reduced-extend-char-1} and~\eqref{eq:tree-reduced-extend-char-2} matches the definition of $g_\mathsf{id}$ and $g_+$ in~\eqref{eq:tree-extend-char-1} and~\eqref{eq:tree-extend-char-2}.
The claim then follows from \Cref{claim:tree-extend-correctness}.

Let $(\alpha', \beta')$ denote the full characteristic of~$S$ in~$T'$ and let $(\alpha, \beta)$ and $(\delta, \beta)$ denote the full  and reduced characteristic of $S$ in~$T$ respectively.
If $\delta' \ge 1$, then $\alpha'(0) - \alpha'(1) \ge 1$ by definition.
Therefore, we have $\alpha(0) = \alpha(1) = \alpha'(1) + 1$ by~\eqref{eq:tree-extend-char-1} and $\delta = 0$ which matches the definition of~$g'_\mathsf{id}$.
On the other hand if $\delta' = 0$, then $\alpha'(0) = \alpha'(1)$ and we have $\alpha(0) = \alpha'(1) + 1$ and $\alpha(1) = \alpha'(0) = \alpha'(1)$ by~\eqref{eq:tree-extend-char-1}.
Hence, we obtain $\delta = 1$ which again matches the definition of~$g'_\mathsf{id}$.
As for the computation of~$\beta$, it suffices to notice that the conditions `$\delta' = 0$', `$\delta' \ge 2$', and `$\delta' = 1$' are exactly equivalent to '$\alpha'(0) < \alpha'(1)+1$', `$\alpha'(0) > \alpha'(1) + 1$', and  `$\alpha'(0) = \alpha'(1)+1$' respectively.
As a result, the function~$g'_\mathsf{id}$ acts on the second coordinate of characteristics exactly in the same way as~$g_\mathsf{id}$.

Finally, it is straightforward to check that the full characteristic of $S \cup \{r'\}$ (as obtained in~\eqref{eq:tree-extend-char-2}) corresponds exactly to the reduced characteristic $g'_+(\delta', \beta')$ in~\eqref{eq:tree-reduced-extend-char-2}.
\end{proofclaim}
\fi

The computation of $\DPltree_{T}[\cdot, \cdot]$ is again analogous to the previous algorithm.
That is, we set
\begin{equation}\label{eq:DPT-reduced-extend}
	\DPltree_{T}[\delta,\beta] =\min \left( \begin{gathered}
		\min_{(\alpha', \beta') \in {(g'_\mathsf{id})}^{-1}(\alpha, \beta)} \DPltree_{T'}[\alpha', \beta'],\\
		\min_{(\alpha', \beta') \in {(g'_+)}^{-1}(\alpha, \beta)} \DPltree_{T'}[\alpha', \beta'] + 1
	\end{gathered}   \right)
\end{equation}
where we take the minimum over empty set to be~$\infty$.

This finishes the description of the computation.
\ifshort
The correctness of the algorithm follows from  Claims~\ref{claim:tree-reduced-join-correctness} and~\ref{claim:tree-reduced-extend-correctness} by a bottom-up induction over the nice tree decomposition.
We omit the full proof due to the space constraints.
\fi
\iflong
Now, we show the correctness of the algorithm in two separate claims.
First, we show that if there is a finite value stored in the dynamic programming table~$\DPltree_T[\cdot, \cdot]$ for a rooted tree $(T, r)$, there is a set $S \subseteq V(T) \setminus \{r\}$ with corresponding reduced characteristic and size. 

\begin{claim}\label{claim:tree-correct1}
Let $(T,r)$ be a rooted tree generated by a subexpression of the neat tree decomposition of~$G$ and let $(\delta, \beta)$ be arbitrary reduced characteristic. If $\DPltree_T[\delta, \beta] = s$ where $s \neq \infty$, then there exists a set $S \subseteq V(T) \setminus \{r\}$ of size~$s$ with reduced characteristic~$(\delta,\beta)$.
\end{claim}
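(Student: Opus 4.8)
The plan is to prove the statement by structural (bottom-up) induction along the neat tree decomposition that generates $(T, r)$, following the three operations $\mathsf{Leaf}$, $\mathsf{Extend}$, and $\mathsf{Join}$ exactly as the algorithm processes them. The idea is that the transformation claims \Cref{claim:tree-reduced-join-correctness} and \Cref{claim:tree-reduced-extend-correctness} already describe how the reduced characteristic of a concrete set behaves under each operation; what remains is to realize the recorded finite table value as the size of an actual witnessing set, while checking that this set stays inside $V(T) \setminus \{r\}$ and has exactly the claimed cardinality.

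For the base case $(T, r) = \mathsf{Leaf}(r)$, I would note that by \eqref{eq:DPT-reduced-leaf} the only finite entry is $\DPltree_T[\delta_{\mathsf{leaf}}, \beta_{\mathsf{leaf}}] = 0$, and that the empty set is the unique subset of $V(T) \setminus \{r\} = \emptyset$; it has size $0$ and the correct reduced characteristic. For the Extend step $(T, r) = \mathsf{Extend}((T', r'), r)$, assuming $\DPltree_T[\delta, \beta] = s \neq \infty$, I would use \eqref{eq:DPT-reduced-extend} to see that $s$ is attained either via some $(\delta', \beta') \in (g'_\mathsf{id})^{-1}(\delta, \beta)$ with $\DPltree_{T'}[\delta', \beta'] = s$, or via some $(\delta', \beta') \in (g'_+)^{-1}(\delta, \beta)$ with $\DPltree_{T'}[\delta', \beta'] = s - 1$, where in either case the relevant child entry is finite. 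The induction hypothesis on $T'$ then yields a set $S' \subseteq V(T') \setminus \{r'\}$ of the appropriate size with reduced characteristic $(\delta', \beta')$. In the first case I would take $S = S'$, which lies in $V(T') \subseteq V(T) \setminus \{r\}$ since $r \notin V(T')$, and which has reduced characteristic $g'_\mathsf{id}(\delta', \beta') = (\delta, \beta)$ and size $s$ by \Cref{claim:tree-reduced-extend-correctness}. In the second case I would take $S = S' \cup \{r'\}$: as $r' \notin S'$ this has size $(s-1)+1 = s$, it is contained in $V(T) \setminus \{r\}$, and \Cref{claim:tree-reduced-extend-correctness} gives it reduced characteristic $g'_+(\delta', \beta') = (\delta, \beta)$.

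For the Join step $(T, r) = \mathsf{Join}((T_1, r), (T_2, r))$, assuming $\DPltree_T[\delta, \beta] = s \neq \infty$, I would use \eqref{eq:DPT-reduced-join} to obtain a pair $((\delta_1, \beta_1), (\delta_2, \beta_2)) \in (f')^{-1}(\delta, \beta)$ with both child entries finite and summing to $s$. The induction hypothesis provides sets $S_1 \subseteq V(T_1) \setminus \{r\}$ and $S_2 \subseteq V(T_2) \setminus \{r\}$ of sizes equal to the two child entries and with the matching reduced characteristics, and \Cref{claim:tree-reduced-join-correctness} then gives that $S = S_1 \cup S_2$ has reduced characteristic $(\delta, \beta)$. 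The main point I would need to check carefully is the cardinality bookkeeping: since $V(T_1) \cap V(T_2) = \{r\}$ and neither $S_1$ nor $S_2$ contains $r$, the two sets are disjoint, so $|S| = |S_1| + |S_2| = s$ and $S \subseteq V(T) \setminus \{r\}$.

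I expect this disjointness argument in the Join step, together with the off-by-one accounting for $r'$ in the Extend-plus branch, to be the only genuine subtleties; everything else follows by directly invoking the transformation claims established earlier. Note that this claim is one direction (soundness) of the correctness proof, and I would anticipate a companion claim establishing the converse (that every set $S$ is accounted for by an entry of at most its size), the two together yielding the correctness of the linear-time algorithm.
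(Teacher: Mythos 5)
Your proposal is correct and follows essentially the same route as the paper's own proof: a bottom-up induction over the neat tree decomposition, with the Leaf case read off from \eqref{eq:DPT-reduced-leaf}, the Extend case split into the $g'_\mathsf{id}$ and $g'_+$ branches (with the $+1$ accounting for $r'$) resolved via \Cref{claim:tree-reduced-extend-correctness}, and the Join case resolved via \Cref{claim:tree-reduced-join-correctness}. Your explicit disjointness check in the Join step (that $V(T_1) \cap V(T_2) = \{r\}$ and $r \notin S_1 \cup S_2$, so $|S_1 \cup S_2| = |S_1| + |S_2|$) is a bookkeeping detail the paper leaves implicit, and it is precisely the right justification for the cardinality claim.
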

\begin{proofclaim}
We prove the claim by a bottom-up induction on the neat tree decomposition of~$G$.

First, let $(T,r) = \mathsf{Leaf}(r)$ be a singleton rooted tree.
The only finite entry in the table $\DPltree_T[\cdot, \cdot]$ set in~\eqref{eq:DPT-reduced-leaf} is $\DPltree_T[\delta_\mathsf{leaf}, \beta_\mathsf{leaf}] = 0$.
As we already argued, the reduced characteristic of the empty set in~$(T,r)$ is exactly $(\delta_\mathsf{leaf}, \beta_\mathsf{leaf})$ and the claim holds for leaf nodes.

Now assume that $(T,r) = \mathsf{Join}((T_1, r),(T_2, r))$.
The value $\DPltree_T[\delta, \beta]$ was set to~$s$ in~\eqref{eq:DPT-reduced-join} and thus, there exist reduced characteristics~$(\delta_1, \beta_1)$ and $(\delta_2, \beta_2)$ such that $f((\delta_1, \beta_1), (\delta_2, \beta_2)) = (\delta, \beta)$ and $\DPltree_{T_1}[\delta_1, \beta_1] = s_1$, $\DPltree_{T_2}[\delta_2, \beta_2] = s_2$ with $s_1 + s_2 = s$.
Applying induction on $T_i$ for both $i \in [2]$, we see that there exists a set $S_i \subseteq V(T_i) \setminus \{r\}$ of size~$s_i$ and reduced characteristic $(\delta_i, \beta_i)$.
We conclude by \Cref{claim:tree-reduced-join-correctness} that $S_1 \cup S_2$ is a set of size~$s$ and reduced characteristic $(\delta, \beta)$ in~$H$.

Finally, suppose that $(T,r) = \mathsf{Extend}((T',r'),r)$.
The value $\DPltree_T[\delta, \beta]$ was set to be $s$ in~\eqref{eq:DPT-reduced-extend} and thus, there is a reduced characteristic~$(\delta', \beta')$ such that either
\begin{enumerate}
\item $g'_\mathsf{id}(\delta', \beta') = (\delta, \beta)$ and $\DPltree_{T'}[\delta', \beta'] = s$, or
\item $g'_+(\delta', \beta') = (\delta, \beta)$ and $\DPltree_{T'}[\delta', \beta'] = s - 1$.
\end{enumerate}
In the first case, there exists a set $S \subseteq V(T')  \setminus \{r'\}$ of size~$s$ and reduced characteristic~$(\delta', \beta')$ by induction on~$T'$.
The reduced characteristic of~$S$ in~$T$ is then precisely $(\delta, \beta)$ due to \Cref{claim:tree-reduced-extend-correctness}.
In the second case, there exists a set $S \subseteq V(T') \setminus \{r'\}$ of size~$s - 1$ and reduced characteristic~$(\delta', \beta')$ again by induction on~$T'$.
Therefore, the set~$S \cup \{r'\}$ has size exactly~$s$ and reduced characteristic~$(\delta, \beta)$ by \Cref{claim:tree-reduced-extend-correctness} in~$T$. 
\end{proofclaim}

Next, we show the opposite implication, that is, if there is a set~$S$ of a given reduced characteristic~$(\delta, \beta)$, then the computed value $\DPltree_T[\delta, \beta]$ is at most~$|S|$.

\begin{claim}\label{claim:tree-correct2}
Let $(T,r)$ be a rooted tree generated by a subexpression of the neat tree decomposition of~$G$ and let $(\delta, \beta)$ be an arbitrary reduced characteristic. If there exists a set $S \subseteq V(T)\setminus \{r\}$ of size~$s$ with reduced characteristic~$(\delta,\beta)$, then $\DPltree_T[\delta, \beta] \leq s$.
\end{claim}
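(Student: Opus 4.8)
The plan is to prove this by bottom-up induction on the neat tree decomposition of~$G$, exactly mirroring the proof of the companion \Cref{claim:tree-correct1} but running the implication in the opposite direction. Where that claim extracted a witnessing set from a finite table entry, here I start from a witnessing set~$S$ of reduced characteristic~$(\delta,\beta)$ and show that the corresponding table entry was set to a value no larger than~$|S|$. The recurring mechanism is: decompose~$S$ into pieces living in the child subtrees, read off their reduced characteristics, invoke the appropriate transformation claim to certify that these pieces map to~$(\delta,\beta)$ under~$f'$, $g'_\mathsf{id}$, or $g'_+$, apply the induction hypothesis to bound each piece, and observe that the decomposed characteristics therefore lie in the relevant preimage so that the minimum in the recurrence picks up a value at most~$|S|$.

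For the base case $(T,r) = \mathsf{Leaf}(r)$, the only set $S \subseteq V(T) \setminus \{r\}$ is the empty set, whose reduced characteristic is $(\delta_\mathsf{leaf}, \beta_\mathsf{leaf})$ and whose size is~$0$; since $\DPltree_T[\delta_\mathsf{leaf}, \beta_\mathsf{leaf}] = 0$ by~\eqref{eq:DPT-reduced-leaf}, the claim holds (and is vacuous for every other reduced characteristic). For a join node $(T,r) = \mathsf{Join}((T_1, r),(T_2, r))$, I would set $S_1 = S \cap V(T_1)$ and $S_2 = S \cap V(T_2)$; since $V(T_1) \cap V(T_2) = \{r\}$ and $r \notin S$, these are disjoint subsets of $V(T_1)\setminus\{r\}$ and $V(T_2)\setminus\{r\}$ with $S = S_1 \cup S_2$ and $|S_1| + |S_2| = s$. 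Letting $(\delta_i, \beta_i)$ be the reduced characteristic of~$S_i$ in~$T_i$, \Cref{claim:tree-reduced-join-correctness} gives $f'((\delta_1,\beta_1),(\delta_2,\beta_2)) = (\delta,\beta)$, and the induction hypothesis yields $\DPltree_{T_i}[\delta_i, \beta_i] \le |S_i|$. Hence $((\delta_1,\beta_1),(\delta_2,\beta_2)) \in (f')^{-1}(\delta,\beta)$, so by~\eqref{eq:DPT-reduced-join} we get $\DPltree_T[\delta,\beta] \le |S_1| + |S_2| = s$.

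For an extend node $(T,r) = \mathsf{Extend}((T',r'),r)$, I would split on whether $r' \in S$. If $r' \notin S$, then $S \subseteq V(T')\setminus\{r'\}$ has some reduced characteristic~$(\delta',\beta')$ in~$T'$, and \Cref{claim:tree-reduced-extend-correctness} gives $g'_\mathsf{id}(\delta',\beta') = (\delta,\beta)$; by induction $\DPltree_{T'}[\delta',\beta'] \le s$, so~\eqref{eq:DPT-reduced-extend} yields $\DPltree_T[\delta,\beta] \le s$ via the $g'_\mathsf{id}$ branch. If $r' \in S$, then $S \setminus \{r'\} \subseteq V(T')\setminus\{r'\}$ has size $s-1$ and some reduced characteristic~$(\delta',\beta')$ in~$T'$, and \Cref{claim:tree-reduced-extend-correctness} gives $g'_+(\delta',\beta') = (\delta,\beta)$; by induction $\DPltree_{T'}[\delta',\beta'] \le s-1$, so the $g'_+$ branch of~\eqref{eq:DPT-reduced-extend} gives $\DPltree_T[\delta,\beta] \le (s-1) + 1 = s$. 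This completes the induction. I do not anticipate a genuine obstacle here, as all the combinatorial content has already been isolated in \Cref{claim:tree-reduced-join-correctness} and \Cref{claim:tree-reduced-extend-correctness}; the only point requiring mild care is the bookkeeping of the partition $S = S_1 \cup S_2$ at join nodes and the $r' \in S$ case split at extend nodes, ensuring in each case that the child characteristics genuinely land in the preimage so that the minimizing recurrence is guaranteed to attain a value bounded by~$s$.
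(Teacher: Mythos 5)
Your proposal is correct and matches the paper's own proof essentially line for line: the same bottom-up induction with the same three cases, invoking \Cref{claim:tree-reduced-join-correctness} and \Cref{claim:tree-reduced-extend-correctness} to place the child characteristics in the preimages of $f'$, $g'_\mathsf{id}$, and $g'_+$ so that the recurrences \eqref{eq:DPT-reduced-join} and \eqref{eq:DPT-reduced-extend} yield the bound. Your explicit remark that $S_1$ and $S_2$ are disjoint at join nodes (since $V(T_1)\cap V(T_2)=\{r\}$ and $r\notin S$) is a small bookkeeping point the paper leaves implicit, but there is no substantive difference.
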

\begin{proofclaim}
We prove the claim again by a bottom-up induction on the neat tree decomposition of~$G$.

First, let $(T,r) = \mathsf{Leaf}(r)$ be a singleton rooted tree.
The only possible choice for $S \subseteq V(T) \setminus \{r\}$ is the empty set.
We already know that the reduced characteristic of the empty set is~$(\delta_\mathsf{leaf}, \beta_\mathsf{leaf})$ and we have set $\DPltree_T[\delta_\mathsf{leaf}, \beta_\mathsf{leaf}] = 0$ in~\eqref{eq:DPT-reduced-leaf}.

Now assume that $(T,r) = \mathsf{Join}((T_1, r),(T_2, r))$ and for both $i \in [2]$, let $S_i$ be the restriction of $S$ to the vertices of~$T_i$ with a reduced characteristic $(\delta_i, \beta_i)$ in~$T_i$.
By applying induction on $T_i$ and $S_i$ for both $i \in [2]$, we see that $\DPltree_{T_i}[\delta_i, \beta_i] \le |S_i|$.
\Cref{claim:tree-reduced-join-correctness} implies that $f((\delta_1, \beta_1),(\delta_2,\beta_2)) = (\delta, \beta)$ and thus, we must have set $\DPltree_T[\delta, \beta] \le \DPltree_{T_1}[\delta_1, \beta_1] + \DPltree_{T_2}[\delta_2, \beta_2] \le |S_1| + |S_2| = s$ in~\eqref{eq:DPT-reduced-join}.

Finally, suppose that $(T,r) = \mathsf{Extend}((T',r'),r)$.
We consider separately two cases depending on the inclusion of $r'$ in $S$.
First suppose that $r' \notin S$ and let $(\delta', \beta')$ be the reduced characteristic of~$S$ in~$T'$.
By applying induction on~$T'$ and $S$, we see that $\DPltree_{T'}[\delta', \beta'] \le s$.
Moreover, we have $g'_\mathsf{id}(\delta', \beta') = (\delta, \beta)$ by \Cref{claim:tree-reduced-extend-correctness}.
It follows that we must have set $\DPltree_T[\delta, \beta] \le \DPltree_{T'}[\delta', \beta'] \le s$ in~\eqref{eq:DPT-reduced-extend}.
Now, suppose that $r' \in S$ and let $(\delta', \beta')$ be the reduced characteristic of~$S \setminus \{r'\}$ in~$T'$.
By applying induction on~$T'$ and $S\setminus\{r'\}$, we see that $\DPltree_{T'}[\delta', \beta'] \le s - 1$.
And since \Cref{claim:tree-reduced-extend-correctness} implies that $g'_+(\delta', \beta') = (\delta, \beta)$, we must have set $\DPltree_T[\delta, \beta] \le \DPltree_{T'}[\delta', \beta'] + 1 \le s$ in~\eqref{eq:DPT-reduced-extend}.
\end{proofclaim}
\fi

\ifshort
It only remains to argue that the computation can be implemented to run in linear time.
As already observed, the total number of possible reduced characteristics is~$12$.
It follows that a straightforward computation of the table~$\DPltree_T[\cdot, \cdot]$ in each node of the neat decomposition finishes in $O(1)$ time (refer to \eqref{eq:DPT-reduced-leaf}, \eqref{eq:DPT-reduced-join}, and~\eqref{eq:DPT-reduced-extend}).
Therefore, the overall running time is $O(n)$.
\fi

\iflong
It only remains to argue that the computation can be implemented to run in linear time.

\begin{claim}\label{claim:tree-reduced-runtime}
Given a tree~$G$ on input, the algorithm finishes in linear time.
\end{claim}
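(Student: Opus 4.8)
The plan is to charge a constant amount of work to each node of the neat tree decomposition and then argue that there are only linearly many such nodes. First I would invoke \Cref{obs:rooted-trees} to compute, in time $O(n)$, a neat tree decomposition $\psi$ of the rooted tree $(G, v)$ for an arbitrary root $v \in V(G)$; since this decomposition is produced in $O(n)$ time, it consists of $O(n)$ subexpressions, i.e.\ $O(n)$ $\mathsf{Leaf}$, $\mathsf{Extend}$, and $\mathsf{Join}$ operations in total.

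The key quantitative observation is the one already recorded above: a reduced characteristic is a pair $(\delta, \beta)$ with $\delta \in \{0, 1, 2\}$ and $\beta$ ranging over the $4$ functions $\{0,1\} \to [2]$, so there are only $3 \cdot 4 = 12$ possible reduced characteristics, a constant independent of $n$. Consequently every table $\DPltree_T[\cdot, \cdot]$ has $O(1)$ entries, and I would then verify, node by node, that filling it costs $O(1)$ time. For a leaf node the table is written out directly from~\eqref{eq:DPT-reduced-leaf}. For a join node I would avoid computing the preimages of $f'$ explicitly and instead adopt the push strategy: initialise every entry of $\DPltree_T$ to $\infty$, iterate over all $12 \times 12$ pairs of reduced characteristics $((\delta_1, \beta_1),(\delta_2, \beta_2))$, evaluate $f'$ in constant time, and relax the target entry against $\DPltree_{T_1}[\delta_1,\beta_1] + \DPltree_{T_2}[\delta_2,\beta_2]$; this realises~\eqref{eq:DPT-reduced-join} in $O(1)$ time. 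An extend node is handled identically, iterating over the $12$ reduced characteristics of $T'$ and relaxing through $g'_\mathsf{id}$ and $g'_+$ as in~\eqref{eq:DPT-reduced-extend}, again in $O(1)$ time.

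Putting these together, the bottom-up pass over the decomposition spends $O(1)$ per node over $O(n)$ nodes, and the final selection step scans the $O(1)$ entries of $\DPltree_G[\cdot, \cdot]$ in constant time; combined with the $O(n)$ preprocessing this yields the claimed $O(n)$ total. Correctness of the values produced by this pass is exactly what \Cref{claim:tree-correct1,claim:tree-correct2} supply, so nothing further is needed on that front. The only point requiring genuine care is the bound on the number of nodes — in particular keeping the number of $\mathsf{Join}$ nodes created at high-degree vertices linear — and here I rely squarely on \Cref{obs:rooted-trees} rather than re-deriving the size bound. I expect this to be the sole nontrivial ingredient, the per-node $O(1)$ accounting being immediate once the count of $12$ reduced characteristics is in hand.
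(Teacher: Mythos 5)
Your proof is correct and matches the paper's argument: both compute the neat decomposition in $O(n)$ time via \Cref{obs:rooted-trees}, observe that there are only $12$ reduced characteristics so each table has constant size and each entry is filled in $O(1)$ time, and conclude $O(1)$ work per node over $O(n)$ nodes. Your explicit ``push'' implementation of the join computation is a minor elaboration of what the paper calls a straightforward computation (which is already $O(1)$ here since all quantities are constant), so nothing differs in substance.
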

\begin{proofclaim}
The algorithm starts by computing a neat decomposition of the input tree in $O(n)$ time.
As already observed, the total number of possible reduced characteristics is~$12$.
It follows that the table $\DPltree_T[\cdot, \cdot]$ in each node of the neat decomposition is of constant size.
Moreover, a straightforward computation of a single entry also takes only constant time (refer to \eqref{eq:DPT-reduced-leaf}, \eqref{eq:DPT-reduced-join}, and~\eqref{eq:DPT-reduced-extend}).
It follows that the algorithm takes $O(1)$ time to compute the table $\DPltree_T[\cdot, \cdot]$ in every node, for a total of $O(n)$ time over the whole neat tree decomposition.
\end{proofclaim}
\fi

\section{Treewidth}

The algorithm for treewidth follows the same general scheme as the algorithms for trees in \Cref{sec:trees}: we define a suitable characteristic of any subset~$S$ of vertices such that (i) we can decide whether~$S$ is a feasible solution just from its characteristic, (ii) the way characteristic of~$S$ changes in a node of a tree decomposition depends only on its previous characteristic, and (iii) the total number of characteristics is polynomial in the size of the input graph.
We then compute by a dynamic-programming scheme in every node of a nice tree decomposition the minimum size of a set~$S$ with each possible characteristic.
\iflong
Note that despite the high-level idea being similar, the details of the computation are quite intricate.
\fi

\begin{theorem} \label{thm:xp-tw}
	The \MUVCShort problem can be solved by an \XP-algorithm parameterised by the treewidth~$d$ of~$G$ in time $n^{O(2^d)}$.
\end{theorem}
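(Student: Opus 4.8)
The plan is to follow the blueprint laid out at the start of this section and to generalize the tree dynamic program of \Cref{sec:trees} from a single root vertex to an entire bag of a nice tree decomposition. First I would fix a nice tree decomposition $(T, \{B_x\})$ of~$G$ of width~$d$ with an empty root bag, and for a node~$x$ let~$G_x$ be the subgraph induced by all vertices appearing in the bags of the subtree rooted at~$x$. The characteristic of a set~$S$ (restricted to~$G_x$) would consist of a triple $(D, \alpha, \beta)$: the set $D = S \cap B_x$ of bag vertices already chosen for deletion, together with, for every boundary set $A \subseteq B_x \setminus D$, the reduced size $\alpha(A) \in [0,n] \cup \{\infty\}$ of a smallest vertex cover~$M$ of $G_x - S$ with $M \cap (B_x \setminus D) = A$ (reduced meaning $|M| - |A|$, so as not to double-count the bag), and a flag $\beta(A) \in \{1,2\}$ recording whether such a minimum cover is unique; infeasible boundaries, where $(B_x \setminus D) \setminus A$ is not independent, receive $\alpha(A) = \infty$. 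The table $\DPtw_x$ stores, for each attainable characteristic, the minimum size of the already-forgotten part of~$S$, that is $|S \cap (V(G_x) \setminus B_x)|$. Since the root bag is empty, the algorithm then reads off the answer as the smallest $\DPtw_r[(\emptyset, \alpha, \beta)]$ over characteristics with $\beta(\emptyset) = 1$, i.e.\ those where $G - S$ has a unique minimum vertex cover, and compares it with~$k$.

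Next I would describe how each of the four node types transforms the characteristic, closely paralleling the $\mathsf{Leaf}$/$\mathsf{Join}$/$\mathsf{Extend}$ transitions for trees. A leaf has empty bag and the unique characteristic with $\alpha(\emptyset) = 0$, $\beta(\emptyset) = 1$. At an introduce node adding~$v$, I would branch on whether~$v$ joins~$S$: if it does, $v$ enters~$D$ and the rest of the characteristic is inherited unchanged (its edges vanish), while if it does not, the boundary domain gains~$v$ and for each old boundary~$A$ the entry for $A \cup \{v\}$ inherits $(\alpha(A), \beta(A))$ (putting~$v$ in the cover costs nothing in reduced size), whereas the entry for~$A$ itself stays feasible only when all non-deleted bag-neighbours of~$v$ already lie in~$A$. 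A forget node removing a non-deleted vertex~$w$ is the exact analogue of the tree $\mathsf{Extend}$: each new boundary~$A$ takes $\alpha(A) = \min(\alpha(A),\, \alpha(A \cup \{w\}) + 1)$ and sets $\beta(A)$ according to which of the two options (keeping~$w$ out of, or inside, the cover) strictly wins, declaring non-uniqueness $\beta(A) = 2$ exactly when they tie; forgetting a deleted vertex simply moves it from~$D$ into the counted part of~$S$, raising the stored value by one. Finally, a join node combines two characteristics that agree on~$D$ by the familiar rule $\alpha(A) = \alpha_1(A) + \alpha_2(A)$ and $\beta(A) = \min(2,\, \beta_1(A)\beta_2(A))$, with the stored sizes simply added since the forgotten parts are disjoint; here using the reduced size is precisely what prevents double-counting the shared bag vertices.

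For the running time I would count the characteristics: $D$ ranges over the constantly many subsets of~$B_x$, the flag function~$\beta$ over the $2^{2^{d+1}}$ possibilities, and the size function~$\alpha$ over at most $(n+2)^{2^{d+1}}$ possibilities, so each table has $n^{O(2^d)}$ entries. Each transition is computable within this bound, with the join node being the bottleneck as it iterates over pairs of child characteristics with a common~$D$, costing $n^{O(2^d)}$ per node; over the $O(n)$ nodes this yields the claimed $n^{O(2^d)}$ total. I would note explicitly that storing the absolute reduced sizes $\alpha(A) \in [0,n]$ is what pins the exponent of~$n$ to~$2^d$ and keeps the algorithm in \XP rather than \FPT.

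I expect the main obstacle to be the uniqueness bookkeeping and its interaction with the deletion set, rather than the (standard) vertex-cover part of the dynamic program. In particular, the forget transition must correctly detect the ``tie'' case that destroys uniqueness, and the join transition must insist on uniqueness in both children simultaneously; getting these right, while ensuring that the reduced sizes and the $D$-components stay consistent across the introduce, forget, and join operations, is the delicate part. As in the tree case, I would establish correctness through transformation claims --- one per node type, asserting that the stated rule computes the characteristic of the combined set --- followed by a two-directional induction over the decomposition (every finite table entry is witnessed by some set~$S$ of that size and characteristic, and conversely every such set is accounted for by the table), and conclude by verifying that the final read-off at the empty root bag returns exactly a smallest feasible~$S$.
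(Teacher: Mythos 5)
Your proposal is correct and takes essentially the same approach as the paper: a dynamic program over a nice tree decomposition whose states record, for every way a vertex cover may intersect the bag, a reduced minimum size together with a uniqueness flag, justified by per-node transformation claims and a two-directional induction, with the $n^{O(2^d)}$ bound coming from the $n^{2^{O(d)}}$ many states. The only deviations are bookkeeping: the paper forbids $S$ from containing bag vertices and defers both the deletion decision and the bag-internal edges to the forget nodes (so the bag stays independent, every type is feasible, and neither your $D$-component nor your $\infty$-entries are needed), and it stores the differences $\delta(D) = \alpha(\emptyset) - \alpha(D)$ rather than absolute sizes --- immaterial for the \XP{} bound, but exactly what later enables the \FPT{} result of \Cref{thm:fpt-tw+Delta}.
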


\ifshort
\begin{sketch}
Due to space limitations, we only define the generalization of (reduced) characteristics and omit the description of their transformation in the nodes of a nice tree decomposition.
We first generalize rooted trees.
A \emph{terminal graph} is a pair $(G, X)$ where $G$ is a graph, $X \subseteq V(G)$ is a subset of its vertices, and $G[X]$ is an independent set, i.e., there are no edges between vertices of~$X$.\iflong 
Note that the last part of the definition is slightly nonstandard but we shall be working only with graphs having this property.\fi
For a terminal graph $(G,X)$, we say that a set $M \subseteq V(G)$ is of \emph{type $D$} if $M \cap X = D$.
Moreover, the \emph{reduced size} of a set $M \subseteq V(G)$ is defined as $|M \setminus X|$.

A pair of functions~$(\alpha, \beta)$ where $\alpha \colon 2^X \to \{0, \dots, n\}$ and $\beta \colon \{0,1\} \to \{1,2\}$ is a \emph{characteristic} of a set $S \subseteq V(G)\setminus X$ in a terminal graph~$(G,X)$ if for every $D \subseteq X$,
\begin{itemize}
	\item $\alpha(D)$ is the reduced size of the smallest vertex cover of type~$D$ in the terminal graph~$(G-S, X)$, and
	\item $\beta(D) = 1$ if and only if there is a unique vertex cover of reduced size~$\alpha(D)$ and type~$D$ in $(G - S, X)$.
\end{itemize}
Observe that the characteristic is well defined because there always exists a vertex cover of any type~$D$, e.g., $V(G) \setminus X \cup D$.
Finally, let us remark that a rooted tree~$(T,r)$ can be interpreted as the terminal graph~$(T, \{r\})$ where the sets of type~0 and~1 in $(T, r)$ are exactly the sets of type~$\emptyset$ and~$\{r\}$ in $(T, \{r\})$ respectively.

Since we do not care about the absolute sizes of the vertex covers, we again define reduced characteristics.
A \emph{reduced characteristic} of a set $S \subseteq V(G)\setminus X$ with characteristic~$(\alpha,\beta)$ in a terminal graph~$(G,X)$ is a pair of functions~$(\delta, \beta)$ where $\delta \colon 2^X \to \{0, \dots, n\}$ such that $\delta(D) = \alpha(\emptyset) - \alpha(D)$ for every $D \subseteq X$.

Let $x$ be a node of a nice tree decomposition~$(T, \{B_x \mid x\in V(T)\})$ of the input graph~$G$.
Let $Y_x$ be the set of all vertices contained in the bags of the subtree rooted in~$t$.
We associate with the node~$x$ a terminal graph~$(G_x,B_x)$ such that $G_x$ is obtained from the graph~$G[Y_x]$ by removing all edges connecting two vertices of~$B_x$.
In other words, we consider an edge only when one of its endpoints is being forgotten.

The algorithm first computes a nice tree decomposition of the input graph~$G$.
Then it fills for each node~$x$ in the nice tree decomposition a dynamic programming table $\DPtw_x[\delta, \beta]$ such that $(\delta, \beta)$ is a reduced characteristic, and the entry $\DPtw_x[\delta, \beta]$ contains the size of the smallest set with reduced characteristic~$(\delta, \beta)$ in~$(G_x, B_x)$, or~$\infty$ if no such set exists.
\end{sketch}
\fi
\iflong
\begin{proof}
We first generalize rooted trees.
A \emph{terminal graph} is a pair $(G, X)$ where $G$ is a graph, $X \subseteq V(G)$ is a subset of its vertices, and $G[X]$ is an independent set, i.e., there are no edges between vertices of~$X$.
Note that the last part of the definition is slightly nonstandard but we shall be working only with graphs having this property.
For a terminal graph $(G,X)$, we say that a set $M \subseteq V(G)$ is of \emph{type $D$} if $M \cap X = D$.
Moreover, the \emph{reduced size} of a set $M \subseteq V(G)$ is defined as $|M \setminus X|$.
	
A pair of functions~$(\alpha, \beta)$ where $\alpha \colon 2^X \to [0, n]$ and $\beta \colon \{0,1\} \to [2]$ is a \emph{characteristic} of a set $S \subseteq V(G)\setminus X$ in a terminal graph~$(G,X)$ if for every $D \subseteq X$,
\begin{itemize}
	\item $\alpha(D)$ is the reduced size of the smallest vertex cover of type~$D$ in the terminal graph~$(G-S, X)$, and
	\item $\beta(D) = 1$ if and only if there is a unique vertex cover of reduced size~$\alpha(D)$ and type~$D$ in $(G - S, X)$.
\end{itemize}
Notice that similarly to before we only consider sets~$S$ that do not contain any vertices of~$X$.
Moreover, observe that the characteristic is well defined in the sense that there always exists a vertex cover of any type~$D$, e.g., $V(G) \setminus X \cup D$.
Finally, let us remark that a rooted tree~$(T,r)$ can be interpreted as the terminal graph~$(T, \{r\})$ where the sets of type~0 and~1 in $(T, r)$ are exactly the sets of type~$\emptyset$ and~$\{r\}$ in $(T, \{r\})$ respectively.
	
Since we do not care about the absolute sizes of the vertex covers, we again define reduced characteristics.
A \emph{reduced characteristic} of a set $S \subseteq V(G)\setminus X$ with characteristic~$(\alpha,\beta)$ in a terminal graph~$(G,X)$ is a pair of functions~$(\delta, \beta)$ where $\delta \colon 2^X \to [0,n]$ such that $\delta(D) = \alpha(\emptyset) - \alpha(D)$ for every $D \subseteq X$.
Note that $\alpha(\emptyset) - \alpha(D)$ is non-negative for every~$D$ because every vertex cover~$M$ of type~$\emptyset$ can be extended to a vertex cover~$M \cup D$ of type~$D$ and the same reduced size.
This can be seen as a direct generalization of \Cref{obs:tree-alpha-positive}.

Let $x$ be a node of a nice tree decomposition~$(T, \{B_x \mid x\in V(T)\})$ of the input graph~$G$.
Let $Y_x$ be the set of all vertices contained in the bags of the subtree rooted in~$t$.
We associate with the node~$x$ a terminal graph~$(G_x,B_x)$ such that $G_x$ is obtained from the graph~$G[Y_x]$ by removing all edges connecting two vertices of~$B_x$.
In other words, we consider an edge inside a bag only when one of its endpoints is being forgotten.

Now, we describe the high-level overview of the algorithm.
We first compute a nice tree decomposition of the input graph~$G$.
The algorithm proceeds by a dynamic programming along the nice tree decomposition.
Specifically for its each node~$x$, it stores a dynamic programming table $\DPtw_x[\delta, \beta]$ such that $(\delta, \beta)$ is a possible reduced characteristic, and the value of $\DPtw_x[\delta, \beta]$ contains the size of the smallest set with reduced characteristic~$(\delta, \beta)$ in~$(G_x, B_x)$, or~$\infty$ if no such set exists.

At this moment, it might seem pointless to focus on reduced characteristics since the full characteristics would result in an algorithm of similar efficiency and moreover, the computation would arguably be more straightforward.
However, we will show later that the number of reduced characteristics drops significantly in graphs of bounded degree and therefore, the same approach yields an \FPT-algorithm when we additionally parameterise by the maximum degree.
Note that we could alternatively first describe the computation for full characteristic and then derive the behavior of the reduced characteristics, similar to Subsections~\ref{subsec:trees_poly_algo} and~\ref{subsec:trees_linear_algo}.
We choose to present the dynamic programming for reduced characteristics straight away.

First, we observe that the reduced characteristic of a given set~$S$ in the terminal graph~$(G, \emptyset)$ allows to decide whether $S$ is a feasible solution to \MUVCShort.
\begin{observation}\label{obs:tw-final}
	A set~$S$ is a feasible solution to \MUVCShort in a graph~$G$ if and only if the reduced characteristic~$(\delta, \beta)$ of~$S$ in the terminal graph~$(G, \emptyset)$ satisfies $\beta(\emptyset) = 1$.
\end{observation}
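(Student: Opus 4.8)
The plan is to verify that \Cref{obs:tw-final} is a direct generalization of the final-feasibility characterization already established for trees in \Cref{claim:tree-final}, specialized to the case where the terminal set is empty. The key observation is that when $X = \emptyset$, the only possible type is $D = \emptyset$, so every vertex cover of $G$ is of type $\emptyset$, the reduced size coincides with the actual size, and $\alpha(\emptyset)$ is simply the size of a minimum vertex cover of $G - S$ (which here equals $G$ since $S \subseteq V(G) \setminus X = V(G)$, but more precisely we are considering the set $S$ removed from $G$). I would begin by unwinding the definitions: since $X = \emptyset$, there is exactly one $D \subseteq X$, namely $D = \emptyset$, so $\alpha(\emptyset)$ is the reduced size of the smallest vertex cover of $(G - S, \emptyset)$, and $\beta(\emptyset) = 1$ if and only if that minimum vertex cover is unique.

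The core of the argument is then almost immediate. By definition, $S$ is a feasible solution to \MUVCShort exactly when $G - S$ has a unique minimum vertex cover. Since in the terminal graph $(G, \emptyset)$ every vertex cover has type $\emptyset$ and reduced size equal to its actual cardinality, the quantity $\alpha(\emptyset)$ is precisely the size of a minimum vertex cover of $G - S$, and $\beta(\emptyset) = 1$ holds by definition precisely when this minimum vertex cover is unique. Therefore $S$ is feasible if and only if $\beta(\emptyset) = 1$, which is exactly the claimed equivalence. I would present this as a short two-direction unfolding of the definition of $\beta$, noting that no nontrivial case analysis on types is needed because there is only a single type available.

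I do not anticipate any genuine obstacle here, since this observation is strictly simpler than \Cref{claim:tree-final}: the delicate interplay between types $0$ and $1$ (and the conditions $\alpha(0) < \alpha(1)+1$ etc.) that made the tree case subtle collapses entirely, because with $X = \emptyset$ there is no competing type whose cover size could tie or interfere. The only point worth stating carefully is that the reduced size equals the true size when $X = \emptyset$, so that ``unique minimum vertex cover of $G - S$'' and ``unique vertex cover of reduced size $\alpha(\emptyset)$ in $(G - S, \emptyset)$'' refer to the same object; once this identification is made explicit, the equivalence is a one-line consequence of the definitions of characteristic and reduced characteristic.
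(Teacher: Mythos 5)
Your proof is correct and takes essentially the same route as the paper: the paper states \Cref{obs:tw-final} without any proof, treating it as exactly the definitional unfolding you give (with $X=\emptyset$ there is a single type $D=\emptyset$, reduced size coincides with actual size, so $\beta(\emptyset)=1$ holds precisely when $G-S$ has a unique minimum vertex cover, which is the definition of feasibility for \MUVCShort). The only blemish is your garbled parenthetical suggesting $G-S$ ``equals $G$,'' which you immediately correct and which plays no role in the argument.
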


For the root node~$r$ of the nice tree decomposition, we have $B_r = \emptyset$ and $G_r = G$ by definition.
So after computing the whole table $\DPtw_r[\cdot,\cdot]$ for the root node, the algorithm returns the smallest entry $\DPtw_r[\delta, \beta]$ such that $\beta(\emptyset) = 1$.

\paragraph{Leaf node.}
Let $x$ be a leaf node and thus, $B_x = \emptyset$.
There is only one choice for both~$S \subseteq V(G_x)$ and $D \subseteq B_x$, that is, the empty set. 
Therefore, the empty set has reduced characteristic~$(\delta_\emptyset, \beta_\emptyset)$ where $\delta_\emptyset(\emptyset) = 0$ and $\beta_\emptyset(\emptyset) = 1$.
We set
\begin{equation}\label{eq:tw-leaf}
	\DPtw_x[\delta, \beta] = \begin{cases}
		0 &\text{if $(\delta, \beta) = (\delta_{\emptyset}, \beta_{\emptyset})$, and}\\
		\infty &\text{otherwise.}
	\end{cases}
\end{equation}

\paragraph{Introduce node.}
Let $x$ be an introduce node introducing a vertex~$v$ and let~$y$ be its child.
The transformation of reduced characteristics in an introduce node is quite straightforward.
We define a function~$f$ acting on reduced characteristics such that $f(\delta_y, \beta_y) = (\delta_x, \beta_x)$ where for all $D \subseteq B_x$,
\begin{equation}
(\delta_x(D), \beta_x(D)) = (\delta_y(D \setminus \{v\}), \beta_y(D \setminus \{v\})).
\end{equation}

\begin{claim}\label{claim:tw-introduce-correctness}
	Let $S \subseteq V(G_y) \setminus B_y$ be an arbitrary set with reduced characteristic~$(\delta_y, \beta_y)$ in~$(G_y, B_y)$. Then $S$ has reduced characteristic $f(\delta_y, \beta_y)$ in~$(G_x, B_x)$.
\end{claim}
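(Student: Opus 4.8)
The plan is to prove \Cref{claim:tw-introduce-correctness} by analyzing how the newly introduced vertex~$v$ interacts with vertex covers of the terminal graph. The crucial structural observation is that in an introduce node, $v$ is isolated in~$G_x$: by the definition of the associated terminal graph, all edges among vertices of the bag~$B_x$ are removed, and any other edge incident to~$v$ would require $v$ to share a forget-node with its neighbor, which cannot have happened yet since $v$ is only now being introduced. Hence $G_x = G_y \cup \{v\}$ with $v$ an isolated vertex, and $B_x = B_y \cup \{v\}$ while $S \subseteq V(G_y) \setminus B_y = V(G_x) \setminus B_x$ is untouched.

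First I would fix a type $D \subseteq B_x$ and relate vertex covers of type~$D$ in $(G_x - S, B_x)$ to vertex covers of type $D \setminus \{v\}$ in $(G_y - S, B_y)$. Since $v$ is isolated, a set $M \subseteq V(G_x) \setminus S$ is a vertex cover of $G_x - S$ of type~$D$ if and only if $M \setminus \{v\}$ is a vertex cover of $G_y - S$ of type $D \setminus \{v\}$; including or excluding~$v$ from the cover is a free choice dictated solely by whether $v \in D$, and it never affects coverage of any edge. Because $v \in B_x$ is a terminal, its presence in the cover does not contribute to the \emph{reduced} size $|M \setminus B_x|$. Consequently, the minimum reduced size of a type-$D$ cover in $(G_x - S, B_x)$ equals the minimum reduced size of a type-$(D \setminus \{v\})$ cover in $(G_y - S, B_y)$, giving $\alpha_x(D) = \alpha_y(D \setminus \{v\})$, and the uniqueness of such a minimum cover is preserved under the bijection $M \mapsto M \setminus \{v\}$, giving $\beta_x(D) = \beta_y(D \setminus \{v\})$.

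From this identity on full characteristics I would then read off the claim for reduced characteristics. Applying the relation at $D = \emptyset$ yields $\alpha_x(\emptyset) = \alpha_y(\emptyset)$ (note $\emptyset \setminus \{v\} = \emptyset$), so for every $D \subseteq B_x$ we compute $\delta_x(D) = \alpha_x(\emptyset) - \alpha_x(D) = \alpha_y(\emptyset) - \alpha_y(D \setminus \{v\}) = \delta_y(D \setminus \{v\})$, which is exactly the first coordinate of $f(\delta_y, \beta_y)$, while the equality $\beta_x(D) = \beta_y(D \setminus \{v\})$ matches the second coordinate. This establishes that $S$ has reduced characteristic $f(\delta_y, \beta_y)$ in $(G_x, B_x)$.

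The step I expect to require the most care is justifying that $v$ is genuinely isolated in $G_x$, i.e., that removing bag-internal edges together with the ``introduce then forget'' timing really does guarantee $v$ has no neighbors in $G_x$. This hinges on the convention (stated just before the claim) that an edge is incorporated into the terminal graph only when one of its endpoints is forgotten; since no vertex adjacent to~$v$ can yet have been forgotten in the subtree rooted at~$x$, no edge incident to~$v$ is present. Once this is pinned down, the bijection argument and the arithmetic on $\delta$ are routine.
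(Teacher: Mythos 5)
Your proposal is correct and follows essentially the same route as the paper: both establish the full-characteristic identities $\alpha_x(D) = \alpha_y(D \setminus \{v\})$ and $\beta_x(D) = \beta_y(D \setminus \{v\})$ via a size-preserving correspondence of vertex covers (the paper splits into the cases $v \in D$ and $v \notin D$, using that $v$ is isolated in $G_x$ with $E(G_x) = E(G_y)$, while you phrase it as a single bijection $M \mapsto M \setminus \{v\}$ — a purely cosmetic difference), and then derive $\delta_x(D) = \delta_y(D \setminus \{v\})$ by the same arithmetic using $\alpha_x(\emptyset) = \alpha_y(\emptyset)$. Your extra care in justifying that $v$ is isolated in $G_x$ (no neighbor of $v$ can have been forgotten in the subtree rooted at $x$, and bag-internal edges are removed) is sound and in fact makes explicit a fact the paper merely asserts.
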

\begin{proofclaim}
Let $(\alpha_x, \beta_x)$ and $(\alpha_y, \beta_y)$ be the full characteristics of~$S$ in~$(G_x, B_x)$ and $(G_y, B_y)$ respectively.
We show that for every type $D \subseteq B_x$, we have $\alpha_x(D) = \alpha_y(D\setminus \{v\})$ and $\beta_x(D) = \beta_y(D\setminus \{v\})$.
The claim then follows since
\begin{equation*}
\delta_x(D) = \alpha_x(\emptyset) - \alpha_x(D) = \alpha_y(\emptyset \setminus \{v\}) - \alpha_y(D\setminus \{v\}) = \alpha_y(\emptyset) - \alpha_y(D\setminus \{v\}) =  \delta_y(D \setminus \{v\}).
\end{equation*}

We split the argument into two cases.
First, suppose that $v \in D$.
In this case, a set $M \subseteq V(G_y -S)$ is a vertex cover of type $D \setminus \{v\}$ in~$(G_y -S, B_y)$ if and only if $M \cup \{v\}$ is a vertex cover of type~$D$ in~$(G_x -S, B_x)$.
Moreover, the reduced size is preserved and thus, $\alpha_x(D) = \alpha_y(D\setminus \{v\})$ and $\beta_x(D) = \beta_y(D\setminus \{v\})$.

Now, assume that $v \notin D$.
If $M$ is a vertex cover of type~$D$ in~$(G_x-S, B_x)$, then $M$ is trivially also a vertex cover of type~$D$ in~$(G_y-S, B_y)$ because $G_y$ is a subgraph of~$G_x$.
On the other hand if $M$ is a vertex cover of type~$D$ in~$(G_y-S, B_y)$, then $M$ is still a vertex cover of type~$D$ in~$(G_x-S, B_x)$ because the vertex~$v$ is an isolated vertex in~$G_x$ and $E(G_x) = E(G_y)$.
It follows that $\alpha_x(D) = \alpha_y(D\setminus \{v\})$ and $\beta_x(D) = \beta_y(D\setminus \{v\})$ in this case as well.
\end{proofclaim}

The computation of $\DPtw_x[\cdot, \cdot]$ follows the now familiar scheme.
That is, we set
\begin{equation}\label{eq:tw-introduce}
	\DPtw_x[\delta,\beta] = 	\min_{(\delta_y, \beta_y) \in f^{-1}(\delta, \beta)} \DPtw_y[\delta_y, \beta_y]
\end{equation}
where we take the minimum over empty set to be~$\infty$.

\paragraph{Join node.}
Let $x$ be an join node and let~$y$ and $z$ be its children.
The way reduced characteristics are combined in join nodes remains basically the same as in the case of trees.
We define a function~$g$ acting on pairs reduced characteristics such that $g((\delta_y, \beta_y), (\delta_z, \beta_z)) = (\delta_x, \beta_x)$ where
\begin{align}\label{eq:tw-join-char}
	\begin{split}
		\delta_x(D) &=  \delta_y(D) + \delta_z(D) \text{, and}\\
		\beta_x(D) &= \min(2,\, \beta_y(D) \cdot \beta_z(D)).
	\end{split}
\end{align}

\begin{claim}\label{claim:tw-join-correctness}
Let $S_y \subseteq V(G_y) \setminus B_y$ be an arbitrary set of reduced characteristic~$(\delta_y, \beta_y)$ in~$(G_y, B_y)$ and let $S_z \subseteq V(G_z) \setminus B_z$ be an arbitrary set of reduced characteristic~$(\delta_z, \beta_z)$ in~$(G_z, B_z)$.
Then $S_y \cup S_z$ has reduced characteristic $g((\delta_y, \beta_y), (\delta_z, \beta_z))$ in~$(G_x, B_x)$.
\end{claim}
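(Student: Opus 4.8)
The plan is to prove \Cref{claim:tw-join-correctness} by reducing it to the already-established \Cref{claim:tw-introduce-correctness}-style reasoning at the level of \emph{full} characteristics, and then deriving the reduced-characteristic statement the same way the tree case was handled. So first I would let $(\alpha_y, \beta_y)$ and $(\alpha_z, \beta_z)$ denote the full characteristics of $S_y$ and $S_z$ in $(G_y, B_y)$ and $(G_z, B_z)$ respectively, let $S = S_y \cup S_z$, and let $(\alpha_x, \beta_x)$ be the full characteristic of $S$ in $(G_x, B_x)$. The key structural fact to record at the outset is that in a join node $B_x = B_y = B_z$, the vertex sets satisfy $V(G_y) \cap V(G_z) = B_x$, and crucially $E(G_x) = E(G_y) \cup E(G_z)$ with no edges inside $B_x$ in any of the three terminal graphs. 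This is exactly why the type-$D$ decomposition goes through cleanly: there are no ``cross'' edges to worry about beyond those captured separately in $G_y$ and $G_z$.

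The central step is the vertex-cover decomposition lemma analogous to the tree join. Fix a type $D \subseteq B_x$ and consider any $M \subseteq V(G_x) \setminus S$ with $M \cap B_x = D$. Setting $M_y = M \cap V(G_y)$ and $M_z = M \cap V(G_z)$, I would argue that $M$ is a vertex cover of type $D$ in $(G_x - S, B_x)$ if and only if $M_y$ is a vertex cover of type $D$ in $(G_y - S_y, B_y)$ and $M_z$ is a vertex cover of type $D$ in $(G_z - S_z, B_z)$. The forward direction is immediate since every edge of $G_x$ lives entirely in $G_y$ or in $G_z$; the backward direction uses that $M_y$ and $M_z$ agree on $B_x$ (both equal $D$), so they glue to a consistent set whose restriction to either side is a cover there. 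From this equivalence the additivity of reduced sizes follows: since the shared part $D \subseteq B_x$ is never counted in a reduced size, we get $\alpha_x(D) = \alpha_y(D) + \alpha_z(D)$, with no double-counting of the terminal vertices in $D$. The uniqueness bookkeeping then gives that there is a unique minimum vertex cover of type $D$ in $(G_x - S, B_x)$ exactly when there are unique minimum covers of type $D$ on both sides, i.e. $\beta_x(D) = 1$ iff $\beta_y(D) = \beta_z(D) = 1$, which matches $\beta_x(D) = \min(2, \beta_y(D)\cdot\beta_z(D))$.

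Finally I would pass from full to reduced characteristics exactly as in the proof of \Cref{claim:tree-reduced-join-correctness}. The $\beta$ part needs nothing further. For $\delta$, I would compute
\begin{align*}
\delta_x(D) &= \alpha_x(\emptyset) - \alpha_x(D)\\
 &= \bigl(\alpha_y(\emptyset) + \alpha_z(\emptyset)\bigr) - \bigl(\alpha_y(D) + \alpha_z(D)\bigr)\\
 &= \bigl(\alpha_y(\emptyset) - \alpha_y(D)\bigr) + \bigl(\alpha_z(\emptyset) - \alpha_z(D)\bigr)\\
 &= \delta_y(D) + \delta_z(D),
\end{align*}
where the second equality is the full-characteristic additivity just established. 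This is precisely the definition of $g$ in \eqref{eq:tw-join-char}, completing the proof. The main obstacle I anticipate is not the additivity itself but arguing the uniqueness transfer rigorously: I must make sure that distinct minimum covers $M \neq M'$ of type $D$ in $(G_x - S, B_x)$ correspond bijectively to distinct pairs of minimum covers on the two sides, which relies on the decomposition being a genuine bijection between type-$D$ covers of $G_x - S$ and pairs of type-$D$ covers of the two children, not merely a surjection. Because $M_y$ and $M_z$ are forced to coincide on $B_x = D$, this gluing is well-defined and invertible, so the bijection holds and the uniqueness claim follows.
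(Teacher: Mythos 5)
Your proposal is correct and follows essentially the same route as the paper's proof: both pass to full characteristics, establish $\alpha_x(D) = \alpha_y(D) + \alpha_z(D)$ and the uniqueness transfer via the decomposition $M_y = M \cap V(G_y)$, $M_z = M \cap V(G_z)$, and then obtain $\delta_x(D) = \delta_y(D) + \delta_z(D)$ by the same telescoping computation. The only difference is presentational: you spell out the gluing bijection and the fact that $E(G_x) = E(G_y) \cup E(G_z)$ with $V(G_y) \cap V(G_z) = B_x$, which the paper leaves implicit by deferring to the analogous tree-join argument.
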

\begin{proofclaim}
Let $(\alpha_x, \beta_x)$ be the full characteristics of~$S_x \cup S_y$ in~$(G_x, B_x)$ and let $(\alpha_x, \beta_x)$ and $(\alpha_x, \beta_x)$ be the full characteristic of $S_y$ in~$(G_y, B_y)$ and $S_z$ in~$(G_z, B_z)$ respectively.
We show that for every type $D \subseteq B_x$, we have $\alpha_x(D) = \alpha_y(D) + \alpha_z(D)$ and $\beta_x(D) = \min(2,\, \beta_y(D) \cdot \beta_z(D))$.
The claim then follows since
\begin{equation*}
\delta_x(D) = \alpha_x(\emptyset) - \alpha_x(D) = \alpha_y(\emptyset) + \alpha_z(\emptyset) - \alpha_y(D) - \alpha_z(D)\\ = \alpha_y(\emptyset) - \alpha_y(D) =  \delta_y(D) + \delta_z(D).
\end{equation*}

The rest of the proof is analogous to the proof of \Cref{claim:tree-join-correctness}.
Let $M \subseteq V(G_x - (S_y \cup S_z))$ and set $M_y = M \cap V(G_y)$ and $M_z = M \cap V(G_z)$.
We observe that for any $D \subseteq B_x$, $M$ is a minimum vertex cover of type~$D$ in~$(G_x-S, B_x)$ if and only if $M_y$ is a minimum vertex cover of type~$D$ in~$(G_y-S_y, B_y)$ and $M_z$ is a minimum vertex cover of type~$D$ in~$(G_z-S_z, B_z)$ and hence, $\alpha_x(D) = \alpha_y(D) + \alpha_z(D)$.
It is again important here that we store the reduced sizes in characteristics because in terms of real sizes, we have $|M| = |M_y| + |M_z| - |D|$.
Finally, it follows that there is a unique minimum vertex cover of type~$D$ in~$(G_x -(S_y \cup S_z), B_x)$ if and only if there are unique minimum vertex covers of type~$D$ in both $(G_y -S_y, B_y)$ and $(G_z -S_z, B_z)$, that is, exactly if and only if $\beta_y(D) = \beta_z(D) = 1$.
\end{proofclaim}

The algorithm finds the minimum value $\DPtw_y[\delta_y, \beta_y] + \DPtw_z[\delta_z, \beta_z]$ over all preimages $(\delta_y, \beta_y)$ and $(\delta_z, \beta_z)$ under~$g$.
That is, we set
\begin{equation}\label{eq:tw-join}
	\DPtw_x[\delta, \beta] =\min_{((\delta_x, \beta_x),(\delta_z,\beta_z)) \in g^{-1}(\delta, \beta)} \DPtw_y[\delta_y, \beta_y] + \DPtw_z[\delta_z, \beta_z]
\end{equation}
where we take the minimum over empty set to be~$\infty$.

\paragraph{Forget node.}
Let $x$ be a forget node introducing a vertex~$v$ and let~$y$ be its child.
As we already discussed, the computation in forget node is more complicated because we have to both (i) take care of the edges between~$v$ and the bag~$B_x$, and (ii) consider the situation when $v$ belongs to the set~$S$.
Thus, we define two functions acting on characteristics that describe how the reduced characteristic of a fixed set $S \subseteq V(G_y) \setminus B_y$ translates to the reduced characteristics of the set $S$ and $S \cup \{r'\}$ in~$(G_x, B_x)$.
In the first case, the characteristic also depends on the neighborhood of~$v$ within~$B_x$ and therefore, we define the transformation specifically for the vertex~$v$.
For any set $D \subseteq B_x$, let $D_+$ denote the set $D \cup \{v\}$.
We define~$h^v_{\mathsf{id}}$ such that $h^v_{\mathsf{id}}(\delta_y, \beta_y) = (\delta_x, \beta_x)$ such that
\begin{align}\label{eq:tw-forget-char-1}
\begin{split}
\delta_x(D) &=  \begin{cases}
	\delta_y(D) - \gamma &\begin{gathered}
	\text{if $N_G(v) \cap B_x \subseteq D$ and}\hfill\\\quad\text{$\delta_y(D_+) \le \delta_y(D) + 1$,}
	\end{gathered}\\
	\delta_y(D_+) - \gamma - 1 &\text{otherwise.}\\
\end{cases}\\
\beta_x(D) &= \begin{cases}
	\beta_y(D) &\begin{gathered}
		\text{if $N_G(v) \cap B_x \subseteq D$ and}\hfill\\\qquad\text{$\delta_y(D_+) < \delta_y(D) + 1$,}
	\end{gathered}\\
	\beta_y(D_+) &\begin{gathered}
		\text{if $N_G(v) \cap B_x \not\subseteq D$ or}\hfill\\\qquad\text{$\delta_y(D_+) > \delta_y(D) + 1$,}
	\end{gathered}\\
	2 &\text{otherwise.}
\end{cases}
\end{split}
\end{align}
where
\begin{equation}\label{eq:tw-forget-gamma} 
\gamma = \begin{cases}
	0 &\begin{gathered}
		\text{if $N_G(v) \cap B_x = \emptyset$ and}\\
		\hfill\text{$\delta_y(\{v\}) \le 1$, and}
	\end{gathered}\\
	\delta_y(\{v\}) - 1 &\text{otherwise.}
\end{cases}
\end{equation}
Now, we define~$h_{+}$ such that $h_{+}(\delta_y, \beta_y) = (\delta_x, \beta_x)$ where
\begin{align}\label{eq:tw-forget-char-2} 
		\begin{split}
			\delta_x(D) &= \delta_y(D \cup \{v\}) - \delta_y(\{v\}) \text{, and}\\
			\beta_x(D) &= \beta_y(D \cup \{v\}).
		\end{split}
\end{align}

\begin{claim}\label{claim:tw-forget-correctness}
	Let $S \subseteq V(G_y) \setminus B_y$ be an arbitrary set with reduced characteristic~$(\delta_y, \beta_y)$ in~$(G_y, B_y)$. Then the reduced characteristics of $S$ in~$(G_x, B_x)$ is $h^v_\mathsf{id}(\delta_y, \beta_y)$.
\end{claim}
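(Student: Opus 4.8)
The plan is to first pin down how the terminal graph $(G_x, B_x)$ differs from $(G_y, B_y)$. Since $x$ forgets $v$, the vertex sets coincide and $B_y = B_x \cup \{v\}$, while $G_x$ is obtained from $G_y$ by \emph{adding back} exactly the edges between $v$ and $N_G(v) \cap B_x$: these were suppressed in $G_y$ because they joined two terminals of $B_y$, but in $G_x$ only one endpoint, namely $v$, lies outside $B_x$. Because $S$ contains neither $v$ nor any terminal, all of these edges survive in $G_x - S$. With this in hand, I would analyse a candidate vertex cover $M$ of type $D$ in $(G_x - S, B_x)$ by splitting on whether $v \in M$.

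If $v \in M$, then $M$ covers every added edge, and looking only at the edges of $G_y - S$ shows that $M$ is precisely a vertex cover of type $D_+ = D \cup \{v\}$ in $(G_y - S, B_y)$; since $v$ is counted in the reduced size relative to $B_x$ but not relative to $B_y$, such covers have reduced size $\alpha_y(D_+) + 1$. If instead $v \notin M$, the added edges force $N_G(v) \cap B_x \subseteq D$, and conversely whenever this inclusion holds such an $M$ is exactly a type-$D$ vertex cover of $(G_y - S, B_y)$ of the same reduced size $\alpha_y(D)$. Taking the minimum over the available branches gives $\alpha_x(D) = \min(\alpha_y(D_+)+1, \alpha_y(D))$ when $N_G(v) \cap B_x \subseteq D$ and $\alpha_x(D) = \alpha_y(D_+)+1$ otherwise; counting minimum covers across the two branches gives $\beta_x(D)$, where the strictly smaller branch dictates uniqueness and a tie forces at least two distinct minimum covers (they disagree on $v$) and hence $\beta_x(D) = 2$.

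I would then pass to reduced characteristics. Evaluating the size formula at $D = \emptyset$ yields $\alpha_x(\emptyset) = \alpha_y(\emptyset) - \gamma$, where the two cases defining $\gamma$ correspond exactly to whether the branch $v \notin M$ is even available (i.e.\ whether $N_G(v) \cap B_x = \emptyset$) and, if so, whether it beats the branch $v \in M$ (i.e.\ whether $\delta_y(\{v\}) \le 1$). Substituting into $\delta_x(D) = \alpha_x(\emptyset) - \alpha_x(D)$ and using $\delta_y(D) = \alpha_y(\emptyset) - \alpha_y(D)$ turns every size comparison of $\alpha_y(D)$ against $\alpha_y(D_+)+1$ into the stated comparison of $\delta_y(D_+)$ against $\delta_y(D)+1$, and the global shift by $\gamma$ reproduces precisely the two cases in the definition of $\delta_x$. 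The $\beta_x$ rules transfer verbatim, since the uniqueness analysis depends only on these same size comparisons and is therefore invariant under the substitution.

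The step I expect to be the main obstacle is the tie case in the uniqueness computation: when the branch $v \notin M$ is feasible and both branches attain the same reduced size, I must argue that each branch contributes at least one minimum cover and that the two contributions are genuinely distinct (immediate, as they differ on membership of $v$), so that the total count is at least two and $\beta_x(D) = 2$ irrespective of $\beta_y(D)$ and $\beta_y(D_+)$. A secondary subtlety is bookkeeping the single $+1$ reduced-size offset attached to $v$ consistently, and isolating it into the global constant $\gamma$ through the $D = \emptyset$ evaluation so that it cancels in every difference $\delta_x(D)$.
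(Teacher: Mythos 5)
Your proposal is correct and follows essentially the same route as the paper's proof: establish the full-characteristic recurrence $\alpha_x(D) = \min(\alpha_y(D),\, \alpha_y(D_+)+1)$ (with the $v\notin M$ branch available exactly when $N_G(v)\cap B_x \subseteq D$, and the tie forcing $\beta_x(D)=2$ via two covers disagreeing on $v$), then pass to reduced characteristics by identifying $\gamma = \alpha_y(\emptyset) - \alpha_x(\emptyset)$ through the $D=\emptyset$ case and converting each comparison of $\alpha_y(D)$ against $\alpha_y(D_+)+1$ into one of $\delta_y(D_+)$ against $\delta_y(D)+1$. Your per-cover split on $v \in M$ versus the paper's split on the structural condition $N_{G_x}(v)\cap B_x \subseteq D$ is only a cosmetic reorganization, and the two subtleties you flag (nonemptiness of both branches in the tie case, which follows since covers of every type exist, and the bookkeeping of the $+1$ offset into $\gamma$) are exactly the points the paper's proof handles.
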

\begin{proofclaim}
We first focus on the full characteristics.
To that end, let $(\alpha_x, \beta_x)$ and $(\alpha_y, \beta_y)$ be the full characteristic of~$S$ in~$(G_x, B_x)$ and in~$(G_y, B_y)$ respectively.
Fix a type~$D \subseteq B_x$ and let $D_+$ denote the set $D \cup \{v\}$ as before.
Our first goal is to show that
\begin{align}\label{eq:tw-forget-alpha}
\begin{split}
\alpha_x(D) &= \begin{cases}
\alpha_y(D_+) + 1 &\text{if $N_{G_x}(v) \cap B_x \not\subseteq D$,}\\
\min\left(\begin{gathered}
	\alpha_y(D),\\ \alpha_y(D_+) + 1
\end{gathered}\right) &\text{otherwise.}
\end{cases}\\
\beta_x(D) &= \begin{cases}
	\beta_y(D) &\begin{gathered}
		\text{if $N_{G_x}(v) \cap B_x \subseteq D$ and}\hfill\\\quad\text{$\alpha_y(D) < \alpha_y(D_+) + 1$,}
	\end{gathered}\\
	\beta_y(D_+) &\begin{gathered}
		\text{if $N_{G_x}(v) \cap B_x \not\subseteq D$ or}\hfill\\\quad\text{$\alpha_y(D) > \alpha_y(D_+) + 1$,}
	\end{gathered}\\
	2 &\text{otherwise.}
\end{cases}
\end{split}
\end{align}
We split the argument into the two cases depending on the neighborhood of~$v$ within~$B_x$.

First, suppose that $N_{G_x}(v) \not\subseteq D$, i.e., there exists a neighbor~$w$ of~$v$ in the bag~$B_x$ outside of~$D$.
In this case, every vertex cover of type~$D$ in~$(G_x -S, B_x)$ must contain the vertex~$v$ to cover the edge~$\{v,w\}$.
It follows that $M \subseteq V(G_x) \setminus S$ is a vertex cover of type~$D$ in~$(G_x -S, B_x)$ if and only if~$M$ is a vertex cover of type~$D_+$ in~$(G_y -S, B_y)$.
The reduced size of~$M$ increases when~$v$ leaves the bag and thus, we get $\alpha_x(D) = \alpha_y(D_+) + 1$.
Moreover, the uniqueness is carried over and $\beta_x(D) = \beta_y(D_+)$.
Thus, the expression~\eqref{eq:tw-forget-alpha} holds for this case.

Now, suppose that $N_{G_x}(v) \subseteq D$, i.e., all neighbors of~$v$ in the bag~$B_x$ are contained in~$D$.
In this case, a set $M \subseteq V(G_x) \setminus S$ is a vertex cover of type~$D$ in~$(G_x -S, B_x)$ if and only if $M$ is vertex cover of type~$D$ or~$D_+$ in~$(G_y -S, B_y)$.
It follows that a minimum vertex cover~$M$ of type~$D$ in $(G_x -S, B_x)$ has reduced size either~$\alpha_y(D)$ or $\alpha_y(D_+) + 1$ depending on which of the two is smaller.
Moreover, there is a unique minimum vertex cover of type~$D$ if and only if either (i) $\alpha_y(D) < \alpha_y(D_+) + 1$ and there is a unique minimum cover of type~$D$ in~$(G_y-S, B_y)$, or (ii) $\alpha_y(D_+) + 1 < \alpha_y(D)$ and there is a unique minimum cover of type~$D_+$ in~$(G_y-S, B_y)$.
This finishes proof of the expression~\eqref{eq:tw-forget-alpha}.

Our next step is to deduce the correctness of the computation in~$\eqref{eq:tw-forget-char-1}$ from~\eqref{eq:tw-forget-alpha}.
We start by showing that the conditions are equivalent.
By simple manipulation of terms, we get
\begin{align*}
\alpha_y(D) - \alpha_y(D_+) &= \alpha_y(D) - \alpha_y(D_+) + \alpha_y(\emptyset) - \alpha_y(\emptyset)\\
&= \delta_y(D_+) - \delta_y(D).
\end{align*}
It follows that $\alpha_y(D) < \alpha_y(D_+) + 1$ if and only if $\delta_y(D_+) < \delta_y(D) + 1$, and similarly $\alpha_y(D_+) + 1 < \alpha_y(D)$ if and only if $\delta_y(D_+) > \delta_y(D) + 1$.
This already shows that the computation of~$\beta_x$ in~\eqref{eq:tw-forget-char-1} is equivalent to~\eqref{eq:tw-forget-alpha}.

Next, we show that the value~$\gamma$ set in~\eqref{eq:tw-forget-gamma} is exactly equal to~$\alpha_y(\emptyset) - \alpha_x(\emptyset)$.
By~\eqref{eq:tw-forget-alpha}, we have $\alpha_x(\emptyset) = \alpha_y(\emptyset)$ if and only if~$N_{G_x}(v)\cap B_x  = \emptyset$ and $\alpha_y(\emptyset) \le \alpha_y(\{v\}) + 1$.
As we have seen, the condition~$\alpha_y(\emptyset) \le \alpha_y(\{v\}) + 1$ exactly equivalent to $\delta_y(\{v\}) \le \delta_y(\emptyset) + 1 = 1$ and we  indeed get $\gamma = 0 = \alpha_y(\emptyset) - \alpha_x(\emptyset)$ in this case.
Otherwise, we have $\alpha_x(\emptyset) = \alpha_y(\{v\}) + 1$ again by~\eqref{eq:tw-forget-alpha} and we obtain
\[\alpha_y(\emptyset) - \alpha_x(\emptyset) = \alpha_y(\emptyset) - \alpha_y(\{v\}) - 1 = \delta_y(\{v\}) - 1.\]

Plugging this into $\delta_y(D) - \gamma$ for arbitrary~$D$, we get
\begin{align*}
\delta_y(D) - \gamma &= \alpha_y(\emptyset) - \alpha_y(D) - \alpha_y(\emptyset) + \alpha_x(\emptyset)\\
&= \alpha_x(\emptyset) - \alpha_y(D).
\end{align*}
It follows that if $N_{G_x}(v) \cap B_x \subseteq D$ and $\alpha_y(D) \le \alpha_y(D_+) + 1$, then $\alpha_x(D) = \alpha_y(D)$ by~\eqref{eq:tw-forget-alpha} and thus, we get
\[\delta_y(D) - \gamma = \alpha_x(\emptyset) - \alpha_y(D) = \alpha_x(\emptyset) - \alpha_x(D) = \delta_x(D).\]
Otherwise, we have $\alpha_x(D) = \alpha_y(D_+) + 1$ by~\eqref{eq:tw-forget-alpha} and we see that
\begin{equation*}
\delta_y(D_+) - \gamma - 1 = \alpha_x(\emptyset) - \alpha_y(D_+) - 1 = \alpha_x(\emptyset) - \alpha_x(D) = \delta_x(D).
\end{equation*}

Both cases exactly match the computation of~$\delta_x$ in~\eqref{eq:tw-forget-char-1} which finishes the proof.
\end{proofclaim}

\begin{claim}\label{claim:tw-forget-correctness-2}
	Let $S \subseteq V(G_y) \setminus B_y$ be an arbitrary set with reduced characteristic~$(\delta_y, \beta_y)$ in~$(G_y, B_y)$. Then the reduced characteristics of $S \cup \{v\}$ in~$(G_x, B_x)$ is $h_+(\delta_y, \beta_y)$.
\end{claim}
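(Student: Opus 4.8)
The plan is to mirror the approach used for the first forget-node transformation in \Cref{claim:tw-forget-correctness}, but now tracking what happens when $v$ is \emph{deleted} rather than kept, and to pass through the full characteristics before translating the resulting identity into the language of reduced characteristics.

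First I would set up the correspondence between vertex covers in the two terminal graphs. Recall that $B_y = B_x \cup \{v\}$ and that $G_x$ differs from $G_y$ only in that $G_x$ additionally contains the edges joining $v$ to its neighbors in $B_x$ (these lie inside the bag $B_y$ and are therefore suppressed in $G_y$). Since $v \in S \cup \{v\}$, every edge incident to $v$ disappears in $G_x - (S \cup \{v\})$, so the edge set of $G_x - (S \cup \{v\})$ consists exactly of the edges of $G_y - S$ that are not incident to $v$. The key observation I would establish is that a set $M \subseteq V(G_x) \setminus (S \cup \{v\})$ is a vertex cover of $G_x - (S \cup \{v\})$ if and only if $M \cup \{v\}$ is a vertex cover of $G_y - S$: adding $v$ to $M$ covers precisely the edges of $G_y - S$ incident to $v$, while the remaining edges coincide with those of $G_x - (S \cup \{v\})$. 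Under this bijection a vertex cover of type $D$ in $(G_x - (S\cup\{v\}), B_x)$ corresponds to a vertex cover of type $D_+ = D \cup \{v\}$ in $(G_y - S, B_y)$.

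Next I would verify that reduced sizes are preserved. Since $v \in B_y$ and $v \notin M$, we have $(M \cup \{v\}) \setminus B_y = M \setminus B_x$, so the reduced size of $M \cup \{v\}$ in $(G_y, B_y)$ equals the reduced size of $M$ in $(G_x, B_x)$. Consequently, writing $(\alpha_x, \beta_x)$ and $(\alpha_y, \beta_y)$ for the full characteristics of $S \cup \{v\}$ in $(G_x, B_x)$ and of $S$ in $(G_y, B_y)$ respectively, the correspondence yields $\alpha_x(D) = \alpha_y(D_+)$ and $\beta_x(D) = \beta_y(D_+)$ for every $D \subseteq B_x$.

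Finally I would translate this into reduced characteristics. Instantiating the above at $D = \emptyset$ gives $\alpha_x(\emptyset) = \alpha_y(\{v\})$, and hence a short computation shows that for every $D \subseteq B_x$,
\begin{align*}
\delta_x(D) = \alpha_x(\emptyset) - \alpha_x(D) &= \alpha_y(\{v\}) - \alpha_y(D_+) \\
&= \bigl(\alpha_y(\emptyset) - \alpha_y(D_+)\bigr) - \bigl(\alpha_y(\emptyset) - \alpha_y(\{v\})\bigr) \\
&= \delta_y(D_+) - \delta_y(\{v\}),
\end{align*}
which together with $\beta_x(D) = \beta_y(D_+)$ matches the definition of $h_+$ in~\eqref{eq:tw-forget-char-2}. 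I expect the main care to be needed in the bookkeeping of the edge sets — precisely identifying which edges of $G_x$ and $G_y$ coincide once $v$ is removed — and in the reduced-size accounting, ensuring that forgetting $v$ from the bag (so that it no longer contributes to reduced size on the child side) is correctly offset by the deletion of $v$ on the parent side; everything else reduces to the same type of syntactic manipulation as in \Cref{claim:tw-forget-correctness}.
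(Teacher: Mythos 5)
Your proposal is correct and takes essentially the same route as the paper's proof: the bijection $M \mapsto M \cup \{v\}$ between vertex covers of type~$D$ in $(G_x - (S\cup\{v\}), B_x)$ and of type~$D \cup \{v\}$ in $(G_y - S, B_y)$, the preservation of reduced sizes, and the telescoping computation $\delta_x(D) = \alpha_y(\{v\}) - \alpha_y(D\cup\{v\}) = \delta_y(D\cup\{v\}) - \delta_y(\{v\})$ matching~\eqref{eq:tw-forget-char-2}. Your explicit edge-set bookkeeping (that $G_x$ equals $G_y$ plus the edges from $v$ into $B_x$, all of which vanish once $v$ is deleted) merely spells out what the paper states as a bare observation.
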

\begin{proofclaim}
Let $S_+$ denote the set~$S \cup \{v\}$ and let $(\alpha_x, \beta_x)$ and $(\alpha_y, \beta_y)$ be the full characteristics of $S_+$ in~$(G_x,B_x)$ and $S$ in~$(G_y, B_y)$ respectively.

Observe that $M$ is a vertex cover in~$(G_x - S_+, B_x)$ if and only if $M \cup \{v\}$ is a vertex cover in~$(G_y - S, B_y)$.
Moreover, their reduced sizes are the same.
It follows that~$M$ is a minimum vertex cover of type~$D$ in $(G_x - S_+, B_x)$ if and only if $M \cup \{v\}$ is a minimum vertex cover in~$(G_y - S, B_y)$.
Therefore, we get that $\alpha_x(D) = \alpha_y(D_+)$ and $\beta_x(D) = \beta_y(D_+)$.
This already matches the computation of~$\beta_x$ in~\eqref{eq:tw-forget-char-2}.
Finally, wee see that	
\begin{equation*}
\delta_x(D) = \alpha_x(\emptyset) - \alpha_x(D)= \alpha_y(\{v\}) - \alpha_y(D_+)\\
 = \alpha_y(\{v\}) - \alpha_y(D_+) + \alpha_y(\emptyset) - \alpha_y(\emptyset)\\
 = \delta_y(D_+) - \delta_y(\{v\})
\end{equation*}
which again matches the computation in~~\eqref{eq:tw-forget-char-2}.
This finishes the proof of this claim.
\end{proofclaim}

The computation of $\DPtw_r[\cdot, \cdot]$ is analogous to the computation of $\DPltree[\cdot, \cdot]$ in extend nodes.
We set
\begin{equation}\label{eq:tw-forget}
	\DPtw_x[\delta,\beta] =\\ 	\min \left( \begin{gathered}
		\min_{(\delta_y, \beta_y) \in h^{-1}_\mathsf{id}(\delta, \beta)} \DPtw_y[\delta_y, \beta_y],
		\min_{(\delta_y, \beta_y) \in h^{-1}_+(\delta, \beta)} \DPtw_z[\delta_y, \beta_y] + 1
	\end{gathered}   \right)
\end{equation}
where we take the minimum over empty set to be~$\infty$.

We can finally prove the correctness of the algorithm, again split into two separate claims.

\begin{claim}\label{claim:tw-correct1}
Let $x$ be a node of the nice tree decomposition of~$G$ and let $(\delta, \beta)$ be an arbitrary reduced characteristic. If $\DPtw_x[\delta, \beta] = s$ where $s \neq \infty$, then there exists a set $S \subseteq V(G_x) \setminus B_x$ of size~$s$ with reduced characteristic~$(\delta,\beta)$.
\end{claim}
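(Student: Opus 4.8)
The plan is to prove the claim by a bottom-up induction over the nice tree decomposition, mirroring the argument for trees in \Cref{claim:tree-correct1}. In each node I would read off from the defining recurrence which table entries of the child (or children) were combined to produce the finite value~$s$, invoke the induction hypothesis to materialize witnessing sets in the child terminal graphs, and then appeal to the corresponding transformation claim to certify that the combined set has exactly the reduced characteristic~$(\delta, \beta)$ and size~$s$ in~$(G_x, B_x)$. For the base case, if $x$ is a leaf node then the only finite entry set by~\eqref{eq:tw-leaf} is $\DPtw_x[\delta_\emptyset, \beta_\emptyset] = 0$, witnessed by $S = \emptyset$, whose reduced characteristic in~$(G_x, B_x)$ is indeed $(\delta_\emptyset, \beta_\emptyset)$.

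For an introduce node, the value $s$ was set through~\eqref{eq:tw-introduce}, so there is a reduced characteristic $(\delta_y, \beta_y) \in f^{-1}(\delta, \beta)$ with $\DPtw_y[\delta_y, \beta_y] = s$; induction yields a set $S \subseteq V(G_y) \setminus B_y$ of size~$s$ with that characteristic, and since $S$ avoids the introduced vertex it also satisfies $S \subseteq V(G_x) \setminus B_x$ and has reduced characteristic $f(\delta_y, \beta_y) = (\delta, \beta)$ by \Cref{claim:tw-introduce-correctness}. For a join node, \eqref{eq:tw-join} supplies preimages $(\delta_y, \beta_y), (\delta_z, \beta_z) \in g^{-1}(\delta, \beta)$ whose table values sum to~$s$; induction produces $S_y$ and $S_z$, which are disjoint because both avoid the shared bag $B_x = V(G_y) \cap V(G_z)$, so $S_y \cup S_z$ has size exactly~$s$ and, by \Cref{claim:tw-join-correctness}, reduced characteristic $(\delta, \beta)$.

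The forget node splits into the two branches of the minimum in~\eqref{eq:tw-forget}. If the minimum is attained by the first branch, there is $(\delta_y, \beta_y) \in (h^v_{\mathsf{id}})^{-1}(\delta, \beta)$ with $\DPtw_y[\delta_y, \beta_y] = s$, and the set $S$ obtained by induction keeps size~$s$ and acquires reduced characteristic $(\delta, \beta)$ in~$(G_x, B_x)$ by \Cref{claim:tw-forget-correctness}. If instead it is attained by the second branch, there is $(\delta_y, \beta_y) \in h_+^{-1}(\delta, \beta)$ with $\DPtw_y[\delta_y, \beta_y] = s - 1$; taking the inductive witness $S$ of size $s-1$ and adding the forgotten vertex~$v$, the set $S \cup \{v\}$ has size~$s$ and reduced characteristic $(\delta, \beta)$ by \Cref{claim:tw-forget-correctness-2}. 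In both branches $v \notin B_x$, so the produced set is contained in $V(G_x) \setminus B_x$, as required.

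I expect the substantive content to lie entirely in the already-established transformation claims, so the present argument is largely bookkeeping. The points that require the most care are the vertex-membership invariants: verifying that every witnessing set stays inside $V(G_x) \setminus B_x$ (in particular that the introduced vertex is never in~$S$ and that the forgotten vertex is handled correctly in each branch), and that the two sets $S_y, S_z$ combined in a join node are genuinely disjoint so that their union has the claimed size~$s$ rather than something smaller. Getting the size accounting right in the forget node's second branch — where adding the single vertex~$v$ accounts for the $+1$ in the recurrence — is the only place where an off-by-one slip could occur.
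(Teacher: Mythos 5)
Your proof is correct and follows essentially the same bottom-up induction as the paper's, with identical case analysis and the same appeals to Claims~\ref{claim:tw-introduce-correctness}, \ref{claim:tw-join-correctness}, \ref{claim:tw-forget-correctness}, and~\ref{claim:tw-forget-correctness-2}. Your extra bookkeeping is sound and even slightly more careful than the paper's write-up: the disjointness of $S_y$ and $S_z$ via $V(G_y) \cap V(G_z) = B_x$ is left implicit there, and you correctly attribute the second forget branch to \Cref{claim:tw-forget-correctness-2}, which the paper's text miscites.
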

\begin{proofclaim}
We prove the claim by by a bottom-up induction on the nice tree decomposition of~$G$.
	
First, let $x$ be a leaf node.
The only finite entry in the table $\DPtw_x[\cdot, \cdot]$ set in~\eqref{eq:tw-leaf} is $\DPtw_T[\delta_\emptyset, \beta_\emptyset] = 0$.
As we already argued, the reduced characteristic of the empty set in~$(G_x,\emptyset)$ is exactly $(\delta_\emptyset, \beta_\emptyset)$ and the claim holds for leaf nodes.
	
Now assume that $x$ is an introduce node with a child~$y$.
The value $\DPtw_x[\delta, \beta]$ was set to~$s$ in~\eqref{eq:tw-introduce} and hence, there exists reduced characteristic~$(\delta_y, \beta_y)$ such that $f(\delta_y, \beta_y) = (\delta, \beta)$ and $\DPtw_y[\delta_y, \beta_y] = s$.
Applying induction on $y$ and~$(\delta_y, \beta_y)$, we see that there exists a set $S \subseteq V(G_y) \setminus B_y$ of size $s$ and reduced characteristic $(\delta_y, \beta_y)$.
We conclude by \Cref{claim:tw-introduce-correctness} that $S$ is a set of size~$s$ and reduced characteristic $(\delta, \beta)$ in~$(G_x, B_x)$.

Next, assume that $x$ is a join node with children~$y$ and~$z$.
The value $\DPtw_x[\delta, \beta]$ was set to~$s$ in~\eqref{eq:tw-join} and hence, there exists reduced characteristics~$(\delta_y, \beta_y)$ and~$(\delta_z, \beta_z)$ such that $g((\delta_y, \beta_y),(\delta_z, \beta_z)) = (\delta, \beta)$ and we have $\DPtw_y[\delta_y, \beta_y] = s_y$, $\DPtw_z[\delta_z, \beta_z] = s_z$ with $s_y + s_z  = s$.
Applying induction on both $y$ and~$z$, we see that there exist sets $S_y \subseteq V(G_y) \setminus B_y$ of size $s_y$ and reduced characteristic $(\delta_y, \beta_y)$ and $S_z \subseteq V(G_z) \setminus B_z$ of size $s_z$ and reduced characteristic $(\delta_z, \beta_z)$ respectively. 
\Cref{claim:tw-join-correctness} implies that $Sy \cup S_z$ is a set of size~$s$ and reduced characteristic $(\delta, \beta)$ in~$(G_x, B_x)$.

Finally, suppose that $x$ is a forget node forgetting a vertex~$v$, and lets $y$ its child.
The value $\DPtw_T[\delta, \beta]$ was set to be $s$ in~\eqref{eq:tw-forget} and therefore, there is a reduced characteristic~$(\delta_y, \beta_y)$ such that either
\begin{enumerate}
	\item $h_\mathsf{id}(\delta_y, \beta_y) = (\delta, \beta)$ and $\DPtw_y[\delta_y, \beta_y] = s$, or
	\item $h^v_+(\delta_y, \beta_y) = (\delta, \beta)$ and $\DPtw_y[\delta_y, \beta_y] = s - 1$.
\end{enumerate}
In the first case, there exists a set $S \subseteq V(G_y)  \setminus B_y$ of size~$s$ and reduced characteristic~$(\delta_y, \beta_y)$ by induction on~$y$.
The reduced characteristic of~$S$ in~$(G_x, B_x)$ is then precisely $(\delta, \beta)$ due to \Cref{claim:tw-forget-correctness}.
In the second case, there exists a set $S \subseteq V(G_y) \setminus B_y$ of size~$s - 1$ and reduced characteristic~$(\delta_y, \beta_y)$ again by induction on~$y$.
Therefore, the set~$S \cup \{v\}$ has size exactly~$s$ and reduced characteristic~$(\delta, \beta)$ by \Cref{claim:tw-forget-correctness} in~$(G_x, B_x)$. 
\end{proofclaim}

\begin{claim}\label{claim:tw-correct2}
Let $x$ be a node of the nice tree decomposition of~$G$ and let $(\delta, \beta)$ be an arbitrary reduced characteristic. If there exists a set $S \subseteq V(G_x)\setminus B_x$ of size~$s$ with reduced characteristic~$(\delta,\beta)$, then $\DPtw_x[\delta, \beta] \leq s$.
\end{claim}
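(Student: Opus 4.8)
The plan is to prove this by a bottom-up induction on the nice tree decomposition of~$G$, mirroring exactly the structure of the proof of \Cref{claim:tree-correct2} and relying on the four transformation claims established above (Claims~\ref{claim:tw-introduce-correctness}, \ref{claim:tw-join-correctness}, \ref{claim:tw-forget-correctness}, and~\ref{claim:tw-forget-correctness-2}). In each node type, given a witnessing set~$S$ of size~$s$ with reduced characteristic~$(\delta,\beta)$ in~$(G_x,B_x)$, I would exhibit a preimage of~$(\delta,\beta)$ under the relevant transformation function coming from an appropriate restriction of~$S$ to the child subgraph(s), for which the induction hypothesis already guarantees a sufficiently small table entry, and then read off from the defining equation of~$\DPtw_x$ that the minimum taken there is at most~$s$.

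For the base case, when $x$ is a leaf node we have $B_x=\emptyset$, so the only candidate is $S=\emptyset$ with reduced characteristic $(\delta_\emptyset,\beta_\emptyset)$, and \eqref{eq:tw-leaf} directly gives $\DPtw_x[\delta_\emptyset,\beta_\emptyset]=0=s$. When $x$ is an \emph{introduce} node, since $S$ avoids the bag and $V(G_x)\setminus B_x = V(G_y)\setminus B_y$, the set $S\subseteq V(G_y)\setminus B_y$ is unchanged; letting $(\delta_y,\beta_y)$ be its reduced characteristic in~$(G_y,B_y)$, induction gives $\DPtw_y[\delta_y,\beta_y]\le s$ and \Cref{claim:tw-introduce-correctness} gives $f(\delta_y,\beta_y)=(\delta,\beta)$, so \eqref{eq:tw-introduce} forces $\DPtw_x[\delta,\beta]\le s$. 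When $x$ is a \emph{join} node with children~$y,z$, I would split $S$ into $S_y=S\cap V(G_y)$ and $S_z=S\cap V(G_z)$; because $S$ contains no bag vertices and $V(G_y)\cap V(G_z)=B_x$, these are disjoint with $|S_y|+|S_z|=s$. With $(\delta_y,\beta_y)$ and $(\delta_z,\beta_z)$ their reduced characteristics, induction gives $\DPtw_y[\delta_y,\beta_y]\le|S_y|$ and $\DPtw_z[\delta_z,\beta_z]\le|S_z|$, while \Cref{claim:tw-join-correctness} gives $g((\delta_y,\beta_y),(\delta_z,\beta_z))=(\delta,\beta)$, so \eqref{eq:tw-join} bounds $\DPtw_x[\delta,\beta]$ by $|S_y|+|S_z|=s$.

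The only node requiring a case distinction is the \emph{forget} node forgetting~$v$, with child~$y$, where $V(G_x)\setminus B_x=(V(G_y)\setminus B_y)\cup\{v\}$. If $v\notin S$, then $S\subseteq V(G_y)\setminus B_y$ has some reduced characteristic $(\delta_y,\beta_y)$ with $\DPtw_y[\delta_y,\beta_y]\le s$ by induction and $h^v_\mathsf{id}(\delta_y,\beta_y)=(\delta,\beta)$ by \Cref{claim:tw-forget-correctness}, matching the first branch of~\eqref{eq:tw-forget}. If $v\in S$, then $S\setminus\{v\}$ has some reduced characteristic $(\delta_y,\beta_y)$ with $\DPtw_y[\delta_y,\beta_y]\le s-1$ by induction and $h_+(\delta_y,\beta_y)=(\delta,\beta)$ by \Cref{claim:tw-forget-correctness-2}, matching the second branch (the $+1$ accounting for~$v$ itself); either way \eqref{eq:tw-forget} yields $\DPtw_x[\delta,\beta]\le s$.

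I expect no genuine obstacle: the argument is essentially bookkeeping once the transformation claims are in hand. The two points that require care are the size accounting $|S|=|S_y|+|S_z|$ in the join node, which hinges on $S$ avoiding the shared bag $B_x=B_y=B_z$ so that the restrictions genuinely partition~$S$, and the clean correspondence in the forget node between the two cases $v\notin S$ and $v\in S$ and the two branches of~\eqref{eq:tw-forget}. The key invariant to verify throughout is that every witnessing set is accounted for by \emph{some} preimage, so that the minima defining $\DPtw_x$ are taken over a nonempty family containing the relevant child entry.
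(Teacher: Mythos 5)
Your proof is correct and takes essentially the same route as the paper's: a bottom-up induction over the nice tree decomposition that, in each node type, exhibits the appropriate preimage under $f$, $g$, $h^v_\mathsf{id}$, or $h^v_+$ via Claims~\ref{claim:tw-introduce-correctness}, \ref{claim:tw-join-correctness}, \ref{claim:tw-forget-correctness}, and~\ref{claim:tw-forget-correctness-2}, with the identical two-case split ($v \notin S$ versus $v \in S$) at forget nodes. Your explicit justification that $S_y$ and $S_z$ partition~$S$ at join nodes (because $S$ avoids $B_x = V(G_y) \cap V(G_z)$) is sound bookkeeping that the paper leaves implicit.
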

\begin{proofclaim}
We again proceed by a bottom-up induction on the nice tree decomposition of~$G$.
	
First, let $x$ be a leaf node.
The only possible choice for~$S$ is the empty set.
We already argued that the reduced characteristic of the empty set is~$(\delta_\emptyset, \beta_\emptyset)$ and we have set $\DPtw_x[\delta_\emptyset, \beta_\emptyset] = 0$ in~\eqref{eq:tw-leaf}.

Now assume that $x$ is an introduce node with a child~$y$.
We have $S \subseteq V(G_y) \setminus B_y$ and let $(\delta_y, \beta_y)$ be its reduced characteristic in~$(G_y, B_y)$.
By applying induction on $y$ and~$S$, we see that $\DPtw_y[\delta_y, \beta_y] \le s$.
It follows by \Cref{claim:tw-introduce-correctness} that $f(\delta_y,\beta_y) = (\delta, \beta)$ and thus, we must have set $\DPtw_x[\delta, \beta] \le \DPtw_y[\delta_y, \beta_y] = s$ in~\eqref{eq:tw-introduce}.
	
Now assume that $x$ is a join node with children~$y$ and~$z$.
Let $S_y$ be the restriction of $S$ to the vertices of~$G_y$ with a reduced characteristic $(\delta_y, \beta_y)$ in~$(G_y, B_y)$ and similarly, let $S_z$ be the restriction of $S$ to the vertices of~$G_z$ with a reduced characteristic $(\delta_z, \beta_z)$ in~$(G_z, B_z)$.
By applying induction on both $y$ with $S_y$ and $z$ with~$S_z$, we see that $\DPtw_y[\delta_y, \beta_y] \le |S_y|$ and $\DPtw_z[\delta_z, \beta_z] \le |S_z|$.
\Cref{claim:tw-join-correctness} implies that $g((\delta_y, \beta_y),(\delta_z,\beta_z)) = (\delta, \beta)$ and thus, we must have set $\DPtw_x[\delta, \beta] \le \DPtw_y[\delta_y, \beta_y] + \DPtw_z[\delta_z, \beta_z] \le |S_y| + |S_z| = s$ in~\eqref{eq:tw-join}.
	
Finally, suppose that $s$ is a forget node forgetting a vertex~$v$, and let $y$ be its child.
First, assume that $v \notin S$ and let $(\delta_y, \beta_y)$ be the reduced characteristic of~$S$ in~$(G_y, B_y)$.
By applying induction on~$y$ and $S$, we see that $\DPtw_y[\delta_y, \beta_y] \le s$.
Moreover by \Cref{claim:tw-forget-correctness}, we have $h_\mathsf{id}(\delta_y, \beta_y) = (\delta, \beta)$.
It follows that we must have set $\DPtw_x[\delta, \beta] \le \DPtw_y[\delta_y, \beta_y] \le s$ in~\eqref{eq:tw-forget}.
Otherwise, we have $v \in S$ and let $(\delta_y, \beta_y)$ be the reduced characteristic of~$S \setminus \{v\}$ in~$(G_y, B_y)$.
By applying induction on~$y$ and $S\setminus\{v\}$, we see that $\DPtw_y[\delta_y, \beta_y] \le s - 1$.
Therefore, we must have set $\DPtw_x[\delta, \beta] \le \DPtw_y[\delta_y, \beta_y] + 1 \le s$ in~\eqref{eq:tw-forget} since $h^v_+(\delta_y, \beta_y) = (\delta, \beta)$ by \Cref{claim:tw-forget-correctness}.
\end{proofclaim}

It remains to argue about the time complexity.
The total number of reduced characteristic is $(n+1)^{2^w} \cdot 2^{2^w} = n^{O(2^w)}$.
It follows that the algorithm computes the table $\DPtw_x[\cdot, \cdot]$ in any node~$x$ in time $n^{O(2^w)}$ in all~\eqref{eq:tw-leaf}, \eqref{eq:tw-introduce}, \eqref{eq:tw-join} and~\eqref{eq:tw-forget}.
This makes the total running time over $O(w \cdot n)$ nodes of the nice tree decomposition still $n^{O(2^w)}$.
\end{proof}
\fi

Furthermore, we show that the same dynamic programming scheme yields an \FPT-algorithm when we additionally parameterise by the maximum degree of the input graph.

\begin{theorem} \label{thm:fpt-tw+Delta}
	The \MUVCShort problem can be solved by an \FPT-algorithm parameterised by the treewidth~$w$ of~$G$ plus the maximum degree~$\Delta$ in time $\Delta^{O(2^w)}\cdot n$.
\end{theorem}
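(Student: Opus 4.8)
The plan is to reuse verbatim the dynamic program of \Cref{thm:xp-tw}, changing only the range of admissible reduced characteristics. Recall that the sole source of the $n^{O(2^w)}$ bound there is the function $\delta \colon 2^X \to [0,n]$, whose $2^{|X|}$ entries each range over $\{0, \dots, n\}$, while the functions~$\beta$ contribute merely the factor $2^{2^w}$. Hence it suffices to show that, once the maximum degree is bounded by~$\Delta$, the values $\delta(D)$ that can \emph{actually} occur are confined to a range whose size depends only on~$w$ and~$\Delta$. Concretely, I would establish the following bound: for every node~$x$ of the nice tree decomposition, every set $S \subseteq V(G_x) \setminus B_x$, and every $D \subseteq B_x$, the reduced characteristic of~$S$ in~$(G_x, B_x)$ satisfies $\delta(D) \le (w+1)\Delta$.

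The key step is the proof of this bound, and it is exactly where the terminal-graph structure is used. Fix a minimum vertex cover~$M$ of type~$D$ in $(G_x - S, B_x)$, so that $\alpha(D) = |M \setminus B_x|$. Since $G_x[B_x]$ is an independent set, every neighbour in~$G_x$ of a vertex of~$D \subseteq B_x$ lies outside~$B_x$; therefore replacing the vertices of~$D$ by the set $N_{G_x - S}(D)$ of their neighbours produces a vertex cover $(M \setminus D) \cup N_{G_x - S}(D)$ of type~$\emptyset$. Its reduced size exceeds $\alpha(D)$ by at most $|N_{G_x - S}(D)|$, and because $G_x$ is a subgraph of~$G$, each of the at most $|D| \le w+1$ vertices of~$D$ has at most~$\Delta$ neighbours; thus $\alpha(\emptyset) \le \alpha(D) + (w+1)\Delta$. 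Combined with $\alpha(\emptyset) \ge \alpha(D)$ (the generalization of \Cref{obs:tree-alpha-positive} already noted), this yields $0 \le \delta(D) \le (w+1)\Delta$, as claimed.

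With the bound in hand, I would restrict each table $\DPtw_x[\cdot,\cdot]$ to reduced characteristics with $\delta(D) \le (w+1)\Delta$ for all~$D \subseteq B_x$. Correctness is inherited directly from \Cref{thm:xp-tw}: every finite entry already corresponds to an actual set, whose characteristic is bounded by the claim above, so no reachable information is discarded; moreover the transformation functions $f$, $g$, $h^v_{\mathsf{id}}$, and $h_+$ map bounded, achievable characteristics to bounded, achievable characteristics, so iterating only over in-range source entries still fills in every attainable target entry (all out-of-range entries remain~$\infty$). Only the accounting changes: the number of admissible~$\delta$ is now $\bigl((w+1)\Delta + 1\bigr)^{2^w}$, and together with the $2^{2^w}$ choices for~$\beta$ the total number of reduced characteristics is $\Delta^{O(2^w)}$. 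Re-running the per-node cost analysis of \Cref{thm:xp-tw} with this count (the join node, quadratic in the number of characteristics, being the bottleneck) and multiplying by the $O(w \cdot n)$ nodes of the decomposition yields the claimed running time $\Delta^{O(2^w)} \cdot n$.

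I expect the main obstacle to be the bound itself, and specifically making the neighbourhood-replacement argument respect the terminal-graph conventions: one must verify that $N_{G_x - S}(D)$ contains no vertex of~$B_x$ (precisely where independence of $G_x[B_x]$ is needed, so that the modified cover really is of type~$\emptyset$) and that degrees are counted in the subgraph~$G_x$ rather than in~$G$, so that the bound~$\Delta$ genuinely applies. Everything else amounts to a mechanical reinterpretation of the already-verified dynamic program over a smaller index set.
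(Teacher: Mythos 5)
Your proposal is correct and follows essentially the same route as the paper: the paper's proof rests on exactly your key bound (its Claim stating $0 \le \delta(D) \le \Delta \cdot |D|$, proved by the same neighbourhood-replacement argument $M' = (M \setminus D) \cup N_G(D)$, where the independence of the terminal set guarantees the modified cover has type~$\emptyset$), and then likewise concludes that the number of achievable reduced characteristics is $(\Delta \cdot w)^{O(2^w)} = \Delta^{O(2^w)}$, so the dynamic program of \Cref{thm:xp-tw} runs in $\Delta^{O(2^w)} \cdot n$ over the $O(w \cdot n)$ nodes. Your statement of the bound as $(w+1)\Delta$ uniformly over all $D \subseteq B_x$, and your explicit check that truncating the tables preserves correctness, are only cosmetically more detailed than the paper's presentation.
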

\ifshort
\begin{sketch}
The crucial observation is that in low degree graphs, the sizes of minimum vertex covers of different types cannot be too far away from each other.
To make this statements precise, let $(G, X)$ be a terminal graph with maximum degree~$\Delta$ and let $S \subseteq V(G) \setminus X$ be a set of its vertices with reduced characteristic~$(\delta, \beta)$.
Then we have $0 \le \delta(D) \le \Delta \cdot |D|$ for every $D \subseteq X$.
It follows that the number of possible reduced characteristics in a terminal graph~$(G, X)$ is at most $(\Delta \cdot w)^{O(2^w)} = \Delta^{O(2^w)}$.
This suffices to speed up the algorithm from \Cref{thm:xp-tw}.
\end{sketch}
\fi
\iflong
\begin{proof}
The crucial observation is that in low-degree graphs, the sizes of minimum vertex covers of different types cannot be too far away from each other.
	
\begin{claim}\label{claim:degree-tw}
	Let $(G, X)$ be a terminal graph and let $S \subseteq V(G) \setminus X$ be a set of its vertices with reduced characteristic~$(\delta, \beta)$.
	Then we have $0 \le \delta(D) \le \Delta \cdot |D|$ for every $D \subseteq X$ where $\Delta$ is the maximum degree of~$G$.
\end{claim}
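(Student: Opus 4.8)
The lower bound $\delta(D) \ge 0$ is exactly the terminal-graph generalization of \Cref{obs:tree-alpha-positive}, so I would dispose of it first. Given any vertex cover $M$ of type $\emptyset$ in $(G-S, X)$, the set $M \cup D$ is still a vertex cover (we only added vertices), it has type $D$ since $D \subseteq X$, and it has the same reduced size because the newly added vertices all lie in $X$ and are therefore not counted. Hence $\alpha(D) \le \alpha(\emptyset)$, which gives $\delta(D) = \alpha(\emptyset) - \alpha(D) \ge 0$.

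For the upper bound the plan is to exhibit, starting from a minimum vertex cover of type $D$, a vertex cover of type $\emptyset$ whose reduced size exceeds it by at most $\Delta \cdot |D|$; this immediately yields $\alpha(\emptyset) \le \alpha(D) + \Delta \cdot |D|$ and hence $\delta(D) \le \Delta \cdot |D|$. Concretely, let $M$ be a vertex cover of type $D$ in $(G-S, X)$ of reduced size $\alpha(D)$, so that $M \cap X = D$. I would delete the terminals $D$ from $M$ and instead insert all neighbours of $D$, setting $M^\star = (M \setminus X) \cup \bigcup_{d \in D} N_{G-S}(d)$. The structural fact I would lean on is that $X$ is an independent set, so every neighbour of a terminal lies outside $X$; consequently $M^\star \cap X = \emptyset$, i.e.\ $M^\star$ has type $\emptyset$, and its reduced size coincides with its cardinality.

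It then remains to check two things: that $M^\star$ is a vertex cover, and to bound its size. For the former, any edge not incident to $D$ is covered by $M$ through an endpoint that must lie in $M \setminus X$ (the covering endpoint cannot be a terminal of $D$, as the edge avoids $D$), while every edge incident to some $d \in D$ has its other endpoint in $N_{G-S}(d) \subseteq M^\star$ — and independence of $X$ rules out edges with both endpoints in $X$, so this covers all remaining cases. For the size bound, $|M^\star| \le |M \setminus X| + \sum_{d \in D} |N_{G-S}(d)| \le \alpha(D) + \Delta \cdot |D|$, since each terminal has at most $\Delta$ neighbours in $G$ and hence at most $\Delta$ in $G-S$. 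Combining this with the lower bound closes the argument.

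The only delicate point — and the step I would treat most carefully — is the type-and-reduced-size bookkeeping: one must confirm that stripping off $D$ and adding the neighbourhood genuinely yields a type-$\emptyset$ cover, and that the reduced-size convention (which ignores $X$ entirely) is applied consistently on both sides of the resulting inequality. Everything else, including the degree estimate on $|N_{G-S}(d)|$, is routine.
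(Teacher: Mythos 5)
Your proposal is correct and follows essentially the same route as the paper's proof: the lower bound via extending a type-$\emptyset$ cover by $D$, and the upper bound via the replacement $M \setminus D \cup N(D)$, using independence of $X$ to guarantee the resulting cover has type $\emptyset$ and the degree bound to control the size increase. Your version is if anything slightly more careful than the paper's (working explicitly in $G-S$ and verifying the covering property edge by edge), but the argument is the same.
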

\begin{proofclaim}
	Let $(\alpha, \beta)$ be the full characteristic of~$S$ in~$(G, X)$.
	We have already observed that $0 \le \alpha(\emptyset) - \alpha(D) = \delta(D)$.
	This holds since for a minimum vertex cover~$M$ of type~$\emptyset$, the set $M \cup D$ is a vertex cover of type~$D$ and the exact same reduced size.
	
	On the other hand, let $M$ be a minimum vertex cover of type~$D$ in~$(G, X)$, i.e., the reduced size of $M$ is $\alpha(D)$.
	Let $M' = M \setminus D \cup N_G(D)$ be the set obtained from $M$ by replacing vertices from~$D$ with all their neighbors.
	The set~$M'$ is a vertex cover of type~$\emptyset$ in~$(G, X)$ because we require~$X$ to be an independent set.
	Moreover, the reduced size of $M'$ is at most $|M| + \Delta \cdot |D|$ as every vertex in~$D$ has at most~$\Delta$ neighbors.
	It follows that $\alpha(\emptyset) \le \alpha(D) + \Delta \cdot |D|$ and the claim holds.
\end{proofclaim}

Due to \Cref{claim:degree-tw}, the total number of reduced characteristic in a terminal graph~$(G, X)$ is at most $(\Delta \cdot w)^{O(2^w)} = \Delta^{O(2^w)}$.
Therefore, the computation of every table $\DPtw_t[\cdot, \cdot]$ in \Cref{thm:xp-tw} finishes in~$\Delta^{O(2^w)}$ time.
The total running time is~$\Delta^{O(2^w)} \cdot n$ since the nice tree decomposition has size~$O(w \cdot n)$.
\end{proof}
\fi

\section{Clique-width}

\begin{theorem} \label{thm:xp-cw}
	The \MUVCShort problem can be solved by an \XP-algorithm parameterised by the clique-width~$d$ of~$G$ in $n^{O(2^d)}$ time.
\end{theorem}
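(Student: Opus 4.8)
The plan is to reuse the dynamic-programming template of \Cref{thm:xp-tw}, replacing the treewidth notion of ``type determined by which boundary vertices lie in the cover'' with a notion of type tailored to clique-width. I assume a clique-width expression of width~$d$ describing~$G$ is given on input (the standard convention for clique-width algorithms; otherwise one first computes an expression of width bounded in terms of~$d$). For a $d$-labeled graph~$H$ and a vertex cover~$M$ of~$H$, define the \emph{type} of~$M$ to be the set $D \subseteq [d]$ of labels~$i$ such that at least one vertex of label~$i$ lies outside~$M$. The purpose of this definition is that it exactly captures the interaction with edge insertion: after applying $\eta_{i,j}$, a set~$M$ of type~$D$ is still a vertex cover precisely when $\{i,j\} \not\subseteq D$, since the inserted edges run between label-$i$ and label-$j$ vertices and the only way to leave one uncovered is to have an uncovered vertex of each of the two labels.

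Mirroring \Cref{thm:xp-tw}, a \emph{characteristic} of a deletion set $S \subseteq V(H)$ is a pair $(\alpha, \beta)$ with $\alpha \colon 2^{[d]} \to \{0, \dots, n\} \cup \{\infty\}$ and $\beta \colon 2^{[d]} \to \{1,2\}$, where $\alpha(D)$ is the size of the smallest vertex cover of type~$D$ in $H - S$ (and $\alpha(D) = \infty$ if none exists), and $\beta(D) = 1$ iff this smallest cover is unique. Since disjoint union introduces no shared vertices, plain sizes suffice and no ``reduced size'' adjustment is needed. There are $n^{O(2^d)}$ characteristics, which already pins the target running time. The algorithm processes the expression bottom-up, storing a table $\DPcw_H[\alpha, \beta]$ equal to the minimum size of a set~$S$ with characteristic $(\alpha, \beta)$ in~$H$. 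At the root the answer is recovered exactly as before: $S$ is feasible for \MUVCShort iff a single type~$D^\ast$ attains $\min_D \alpha(D)$ and moreover $\beta(D^\ast) = 1$; the algorithm then returns the minimum of $\DPcw_G[\alpha, \beta]$ over all such feasible characteristics and compares it with~$k$.

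The bulk of the work is to describe how each of the four operations transforms the characteristic of a fixed~$S$, analogously to the Leaf/Join/Extend analysis for trees. For an introduce node $i(v)$ I would write down the two base characteristics according to whether $v \in S$. For an edge insertion $\eta_{i,j}(H)$ the transformation is the cleanest: every type $D \supseteq \{i,j\}$ becomes infeasible (set $\alpha(D) = \infty$), while all other entries are copied verbatim, since those covers already cover the new edges and retain their type. The two genuinely delicate operations are disjoint union and relabel, because both \emph{merge} several preimage types into one and the uniqueness flag must survive the merge. For $H_1 \oplus H_2$ a cover of type~$D$ splits uniquely as $M_1 \cup M_2$ with $M_i$ a cover of~$H_i$ of type~$D_i$ and $D_1 \cup D_2 = D$, so $\alpha(D) = \min_{D_1 \cup D_2 = D}\bigl(\alpha_1(D_1) + \alpha_2(D_2)\bigr)$, and the corresponding $\DPcw$ entry minimizes $\DPcw_{H_1} + \DPcw_{H_2}$ over the preimages; crucially $\beta(D) = 1$ iff there is a \emph{single} split realizing the minimum with $\beta_1(D_1) = \beta_2(D_2) = 1$, any second minimizing split or non-unique part forcing $\beta(D) = 2$. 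The relabel $\rho_{i \to j}$ leaves every vertex cover unchanged but sends a type~$D$ to $(D \setminus \{i\}) \cup \{j\}$ when $i \in D$ and to~$D$ otherwise, so again several old types may collapse onto one, and $\beta$ of the new type is~$1$ iff exactly one minimizing preimage type survives with its own flag equal to~$1$.

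I expect the uniqueness bookkeeping in the disjoint-union and relabel steps to be the main obstacle: unlike the tree case, where each operation had at most two preimages, here an arbitrary number of types can merge, and one must argue that tracking only the Boolean ``unique versus at least two'' flag suffices. This works because the count of minimum covers is multiplicative across the union and additive across merged types, so a single witness of multiplicity anywhere propagates correctly into $\beta = 2$. Once these transformations are verified, correctness follows by a routine bottom-up induction over the expression in both directions, exactly as in the two claims analogous to \Cref{claim:tw-correct1} and \Cref{claim:tw-correct2}, and the running time is $n^{O(2^d)}$ since there are $n^{O(2^d)}$ characteristics, each table entry is computed by ranging over (pairs of) characteristics, and the expression has $\operatorname{poly}(n)$ nodes.
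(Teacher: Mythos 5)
Your proposal is correct, and it reaches the same $n^{O(2^d)}$ bound via the same overall template (characteristics $(\alpha,\beta)$ indexed by subsets of $[d]$, DP over the expression), but it makes the opposite design choice on the one point where the clique-width case is genuinely delicate. You index characteristics by \emph{exact} types -- $D$ is the set of labels with at least one uncovered vertex, so the families of covers are pairwise disjoint and some entries may be $\infty$ -- whereas the paper indexes by the upward-closed notion of a cover \emph{extending} type $I$, i.e., $\alpha(I) = \min\{\mu_{G-S}(J) \mid I \subseteq J\}$, so every entry is finite but the families overlap. This shifts where the work sits. In the paper, disjoint union is pointwise addition ($\alpha(I)=\alpha_1(I)+\alpha_2(I)$) and relabeling is a clean reindexing with no merging at all; the only hard node is $\eta_{i,j}$, where the two overlapping families extending $I\cup\{i\}$ and $I\cup\{j\}$ merge, and the tie case requires the key observation that their intersection is exactly the covers extending $I\cup\{i,j\}$ -- hence uniqueness survives a tie precisely when $\alpha'(I\cup\{i\})=\alpha'(I\cup\{j\})=\alpha'(I\cup\{i,j\})$ and $\beta'(I\cup\{i,j\})=1$, a case your disjoint formulation never encounters. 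In your version, $\eta_{i,j}$ is instead a trivial filter (kill types with $\{i,j\}\subseteq D$, copy the rest), but the cost reappears in disjoint union, which becomes a union-convolution over splits $D_1\cup D_2=D$, and in relabeling, where up to three old types collapse onto one. Your argument that the Boolean flag survives this is sound precisely because your families are disjoint: counts of minimum covers are multiplicative per split and additive across merged types, so $\beta(D)=1$ iff exactly one minimizing preimage exists and carries flag $1$ -- and you state the correct rule, including that a second minimizing split or a flag-$2$ part forces $\beta(D)=2$. Both routes are complete and run within the stated time; the paper's buys simpler union and relabel nodes at the price of a subtler overlap analysis in $\eta_{i,j}$ (and, as a side benefit, its always-finite monotone $\alpha$ is what later enables the truncation to $k^{O(2^d)}\cdot n$ in \Cref{thm:fpt-cw+k}), while yours buys elementary counting arguments at the price of convolution-style updates and $\infty$ bookkeeping.
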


\ifshort
\begin{sketch}
The algorithm follows a scheme analogous to the algorithms in previous sections.
Due to the lack of space, we only describe what is the suitable definition of a characteristic of a set~$S$ in this case.

For a subset of vertices~$X$ of a $d$-labeled graph $G$, we denote by $\full_G(X)$ the set of labels from $[d]$ that are fully contained within~$X$\ifshort{}.\fi\iflong, i.e., $\full_G(X) = \{ i \in [d] \mid \lab_G^{-1}(i) \subseteq X\}$.\fi
Conversely for $I \subseteq [d]$, we say that $T$ is a \emph{set of type $I$} if $\full_G(T) = I$.
Additionally for a type $I \subseteq [d]$ and a set~$T$ we say that $T$ \emph{extends type~$I$} if $I \subseteq \full_G(T)$.
	
A pair of functions $(\alpha, \beta)$ where $\alpha \colon 2^{[d]} \to \{0, \dots, n\}$ and $\beta \colon 2^{[d]} \to \{1,2\}$ is a \emph{characteristic} of a set $S \subseteq V(G)$ if for every $I \subseteq [d]$,
\begin{itemize}
	\item $\alpha(I)$ is the size of the smallest vertex cover that extends type~$I$ in~$G-S$, and
	\item $\beta(I) = 1$ if and only if the smallest vertex cover of size~$\alpha(I)$ that extends type~$I$ in $G - S$ is unique.
\end{itemize}

The algorithm then proceeds by a dynamic programming along a given clique-width $d$-expression~$\psi$ of~$G$.
Specifically, for each $d$-labeled graph~$H$ generated by a subexpression of~$\psi$, it stores a dynamic programming table $\DPcw_H[\alpha, \beta]$ such that $(\alpha, \beta)$ is a possible characteristic, and the value of $\DPcw_H[\alpha, \beta]$ contains the size of the smallest set with characteristic~$(\alpha, \beta)$ in~$H$, or~$\infty$ if no such set exists.
%
%
\end{sketch}
\fi

\iflong
\begin{proof}
	The algorithm follows a scheme analogous to the algorithms in previous sections.
	Our first goal is, thus, to define a suitable definition of characteristics of sets.
	
	For a subset of vertices~$X$ of a $d$-labeled graph $G$, we denote by $\full_G(X)$ the set of labels from $[d]$ that are fully contained within~$X$,
	i.e., $\full_G(X) = \{ i \in [d] \mid \lab_G^{-1}(i) \subseteq X\}$.
	Observe that if there are no vertices in $G$ with some label~$i$ then $i \in \full_G(X)$ for every set $X$.
	Conversely for $I \subseteq [d]$, we say that $T$ is a \emph{set of type $I$} if $\full_G(T) = I$.
	Additionally for a type $I \subseteq [d]$ and a set~$T$ we say that $T$ \emph{extends type~$I$} if $I \subseteq \full_G(T)$.
	For any $d$-labeled graph~$H$ and $I \subseteq [d]$, let $\mu_H(I)$ be the size of a minimum vertex cover of type $I$ in~$H$ if it exists and $\infty$ otherwise.
	
	A pair of functions $(\alpha, \beta)$ where $\alpha \colon 2^{[d]} \to [0, n]$ and $\beta \colon 2^{[d]} \to [2]$ is a \emph{characteristic} of a set $S \subseteq V(G)$ if for every $I \subseteq [d]$,
	\begin{itemize}
		\item $\alpha(I) = \min\,\{\mu_{G - S}(J) \mid I \subseteq J \subseteq [d]\}$, i.e., the size of the smallest vertex cover that extends type~$I$ in~$G-S$, and
		\item $\beta(I) = 1$ if and only if there is a unique vertex cover of size~$\alpha(I)$ that extends type~$I$ in $G - S$.
	\end{itemize}
	A characteristic $(\alpha, \beta)$ is \emph{enforceable in~$G$} if there exists a set $S \subseteq V(G)$ with characteristic~$(\alpha, \beta)$.
	Let us observe a subtle difference from the characteristics we defined on trees and bounded-treewidth graphs that will play an important role later.
	In previous algorithms, all vertex covers of the graph~$G$ were partitioned into disjoint sets depending on their interaction with the root or bag respectively, and the function~$\alpha$ simply stored the size of a minimum vertex cover in each respective group.
	This is not true for the definition above, because a single vertex cover of type~$J$ might affect the value~$\alpha(I)$ for all types $I$ such that $I \subseteq J$.
		
	The algorithm proceeds by a dynamic programming along a given clique-width $d$-expression~$\psi$ of~$G$.
	Specifically, for each $d$-labeled graph~$H$ generated by a subexpression of~$\psi$, it stores a dynamic programming table $\DPcw_H[\alpha, \beta]$ such that $(\alpha, \beta)$ is a possible characteristic, and the value of $\DPcw_H[\alpha, \beta]$ contains the size of the smallest set of characteristic~$(\alpha, \beta)$ in~$H$, or~$\infty$ if no such set exists.
	
	Observe that a set~$S$ with a characteristic~$(\alpha, \beta)$ is a feasible solution to \MUVCShort if and only if $\beta(\emptyset)=1$ because every vertex cover extends the empty type.
	Therefore after we compute $\DPcw_G[\cdot, \cdot]$ for the input graph, the algorithm simply returns the minimum value $\DPcw_G[\alpha, \beta]$ over all characteristics with $\beta(\emptyset) = 1$.
	
	Now, we describe the computation separately for each of the operations allowed in a clique-width $d$-expression.

	\paragraph{Singleton $H = i(v)$.}
	There are only two possible choices of $S$ in the singleton graph $H$.
	These have the two characteristics $(\alpha^i_\mathsf{in}, \beta^i_\mathsf{in})$
	and  $(\alpha^i_\mathsf{out}, \beta^i_\mathsf{out})$ where $\beta^i_\mathsf{in}(I) = \beta^i_\mathsf{out}(I) = 1$ for every $I \subseteq [d]$ and $\alpha^i_\mathsf{in}, \alpha^i_\mathsf{out}$ are defined as follows
	\[\alpha^i_\mathsf{in}(I) = 0, \qquad
	\alpha^i_\mathsf{out}(I) = \begin{cases}
		1 &\text{if $i \in I$, and}\\
		0 &\text{otherwise.}
	\end{cases} \]
	The characteristic $(\alpha^i_\mathsf{in}, \beta^i_\mathsf{in})$ corresponds to the set $\{v\}$ while $(\alpha^i_\mathsf{out}, \beta^i_\mathsf{out})$ corresponds to the empty set. No other characteristic is enforceable in~$H$ and we set 
	\begin{equation}\label{eq:DPH-singleton}
		\DPcw_H[\alpha,\beta] = \begin{cases}
			1 &\text{if $(\alpha, \beta) = (\alpha^i_\mathsf{in}, \beta^i_\mathsf{in})$,}\\
			0 &\text{if $(\alpha, \beta) = (\alpha^i_\mathsf{out}, \beta^i_\mathsf{out})$, and}\\
			\infty &\text{otherwise.}
		\end{cases}
	\end{equation}
	
	\paragraph{Joining labels with edges, $H = \eta_{i,j}(H')$.}
	For each pair of pairwise different $i, j \in [d]$, we define a function $f_{i,j}$ acting on characteristics such that $f_{i,j}(\alpha', \beta') = (\alpha, \beta)$ where
	\begin{align*}
		\alpha(I) &= \begin{cases}
			\alpha'(I) &\text{if $I \cap \{i,j\} \neq \emptyset$}\\
			\min(\alpha'(I \cup \{i\}), \alpha'(I \cup \{j\})) &\text{otherwise}.
		\end{cases}\\
		\beta(I) &= \begin{cases}
			\beta'(I) &\text{if $I \cap \{i,j\} \neq \emptyset$, }\\
			\beta'(I \cup \{i\}) &\begin{gathered}
				\text{if $I \cap \{i,j\} = \emptyset$ and }\hfill\\\qquad\text{$\alpha'(I \cup \{i\}) < \alpha'(I \cup \{j\}),$}
			\end{gathered}\\
			\beta'(I \cup \{j\}) &\begin{gathered}
				\text{if $I \cap \{i,j\} = \emptyset$ and}\hfill\\\qquad \text{$\alpha'(I \cup \{i\}) > \alpha'(I \cup \{j\}),$}
			\end{gathered}\\
			1 &\begin{gathered}
				\text{if $I \cap \{i,j\} = \emptyset$, $\alpha'(I \cup \{i\}) =$}\\ \hfill\text{$\alpha'(I \cup \{j\}) = \alpha'(I \cup \{i,j\})$}\\\hfill\text{and $\beta'(I \cup \{i,j\}) = 1$, }
			\end{gathered}\\
			2 &\text{otherwise.}
		\end{cases}
	\end{align*}
	
	Let us show that the function $f_{i,j}$ describes precisely the effect of the operation~$\eta_{i,j}$ on a characteristic of a fixed set~$S$.
	
	\begin{claim}\label{claim:eta-correctness}
		Let $S \subseteq V(H')$ be an arbitrary set with characteristic~$(\alpha', \beta')$ in~$H'$ and let $(\alpha, \beta)$ be the image of $(\alpha', \beta')$ under $f_{i,j}$.
		Then $S$ has characteristic $(\alpha, \beta)$ in~$H$.
	\end{claim}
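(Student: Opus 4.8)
The plan is to compare vertex covers of $H-S$ directly against those of $H'-S$, exploiting that $H=\eta_{i,j}(H')$ and $H'$ share the same vertex set and the same labelling, differing only in the edges added by $\eta_{i,j}$. First I would isolate the governing structural observation: a set $M \subseteq V(H')\setminus S$ is a vertex cover of $H-S$ if and only if it is a vertex cover of $H'-S$ that \emph{additionally} covers every newly inserted edge between a surviving label-$i$ vertex and a surviving label-$j$ vertex. Because $\eta_{i,j}$ inserts \emph{all} such edges, these form a complete bipartite pattern between the label-$i$ and label-$j$ vertices of $H'-S$; hence $M$ covers all of them precisely when it contains all surviving label-$i$ vertices or all surviving label-$j$ vertices, i.e.\ exactly when $i \in \full_{H'-S}(M)$ or $j \in \full_{H'-S}(M)$. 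Since $\eta_{i,j}$ does not change labels, the notion of type (and the reduced/full characteristic machinery) coincides in $H-S$ and $H'-S$, so this single extra constraint is the only thing to propagate.

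With this observation in hand I would split on how $I$ interacts with $\{i,j\}$. If $I\cap\{i,j\}\neq\emptyset$, then any $M$ extending type $I$ already has $i$ or $j$ in $\full(M)$, so the new covering constraint is automatically satisfied; the families of covers extending $I$ in $H-S$ and in $H'-S$ are identical, yielding $\alpha(I)=\alpha'(I)$ and $\beta(I)=\beta'(I)$. If $I\cap\{i,j\}=\emptyset$, the covers of $H-S$ extending type $I$ are exactly those covers of $H'-S$ that extend $I\cup\{i\}$ \emph{or} extend $I\cup\{j\}$. Taking the smaller of the two optima immediately gives $\alpha(I)=\min(\alpha'(I\cup\{i\}),\alpha'(I\cup\{j\}))$, matching the definition of $f_{i,j}$ on $\alpha$.

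The computation of the uniqueness flag $\beta(I)$ in the case $I\cap\{i,j\}=\emptyset$ is where I expect the main obstacle to lie. When one of $\alpha'(I\cup\{i\})$, $\alpha'(I\cup\{j\})$ is strictly smaller, the minimum-size covers extending $I$ in $H-S$ are precisely the minimum covers of the smaller refined type, so uniqueness is inherited verbatim and $\beta(I)$ equals the corresponding $\beta'$ value. The delicate situation is the tie $\alpha'(I\cup\{i\})=\alpha'(I\cup\{j\})$: here the minimizers extending $I$ in $H-S$ form the \emph{union} of two minimizing families, one for $I\cup\{i\}$ and one for $I\cup\{j\}$, so I must count this union rather than either family alone. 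The crucial point is that the intersection of the two families is exactly the set of minimum covers extending $I\cup\{i,j\}$, which is nonempty precisely when $\alpha'(I\cup\{i,j\})$ also attains the common value. I would therefore argue that the union is a singleton if and only if each family is itself a singleton and the two coincide in a single common cover extending $I\cup\{i,j\}$; the subtlety to be handled carefully is that a minimum cover extending $I\cup\{i\}$ need not extend $I\cup\{j\}$, so establishing uniqueness at the coarse type $I$ genuinely requires controlling the minimizers at $I\cup\{i\}$, at $I\cup\{j\}$, and at their overlap $I\cup\{i,j\}$ simultaneously. Matching this bookkeeping to the last two branches of the definition of $\beta$ under $f_{i,j}$ is the heart of the claim.

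Finally, once the two cases are discharged, the claim follows by a purely syntactic check that the derived values of $\alpha$ and $\beta$ agree branch by branch with the definition of $f_{i,j}$, so that $(\alpha,\beta)=f_{i,j}(\alpha',\beta')$ is indeed the characteristic of $S$ in $H$. This mirrors the pattern of the earlier transformation claims for trees and treewidth, where correctness of a node operation is reduced to verifying that the stated closed-form transformation reproduces the characteristic recomputed from first principles.
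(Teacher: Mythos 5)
Your outline tracks the paper's proof almost step for step: the same governing observation (a cover of $H-S$ is exactly a cover of $H'-S$ that contains all surviving label-$i$ vertices or all surviving label-$j$ vertices, since the inserted edges form a complete bipartite pattern), the same case split on $I \cap \{i,j\}$, the same derivation $\alpha(I)=\min\bigl(\alpha'(I\cup\{i\}),\alpha'(I\cup\{j\})\bigr)$, and in the tie case the same analysis of the union of the families $\mathcal{T}_i, \mathcal{T}_j$ of minimum covers extending $I\cup\{i\}$ and $I\cup\{j\}$, whose intersection is exactly the set of minimum covers extending $I\cup\{i,j\}$.

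The gap sits precisely at the step you defer as ``a purely syntactic check'': that check cannot be completed, because your (correct) singleton criterion is not equivalent to the fourth branch of $f_{i,j}$ as defined. You say, correctly, that with both families nonempty, $|\mathcal{T}_i\cup\mathcal{T}_j|=1$ iff each of $\mathcal{T}_i,\mathcal{T}_j$ is itself a singleton and they coincide; in characteristic data this reads $\beta'(I\cup\{i\})=\beta'(I\cup\{j\})=1$ together with $\alpha'(I\cup\{i\})=\alpha'(I\cup\{j\})=\alpha'(I\cup\{i,j\})$. The paper's branch instead tests only $\beta'(I\cup\{i,j\})=1$ alongside the equal-$\alpha$ condition, and the paper's own converse argument --- that the unique minimum cover extending $I\cup\{i,j\}$ ``must also be the unique vertex cover extending types $I\cup\{i\}$ and $I\cup\{j\}$'' --- is a non sequitur: uniqueness at the finer type does not propagate to the coarser ones. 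Concretely, let $H'$ have vertices $a,b,c$ with labels $i,j,m$ and edges $ac$, $bc$, and take $S=\emptyset$. Then $\alpha'(\{i\})=\alpha'(\{j\})=\alpha'(\{i,j\})=2$ and $\beta'(\{i,j\})=1$ (unique minimizer $\{a,b\}$), so $f_{i,j}$ outputs $\beta(\emptyset)=1$; yet $H=\eta_{i,j}(H')$ has three minimum vertex covers $\{a,b\}$, $\{a,c\}$, $\{b,c\}$, so the true value is $2$. Note that $\beta'(\{i\})=\beta'(\{j\})=2$ here, which your criterion detects and the paper's does not. So if you carry out your bookkeeping honestly, you will find the claim with the literal $f_{i,j}$ false rather than provable; replacing $\beta'(I\cup\{i,j\})=1$ in the fourth branch by $\beta'(I\cup\{i\})=\beta'(I\cup\{j\})=1$ (the equal-$\alpha$ condition then forces $\beta'(I\cup\{i,j\})=1$ anyway) makes your outlined argument go through and simultaneously repairs the paper's proof. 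Until you either flag this mismatch or perform the corrected matching explicitly, the ``heart of the claim'' in your proposal remains unproven.
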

	\begin{proofclaim}
		Let $(\alpha'', \beta'')$ denote the characteristic of~$S$ in $H$.
		Our aim is to show that $(\alpha'', \beta'') = (\alpha, \beta)$.
		The labels do not change between $H'$ and $H$.
		Therefore a set $T \subseteq V(H')$ is a set of type~$I$ in~$H'-S$ if and only if it is a set of type~$I$ in~$H-S$.
		However, $H$ contains extra edges and thus, some vertex covers of~$H'-S$ are no longer vertex covers of~$H-S$.
		In particular, observe that any vertex cover of $H-S$ must contain all vertices with label~$i$ or all vertices with label~$j$.
		Note that this holds even when either one or both of the labels do not appear in $H-S$.
		
		First, let $I \subseteq [d]$ be a set such that $I \cap \{i,j\} \neq \emptyset$.
		We claim that a set $T \subseteq V(H')$ is a vertex cover of type~$I$ in $H'-S$ if and only if $T$ is a vertex cover of type~$I$ in $H-S$.
		On one hand, every vertex cover of type~$I$ in~$H-S$ is a vertex cover of type~$I$ in~$H'-S$ since $H-S$ is obtained by adding edges to $H'-S$.
		For the other direction, observe that every vertex cover~$T$ of type~$I$ in~$H'-S$ also covers the extra edges in $H-S$ because $T$ contains either all vertices with label $i$ or all vertices with label~$j$.
		Therefore, the set of vertex covers extending type~$I$ in $H-S$ is exactly the same as the set of vertex covers extending type~$I$ in $H'-S$ and we have $\alpha''(I) = \alpha'(I) = \alpha(I)$ and $\beta''(I)=\beta'(I)=\beta(I)$.
		
		Now, suppose that $I \subseteq [d]$ has an empty intersection with~$\{i,j\}$.
		First, we claim that a vertex cover~$T$ extends type~$I$ in~$H-S$ if and only if it extends type~$I\cup\{i\}$ or $I \cup \{j\}$ in~$H'-S$.
		This then immediately implies that $\alpha''(I) = \min(\alpha'(I \cup \{i\}), \alpha'(I \cup \{j\})) = \alpha(I)$.
		Let $T$ be a vertex cover of type $J \supseteq I$ in $H-S$.
		We know that it is a set of type~$J$ in $H'-S$.
		As we observed, we have necessarily $i \in J$ or $j \in J$.
		In the first case, we have $J \supseteq I \cup \{i\}$ and $T$ extends type $I \cup \{i\}$ in~$H'-S$ and analogously in the second case, $T$ extends type $I \cup \{j\}$ in~$H'-S$.
		For the other direction, notice that any vertex cover~$T$ extending type $I \cup \{i\}$ or $I \cup \{j\}$ in~$H'-S$ remains a vertex cover of~$H-S$ since it covers all the extra edges in~$H-S$.
		
		However, it remains to argue about the uniqueness.
		This is significantly more intricate because the set of vertex covers of~$H'-S$ extending $I \cup \{i\}$ is not disjoint from the set of vertex covers extending~$I \cup \{j\}$.
		In fact, their intersection contains exactly all the vertex covers extending the type~$I\cup\{i,j\}$.
		For $\ell \in \{i,j\}$, let $\mathcal{T}_\ell$ denote the set of all vertex covers of the size~$\alpha(I)$ extending~$I \cup \{\ell\}$ in~$H'-S$.
		By definition, we have $\beta''(I) = 1$ if and only if $|\mathcal{T}_i \cup \mathcal{T}_j| = 1$.
		If $\alpha'(I \cup \{i\}) < \alpha'(I \cup \{j\})$, then the set $\mathcal{T}_j$ is empty and $\beta''(I) = \beta'(I \cup \{i\}) = \beta(I)$.
		Conversely if $\alpha'(I \cup \{i\}) > \alpha'(I \cup \{j\})$, then $\mathcal{T}_i$ is empty and $\beta''(I) = \beta'(I \cup \{j\}) = \beta(I)$.
		The most complicated situation occurs when $\alpha'(I \cup \{i\}) = \alpha'(I \cup \{j\})$ and both $\mathcal{T}_i$ and $\mathcal{T}_j$ are nonempty.
		In this case, we claim that $|\mathcal{T}_i \cup \mathcal{T}_j| = 1$ if and only if  $\alpha'(I \cup \{i,j\}) = \alpha'(I \cup \{i\}) = \alpha'(I \cup \{j\})$ and $\beta'(I \cup \{i,j\}) = 1$.
		It is then immediate that $\beta''(I) = \beta(I)$. 
		
		First suppose that $|\mathcal{T}_i \cup \mathcal{T}_j| = 1$ and let $T$ be the only vertex cover in $\mathcal{T}_i \cup \mathcal{T}_j$.
		Clearly, we have $\alpha'(I\cup\{i\}) = \alpha'(I\cup\{j\}) = |T|$ and moreover, also $ |T| = \alpha'(I\cup\{i,j\})$.
		Observe that any vertex cover extending~$I\cup \{i,j\}$ also extends both types $I \cup \{i\}$ and $I \cup \{j\}$.
		Consequently, $T$ must be a unique vertex cover of size~$\alpha(I)$ extending~$I \cup\{i,j\}$ in~$H'-S$ and $\beta'(I \cup \{i,j\}) = 1$.
		On the other hand if $\beta'(I \cup \{i,j\}) = 1$, then there exists a unique vertex cover~$T'$ of size~$\alpha'(I\cup\{i,j\})$ extending $I\cup\{i,j\}$ in~$H'-S$.
		But since we have $\alpha'(I \cup \{i,j\}) = \alpha'(I \cup \{i\}) = \alpha'(I \cup \{j\})$ it must also be the unique vertex cover extending types~$I \cup \{i\}$ and $I \cup \{j\}$.
		It follows that $\mathcal{T}_i = \mathcal{T}_j = \{T'\}$ and $|\mathcal{T}_i \cup \mathcal{T}_j| = 1$.
	\end{proofclaim}

	For a fixed characteristic~$(\alpha, \beta)$ the algorithm simply finds the preimage of $(\alpha, \beta)$ under~$f_{i,j}$ with the smallest value in~$\DPcw_{H'}[\cdot, \cdot]$.
	That is, we set
	\begin{equation}\label{eq:DPH-eta}
		\DPcw_H[\alpha, \beta] = \min_{(\alpha', \beta') \in f^{-1}_{i,j}(\alpha, \beta)} \DPcw_{H'}[\alpha', \beta']
	\end{equation}
	where we additionally define the minimum over empty set to be~$\infty$, i.e., we have $\DPcw_{H}[\alpha,\beta] = \infty$ whenever there is no preimage of $(\alpha, \beta)$ under~$f_{i,j}$.
	
	\paragraph{Relabeling, $H = \rho_{i\to j}(H')$.}
	For each pair of pairwise different $i, j \in [d]$, we define a function~$g_{i\to j}$ acting on characteristics such that $g_{i \to j}(\alpha', \beta') = (\alpha, \beta)$ where
	\begin{align*}
		(\alpha(I),\beta(I)) = \begin{cases}
			(\alpha'(I \cup \{i\}),        \beta'(I \cup \{i\}))       &\text{if $j \in I$,}\\
			(\alpha'(I \setminus \{i\}),   \beta'(I \setminus \{i\}))  &\text{otherwise.}
		\end{cases}
	\end{align*}
	
	Similar to before, we show that the function $g_{i \to j}$ describes exactly the effect of the operation~$\rho_{i \to j}$ on a characteristic of a fixed set~$S$.
	
	\begin{claim}\label{claim:rho-correctness}
		Let $S \subseteq V(H')$ be an arbitrary set with characteristic~$(\alpha', \beta')$ in~$H'$ and let $(\alpha, \beta)$ be the image of $(\alpha', \beta')$ under $g_{i\to j}$. Then $S$ has characteristic $(\alpha, \beta)$ in~$H$.
	\end{claim}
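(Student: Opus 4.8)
The plan is to exploit the fact that relabeling does not touch the underlying graph at all: $H$ and $H'$ have the same vertex set and the same edge set, so for any fixed $S$ the vertex covers of $H - S$ and of $H' - S$ coincide as vertex sets and have identical sizes. Consequently the entire content of the claim reduces to tracking how the \emph{type} of a fixed vertex set $T \subseteq V(H') \setminus S$ changes when we pass from $H'$ to $H$, and then observing that $\alpha$ (the minimum size of a vertex cover extending a type) and $\beta$ (its uniqueness) are inherited directly once the type correspondence is pinned down.

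First I would compute $\full_H(T)$ from $\full_{H'}(T)$. Relabeling $i \to j$ empties the class of label~$i$, so by the convention that an empty label is fully contained in every set we get $i \in \full_H(T)$ always; the class of label~$j$ in $H$ is the union of the $i$- and $j$-classes of $H'$, so $j \in \full_H(T)$ if and only if $\{i,j\} \subseteq \full_{H'}(T)$; and every other label is untouched, so for $\ell \notin \{i,j\}$ membership of $\ell$ in $\full_H(T)$ and in $\full_{H'}(T)$ agree.

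Next I would translate the relation ``$T$ extends type~$I$ in $H$'', i.e. $I \subseteq \full_H(T)$, into a condition on $\full_{H'}(T)$, splitting on whether $j \in I$. If $j \in I$, then the requirement $j \in \full_H(T)$ forces $\{i,j\} \subseteq \full_{H'}(T)$, and combined with the unchanged labels this is exactly $I \cup \{i\} \subseteq \full_{H'}(T)$; hence $T$ extends $I$ in $H$ if and only if $T$ extends $I \cup \{i\}$ in $H'$. If $j \notin I$, the condition on label~$i$ is vacuous (it is always fully contained in $H$), so the requirement reduces to $I \setminus \{i\} \subseteq \full_{H'}(T)$, i.e. $T$ extends $I \setminus \{i\}$ in $H'$. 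In either case the family of vertex covers extending~$I$ in $H - S$ is \emph{identical} (as a family of vertex sets of equal sizes) to the family extending $I \cup \{i\}$, respectively $I \setminus \{i\}$, in $H' - S$. Therefore the minimum size and the uniqueness flag carry over verbatim, giving $(\alpha(I), \beta(I)) = (\alpha'(I \cup \{i\}), \beta'(I \cup \{i\}))$ when $j \in I$ and $(\alpha'(I \setminus \{i\}), \beta'(I \setminus \{i\}))$ otherwise, which is precisely the definition of $g_{i\to j}$.

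The reassuring point — in contrast to the $\eta_{i,j}$ case of \Cref{claim:eta-correctness} — is that here there is no delicate uniqueness argument: we obtain an exact equality of the two families of extending vertex covers rather than a union of two overlapping families, so $\beta$ transfers without any case analysis on ties. The only genuine care needed is in correctly invoking the empty-label convention for~$i$ and in handling the $j \in I$ versus $j \notin I$ split, which is where a careless computation of $\full_H(T)$ could go wrong.
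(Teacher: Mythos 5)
Your proof is correct and follows essentially the same route as the paper's: both observe that $\rho_{i\to j}$ leaves the underlying graph (and hence all vertex covers of $H-S = \rho_{i\to j}(H'-S)$) untouched, compute the correspondence $i \in \full_{H-S}(T)$ always, $j \in \full_{H-S}(T) \Leftrightarrow \{i,j\} \subseteq \full_{H'-S}(T)$, and labels outside $\{i,j\}$ unchanged, then split on $j \in I$ versus $j \notin I$ to show the family of covers extending $I$ in $H-S$ equals the family extending $I \cup \{i\}$ (resp.\ $I \setminus \{i\}$) in $H'-S$, so $\alpha$ and $\beta$ transfer verbatim. Your direct translation of ``extends type $I$'' into subset conditions on $\full_{H'-S}(T)$ is a mild streamlining of the paper's chains of inclusions between the types $J$ and $J'$ in the $j \notin I$ case, but it is the same argument in substance.
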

	\begin{proofclaim}
		Let $(\alpha'', \beta'')$ denote the characteristic of~$S$ in $H$.
		Our aim is to show that $(\alpha'', \beta'') = (\alpha, \beta)$.
		First, observe that $H'$ and $H$ differ only in their labels and therefore, any arbitrary set~$T \subseteq V(H)$ is a vertex cover of~$H'-S$ if and only if it is a vertex cover of~$H-S$.
		However, the type of~$T$ might differ between~$H'-S$ and~$H-S$.
		
		Fix an arbitrary set~$T \subseteq V(H)$.
		First observe that for an index $\ell \in [d]\setminus \{i,j\}$, we have $\ell \in \full_{H-S}(T)$ if and only if $\ell \in \full_{H'-S}(T)$ because the vertices with label~$\ell$ remain unchanged.
		Moreover, we have $i \in \full_{H-S}(T)$ by definition since $H$ (and by extension~$H-S$) does not contain any vertices with label~$i$.
		Finally, we have $j \in \full_{H-S}(T)$ if and only if $\{i,j\} \subseteq \full_{H'-S}(T)$ since $\lab^{-1}_H(j) = \lab^{-1}_{H'}(i) \cup \lab^{-1}_{H'}(j)$.
		
		Now let $I \subseteq [d]$ be a type such that $j \in I$.
		Let $T$ be again a vertex cover of~$H-S$ (and thus also of~$H'-S$) of type $J$ in~$H-S$ and type~$J'$ in $H'-S$.
		We claim that $I \subseteq J$ if and only if $I \cup \{i\} \subseteq J'$.
		By previous arguments, we know that $J \setminus \{i,j\} = J' \setminus \{i,j\}$.
		If $I \subseteq J$, then $j \in J$ and it follows that  $\{i,j\} \subseteq J'$ and $I \cup \{i\} \subseteq J'$.
		For the other direction if $I \cup \{i\} \subseteq J'$, then $\{i,j\} \in J'$ and it follows that $j \in J$ and $I \subseteq J$.
		As a consequence, we see that $\alpha''(I) = \alpha'(I \cup \{i\}) = \alpha(I)$ since the set of  all vertex covers extending type~$I$ remains the same between~$H-S$ and~$H'-S$.
		The uniqueness is carried over for the same reason and we have $\beta''(I) = \beta'(I \cup \{i\}) = \beta(I)$.
		
		It remains to consider the case when $j \notin I$.
		Let $T$ be a vertex cover of~$H-S$ (and thus also of $H'-S$) of type $J$ in~$H-S$ and type~$J'$ in $H'-S$.
		We claim that $I \subseteq J$ if and only if $I \setminus \{i\} \subseteq J'$.
		The equalities of $\alpha''(I) = \alpha(I)$ and $\beta''(I) = \beta(I)$ then follow from this claim analogously to the previous case.
		Assuming $I \subseteq J$, we get
		\[I \setminus \{i\} = I \setminus \{i,j\}  \subseteq J \setminus \{i,j\} = J'\setminus\{i,j\} \subseteq J'\]
		where the first equality holds since $j \notin I$, the first inclusion follows from the assumption $I \subseteq J$, the second equality was observed to hold in general and the final inclusion is trivial.
		For the other direction assume that $I \setminus \{i\} \subseteq J'$.
		We see that
		\begin{equation*}
			I \setminus \{i\} = I \setminus \{i,j\} \subseteq J'\setminus\{j\} \subseteq J'\setminus\{i,j\} \cup \{i\} = J \setminus \{i,j\} \cup \{i\} \subseteq J \cup \{i\}
		\end{equation*}
		where the first equality holds again since $j \notin I$, the first inclusion is implied by $I \setminus \{i\} \subseteq J'$, the second inclusion is trivial, the second equality holds in general, and the final inclusion is trivial.
		Additionally, recall that $i \in J$ since $\lab^{-1}_H(i)$ is empty and $i$ belongs to the type of any set in~$H$.
		It follows that $I \subseteq J$ which finishes the proof.
		%
	\end{proofclaim}
	
	Similar to the previous case, the algorithm simply finds the preimage of every characteristic $(\alpha, \beta)$ under~$g_{i\to j}$ with the smallest value in~$\DPcw_{H'}[\cdot, \cdot]$.
	That is, we set
	\begin{equation}\label{eq:DPH-rho}
		\DPcw_{H}[\alpha,\beta] = \min_{(\alpha', \beta') \in g^{-1}_{i\to j}(\alpha, \beta)} \DPcw_{H'}[\alpha', \beta']
	\end{equation}
	where the minimum over empty set is again~$\infty$.

	\paragraph{Disjoint union, $H = H_1 \oplus H_2$.}
	We define a function $h$ acting on pairs of characteristics $(\alpha_1, \beta_1), (\alpha_2, \beta_2)$ as $h((\alpha_1, \beta_1), (\alpha_2, \beta_2)) = (\alpha, \beta)$ where
	\begin{align*}
		\alpha(I) &=  \alpha_1(I) + \alpha_2(I) \text{, and}\\
		\beta(I) &= \min(2,\, \beta_1(I) \cdot \beta_2(I)).
	\end{align*}
	
	Let us show that when given the characteristics of two sets $S_1 \subseteq V(H_1)$ and $S_2 \subseteq V(H_2)$, the function~$h$ outputs exactly the characteristic of their union $S_1 \cup S_2$ in~$H$.
	
	\begin{claim}\label{claim:union-correctness}
		Let $S_1 \subseteq V(H_1)$ be an arbitrary set of characteristic~$(\alpha_1, \beta_1)$ in~$H_1$, let $S_2 \subseteq V(H_2)$ be an arbitrary set of characteristic~$(\alpha_2, \beta_2)$ in~$H_2$ and let $(\alpha, \beta)$ be the image of $(\alpha_1, \beta_1), (\alpha_2, \beta_2)$ under $h$. Then $S_1 \cup S_2$ has characteristic $(\alpha, \beta)$ in~$H$.
	\end{claim}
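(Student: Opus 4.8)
The plan is to exploit the fact that the disjoint union introduces no edges between $H_1$ and $H_2$ and leaves all labels untouched, so that both the vertex-cover condition and the type condition factor cleanly across the two parts. Writing $S = S_1 \cup S_2$, for any $T \subseteq V(H-S)$ I set $T_1 = T \cap V(H_1 - S_1)$ and $T_2 = T \cap V(H_2 - S_2)$; the map $T \mapsto (T_1, T_2)$ is a size-additive bijection between subsets of $V(H-S)$ and pairs of subsets of $V(H_1-S_1)$ and $V(H_2-S_2)$. First I would verify that $T$ is a vertex cover of $H-S$ if and only if $T_1$ and $T_2$ are vertex covers of $H_1-S_1$ and $H_2-S_2$ respectively, which is immediate since every edge of $H-S$ lies entirely within one of the two parts.

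The key bookkeeping step is the behaviour of types. Since $\lab^{-1}_{H-S}(i) = \lab^{-1}_{H_1-S_1}(i) \cup \lab^{-1}_{H_2-S_2}(i)$ and the two vertex sets are disjoint, I would show $\full_{H-S}(T) = \full_{H_1-S_1}(T_1) \cap \full_{H_2-S_2}(T_2)$; here the convention that an absent label is automatically full is exactly what makes the identity hold when some label occurs in only one part. Consequently $T$ \emph{extends} type $I$ in $H-S$ if and only if $T_1$ extends $I$ in $H_1-S_1$ and $T_2$ extends $I$ in $H_2-S_2$. Combining this with the previous paragraph, the map $T \mapsto (T_1,T_2)$ restricts to a bijection between vertex covers extending type $I$ in $H-S$ and pairs of such vertex covers in the two parts, with $|T| = |T_1| + |T_2|$.

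From this bijection both halves of the claim follow. For $\alpha$, minimising over the product gives $\alpha(I) = \min_{T_1}|T_1| + \min_{T_2}|T_2| = \alpha_1(I) + \alpha_2(I)$; note that covers extending any type always exist (e.g. the full vertex set extends every type), so the two minima are finite and attained. For $\beta$, I would argue that any cover $T$ of size $\alpha(I)$ extending $I$ must split into $T_1,T_2$ of sizes exactly $\alpha_1(I)$ and $\alpha_2(I)$, because $|T_1| \ge \alpha_1(I)$, $|T_2| \ge \alpha_2(I)$ and the sizes sum to $\alpha_1(I) + \alpha_2(I)$. Hence the minimum covers extending $I$ in $H-S$ are in bijection with pairs of minimum covers extending $I$ in the two parts, so there is a unique one exactly when each part has a unique one; this is precisely the condition $\beta(I) = \min(2, \beta_1(I)\cdot\beta_2(I)) = 1$.

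The main obstacle I anticipate is not any single computation but getting the type bookkeeping exactly right: the characteristic here records \emph{extending} a type rather than having a type, which (as the discussion preceding the claim warns) is an upward-closed rather than a partition-style condition. I would therefore phrase everything in terms of ``extends type $I$'' and carefully check the $\full$-factorisation identity, including the corner cases where a label is missing from one or both parts; once that identity is in place, the size and uniqueness arguments are routine, since the disjoint union creates no interaction between the two sides.
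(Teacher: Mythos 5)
Your proof is correct and follows essentially the same route as the paper's: both decompose a cover $T$ of $H-S$ into $T_1 = T \cap V(H_1-S_1)$ and $T_2 = T \cap V(H_2-S_2)$, factor the vertex-cover and extends-type-$I$ conditions across the disjoint parts (your identity $\full_{H-S}(T) = \full_{H_1-S_1}(T_1) \cap \full_{H_2-S_2}(T_2)$ is exactly the paper's observation that a label is fully covered in $H-S$ if and only if it is fully covered in both parts, including the absent-label convention), and conclude $\alpha(I) = \alpha_1(I) + \alpha_2(I)$ with uniqueness holding precisely when it holds on both sides. Your extra explicitness --- the size-additive bijection and the argument that a minimum cover extending $I$ must split into minimum parts --- merely spells out what the paper leaves implicit.
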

	\begin{proofclaim}
		Let $T \subseteq V(H-S)$ and set $T_1 = T \cap V(H_1)$ and $T_2 = T \cap V(H_2)$.
		Observe that $T$ is a vertex cover of~$H-S$ if and only if $T_1$ and $T_2$ are vertex covers of $H_1-S_1$ and $H_2-S_2$ respectively.
		Moreover, we have $I \subseteq \full_{H-S}(T)$ if and only if both $I \subseteq \full_{H_1-S_1}(T_1)$ and $I \subseteq \full_{H_2-S_2}(T_2)$.
		This holds since any label~$i \in [d]$ is fully covered in $H-S$ by~$T$ if and only if $T_1$ and $T_2$ contain all vertices with label~$i$ in $H_1-S_1$ and $H_2-S_2$ respectively.
		Therefore, every vertex cover extending type~$I$ in $H-S$ is obtained as a union of a vertex cover extending type~$I$ in~$H_1-S_1$ with a vertex cover extending type~$I$ in~$H_2-S_2$.
		In particular, the value~$\alpha''(I)$ is equal to the sum of the sizes of smallest vertex covers extending type~$I$ in~$H_1-S_1$ and in~$H_2-S_2$.
		Moreover, this smallest vertex cover is unique if and only if it is a combination of two unique such vertex covers in $H_1-S_1$ and $H_2-S_2$.
		That agrees exactly with how $\alpha(I)$ and $\beta(I)$ are computed from $\alpha_1(I)$, $\alpha_2(I)$, $\beta_1(I)$, and $\beta_2(I)$.
	\end{proofclaim}
	
	The computation of $\DPcw_{H}[\cdot, \cdot]$ is analogous to the previous two cases.
	That is, we set
	\begin{equation}\label{eq:DPH-union}
		\DPcw_{H}[\alpha,\beta] =\min_{((\alpha_1, \beta_1),(\alpha_2,\beta_2)) \in h^{-1}(\alpha, \beta)} \DPcw_{H_1}[\alpha_1, \beta_1] + \DPcw_{H_2}[\alpha_2, \beta_2]
	\end{equation}
	where we take the minimum over empty set to be~$\infty$ as before.
	
	This finishes the description of the computation.
	Now, we show the correctness of the algorithm in two separate claims.
	First, we show that if there is a finite value stored in the dynamic programming table~$\DPcw_H[\cdot, \cdot]$, there is a set $S \subseteq V(H)$ with corresponding characteristic and size. 
	
	\begin{claim}\label{claim:cw-correct1}
		Let $H$ be a $d$-labeled graph corresponding to a subexpression of the clique-width $d$-expression of~$G$ and let $(\alpha, \beta)$ be arbitrary characteristic. If $\DPcw_H[\alpha, \beta] = s$ where $s \neq \infty$, then there exists a set $S \subseteq V(H)$ of size~$s$ with characteristic~$(\alpha,\beta)$.
	\end{claim}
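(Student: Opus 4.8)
The plan is to prove the claim by a bottom-up induction along the clique-width $d$-expression~$\psi$, mirroring the soundness arguments already carried out for trees (\Cref{claim:tree-correct1}) and bounded treewidth (\Cref{claim:tw-correct1}). The engine of the induction will be the three transformation claims (\Cref{claim:eta-correctness}, \Cref{claim:rho-correctness}, and \Cref{claim:union-correctness}), which already establish exactly how the characteristic of a fixed set evolves under each operation; the present claim only needs to stitch these together with the recurrences \eqref{eq:DPH-singleton}, \eqref{eq:DPH-eta}, \eqref{eq:DPH-rho}, and \eqref{eq:DPH-union}.

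For the base case I would take $H = i(v)$ a singleton. The only finite entries in $\DPcw_H[\cdot,\cdot]$ set by \eqref{eq:DPH-singleton} are $\DPcw_H[\alpha^i_\mathsf{in}, \beta^i_\mathsf{in}] = 1$ and $\DPcw_H[\alpha^i_\mathsf{out}, \beta^i_\mathsf{out}] = 0$, and these two characteristics were observed to be enforced precisely by the sets $\{v\}$ and $\emptyset$, whose sizes are $1$ and $0$ respectively. Hence the claim holds at the leaves.

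For the inductive step I would split into the three remaining operations. If $H = \eta_{i,j}(H')$ and $\DPcw_H[\alpha,\beta] = s \neq \infty$, then by \eqref{eq:DPH-eta} there is a characteristic $(\alpha',\beta') \in f_{i,j}^{-1}(\alpha,\beta)$ with $\DPcw_{H'}[\alpha',\beta'] = s$; induction on $H'$ yields a set $S \subseteq V(H') = V(H)$ of size $s$ and characteristic $(\alpha',\beta')$ in $H'$, and \Cref{claim:eta-correctness} then certifies that the very same $S$ has characteristic $f_{i,j}(\alpha',\beta') = (\alpha,\beta)$ in $H$. The relabeling case $H = \rho_{i\to j}(H')$ is identical, using \eqref{eq:DPH-rho} and \Cref{claim:rho-correctness}. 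For the disjoint union $H = H_1 \oplus H_2$, \eqref{eq:DPH-union} supplies characteristics $(\alpha_1,\beta_1)$ and $(\alpha_2,\beta_2)$ with $h((\alpha_1,\beta_1),(\alpha_2,\beta_2)) = (\alpha,\beta)$ and $\DPcw_{H_1}[\alpha_1,\beta_1] + \DPcw_{H_2}[\alpha_2,\beta_2] = s$; induction gives sets $S_1 \subseteq V(H_1)$ and $S_2 \subseteq V(H_2)$ realizing these, and \Cref{claim:union-correctness} shows $S_1 \cup S_2$ has characteristic $(\alpha,\beta)$ in $H$.

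I do not expect a genuine obstacle here, since all the combinatorial content lives in the transformation claims; the only points demanding a little care are bookkeeping ones. In the union case I must use that $V(H_1)$ and $V(H_2)$ are disjoint so that $|S_1 \cup S_2| = |S_1| + |S_2| = s$ exactly, and in the $\eta$ and $\rho$ cases I must note that the operation leaves the vertex set untouched, so that a set produced by the inductive hypothesis is already a legitimate subset of $V(H)$. It is worth remarking that, unlike the treewidth algorithm, the characteristics here record \emph{absolute} vertex-cover sizes and $S$ ranges over all of $V(H)$ (no root or bag is excluded), so no ``reduced size'' corrections enter the size accounting; this is what keeps the union step a clean additive one. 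Finally, this claim is only the soundness half of correctness: the matching completeness bound $\DPcw_H[\alpha,\beta] \le |S|$ for every set $S$ realizing $(\alpha,\beta)$ would be proved by a symmetric bottom-up induction.
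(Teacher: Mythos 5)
Your proposal is correct and matches the paper's own proof essentially verbatim: the same bottom-up induction over the $d$-expression, with the singleton base case read off from \eqref{eq:DPH-singleton} and each inductive step combining the corresponding recurrence (\eqref{eq:DPH-eta}, \eqref{eq:DPH-rho}, \eqref{eq:DPH-union}) with the matching transformation claim (Claims~\ref{claim:eta-correctness}, \ref{claim:rho-correctness}, \ref{claim:union-correctness}). Your bookkeeping remarks (disjointness of $V(H_1)$ and $V(H_2)$, the vertex set being unchanged under $\eta$ and $\rho$) are correct refinements that the paper leaves implicit.
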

	\begin{proofclaim}
		We prove the claim by a bottom-up induction on the clique-width $d$-expression of~$G$.
		First, let $H = i(v)$ be a singleton graph for some label~$i \in [d]$.
		There are only two finite entries in the table $\DPcw_H[\cdot, \cdot]$ in~\eqref{eq:DPH-singleton}, namely $\DPcw_H[\alpha^i_\mathsf{in}, \beta^i_\mathsf{in}]=1$ and~$\DPcw_H[\alpha^i_\mathsf{out}, \beta^i_\mathsf{out}] = 0$.
		There are also only two possible choices of $S \subseteq V(H)$.
		The only vertex cover of~$H-\{v\}$ is the empty set with type~$[d]$ that extends every type~$I$.
		It follows that the characteristic of~$S =\{v\}$ is exactly~$(\alpha^i_\mathsf{in}, \beta^i_\mathsf{in})$.
		On the other hand, there are two vertex covers in $H-\emptyset=H$, either the empty set or~$\{v\}$, and their types are~$[d]\setminus\{i\}$ and $[d]$ respectively.
		It follows that the smallest vertex cover extending a given type~$I$ is always unique and it has size~$1$ if $i \in I$ and size~$0$ otherwise.
		Therefore, the characteristic of the empty set in~$H$ is exactly~$(\alpha^i_\mathsf{out}, \beta^i_\mathsf{out})$ and the claim holds for singletons.
		
		Now assume that $H = \eta_{i,j}(H')$.
		The value $\DPcw_H[\alpha, \beta]$ was set to be $s$ in~\eqref{eq:DPH-eta} and hence, there exists a characteristic~$(\alpha', \beta')$ such that $f_{i,j}(\alpha', \beta') = (\alpha, \beta)$ and $\DPcw_{H'}[\alpha', \beta'] = s$.
		By applying induction on~$H'$, there exists a set $S \subseteq V(H')$ of size~$s$ and characteristic~$(\alpha', \beta')$.
		The characteristic of~$S$ in~$H$ is then precisely $(\alpha, \beta)$ due to \Cref{claim:eta-correctness} and we have found a set of the desired size and characteristic.
		
		The argument for $H = \rho_{i\to j}(H')$ is analogous to the previous case.
		This time, there is a set of characteristic~$(\alpha', \beta')$ in~$H'$ such that $g_{i \to j}(\alpha', \beta') = (\alpha, \beta)$ due to the computation in~\eqref{eq:DPH-rho}.
		The induction applied on~$H'$ together with~\Cref{claim:rho-correctness} then guarantees the existence of set with size~$s$ and characteristic~$(\alpha, \beta)$.
		
		Finally, suppose that $H = H_1 \oplus H_2$.
		Similar to before, $\DPcw_H[\alpha, \beta]$ was set to~$s$ in~\eqref{eq:DPH-union} and thus, there exist characteristics~$(\alpha_1, \beta_1)$ and $(\alpha_2, \beta_2)$ such that $h((\alpha_1, \beta_1), (\alpha_2, \beta_2)) = (\alpha, \beta)$ and $\DPcw_{H_1}[\alpha_1, \beta_1] = s_1$, $\DPcw_{H_2}[\alpha_2, \beta_2] = s_2$ with $s_1 + s_2 = s$.
		Applying induction on $H_\ell$ for both $\ell \in [2]$, we see that there exists a set $S_\ell \subseteq V(H_\ell)$ of size $s_\ell$ and characteristic $(\alpha_\ell, \beta_\ell)$.
		We conclude by \Cref{claim:union-correctness} that $S_1 \cup S_2$ is a set of size~$s$ and characteristic $(\alpha, \beta)$ in~$H$.
	\end{proofclaim}
	
	Next, we show the opposite implication, that is, if there is a set~$S$ of a given characteristic~$(\alpha, \beta)$, then the computed value $\DPcw_H[\alpha, \beta]$ is at most~$|S|$.
	
	\begin{claim}\label{claim:cw-correct2}
		Let $H$ be a $d$-labeled graph corresponding to a subexpression of the clique-width $d$-expression of~$G$ and let $(\alpha, \beta)$ be arbitrary characteristic. If there exists a set $S \subseteq V(H)$ of size~$s$ with characteristic~$(\alpha,\beta)$, then $\DPcw_H[\alpha, \beta] \leq s$.
	\end{claim}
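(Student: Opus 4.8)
The plan is to prove the claim by the same bottom-up induction on the clique-width $d$-expression of~$G$ that was used for \Cref{claim:cw-correct1}, but now running the logic of each forward transformation claim in the opposite direction. The key structural fact I rely on is that every set has a well-defined characteristic in every labeled graph, so whenever I restrict a solution~$S$ to a child subexpression (or keep it unchanged under a unary operation), the restricted set automatically carries \emph{some} characteristic to which the induction hypothesis applies; I then only need to check that the corresponding DP recurrence considered this characteristic as a candidate in its minimization.

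First I would handle the base case $H = i(v)$. There are exactly two subsets $S \subseteq V(H)$, namely $\emptyset$ and $\{v\}$, whose characteristics are $(\alpha^i_\mathsf{out}, \beta^i_\mathsf{out})$ and $(\alpha^i_\mathsf{in}, \beta^i_\mathsf{in})$ with sizes $0$ and $1$ respectively (as already argued in the proof of \Cref{claim:cw-correct1}). Since \eqref{eq:DPH-singleton} sets exactly these two entries to $0$ and $1$, the inequality $\DPcw_H[\alpha,\beta] \le s$ holds, in fact with equality, in both cases.

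For the inductive step on the unary operations, suppose first that $H = \eta_{i,j}(H')$ and let $S \subseteq V(H) = V(H')$ be a set of size~$s$ with characteristic $(\alpha, \beta)$ in~$H$. Let $(\alpha', \beta')$ be the characteristic of the very same set~$S$ in~$H'$. The induction hypothesis applied to $H'$ and $S$ gives $\DPcw_{H'}[\alpha', \beta'] \le s$, while \Cref{claim:eta-correctness} guarantees $f_{i,j}(\alpha', \beta') = (\alpha, \beta)$, i.e., $(\alpha', \beta')$ is a preimage of $(\alpha, \beta)$ under~$f_{i,j}$. Hence the minimum in \eqref{eq:DPH-eta} is taken over a set containing $(\alpha', \beta')$ and we conclude $\DPcw_H[\alpha, \beta] \le \DPcw_{H'}[\alpha', \beta'] \le s$. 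The case $H = \rho_{i \to j}(H')$ is identical after replacing $f_{i,j}$ by $g_{i\to j}$, \Cref{claim:eta-correctness} by \Cref{claim:rho-correctness}, and \eqref{eq:DPH-eta} by \eqref{eq:DPH-rho}.

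Finally, for $H = H_1 \oplus H_2$ I would split a solution $S$ into $S_1 = S \cap V(H_1)$ and $S_2 = S \cap V(H_2)$ and let $(\alpha_1, \beta_1)$, $(\alpha_2, \beta_2)$ be their characteristics in $H_1$ and $H_2$ respectively. The induction hypothesis yields $\DPcw_{H_1}[\alpha_1, \beta_1] \le |S_1|$ and $\DPcw_{H_2}[\alpha_2, \beta_2] \le |S_2|$, and \Cref{claim:union-correctness} ensures $h((\alpha_1, \beta_1),(\alpha_2, \beta_2)) = (\alpha, \beta)$, so this pair appears in the minimization of \eqref{eq:DPH-union}; therefore $\DPcw_H[\alpha, \beta] \le \DPcw_{H_1}[\alpha_1, \beta_1] + \DPcw_{H_2}[\alpha_2, \beta_2] \le |S_1| + |S_2| = s$. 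I expect no genuine obstacle here, since all the real work has already been carried out in the forward transformation claims \Cref{claim:eta-correctness,claim:rho-correctness,claim:union-correctness}; the only point requiring a little care is to invoke the induction hypothesis on the characteristic actually \emph{realised} by the (restricted) set~$S$ rather than on the target characteristic $(\alpha,\beta)$, which is exactly what guarantees that the matching preimage is present in the corresponding DP recurrence.
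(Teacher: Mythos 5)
Your proposal is correct and follows essentially the same route as the paper's own proof: a bottom-up induction over the clique-width expression where, in each case, the induction hypothesis is applied to the characteristic actually realised by $S$ (or its restrictions $S_1, S_2$) in the child graph(s), and Claims~\ref{claim:eta-correctness}, \ref{claim:rho-correctness}, and~\ref{claim:union-correctness} guarantee that this characteristic is a valid preimage in the minimizations~\eqref{eq:DPH-eta}, \eqref{eq:DPH-rho}, and~\eqref{eq:DPH-union}. The subtlety you flag at the end is exactly the point the paper's argument relies on, so no gap remains.
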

	\begin{proofclaim}
		We prove the claim again by a bottom-up induction on the clique-width $d$-expression of~$G$.
		
		First, let $H = i(v)$ be a singleton graph for some label~$i \in [d]$.
		There are only two possible choices of~$S \subseteq V(H)$.
		We already showed in the proof of \Cref{claim:cw-correct1} that the characteristic of~$\{v\}$ is exactly~$(\alpha^i_\mathsf{in}, \beta^i_\mathsf{in})$ while the characteristic of~$\emptyset$ is exactly~$(\alpha^i_\mathsf{out}, \beta^i_\mathsf{out})$.
		For these, we have set $\DPcw_H[\alpha^i_\mathsf{in}, \beta^i_\mathsf{in}]=1$ and~$\DPcw_H[\alpha^i_\mathsf{out}, \beta^i_\mathsf{out}] = 0$ in~\eqref{eq:DPH-singleton} and it follows that $\DPcw_H[\alpha^i_\mathsf{in}, \beta^i_\mathsf{in}] = |\{v\}|$ and $\DPcw_H[\alpha^i_\mathsf{out}, \beta^i_\mathsf{out}] = |\emptyset|$.

		Now assume that $H = \eta_{i,j}(H')$ and let $(\alpha', \beta')$ be the characteristic of~$S$ in~$H'$.
		\Cref{claim:eta-correctness} implies that $f_{i,j}(\alpha', \beta') = (\alpha, \beta)$ and we see that $\DPcw_{H'}[\alpha', \beta'] \le s$ by applying induction on~$H'$ and~$S$.
		Thus, we can conclude that also $\DPcw_{H}[\alpha, \beta] \le s$ since $\DPcw_{H'}[\alpha', \beta']$ appears in the minimum on the right side of~\eqref{eq:DPH-eta}.
		
		The argument for $H = \rho_{i \to j}(H')$ is again almost identical.
		Let $(\alpha', \beta')$ be the characteristic of~$S$ in~$H'$.
		By a combination of induction and \Cref{claim:rho-correctness}, we see that the $\DPcw_{H}[\alpha, \beta] \le \DPcw_{H'}[\alpha', \beta'] \le s$.
		
		Finally, suppose that $H = H_1 \oplus H_2$ and for both $\ell \in [2]$, let $S_\ell$ be the restriction of $S$ to the vertices of~$H_\ell$ with characteristic $(\alpha_\ell, \beta_\ell)$ in~$H_\ell$.
		By applying induction on $H_\ell$ and~$S_\ell$ for both $\ell \in [2]$, we have $\DPcw_{H_\ell}[\alpha_\ell, \beta_\ell] \le |S_\ell|$.
		Moreover, \Cref{claim:union-correctness} implies that $h((\alpha_1, \beta_1),(\alpha_2, \beta_2)) = (\alpha, \beta)$.
		It follows that we have set $\DPcw_H[\alpha, \beta] \le \DPcw_{H_1}[\alpha_1, \beta_1] + \DPcw_{H_2}[\alpha_2, \beta_2] \le |S_1| + |S_2| = s$ in~\eqref{eq:DPH-union}.
	\end{proofclaim}
	
	It remains to argue about the runtime of the algorithm.
	First, we can compute a clique-width $(8^d-1)$-expression of a graph~$G$ of clique-width~$d$ in time $O(g(d) \cdot |V(G)|^3)$ for some function~$g$ using the algorithm by Oum~\cite{oum08}.
	Note that similar approximations of optimal clique-width expressions can be found much more efficiently in many specific graph classes of bounded clique-width.
	From now on, we assume that a clique-width $d$-expression of the graph~$G$ is given on input.
	
	\begin{claim}\label{claim:cw-runtime}
		Given a clique-width $d$-expression~$\phi$ of $G$ on input, the algorithm finishes in time $O(|\phi| \cdot 2^d \cdot C_d^2)$ where $C_d$ is the number of all possible characteristics in a $d$-labeled graph.
	\end{claim}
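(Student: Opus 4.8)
The plan is to bound the running time separately for each of the $|\phi|$ subexpressions of the clique-width expression and then sum. For each subexpression producing a $d$-labeled graph~$H$, the algorithm fills the table $\DPcw_H[\cdot, \cdot]$, which has $C_d$ entries. I would first argue that evaluating any of the transition functions $f_{i,j}$, $g_{i\to j}$, and $h$ on a single input takes $O(2^d)$ time: each of them produces a characteristic $(\alpha, \beta)$ whose values $\alpha(I)$ and $\beta(I)$ must be computed for all $2^d$ types $I \subseteq [d]$, and each individual value is computed in constant time from a constant number of values of the input characteristic(s).

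The key implementation idea is to perform the updates in the forward direction rather than literally computing the preimages appearing in~\eqref{eq:DPH-eta}, \eqref{eq:DPH-rho}, and~\eqref{eq:DPH-union}. Concretely, I would initialize every entry of $\DPcw_H[\cdot, \cdot]$ to $\infty$ in $O(C_d)$ time, and then iterate over the entries of the child table(s). For the operations $\eta_{i,j}$ and $\rho_{i \to j}$ with a single child~$H'$, I iterate over all $C_d$ characteristics $(\alpha', \beta')$ of~$H'$, compute the image under $f_{i,j}$ or $g_{i\to j}$ in $O(2^d)$ time, and update the target entry with $\DPcw_{H'}[\alpha', \beta']$ if it improves the current value; this realizes exactly the minimum over preimages and costs $O(C_d \cdot 2^d)$ in total. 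The singleton node trivially costs $O(C_d)$ to initialize, since only the two entries described in~\eqref{eq:DPH-singleton} receive finite values.

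The bottleneck, as usual, is the disjoint-union node $H = H_1 \oplus H_2$. Here I would iterate over all pairs $((\alpha_1, \beta_1), (\alpha_2, \beta_2))$ of characteristics drawn from the two child tables---there are $C_d^2$ such pairs---compute the image $(\alpha, \beta) = h((\alpha_1, \beta_1), (\alpha_2, \beta_2))$ in $O(2^d)$ time, and update $\DPcw_H[\alpha, \beta]$ with $\DPcw_{H_1}[\alpha_1, \beta_1] + \DPcw_{H_2}[\alpha_2, \beta_2]$ whenever this sum is smaller. This matches~\eqref{eq:DPH-union} and takes $O(C_d^2 \cdot 2^d)$ time, dominating the cost of every other node type. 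Summing over all $|\phi|$ subexpressions then yields the claimed bound $O(|\phi| \cdot 2^d \cdot C_d^2)$.

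The main obstacle I anticipate is purely presentational: one must make precise that the forward push-update correctly computes the minimum over the preimages $f^{-1}_{i,j}$, $g^{-1}_{i\to j}$, and $h^{-1}$ without ever materializing these preimages, and verify that each function evaluation really is $O(2^d)$ by checking that the branch conditions defining $\beta(I)$ reference only a constant number of input values per type~$I$. Everything else is a routine accounting of constant-time-per-type operations across $C_d$ (or $C_d^2$) characteristics.
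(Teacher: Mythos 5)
Your proposal is correct and follows essentially the same route as the paper: the same $O(2^d)$ bound for evaluating $f_{i,j}$, $g_{i\to j}$, and $h$ (constant work per type $I \subseteq [d]$), and the identical forward push-update over all $C_d^2$ pairs at the disjoint-union node, which is exactly how the paper avoids the naive $\Omega(2^d \cdot C_d^3)$ cost of materializing $h^{-1}$. The only deviation is that you also apply the forward push at the unary nodes $\eta_{i,j}$ and $\rho_{i\to j}$, where the paper instead enumerates, for each of the $C_d$ target entries, all $C_d$ candidate preimages (costing $O(2^d \cdot C_d^2)$ per unary node); your variant runs in $O(2^d \cdot C_d)$ there and is equally valid, but since the union node dominates, both yield the claimed $O(|\phi| \cdot 2^d \cdot C_d^2)$.
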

	\begin{proofclaim}
		We will show that for any $d$-labeled graph arising in~$\phi$, the table $\DPcw_H[\cdot, \cdot]$ can be filled in time~$O(2^d \cdot C_d^2)$.
		The claim follows since the number of such graphs is~$O(|\phi|)$.
		Also observe that the functions $f_{i,j}$, $g_{i \to j}$, and~$h$ can all be computed in time $O(2^d)$.
		
		When $H$ is a singleton, there are exactly two finite entries in $\DPcw_H[\cdot, \cdot]$ and we can fill the table in time $O(2^d \cdot C_d)$.
		Now, assume that $H = \eta_{i,j}(H')$.
		The algorithm simply computes $\DPcw_H[\alpha, \beta]$ by its definition in~\eqref{eq:DPH-eta}.
		That is, it enumerates over all possible characteristics and finds the minimum value of $\DPcw_{H'}[\alpha', \beta']$ over characteristic $(\alpha', \beta')$ such that $f_{i,j}(\alpha', \beta') = (\alpha, \beta)$.
		This enumeration takes $O(2^d)$ time per characteristic and the same is true for the computation of the function~$f_{i,j}$.
		Therefore, it takes $O(2^d \cdot C_d)$ time to fill a single entry~$\DPcw_H[\alpha, \beta]$ which adds up to $O(2^d \cdot C_d^2)$ time over the whole table.
		An analogous approach computes the table $\DPcw_H[\cdot, \cdot]$ when $H = \rho_{i \to j}(H')$.
		
		However, we can no longer use the same approach when $H = H_1 \oplus H_2$, as it would take $\Omega(2^d \cdot C^3_d)$ time.
		Instead, we speed up this computation using the same idea as before.
		We start by initially setting every entry to~$\infty$.
		Then we iterate over all possible pairs of characteristics $(\alpha_1, \beta_1)$, $(\alpha_2, \beta_2)$.
		For each pair, we first compute the value $h((\alpha_1, \beta_1), (\alpha_2, \beta_2))$, let us denote it $(\alpha, \beta)$.
		Afterwards, we update $\DPcw_H[\alpha, \beta]$ to $\DPcw_{H_1}[\alpha_1, \beta_1] + \DPcw_{H_2}[\alpha_2, \beta_2]$ but only if it is smaller than its current value.
		This takes $O(2^d)$ time per each pair of characteristics, for a total of $O(2^d \cdot C^2_d)$.
	\end{proofclaim}
	
	The total number of possible characteristics is $(n+1)^{2^d} \cdot 2^{2^d}$ since the domain of both $\alpha$ and $\beta$ is the set~$2^{[d]}$ and their ranges are $[0,n]$ and $[2]$ respectively.
	Plugging this into \Cref{claim:cw-runtime}, we see that the algorithm terminates in $n^{O(2^d)}$ time as promised.
\end{proof}
\fi

\begin{theorem} \label{thm:fpt-cw+k}
	The \MUVCShort problem can be solved by an \FPT-algorithm parameterised by the clique-width~$d$ of~$G$ plus the size of solution~$k$ in time $k^{O(2^d)}\cdot n$.
\end{theorem}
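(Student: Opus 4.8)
The plan is to reuse verbatim the dynamic-programming scheme of \Cref{thm:xp-cw} together with its transition functions $f_{i,j}$, $g_{i\to j}$ and $h$ (whose correctness is \Cref{claim:eta-correctness}, \Cref{claim:rho-correctness} and \Cref{claim:union-correctness}), but to store every table $\DPcw_H[\cdot,\cdot]$ \emph{sparsely}, keeping only those entries whose value does not exceed~$k$. This truncation is sound: the size of a partial solution never decreases along the expression, since it is preserved by $\eta_{i,j}$ and $\rho_{i\to j}$ and it is the sum of the two children's sizes in a disjoint union $H_1\oplus H_2$; hence no characteristic realised only by sets of size larger than~$k$ can ever contribute to a global solution of size at most~$k$. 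The answer is extracted exactly as before, as the minimum surviving entry $\DPcw_G[\alpha,\beta]$ with $\beta(\emptyset)=1$, and correctness is inherited directly from \Cref{claim:cw-correct1} and \Cref{claim:cw-correct2} restricted to entries of value at most~$k$.

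Everything therefore hinges on bounding the number of surviving entries, and this is where the assumption $|S|\le k$ is used. Write $\alpha^\emptyset_H$ for the first component of the characteristic of the empty set in~$H$, i.e.\ $\alpha^\emptyset_H(I)$ is the size of a minimum vertex cover extending type~$I$ in~$H$ itself. The key step, which I expect to be the main obstacle, is the following sandwich bound: for every $S\subseteq V(H)$ with $|S|\le k$ and every $I\subseteq[d]$, the characteristic $(\alpha,\beta)$ of~$S$ satisfies $\alpha^\emptyset_H(I)-k \le \alpha(I)\le \alpha^\emptyset_H(I)$. The upper bound follows by restricting a minimum type-$I$ vertex cover~$T$ of~$H$ to $H-S$: the set $T\setminus S$ still covers every edge of $H-S$ (any such edge has both endpoints outside $S$, so the endpoint covering it in~$T$ survives) and still contains all surviving vertices of each label in~$I$, so it extends type~$I$ in $H-S$ and has size at most $|T|$. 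The lower bound follows symmetrically: given a minimum type-$I$ vertex cover~$T'$ of $H-S$, the set $T'\cup S$ is a vertex cover of~$H$ extending type~$I$, whence $\alpha^\emptyset_H(I)\le |T'|+|S|\le \alpha(I)+k$.

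Given the sandwich bound, each coordinate $\alpha(I)$ of a characteristic enforceable by a set of size at most~$k$ ranges over a fixed interval of $k+1$ integers determined by~$H$, so the number of such characteristics is at most $(k+1)^{2^d}\cdot 2^{2^d}=k^{O(2^d)}$; since every surviving table entry is, by definition of $\DPcw_H$, enforceable by a set of size at most~$k$, each sparse table stores at most $k^{O(2^d)}$ entries (the reference values $\alpha^\emptyset_H$ are themselves the unique value-$0$ entries and require no extra computation). Rerunning the transitions of \Cref{thm:xp-cw} over these sparse tables is now cheap: the disjoint-union node remains the bottleneck and is handled, as in \Cref{claim:cw-runtime}, by iterating over all ordered pairs of surviving entries of the two children and pushing the combined value $h((\alpha_1,\beta_1),(\alpha_2,\beta_2))$ forward, which costs $O(2^d)$ per pair and hence $k^{O(2^d)}$ in total; the $\eta_{i,j}$, $\rho_{i\to j}$ and singleton nodes are even cheaper. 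Since all arithmetic is performed on $\alpha$-values bounded by~$n$ in $O(1)$ word operations and a clique-width $d$-expression has $O(n)$ nodes, the overall running time is $k^{O(2^d)}\cdot n$, as claimed.
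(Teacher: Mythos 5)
Your proposal is correct and follows essentially the same route as the paper: your sandwich bound is precisely the paper's \Cref{claim:cw-k} (with the two directions proved by the same $T\cup S$ / $T\setminus S$ exchange argument), and truncating the dynamic program of \Cref{thm:xp-cw} to the resulting $(k+1)^{2^d}\cdot 2^{2^d}$ characteristics, with the disjoint-union nodes handled by forward iteration over pairs as in \Cref{claim:cw-runtime}, is exactly the paper's algorithm. The only cosmetic difference is implementation: the paper computes the empty-set characteristic $\alpha^\emptyset_H$ in an explicit first pass and indexes tables by the differences $\alpha^\emptyset_H(I)-\alpha(I)$, whereas you keep sparse tables of entries with value at most~$k$ and read the reference off the unique value-$0$ entry, which is equivalent.
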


\ifshort
\begin{sketch}
	The result is obtained by truncating the dynamic programming table of the algorithm in \Cref{thm:xp-cw}.
	That is possible because the characteristic of a small set~$S$ cannot be too far away from the characteristic of the empty set.
	To be more precise, let $S \subseteq V(H)$  be a set of vertices in a $d$-labeled graph~$H$ with characteristic~$(\alpha_S, \beta_S)$ and let $(\alpha_\emptyset, \beta_\emptyset)$ be the characteristic of the empty set in~$H$. Then for arbitrary type $I \subseteq [d]$, we have  $0 \le \alpha_\emptyset(I) - \alpha_S(I) \le |S|$.
	
The algorithm proceeds in two passes over a clique-width $d$-expression~$\phi$ of the input graph~$G$.
In the first pass, it computes the characteristic of the empty set in every $d$-labeled graph generated by a subexpression of~$\phi$.
In the second pass, it follows the computation of the algorithm from \Cref{thm:xp-cw} restricted to characteristics with small distance to the characteristic of the empty set.
\end{sketch}
\fi

\iflong
\begin{proof}
	The result is obtained by truncating the dynamic programming table of the algorithm in \Cref{thm:xp-cw}.
	That is possible because the characteristic of a small set~$S$ cannot be too far away from the characteristic of the empty set.
	
	\begin{claim}\label{claim:cw-k}
		Let $H$ be a $d$-labeled graph, let $S \subseteq V(H)$ be a subset of its vertices with characteristic~$(\alpha_S, \beta_S)$ in~$H$ and let $(\alpha_\emptyset, \beta_\emptyset)$ be the characteristic of the empty set in~$H$. Then for arbitrary type $I \subseteq [d]$, we have  $0 \le \alpha_\emptyset(I) - \alpha_S(I) \le |S|$.
	\end{claim}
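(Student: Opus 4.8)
The plan is to compare a minimum vertex cover of~$H$ extending type~$I$ with a minimum vertex cover of $H - S$ extending the same type, moving between the two graphs by either deleting or re-inserting the vertices of~$S$. Since $H - S$ arises from~$H$ by deleting vertices together with their incident edges, each of these two transformations changes the size of a cover by at most~$|S|$, which is exactly the slack claimed. I would prove the two inequalities separately.

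For the lower bound $\alpha_\emptyset(I) \ge \alpha_S(I)$, I would start from a minimum vertex cover~$M$ of~$H$ extending type~$I$, so that $|M| = \alpha_\emptyset(I)$ and $I \subseteq \full_H(M)$, and consider $M' = M \setminus S \subseteq V(H) \setminus S$. Every edge of~$H - S$ is an edge of~$H$ whose endpoints both avoid~$S$, hence it is covered by~$M'$, so $M'$ is a vertex cover of~$H - S$. Moreover, for each $i \in I$ all vertices with label~$i$ in~$H$ lie in~$M$, so in particular all such vertices surviving in $H - S$ lie in~$M'$; thus $M'$ extends type~$I$ in~$H - S$ and $\alpha_S(I) \le |M'| \le |M| = \alpha_\emptyset(I)$.

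For the upper bound $\alpha_\emptyset(I) \le \alpha_S(I) + |S|$, I would argue symmetrically from a minimum vertex cover~$N$ of~$H - S$ extending type~$I$, with $|N| = \alpha_S(I)$, and set $N' = N \cup S$. Any edge of~$H$ with both endpoints outside~$S$ is an edge of~$H - S$ covered by~$N$, while any edge incident to~$S$ is covered by $S \subseteq N'$; hence $N'$ is a vertex cover of~$H$. To see that $N'$ extends type~$I$ in~$H$, observe that every vertex of~$H$ carrying a label $i \in I$ either survives in $H - S$ (and then lies in~$N$ since $i \in \full_{H-S}(N)$) or lies in $S \subseteq N'$; either way it is in~$N'$, so $i \in \full_H(N')$. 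This gives $\alpha_\emptyset(I) \le |N'| = |N| + |S| = \alpha_S(I) + |S|$, as $N$ and~$S$ are disjoint.

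The step I would treat most carefully is the interaction between vertex deletion and the full-label (type) condition, which is the only place where the definition of characteristic for clique-width differs from the tree/treewidth settings. The vertex-cover bookkeeping is routine; the real content is that deleting~$S$ can only make a label easier to fully cover (so type~$I$ is preserved when passing from~$M$ to $M \setminus S$), while adding~$S$ back to a cover automatically re-covers every label occurring in~$S$ (so type~$I$ is preserved when passing from~$N$ to $N \cup S$). Verifying these two preservation statements is where I would spend the care, and everything else follows immediately.
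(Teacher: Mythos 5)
Your proposal is correct and takes essentially the same route as the paper: both directions pass between $H$ and $H-S$ by forming $M \setminus S$ (respectively $N \cup S$) and checking that covering edges and extending type~$I$ are preserved, with the key observation in each case being exactly your two label-preservation statements. The only cosmetic difference is that the paper tracks the exact types $J, J' \supseteq I$ of the modified covers and bounds $\alpha$ via $\mu_H(J)$ and $\mu_{H-S}(J')$, whereas you verify $I \subseteq \full(\cdot)$ directly, which is equivalent bookkeeping.
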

	\begin{proofclaim}
		Fix a type $I \subseteq [d]$.
		Let $T$ be the smallest vertex cover in~$H-S$ that extends type~$I$, i.e., the type of~$T$ in~$H-S$ is $J$ for some $J \supseteq I$.
		Clearly, $T \cup S$ is a vertex cover of~$H$ of size $|T| + |S| = \alpha_S(I) + |S|$.
		Moreover, the type of $T$ in~$H$ is still $J$ because $i \in \full_H(T \cup S)$ if and only if $i \in \full_{H-S}(T)$.
		We get
		\begin{equation*}
			\alpha_\emptyset(I) = \min\,\{\mu_{H}(I') \mid I \subseteq I' \subseteq [d]\} \le \mu_{H}(J) \le |T| + |S| = \alpha_S(I) + |S|
		\end{equation*}
		where the first equality is the definition of $\alpha_\emptyset(I)$, the first inequality holds because $J \supseteq I$, the second inequality follows since $T \cup S$ is a vertex cover of type~$J$ in~$H$, and the final equality holds because $T$ is the smallest vertex cover extending~$I$ in~$H-S$.
		We obtain $\alpha_\emptyset(I) - \alpha_S(I) \le |S|$ by rearranging the inequality.
		
		On the other hand, let $T$ be the smallest vertex cover in~$H$ that extends type~$I$, i.e., the type of~$T$ in~$H$ is $J$ for some $J \supseteq I$.
		It is straightforward to see that $T \setminus S$ is a vertex cover of~$H-S$.
		Let $J'$ denote the type of $T \setminus S$ in~$H-S$.
		Observe that if we have $i \in \full_H(T)$ for some $i \in [d]$, then necessarily $i \in \full_{H-S}(T \setminus S)$.
		This implies $J' \supseteq J \supseteq I$.
		We get
		\begin{equation*}
			\alpha_S(I) = \min\,\{\mu_{H-S}(I') \mid I \subseteq I' \subseteq [d]\} \le \mu_{H-S}(J')
			\le |T \setminus S| \le |T| = \alpha_\emptyset(I)
		\end{equation*}
		where the first equality is the definition of~$\alpha_S(I)$, the first inequality holds because $J' \supseteq I$, the second inequality holds since $T \setminus S$ is a vertex cover of type~$J'$ in~$H-S$, and the final equality holds by our choice of~$T$ as the smallest vertex cover extending~$I$ in~$H$.
		This wraps up the proof.
	\end{proofclaim}
	
	The algorithm proceeds in two passes over a clique-width $d$-expression~$\phi$ of the input graph~$G$.
	In the first pass, it computes the characteristic of the empty set in every $d$-labeled graph generated by a subexpression of~$\phi$.
	Let us denote $(\alpha^\emptyset_H, \beta^\emptyset_H)$ the characteristic of the empty set in such a $d$-labeled graph~$H$.
	This is done simply by setting the characteristic to $(\alpha^i_\mathsf{out}, \beta^i_\mathsf{out})$ whenever $H = i(v)$ and otherwise applying the functions $f_{i,j}$, $g_{i \to j}$, and $h$ to the characteristics of the empty set in subexpressions.
	The correctness of this computation is warranted by Claims~\ref{claim:eta-correctness}--\ref{claim:union-correctness}.
	Afterwards, we run the same dynamic programming algorithm as in \Cref{thm:xp-cw} but we restrict its computation within a graph~$H$ to characteristics~$(\alpha, \beta)$ such that $0 \le \alpha^\emptyset_H(I) - \alpha(I) \le k$.
	By \Cref{claim:cw-k}, these still capture all solutions of size at most~$k$ and thus, the correctness of the algorithm carries over.
	Moreover, the number of possible characteristics decreased to $(k+1)^{2^d} \cdot 2^{2^d}$ and thus the algorithm terminates in $k^{O(2^d)} \cdot n$ time by~\Cref{claim:cw-runtime}.
	Let us remark that for efficient implementation, we simply use as indices into the table the differences $\alpha^\emptyset_H(I) - \alpha(I)$ instead of the values $\alpha(I)$.
\end{proof}
\fi

\section{Hardness on planar graphs}

\begin{theorem}\label{thm:sigma_2P}
Both the \MUVCShort and \PAUVCShort problems are $\Sigma_2^P$-complete even when the input graph $G$ is planar and of maximum degree $5$.
\end{theorem}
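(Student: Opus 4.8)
Both problems admit a natural $\exists\forall$ description. For \MUVCShort, I would existentially guess a pair $(S,M)$ where $|S|\le k$ and $M$ is a vertex cover of $G-S$, and then verify in co\NP{} that $M$ is the unique minimum vertex cover of $G-S$, i.e.\ that every vertex cover $M'\neq M$ of $G-S$ has $|M'|>|M|$. For \PAUVCShort I would guess $(S,M)$ with $S\subseteq M$ and $M$ a vertex cover of $G$, and universally check that no vertex cover of $G$ is smaller than $M$ and that $M$ is the only vertex cover of size $|M|$ containing $S$. In both cases the witness has polynomial size and the verification is a single co\NP{} check, so both problems lie in $\Sigma_2^P$.

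\textbf{Reduction.} For hardness I would reduce from \UQPSAT, first establishing (or invoking) that it is $\Sigma_2^P$-complete --- e.g.\ by planarizing a standard quantified exactly-$1$-in-$3$ SAT via planar crossover gadgets. Recall its variables split into a control block $X$ and a block $Y$, and the question is whether some assignment of $X$ admits a \emph{unique} assignment of $Y$ satisfying every clause in the exactly-one-true-literal sense. I would build a planar graph $G$ and a budget $k$ so that (i) the set $S$ encodes an assignment of $X$, and (ii) the minimum vertex covers of the residual instance are in bijection with the $Y$-assignments that satisfy all clauses. Then the residual graph has a unique minimum vertex cover exactly when the chosen $X$ forces a unique satisfying $Y$, which is precisely the semantics of \UQPSAT.

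\textbf{Gadgets.} Each $Y$-variable is a classical vertex-cover variable gadget (an edge, or a short even cycle) whose two equal-size covers represent its truth values, so that global minimum covers enumerate the $Y$-assignments. Each clause is a $1$-in-$3$ gadget attached to the three literal vertices, tuned so that a selection is locally cheapest iff exactly one incident literal is true; hence a cover of $G$ is minimum iff every clause is exactly-$1$-in-$3$ satisfied, and distinct satisfying $Y$-assignments yield distinct minimum covers. Each $X$-variable gets a choice gadget with two designated vertices, and the budget $k$ equals $|X|$, forcing $S$ to select exactly one vertex per $X$-variable. To cover both problems with one instance, these designated vertices are placed in \emph{every} minimum vertex cover of $G$, so that pre-assigning them (\PAUVCShort) and deleting them (\MUVCShort) impose identical constraints on the $Y$- and clause-gadgets. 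Planarity is inherited from the planar incidence graph of the formula, and the degree bound $5$ is obtained by expanding a variable occurring in many clauses into a cycle of copies linked by equality-propagation gadgets, the usual degree-reduction trick.

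\textbf{Main obstacle.} The delicate part is to make the clause and propagation gadgets simultaneously (a) enforce the exactly-$1$-in-$3$ condition, (b) keep the optimum size fixed so that no internal gadget choice changes the minimum cover size, and (c) guarantee that the \emph{only} source of multiple minimum covers is a genuine second satisfying $Y$-assignment, with no spurious alternative cover created inside a gadget. Reconciling this uniqueness-preservation with planarity and maximum degree $5$ --- especially designing a propagation gadget that transmits equality while admitting a unique optimal cover --- is the crux; once the gadgets have these properties, correctness and the size bookkeeping for $k$ follow routinely.
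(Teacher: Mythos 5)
Your membership argument and your high-level plan coincide with the paper's: the same source problem \UQPSAT with its known $\Sigma_2^P$-completeness, budget $k=n_1$, the deleted set encoding the assignment of the $x$-variables, and minimum vertex covers of the residual graph tracking the $y$-assignments, with degree reduction via an occurrence cycle per variable. However, what you label the ``main obstacle'' and defer is not a routine remainder --- it is essentially the entire content of the hardness proof, and your sketch does not supply it. Concretely, you never construct gadgets with the properties (a)--(c) you list. The paper achieves them with specific machinery absent from your sketch: pendant pairs ($w_c$--$z_c$ on each clause triangle, and $w_i^j$--$z_i^j$ attached to every consecutive red/blue pair of the occurrence cycle) whose sole purpose is to convert any ``slack'' in a cover into a tie between two equally cheap completions, so that the \emph{uniqueness} requirement itself forces every minimum cover to be tight, to alternate along each cycle, and to take exactly two of the three outer clause vertices. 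Moreover, the exactly-one-in-three condition is \emph{not} enforced inside a single clause gadget, as you propose: the ``at least one'' direction comes from the clause triangle (the one outer vertex left out of the cover must have its literal neighbor in it), while the ``at most one'' direction is enforced by extra edges added \emph{between} the variable gadgets of co-occurring literals (the edges $x_{i_3}^{\gamma-1}x_{i_1}^{\alpha+1}$, $x_{i_1}^{\alpha-1}x_{i_2}^{\beta+1}$, $x_{i_2}^{\beta-1}x_{i_3}^{\gamma+1}$). A generic ``1-in-3 gadget tuned so a selection is locally cheapest iff exactly one incident literal is true'' that is simultaneously planar, of degree at most $5$, and free of spurious ties does not exist off the shelf; designing it is the theorem.

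The one concrete design decision you do commit to is also wrong: placing the two designated vertices of each $X$-variable ``in every minimum vertex cover of $G$''. If a vertex $v$ lies in every minimum vertex cover, then pre-assigning $v$ is vacuous (the minimum covers of $G$ containing $S$ are exactly the minimum covers of $G$), and deleting $v$ induces a size-preserving bijection between the minimum covers of $G-v$ and those of $G$; in both problems, uniqueness after selecting one designated vertex is then equivalent to uniqueness after selecting the other, so $S$ would encode no information about the $x$-assignment and both directions of the reduction collapse. The paper does the opposite: with \emph{no} deletion, each $x$-gadget contains an unavoidable tie (if $u_i^1$ is in the cover, both $u_i^2$ and $u_i^3$ are equally good choices, contradicting uniqueness), so any feasible $S$ of size $n_1$ must delete exactly one of $u_i^2,u_i^3,u_i^5,u_i^6$ per gadget; the side deleted then forces $u_i^1$ or $u_i^4$ into the cover and thereby the red/blue orientation of the entire gadget. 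For \PAUVCShort one additionally verifies that $S\cup U$ is a minimum vertex cover of $G$ --- which holds precisely because the vertices of $S$ are needed to cover the pendant edges $u_i^2u_i^3$ or $u_i^5u_i^6$ --- a step your ``identical constraints'' shortcut cannot replace.
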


\iflong
\begin{proof}
We first argue about \MUVCShort belonging in $\Sigma_2^p$. Recall that a decision problem is in $\Sigma_2^P$ if and only if it can be decided by a non-deterministic Turing machine with the added use of an \NP-oracle. Given a graph $G=(V,E)$ and integer~$k$, assume we have guessed a set $S\subseteq V$ such that $|S|\leq k$ and $G'=G-S$ has a unique minimum vertex cover~$U$. Then, in order to verify that $U$ is indeed as required, it suffices to solve \PAUVCShort on $G'$ for $k=0$, which can be done in polynomial time with the help of an \NP-oracle~\cite{horiyama2024pauvc}. So, in what follows we focus on proving that \MUVCShort is $\Sigma_2^P$-hard for planar graphs of maximum degree $5$. Observe that slight modifications in our proof can lead to the same hardness result for the same family of graphs for \PAUVCShort.

We present a reduction from \UQPSAT~\cite{demaine18sigma2}. In that problem, we are given a 3CNF formula $\phi$ on the set of variables $\{x_1,\dots,x_{n_1},y_1,\dots,y_{n_2}\}$ and clauses $C=\{c_1,\dots,c_m\}$. We say that variables in $\{x_1,\dots,x_{n_1}\}$ ($\{y_1\dots,y_{n_2}\}$ resp.) are of \textit{type $x$} (\textit{type $y$} resp.). The task is to find a truth-assignment of the variables of type $x$ such that there exists a unique truth-assignment of the variables of type $y$ where each clause of $C$ is satisfied by exactly one of its literals (regardless of whether that literal contains a variable of type $x$ or $y$). We will construct a graph $G$ which has an \MUVCShort of order $n_1$ if and only if $\phi$ is a yes-instance of \UQPSAT.

\begin{figure}[!t]
\centering

\subfloat[The $c$-gadget. The three outer vertices correspond to the three literals of $c$.]{
\scalebox{0.6}{
\begin{tikzpicture}[inner sep=0.6mm]

	\node[draw, circle, line width=1pt, fill=white](x1) at (0,0)  [label=left: $\ell^1_c$]{};
    \node[draw, circle, line width=1pt, fill=white](x2) at (6,0)  [label=right: $\ell^2_c$]{};
    \node[draw, circle, line width=1pt, fill=white](x3) at (3,5)  [label=above: $\ell^3_c$]{};
    \node[draw, circle, line width=1pt, fill=white](x4) at (3,2.5)  [label=left: $w_c$]{};
    \node[draw, circle, line width=1pt, fill=white](x5) at (3,1.5)  [label=left: $z_c$]{};

	\draw[-, line width=1pt]  (x1) -- (x2);
    \draw[-, line width=1pt]  (x2) -- (x3);
    \draw[-, line width=1pt]  (x3) -- (x1);
    \draw[-, line width=1pt]  (x1) -- (x4);
    \draw[-, line width=1pt]  (x2) -- (x4);
    \draw[-, line width=1pt]  (x3) -- (x4);
    \draw[-, line width=1pt]  (x4) -- (x5);

\end{tikzpicture}
}
}\hspace{10pt}
\subfloat[The $y_i$-gadget. The vertices $y_i^2$ and $y_i^3$ are the inner colored vertices of the first appearance of the variable $y_i$.]{
\scalebox{0.7}{
\begin{tikzpicture}[inner sep=0.6mm]

    \node[draw, circle, line width=1pt, fill=red](x1) at (2.75,0.00)  [label=right: $y_i^3$]{};
    \node[draw, circle, line width=1pt, fill=blue](x2) at (2.23,1.62)  [label=right: $y_i^4$]{};
    \node[draw, circle, line width=1pt, fill=red](x3) at (0.85,2.62)  [label=above: $y_i^5$]{};
    \node[draw, circle, line width=1pt, fill=blue](x4) at (-0.85,2.62)  [label=above: $y_i^6$]{};
    \node[draw, circle, line width=1pt, fill=red](x5) at (-2.23,1.62)  [label=left: $y_i^7$]{};
    \node[draw, circle, line width=1pt, fill=red](x7) at (-2.23,-1.62)  []{};
    \node[draw, circle, line width=1pt, fill=blue](x8) at (-0.85,-2.62)  [label=below: $y_i^q$]{};
    \node[draw, circle, line width=1pt, fill=red](x9) at (0.85,-2.62)  [label=below: $y_i^1$]{};
    \node[draw, circle, line width=1pt, fill=blue](x10) at (2.23,-1.62)  [label=right: $y_i^2$]{};

    \draw[-, line width=1pt]  (x1) -- (x2);
    \draw[-, line width=1pt]  (x2) -- (x3);
    \draw[-, line width=1pt]  (x3) -- (x4);
    \draw[-, line width=1pt]  (x4) -- (x5);
    \draw[-, line width=1pt]  (x7) -- (x8);
    \draw[-, line width=1pt]  (x8) -- (x9);
    \draw[-, line width=1pt]  (x9) -- (x10);
    \draw[-, line width=1pt]  (x10) -- (x1);

    \node[draw, circle, line width=1pt, fill=white](y1) at (1.90,0.61)  []{};
    \node[draw, circle, line width=1pt, fill=white](y2) at (1.23,1.64)  []{};
    \node[draw, circle, line width=1pt, fill=white](y3) at (0.00,2.00)  []{};
    \node[draw, circle, line width=1pt, fill=white](y4) at (-1.23,1.64)  [label=right: $w_i^6$]{};
    \node[draw, circle, line width=1pt, fill=white](y7) at (-1.23,-1.64)  []{};
    \node[draw, circle, line width=1pt, fill=white](y8) at (0.00,-2.00)  [label=right: $w_i^q$]{};
    \node[draw, circle, line width=1pt, fill=white](y9) at (1.23,-1.64)  [label=above: $w_i^1$]{};
    \node[draw, circle, line width=1pt, fill=white](y10) at (1.90,-0.61)  []{};

    \node[draw, circle, line width=1pt, fill=white](z1) at (1.19,0.38)  []{};
    \node[draw, circle, line width=1pt, fill=white](z2) at (0.77,1.02)  []{};
    \node[draw, circle, line width=1pt, fill=white](z3) at (0.00,1.25)  []{};
    \node[draw, circle, line width=1pt, fill=white](z4) at (-0.77,1.02)  [label=below: $z_i^6$]{};
    \node[draw, circle, line width=1pt, fill=white](z7) at (-0.77,-1.02)  []{};
    \node[draw, circle, line width=1pt, fill=white](z8) at (0.00,-1.25)  [label=above: $z_i^q$]{};
    \node[draw, circle, line width=1pt, fill=white](z9) at (0.77,-1.02)  [label=above: $z_i^1$]{};
    \node[draw, circle, line width=1pt, fill=white](z10) at (1.19,-0.38)  []{};

    \draw[-, line width=1pt]  (x1) -- (y1);
    \draw[-, line width=1pt]  (x2) -- (y1);
    \draw[-, line width=1pt]  (y1) -- (z1);
    
    \draw[-, line width=1pt]  (x2) -- (y2);
    \draw[-, line width=1pt]  (x3) -- (y2);
    \draw[-, line width=1pt]  (y2) -- (z2);
    
    \draw[-, line width=1pt]  (x3) -- (y3);
    \draw[-, line width=1pt]  (x4) -- (y3);
    \draw[-, line width=1pt]  (y3) -- (z3);

    \draw[-, line width=1pt]  (x4) -- (y4);
    \draw[-, line width=1pt]  (x5) -- (y4);
    \draw[-, line width=1pt]  (y4) -- (z4);

    \draw[-, line width=1pt]  (x7) -- (y7);
    \draw[-, line width=1pt]  (x8) -- (y7);
    \draw[-, line width=1pt]  (y7) -- (z7);

    \draw[-, line width=1pt]  (x8) -- (y8);
    \draw[-, line width=1pt]  (x9) -- (y8);
    \draw[-, line width=1pt]  (y8) -- (z8);

    \draw[-, line width=1pt]  (x9) -- (y9);
    \draw[-, line width=1pt]  (x10) -- (y9);
    \draw[-, line width=1pt]  (y9) -- (z9);

    \draw[-, line width=1pt]  (x10) -- (y10);
    \draw[-, line width=1pt]  (x1) -- (y10);
    \draw[-, line width=1pt]  (y10) -- (z10);

    \path (x5) -- (-2.75,0.00) node [black, midway, sloped] {\Large$\dots$};
    \path (-2.75,0.00) -- (-2.23,-1.62) node [black, midway, sloped] {\Large$\dots$};
    
\end{tikzpicture}
}
}\hspace{10pt}
\subfloat[The $x_i$-gadget.]{
\scalebox{0.7}{
\begin{tikzpicture}[inner sep=0.6mm]

    \node[draw, circle, line width=0.5pt, fill=blue](x1) at (2.75,0.00)  [label=right: $x_i^2$]{};
    \node[draw, circle, line width=0.5pt, fill=red](x2) at (2.23,1.62)  [label=right: $x_i^3$]{};
    \node[draw, circle, line width=0.5pt, fill=blue](x3) at (0.85,2.62)  [label=above right: $x_i^4$]{};
    \node[draw, circle, line width=0.5pt, fill=red](x4) at (-0.85,2.62)  [label=above left: $x_i^5$]{};
    \node[draw, circle, line width=0.5pt, fill=blue](x5) at (-2.23,1.62)  [label=left: $x_i^6$]{};
    \node[draw, circle, line width=0.5pt, fill=blue](x7) at (-2.23,-1.62)  [label=left: $x_i^p$]{};
    \node[draw, circle, line width=2pt, fill=red](x8) at (-0.85,-2.62)  [label=left: $u_i^1$]{};
    \node[draw, circle, line width=2pt, fill=blue](x9) at (0.85,-2.62)  [label=right: $u_i^4$]{};
    \node[draw, circle, line width=0.5pt, fill=red](x10) at (2.23,-1.62)  [label=right: $x_i^1$]{};

    \draw[-, line width=0.5pt]  (x1) -- (x2);
    \draw[-, line width=0.5pt]  (x2) -- (x3);
    \draw[-, line width=0.5pt]  (x3) -- (x4);
    \draw[-, line width=0.5pt]  (x4) -- (x5);
    \draw[-, line width=2pt]  (x7) -- (x8);
    \draw[-, line width=2pt]  (x8) -- (x9);
    \draw[-, line width=2pt]  (x9) -- (x10);
    \draw[-, line width=0.5pt]  (x10) -- (x1);

    \node[draw, circle, line width=0.5pt, fill=white](y1) at (1.90,0.61)  []{};
    \node[draw, circle, line width=0.5pt, fill=white](y2) at (1.23,1.64)  []{};
    \node[draw, circle, line width=0.5pt, fill=white](y3) at (0.00,2.00)  []{};
    \node[draw, circle, line width=0.5pt, fill=white](y4) at (-1.23,1.64)  [label=right: $w_i^5$]{};
    \node[draw, circle, line width=0.5pt, fill=white](y7) at (-1.23,-1.64)  []{};
    \node[draw, circle, line width=2pt, fill=white](y8) at (0.00,-2.00)  [label=right: $w_i$]{};
    \node[draw, circle, line width=2pt, fill=white](y9) at (1.23,-1.64)  [label=above: $w_i^0$]{};
    \node[draw, circle, line width=0.5pt, fill=white](y10) at (1.90,-0.61)  []{};

    \node[draw, circle, line width=0.5pt, fill=white](z1) at (1.19,0.38)  []{};
    \node[draw, circle, line width=0.5pt, fill=white](z2) at (0.77,1.02)  []{};
    \node[draw, circle, line width=0.5pt, fill=white](z3) at (0.00,1.25)  []{};
    \node[draw, circle, line width=0.5pt, fill=white](z4) at (-0.77,1.02)  [label=below: $z_i^5$]{};
    \node[draw, circle, line width=0.5pt, fill=white](z7) at (-0.77,-1.02)  []{};
    \node[draw, circle, line width=2pt, fill=white](z8) at (0.00,-1.25)  [label=above: $z_i$]{};
    \node[draw, circle, line width=2pt, fill=white](z9) at (0.77,-1.02)  [label=above: $z_i^0$]{};
    \node[draw, circle, line width=0.5pt, fill=white](z10) at (1.19,-0.38)  []{};

    \draw[-, line width=0.5pt]  (x1) -- (y1);
    \draw[-, line width=0.5pt]  (x2) -- (y1);
    \draw[-, line width=0.5pt]  (y1) -- (z1);
    
    \draw[-, line width=0.5pt]  (x2) -- (y2);
    \draw[-, line width=0.5pt]  (x3) -- (y2);
    \draw[-, line width=0.5pt]  (y2) -- (z2);
    
    \draw[-, line width=0.5pt]  (x3) -- (y3);
    \draw[-, line width=0.5pt]  (x4) -- (y3);
    \draw[-, line width=0.5pt]  (y3) -- (z3);

    \draw[-, line width=0.5pt]  (x4) -- (y4);
    \draw[-, line width=0.5pt]  (x5) -- (y4);
    \draw[-, line width=0.5pt]  (y4) -- (z4);

    \draw[-, line width=0.5pt]  (x7) -- (y7);
    \draw[-, line width=2pt]  (x8) -- (y7);
    \draw[-, line width=0.5pt]  (y7) -- (z7);

    \draw[-, line width=2pt]  (x8) -- (y8);
    \draw[-, line width=2pt]  (x9) -- (y8);
    \draw[-, line width=2pt]  (y8) -- (z8);

    \draw[-, line width=2pt]  (x9) -- (y9);
    \draw[-, line width=2pt]  (x10) -- (y9);
    \draw[-, line width=2pt]  (y9) -- (z9);

    \draw[-, line width=0.5pt]  (x10) -- (y10);
    \draw[-, line width=0.5pt]  (x1) -- (y10);
    \draw[-, line width=0.5pt]  (y10) -- (z10);

    \path (x5) -- (-2.75,0.00) node [black, midway, sloped] {\Large$\dots$};
    \path (-2.75,0.00) -- (-2.23,-1.62) node [black, midway, sloped] {\Large$\dots$};

    \node[draw, circle, line width=2pt, fill=white](u2) at (-0.85,-3.62)  [label=left: $u_i^2$]{};
    \node[draw, circle, line width=2pt, fill=white](u3) at (-0.85,-4.62)  [label=left: $u_i^3$]{};
    \node[draw, circle, line width=2pt, fill=white](u5) at (0.85,-3.62)  [label=right: $u_i^5$]{};
    \node[draw, circle, line width=2pt, fill=white](u6) at (0.85,-4.62)  [label=right: $u_i^6$]{};

    \draw[-, line width=2pt]  (x8) -- (u2);
    \draw[-, line width=2pt]  (u2) -- (u3);
    \draw[-, line width=2pt]  (x9) -- (u5);
    \draw[-, line width=2pt]  (u5) -- (u6);
    
\end{tikzpicture}
}
}
\caption{The gadgets used in the proof of \Cref{thm:sigma_2P}.}\label{fig:hardness-gadgets}
\end{figure}
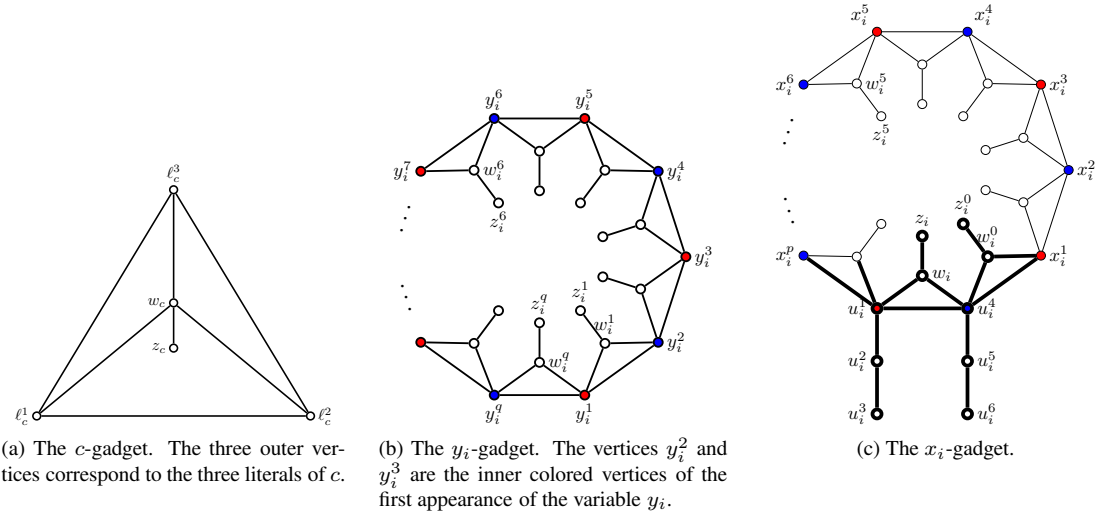

To construct the graph $G$, we first need to build an auxiliary graph $H$ as follows. First, we define a \textit{variable} (\textit{clause} resp.) vertex for each variable (clause resp.) in $\phi$. Then we add an edge between a variable and a clause vertex if the corresponding variable appears in the corresponding clause. Let $H$ be the resulting graph. Observe that $H$ is a planar graph (due to the ``planarity'' of $\phi$). We now start modifying $H$ to arrive to $G$. First, we replace each clause vertex $c$ of $H$ by a copy of the $c$-gadget illustrated in Figure~\ref{fig:hardness-gadgets}(a). Then, we replace each variable vertex by either a $y$-gadget or a $x$-gadget, illustrated in Figures~\ref{fig:hardness-gadgets}(b) and~\ref{fig:hardness-gadgets}(c) respectively, according to the type of the corresponding variable in $\phi$. Observe that the $x$- and $y$-gadgets are rather similar. Consider the $y_i$-gadget, i.e., the gadget that corresponds to the variable $y_i$ that appears in $\phi$. This gadget will have four colored vertices (see Figure~\ref{fig:hardness-gadgets}(b)) for each appearance of $y_i$ in $\phi$; two vertices colored red and two colored blue. That is, the index~$q$ that appears in Figure~\ref{fig:hardness-gadgets}(b) is equal to four times the number of appearances of~$y_i$ in $\phi$. For example, if $y_i$ appears in three clauses in a positive literal and in two clauses in a negative literal, then the $y_i$-gadget will have twenty colored vertices. Moreover, going in an anti-clockwise fashion, the colors of the vertices that correspond to each appearance of the $y_i$ variable will alternate, starting with a red and finishing with a blue; 
for the $j$-th appearance of variable $y_i$, we say that vertices $y^{4(j-1)+2}_i$ and $y^{4(j-1)+3}_i$ denote its \emph{inner colored vertices} 
(see Figure~\ref{fig:hardness-gadgets}(b) for an example). Intuitively, the inner blue (red resp.) vertex included in the gadget for an appearance of the variable $y_i$ will be used to model that this variable is set to false (true resp.), while the other inner colored vertices will serve as points of additional connection between the gadgets. 
The same holds true for the $x_i$-gadget, i.e., the gadget that corresponds to the variable $x_i$ that appears in $\phi$. The only difference is that, in addition, the $x_i$-gadget contains an extra set of colored vertices together with two pending paths, illustrated by the bold vertices and edges in Figure~\ref{fig:hardness-gadgets}(c). 

At this stage, all the original edges of $H$ have been removed and we will now add the new edges between the gadgets. This edge-adding procedure happens in two steps. First we deal with the edges connecting the $c$-gadgets to the $x$- and/or $y$-gadgets.  
Consider a clause $c$ and its corresponding $c$-gadget and assume that, in the initial graph $H$ there was an edge between the clause vertex $c$ and
the variable vertex $x_i$, which corresponds to a variable of type $x$.
Moreover, let this be the $j$-th appearance of the variable $x_i$ in $\phi$ (according to a carefully chosen ordering of the variables).
Then, going anti-clockwise, we locate the $j$-th quadruple $Q$ of colored vertices of the $x$-gadget, that is, $Q = \{ x^{4(j-1)+z}_i \mid z \in [4]\}$.
We then add an edge between any vertex among $\{\ell^1_c, \ell^2_c, \ell^3_c\}$ of the $c$-gadget that is currently of degree three and
the blue (red resp.) inner vertex of $Q$ if $x_i$ appears as a positive (negative resp.) literal in $c$.
We repeat the same procedure for all the edges of~$H$ that are between the clause vertex $c$ and any variable vertex of type $y$. Once we are done with the clause vertex $c$, we move on and repeat this procedure for every clause vertex of $H$. This completes the first step of adding the edges of $G$. 

In the second step, we connect the $x$- and/or $y$-gadgets whose corresponding variables appear in a common clause. To ease the exhibition, and since we treat these gadgets in the same way, we will assume we only have to deal with $x$-gadgets. So, consider a clause gadget $c$, with the corresponding clause being comprised of three literals on the variables $x_{i_1}$, $x_{i_2}$, and $x_{i_3}$ (for some $i_1,i_2,i_3 \in [n_1])$.
According to the construction of $G$ up to this point, there are 
\begin{itemize}
    \item a $c$-gadget, containing the vertices $\ell_c^1$, $\ell_c^2$, and $\ell_c^3$, and
    \item the $x_{i_1}$, $x_{i_2}$, and $x_{i_3}$-gadgets, containing some inner colored vertices $x_{i_1}^\alpha$, $x_{i_2}^\beta$, and $x_{i_3}^\gamma$ respectively
    such that $G$ contains the edges $\ell_c^1 x_{i_1}^\alpha, \ell_c^2 x_{i_2}^\beta$, and $\ell_c^3 x_{i_3}^\gamma$.
\end{itemize}
Note that since $x_{i_1}^\alpha$, $x_{i_2}^\beta$, and $x_{i_3}^\gamma$ are inner colored vertices, and according to the first step of the edge-adding procedure, the two colored neighbors of these vertices that lie in the $x_{i_1}$, $x_{i_2}$, and $x_{i_3}$-gadgets respectively are currently of degree $4$. The second step of the edge-adding procedure consists in adding the edges $x_{i_3}^{\gamma-1} x_{i_1}^{\alpha+1}$, $x_{i_1}^{\alpha-1} x_{i_2}^{\beta+1}$, and $x_{i_2}^{\beta-1} x_{i_3}^{\gamma+1}$. This step is repeated for every clause gadget $c$.  

The termination of the edge-adding procedure marks the end of the construction of $G$.
Observe that by carefully choosing the ordering used in the first step and bending the edges added in the second step of the edge-adding procedure, and due to the planarity of $H$, we can also ensure the planarity of~$G$.
On a high level, the ordering is chosen based on the planar embedding of graph $H$, while the edges added in the second step between variable gadgets
can be stretched to follow along the path dictated by the edge of each variable gadget towards their common clause gadget.
\Cref{fig:hardness-example} illustrates an example of the above construction. 

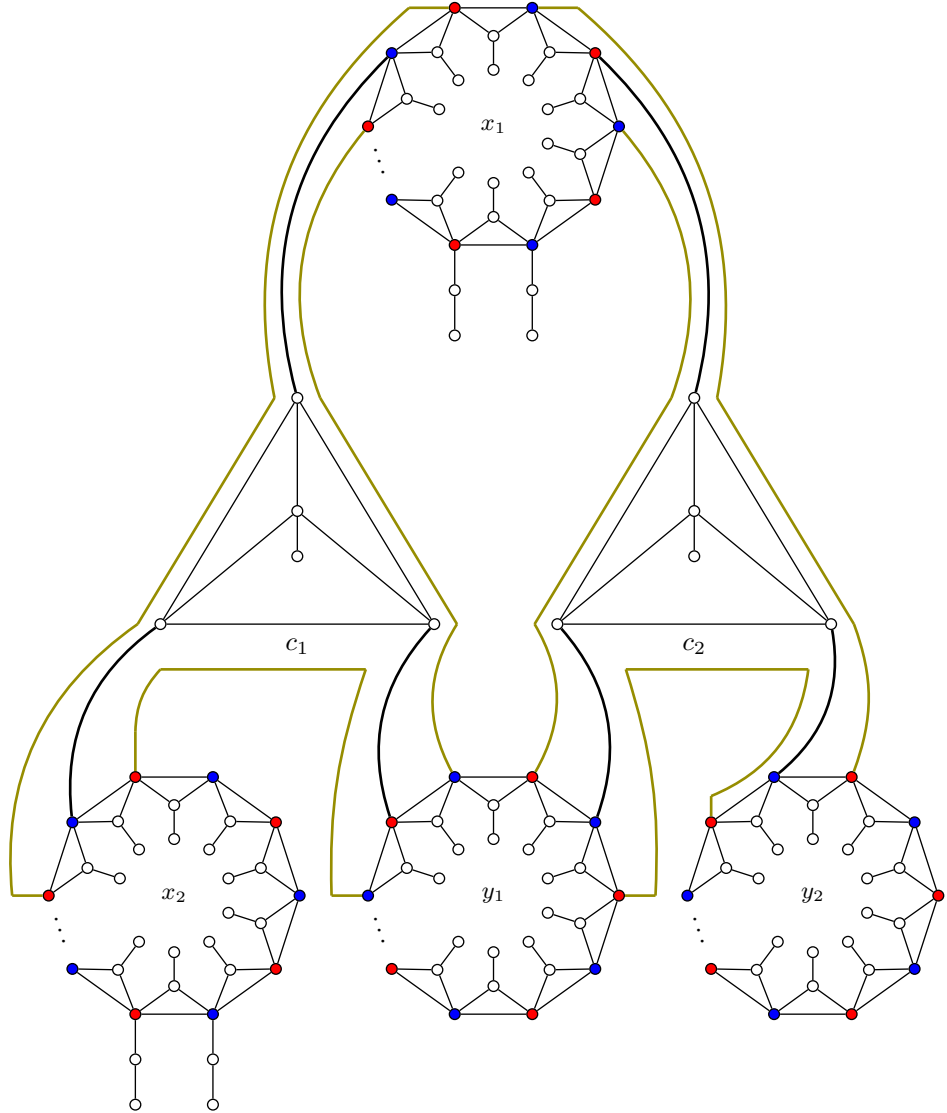
\begin{figure}[t!]

\centering
\begin{tikzpicture}[scale=0.6, inner sep=0.5mm]

\begin{scope}
    \node[draw, circle, line width=0.5pt, fill=blue](x21) at (2.75,0.00)  []{};
    \node[draw, circle, line width=0.5pt, fill=red](x22) at (2.23,1.62)  []{};
    \node[draw, circle, line width=0.5pt, fill=blue](x23) at (0.85,2.62)  []{};
    \node[draw, circle, line width=0.5pt, fill=red](x24) at (-0.85,2.62)  []{};
    \node[draw, circle, line width=0.5pt, fill=blue](x25) at (-2.23,1.62)  []{};
    \node[draw, circle, line width=0.5pt, fill=red](x26) at (-2.75,0.00)  []{};
    \node[draw, circle, line width=0.5pt, fill=blue](x27) at (-2.23,-1.62)  []{};
    \node[draw, circle, line width=0.5pt, fill=red](x28) at (-0.85,-2.62)  []{};
    \node[draw, circle, line width=0.5pt, fill=blue](x29) at (0.85,-2.62)  []{};
    \node[draw, circle, line width=0.5pt, fill=red](x210) at (2.23,-1.62)  []{};

    \coordinate (contx21) at (-3.55,0.00);
    \coordinate (contx22) at (-0.85,3.62);

    \draw[-, line width=0.5pt]  (x21) -- (x22);
    \draw[-, line width=0.5pt]  (x22) -- (x23);
    \draw[-, line width=0.5pt]  (x23) -- (x24);
    \draw[-, line width=0.5pt]  (x24) -- (x25);
    \draw[-, line width=0.5pt]  (x25) -- (x26);
    \draw[-, line width=0.5pt]  (x27) -- (x28);
    \draw[-, line width=0.5pt]  (x28) -- (x29);
    \draw[-, line width=0.5pt]  (x29) -- (x210);
    \draw[-, line width=0.5pt]  (x210) -- (x21);

    \node[draw, circle, line width=0.5pt, fill=white](y21) at (1.90,0.61)  []{};
    \node[draw, circle, line width=0.5pt, fill=white](y22) at (1.23,1.64)  []{};
    \node[draw, circle, line width=0.5pt, fill=white](y23) at (0.00,2.00)  []{};
    \node[draw, circle, line width=0.5pt, fill=white](y24) at (-1.23,1.64)  []{};
    \node[draw, circle, line width=0.5pt, fill=white](y25) at (-1.90,0.61)  []{};
    \node[draw, circle, line width=0.5pt, fill=white](y27) at (-1.23,-1.64)  []{};
    \node[draw, circle, line width=0.5pt, fill=white](y28) at (0.00,-2.00)  []{};
    \node[draw, circle, line width=0.5pt, fill=white](y29) at (1.23,-1.64)  []{};
    \node[draw, circle, line width=0.5pt, fill=white](y210) at (1.90,-0.61)  []{};

    \node[draw, circle, line width=0.5pt, fill=white](z21) at (1.19,0.38)  []{};
    \node[draw, circle, line width=0.5pt, fill=white](z22) at (0.77,1.02)  []{};
    \node[draw, circle, line width=0.5pt, fill=white](z23) at (0.00,1.25)  []{};
    \node[draw, circle, line width=0.5pt, fill=white](z24) at (-0.77,1.02)  []{};
    \node[draw, circle, line width=0.5pt, fill=white](z25) at (-1.19,0.38)  []{};
    \node[draw, circle, line width=0.5pt, fill=white](z27) at (-0.77,-1.02)  []{};
    \node[draw, circle, line width=0.5pt, fill=white](z28) at (0.00,-1.25)  []{};
    \node[draw, circle, line width=0.5pt, fill=white](z29) at (0.77,-1.02)  []{};
    \node[draw, circle, line width=0.5pt, fill=white](z210) at (1.19,-0.38)  []{};

    \node[] () at (0,0) []{\small$x_2$};

    \draw[-, line width=0.5pt]  (x21) -- (y21);
    \draw[-, line width=0.5pt]  (x22) -- (y21);
    \draw[-, line width=0.5pt]  (y21) -- (z21);
    
    \draw[-, line width=0.5pt]  (x22) -- (y22);
    \draw[-, line width=0.5pt]  (x23) -- (y22);
    \draw[-, line width=0.5pt]  (y22) -- (z22);
    
    \draw[-, line width=0.5pt]  (x23) -- (y23);
    \draw[-, line width=0.5pt]  (x24) -- (y23);
    \draw[-, line width=0.5pt]  (y23) -- (z23);

    \draw[-, line width=0.5pt]  (x24) -- (y24);
    \draw[-, line width=0.5pt]  (x25) -- (y24);
    \draw[-, line width=0.5pt]  (y24) -- (z24);

    \draw[-, line width=0.5pt]  (x25) -- (y25);
    \draw[-, line width=0.5pt]  (x26) -- (y25);
    \draw[-, line width=0.5pt]  (y25) -- (z25);

    \draw[-, line width=0.5pt]  (x27) -- (y27);
    \draw[-, line width=0.5pt]  (x28) -- (y27);
    \draw[-, line width=0.5pt]  (y27) -- (z27);

    \draw[-, line width=0.5pt]  (x28) -- (y28);
    \draw[-, line width=0.5pt]  (x29) -- (y28);
    \draw[-, line width=0.5pt]  (y28) -- (z28);

    \draw[-, line width=0.5pt]  (x29) -- (y29);
    \draw[-, line width=0.5pt]  (x210) -- (y29);
    \draw[-, line width=0.5pt]  (y29) -- (z29);

    \draw[-, line width=0.5pt]  (x210) -- (y210);
    \draw[-, line width=0.5pt]  (x21) -- (y210);
    \draw[-, line width=0.5pt]  (y210) -- (z210);

    \path (-2.75,0.00) -- (-2.23,-1.62) node [black, midway, sloped] {$\dots$};

    \node[draw, circle, line width=0.5pt, fill=white](u22) at (-0.85,-3.62)  []{};
    \node[draw, circle, line width=0.5pt, fill=white](u23) at (-0.85,-4.62)  []{};
    \node[draw, circle, line width=0.5pt, fill=white](u25) at (0.85,-3.62)  []{};
    \node[draw, circle, line width=0.5pt, fill=white](u26) at (0.85,-4.62)  []{};

    \draw[-, line width=0.5pt]  (x28) -- (u22);
    \draw[-, line width=0.5pt]  (u22) -- (u23);
    \draw[-, line width=0.5pt]  (x29) -- (u25);
    \draw[-, line width=0.5pt]  (u25) -- (u26);
\end{scope}

\begin{scope}[xshift=7cm]
    \node[draw, circle, line width=0.5pt, fill=red](x11) at (2.75,0.00)  []{};
    \node[draw, circle, line width=0.5pt, fill=blue](x12) at (2.23,1.62)  []{};
    \node[draw, circle, line width=0.5pt, fill=red](x13) at (0.85,2.62)  []{};
    \node[draw, circle, line width=0.5pt, fill=blue](x14) at (-0.85,2.62)  []{};
    \node[draw, circle, line width=0.5pt, fill=red](x15) at (-2.23,1.62)  []{};
    \node[draw, circle, line width=0.5pt, fill=blue](x16) at (-2.75,0.00)  []{};
    \node[draw, circle, line width=0.5pt, fill=red](x17) at (-2.23,-1.62)  []{};
    \node[draw, circle, line width=0.5pt, fill=blue](x18) at (-0.85,-2.62)  []{};
    \node[draw, circle, line width=0.5pt, fill=red](x19) at (0.85,-2.62)  []{};
    \node[draw, circle, line width=0.5pt, fill=blue](x110) at (2.23,-1.62)  []{};

    \draw[-, line width=0.5pt]  (x11) -- (x12);
    \draw[-, line width=0.5pt]  (x12) -- (x13);
    \draw[-, line width=0.5pt]  (x13) -- (x14);
    \draw[-, line width=0.5pt]  (x14) -- (x15);
    \draw[-, line width=0.5pt]  (x15) -- (x16);
    \draw[-, line width=0.5pt]  (x17) -- (x18);
    \draw[-, line width=0.5pt]  (x18) -- (x19);
    \draw[-, line width=0.5pt]  (x19) -- (x110);
    \draw[-, line width=0.5pt]  (x110) -- (x11);

    \node[draw, circle, line width=0.5pt, fill=white](y11) at (1.90,0.61)  []{};
    \node[draw, circle, line width=0.5pt, fill=white](y12) at (1.23,1.64)  []{};
    \node[draw, circle, line width=0.5pt, fill=white](y13) at (0.00,2.00)  []{};
    \node[draw, circle, line width=0.5pt, fill=white](y14) at (-1.23,1.64)  []{};
    \node[draw, circle, line width=0.5pt, fill=white](y15) at (-1.90,0.61)  []{};
    \node[draw, circle, line width=0.5pt, fill=white](y17) at (-1.23,-1.64)  []{};
    \node[draw, circle, line width=0.5pt, fill=white](y18) at (0.00,-2.00)  []{};
    \node[draw, circle, line width=0.5pt, fill=white](y19) at (1.23,-1.64)  []{};
    \node[draw, circle, line width=0.5pt, fill=white](y110) at (1.90,-0.61)  []{};

    \node[draw, circle, line width=0.5pt, fill=white](z11) at (1.19,0.38)  []{};
    \node[draw, circle, line width=0.5pt, fill=white](z12) at (0.77,1.02)  []{};
    \node[draw, circle, line width=0.5pt, fill=white](z13) at (0.00,1.25)  []{};
    \node[draw, circle, line width=0.5pt, fill=white](z14) at (-0.77,1.02)  []{};
    \node[draw, circle, line width=0.5pt, fill=white](z15) at (-1.19,0.38)  []{};
    \node[draw, circle, line width=0.5pt, fill=white](z17) at (-0.77,-1.02)  []{};
    \node[draw, circle, line width=0.5pt, fill=white](z18) at (0.00,-1.25)  []{};
    \node[draw, circle, line width=0.5pt, fill=white](z19) at (0.77,-1.02)  []{};
    \node[draw, circle, line width=0.5pt, fill=white](z110) at (1.19,-0.38)  []{};

    \node[] () at (0,0) []{\small $y_1$};

    \coordinate (conty11) at (-3.55,0.00);
    \coordinate (conty12) at (3.55,0.00);

    \draw[-, line width=0.5pt]  (x11) -- (y11);
    \draw[-, line width=0.5pt]  (x12) -- (y11);
    \draw[-, line width=0.5pt]  (y11) -- (z11);
    
    \draw[-, line width=0.5pt]  (x12) -- (y12);
    \draw[-, line width=0.5pt]  (x13) -- (y12);
    \draw[-, line width=0.5pt]  (y12) -- (z12);
    
    \draw[-, line width=0.5pt]  (x13) -- (y13);
    \draw[-, line width=0.5pt]  (x14) -- (y13);
    \draw[-, line width=0.5pt]  (y13) -- (z13);

    \draw[-, line width=0.5pt]  (x14) -- (y14);
    \draw[-, line width=0.5pt]  (x15) -- (y14);
    \draw[-, line width=0.5pt]  (y14) -- (z14);

    \draw[-, line width=0.5pt]  (x15) -- (y15);
    \draw[-, line width=0.5pt]  (x16) -- (y15);
    \draw[-, line width=0.5pt]  (y15) -- (z15);

    \draw[-, line width=0.5pt]  (x17) -- (y17);
    \draw[-, line width=0.5pt]  (x18) -- (y17);
    \draw[-, line width=0.5pt]  (y17) -- (z17);

    \draw[-, line width=0.5pt]  (x18) -- (y18);
    \draw[-, line width=0.5pt]  (x19) -- (y18);
    \draw[-, line width=0.5pt]  (y18) -- (z18);

    \draw[-, line width=0.5pt]  (x19) -- (y19);
    \draw[-, line width=0.5pt]  (x110) -- (y19);
    \draw[-, line width=0.5pt]  (y19) -- (z19);

    \draw[-, line width=0.5pt]  (x110) -- (y110);
    \draw[-, line width=0.5pt]  (x11) -- (y110);
    \draw[-, line width=0.5pt]  (y110) -- (z110);

    \path (-2.75,0.00) -- (-2.23,-1.62) node [black, midway, sloped] {$\dots$};
\end{scope}

\begin{scope}[xshift=14cm]
    \node[draw, circle, line width=0.5pt, fill=red](x31) at (2.75,0.00)  []{};
    \node[draw, circle, line width=0.5pt, fill=blue](x32) at (2.23,1.62)  []{};
    \node[draw, circle, line width=0.5pt, fill=red](x33) at (0.85,2.62)  []{};
    \node[draw, circle, line width=0.5pt, fill=blue](x34) at (-0.85,2.62)  []{};
    \node[draw, circle, line width=0.5pt, fill=red](x35) at (-2.23,1.62)  []{};
    \node[draw, circle, line width=0.5pt, fill=blue](x36) at (-2.75,0.00)  []{};
    \node[draw, circle, line width=0.5pt, fill=red](x37) at (-2.23,-1.62)  []{};
    \node[draw, circle, line width=0.5pt, fill=blue](x38) at (-0.85,-2.62)  []{};
    \node[draw, circle, line width=0.5pt, fill=red](x39) at (0.85,-2.62)  []{};
    \node[draw, circle, line width=0.5pt, fill=blue](x310) at (2.23,-1.62)  []{};

    \draw[-, line width=0.5pt]  (x31) -- (x32);
    \draw[-, line width=0.5pt]  (x32) -- (x33);
    \draw[-, line width=0.5pt]  (x33) -- (x34);
    \draw[-, line width=0.5pt]  (x34) -- (x35);
    \draw[-, line width=0.5pt]  (x35) -- (x36);
    \draw[-, line width=0.5pt]  (x37) -- (x38);
    \draw[-, line width=0.5pt]  (x38) -- (x39);
    \draw[-, line width=0.5pt]  (x39) -- (x310);
    \draw[-, line width=0.5pt]  (x310) -- (x31);

    \node[draw, circle, line width=0.5pt, fill=white](y31) at (1.90,0.61)  []{};
    \node[draw, circle, line width=0.5pt, fill=white](y32) at (1.23,1.64)  []{};
    \node[draw, circle, line width=0.5pt, fill=white](y33) at (0.00,2.00)  []{};
    \node[draw, circle, line width=0.5pt, fill=white](y34) at (-1.23,1.64)  []{};
    \node[draw, circle, line width=0.5pt, fill=white](y35) at (-1.90,0.61)  []{};
    \node[draw, circle, line width=0.5pt, fill=white](y37) at (-1.23,-1.64)  []{};
    \node[draw, circle, line width=0.5pt, fill=white](y38) at (0.00,-2.00)  []{};
    \node[draw, circle, line width=0.5pt, fill=white](y39) at (1.23,-1.64)  []{};
    \node[draw, circle, line width=0.5pt, fill=white](y310) at (1.90,-0.61)  []{};

    \node[draw, circle, line width=0.5pt, fill=white](z31) at (1.19,0.38)  []{};
    \node[draw, circle, line width=0.5pt, fill=white](z32) at (0.77,1.02)  []{};
    \node[draw, circle, line width=0.5pt, fill=white](z33) at (0.00,1.25)  []{};
    \node[draw, circle, line width=0.5pt, fill=white](z34) at (-0.77,1.02)  []{};
    \node[draw, circle, line width=0.5pt, fill=white](z35) at (-1.19,0.38)  []{};
    \node[draw, circle, line width=0.5pt, fill=white](z37) at (-0.77,-1.02)  []{};
    \node[draw, circle, line width=0.5pt, fill=white](z38) at (0.00,-1.25)  []{};
    \node[draw, circle, line width=0.5pt, fill=white](z39) at (0.77,-1.02)  []{};
    \node[draw, circle, line width=0.5pt, fill=white](z310) at (1.19,-0.38)  []{};

    \node[] () at (0,0) []{\small $y_2$};

    \coordinate (conty22) at (-2.23,2.2);

    \draw[-, line width=0.5pt]  (x31) -- (y31);
    \draw[-, line width=0.5pt]  (x32) -- (y31);
    \draw[-, line width=0.5pt]  (y31) -- (z31);
    
    \draw[-, line width=0.5pt]  (x32) -- (y32);
    \draw[-, line width=0.5pt]  (x33) -- (y32);
    \draw[-, line width=0.5pt]  (y32) -- (z32);
    
    \draw[-, line width=0.5pt]  (x33) -- (y33);
    \draw[-, line width=0.5pt]  (x34) -- (y33);
    \draw[-, line width=0.5pt]  (y33) -- (z33);

    \draw[-, line width=0.5pt]  (x34) -- (y34);
    \draw[-, line width=0.5pt]  (x35) -- (y34);
    \draw[-, line width=0.5pt]  (y34) -- (z34);

    \draw[-, line width=0.5pt]  (x35) -- (y35);
    \draw[-, line width=0.5pt]  (x36) -- (y35);
    \draw[-, line width=0.5pt]  (y35) -- (z35);

    \draw[-, line width=0.5pt]  (x37) -- (y37);
    \draw[-, line width=0.5pt]  (x38) -- (y37);
    \draw[-, line width=0.5pt]  (y37) -- (z37);

    \draw[-, line width=0.5pt]  (x38) -- (y38);
    \draw[-, line width=0.5pt]  (x39) -- (y38);
    \draw[-, line width=0.5pt]  (y38) -- (z38);

    \draw[-, line width=0.5pt]  (x39) -- (y39);
    \draw[-, line width=0.5pt]  (x310) -- (y39);
    \draw[-, line width=0.5pt]  (y39) -- (z39);

    \draw[-, line width=0.5pt]  (x310) -- (y310);
    \draw[-, line width=0.5pt]  (x31) -- (y310);
    \draw[-, line width=0.5pt]  (y310) -- (z310);

    \path (-2.75,0.00) -- (-2.23,-1.62) node [black, midway, sloped] {$\dots$};
\end{scope}

\begin{scope}[xshift=-0.3cm,yshift=6cm]
    \node[draw, circle, line width=0.5pt, fill=white](x41) at (0,0)  []{};
    \node[draw, circle, line width=0.5pt, fill=white](x42) at (6,0)  []{};
    \node[draw, circle, line width=0.5pt, fill=white](x43) at (3,5)  []{};
    \node[draw, circle, line width=0.5pt, fill=white](x44) at (3,2.5)  []{};
    \node[draw, circle, line width=0.5pt, fill=white](x45) at (3,1.5)  []{};

	\draw[-, line width=0.5pt]  (x41) -- (x42);
    \draw[-, line width=0.5pt]  (x42) -- (x43);
    \draw[-, line width=0.5pt]  (x43) -- (x41);
    \draw[-, line width=0.5pt]  (x41) -- (x44);
    \draw[-, line width=0.5pt]  (x42) -- (x44);
    \draw[-, line width=0.5pt]  (x43) -- (x44);
    \draw[-, line width=0.5pt]  (x44) -- (x45);

    \coordinate (contc11) at (-0.5,0);
    \coordinate (contc13) at (0,-1);

    \coordinate (contc12) at (2.5,5);
    \coordinate (contc14) at (4.5,-1);

    \coordinate (contc16) at (6.5,0);
    \coordinate (contc15) at (3.5,5);

    \node[] () at (3,-0.5) []{$c_1$};
\end{scope}

\begin{scope}[xshift=8.4cm,yshift=6cm]
    \node[draw, circle, line width=0.5pt, fill=white](x51) at (0,0)  []{};
    \node[draw, circle, line width=0.5pt, fill=white](x52) at (6,0)  []{};
    \node[draw, circle, line width=0.5pt, fill=white](x53) at (3,5)  []{};
    \node[draw, circle, line width=0.5pt, fill=white](x54) at (3,2.5)  []{};
    \node[draw, circle, line width=0.5pt, fill=white](x55) at (3,1.5)  []{};

	\draw[-, line width=0.5pt]  (x51) -- (x52);
    \draw[-, line width=0.5pt]  (x52) -- (x53);
    \draw[-, line width=0.5pt]  (x53) -- (x51);
    \draw[-, line width=0.5pt]  (x51) -- (x54);
    \draw[-, line width=0.5pt]  (x52) -- (x54);
    \draw[-, line width=0.5pt]  (x53) -- (x54);
    \draw[-, line width=0.5pt]  (x54) -- (x55);

    \node[] () at (3,-0.5) []{$c_2$};

    \coordinate (contc21) at (6.5,0);
    \coordinate (contc22) at (3.5,5);
    \coordinate (contc23) at (5.5,-1);
    \coordinate (contc24) at (1.5,-1);

    \coordinate (contc26) at (-0.5,0);
    \coordinate (contc25) at (2.5,5);
    
\end{scope}

\begin{scope}[xshift=7cm,yshift=17cm]
    \node[draw, circle, line width=0.5pt, fill=blue](x61) at (2.75,0.00)  []{};
    \node[draw, circle, line width=0.5pt, fill=red](x62) at (2.23,1.62)  []{};
    \node[draw, circle, line width=0.5pt, fill=blue](x63) at (0.85,2.62)  []{};
    \node[draw, circle, line width=0.5pt, fill=red](x64) at (-0.85,2.62)  []{};
    \node[draw, circle, line width=0.5pt, fill=blue](x65) at (-2.23,1.62)  []{};
    \node[draw, circle, line width=0.5pt, fill=red](x66) at (-2.75,0.00)  []{};
    \node[draw, circle, line width=0.5pt, fill=blue](x67) at (-2.23,-1.62)  []{};
    \node[draw, circle, line width=0.5pt, fill=red](x68) at (-0.85,-2.62)  []{};
    \node[draw, circle, line width=0.5pt, fill=blue](x69) at (0.85,-2.62)  []{};
    \node[draw, circle, line width=0.5pt, fill=red](x610) at (2.23,-1.62)  []{};

    \draw[-, line width=0.5pt]  (x61) -- (x62);
    \draw[-, line width=0.5pt]  (x62) -- (x63);
    \draw[-, line width=0.5pt]  (x63) -- (x64);
    \draw[-, line width=0.5pt]  (x64) -- (x65);
    \draw[-, line width=0.5pt]  (x65) -- (x66);
    \draw[-, line width=0.5pt]  (x67) -- (x68);
    \draw[-, line width=0.5pt]  (x68) -- (x69);
    \draw[-, line width=0.5pt]  (x69) -- (x610);
    \draw[-, line width=0.5pt]  (x610) -- (x61);

    \node[draw, circle, line width=0.5pt, fill=white](y61) at (1.90,0.61)  []{};
    \node[draw, circle, line width=0.5pt, fill=white](y62) at (1.23,1.64)  []{};
    \node[draw, circle, line width=0.5pt, fill=white](y63) at (0.00,2.00)  []{};
    \node[draw, circle, line width=0.5pt, fill=white](y64) at (-1.23,1.64)  []{};
    \node[draw, circle, line width=0.5pt, fill=white](y65) at (-1.90,0.61)  []{};
    \node[draw, circle, line width=0.5pt, fill=white](y67) at (-1.23,-1.64)  []{};
    \node[draw, circle, line width=0.5pt, fill=white](y68) at (0.00,-2.00)  []{};
    \node[draw, circle, line width=0.5pt, fill=white](y69) at (1.23,-1.64)  []{};
    \node[draw, circle, line width=0.5pt, fill=white](y610) at (1.90,-0.61)  []{};

    \node[draw, circle, line width=0.5pt, fill=white](z61) at (1.19,0.38)  []{};
    \node[draw, circle, line width=0.5pt, fill=white](z62) at (0.77,1.02)  []{};
    \node[draw, circle, line width=0.5pt, fill=white](z63) at (0.00,1.25)  []{};
    \node[draw, circle, line width=0.5pt, fill=white](z64) at (-0.77,1.02)  []{};
    \node[draw, circle, line width=0.5pt, fill=white](z65) at (-1.19,0.38)  []{};
    \node[draw, circle, line width=0.5pt, fill=white](z67) at (-0.77,-1.02)  []{};
    \node[draw, circle, line width=0.5pt, fill=white](z68) at (0.00,-1.25)  []{};
    \node[draw, circle, line width=0.5pt, fill=white](z69) at (0.77,-1.02)  []{};
    \node[draw, circle, line width=0.5pt, fill=white](z610) at (1.19,-0.38)  []{};

    \node[] () at (0,0) []{\small $x_1$};

    \draw[-, line width=0.5pt]  (x61) -- (y61);
    \draw[-, line width=0.5pt]  (x62) -- (y61);
    \draw[-, line width=0.5pt]  (y61) -- (z61);
    
    \draw[-, line width=0.5pt]  (x62) -- (y62);
    \draw[-, line width=0.5pt]  (x63) -- (y62);
    \draw[-, line width=0.5pt]  (y62) -- (z62);
    
    \draw[-, line width=0.5pt]  (x63) -- (y63);
    \draw[-, line width=0.5pt]  (x64) -- (y63);
    \draw[-, line width=0.5pt]  (y63) -- (z63);

    \draw[-, line width=0.5pt]  (x64) -- (y64);
    \draw[-, line width=0.5pt]  (x65) -- (y64);
    \draw[-, line width=0.5pt]  (y64) -- (z64);
    
    \draw[-, line width=0.5pt]  (x65) -- (y65);
    \draw[-, line width=0.5pt]  (x66) -- (y65);
    \draw[-, line width=0.5pt]  (y65) -- (z65);

    \draw[-, line width=0.5pt]  (x67) -- (y67);
    \draw[-, line width=0.5pt]  (x68) -- (y67);
    \draw[-, line width=0.5pt]  (y67) -- (z67);

    \draw[-, line width=0.5pt]  (x68) -- (y68);
    \draw[-, line width=0.5pt]  (x69) -- (y68);
    \draw[-, line width=0.5pt]  (y68) -- (z68);

    \draw[-, line width=0.5pt]  (x69) -- (y69);
    \draw[-, line width=0.5pt]  (x610) -- (y69);
    \draw[-, line width=0.5pt]  (y69) -- (z69);

    \draw[-, line width=0.5pt]  (x610) -- (y610);
    \draw[-, line width=0.5pt]  (x61) -- (y610);
    \draw[-, line width=0.5pt]  (y610) -- (z610);

    \path (-2.75,0.00) -- (-2.23,-1.62) node [black, midway, sloped] {$\dots$};

    \node[draw, circle, line width=0.5pt, fill=white](u62) at (-0.85,-3.62)  []{};
    \node[draw, circle, line width=0.5pt, fill=white](u63) at (-0.85,-4.62)  []{};
    \node[draw, circle, line width=0.5pt, fill=white](u65) at (0.85,-3.62)  []{};
    \node[draw, circle, line width=0.5pt, fill=white](u66) at (0.85,-4.62)  []{};

    \draw[-, line width=0.5pt]  (x68) -- (u62);
    \draw[-, line width=0.5pt]  (u62) -- (u63);
    \draw[-, line width=0.5pt]  (x69) -- (u65);
    \draw[-, line width=0.5pt]  (u65) -- (u66);

    \coordinate (contx11) at (-1.85,2.62);
    \coordinate (contx12) at (1.85,2.62);
\end{scope}

\draw[-, line width=1pt]  (x41)  edge   [bend right=30] (x25);
\draw[-, line width=1pt]  (x42)  edge   [bend right=30] (x15);
\draw[-, line width=1pt]  (x51)  edge   [bend left=30] (x12);
\draw[-, line width=1pt]  (x52)  edge   [bend left=30] (x34);
\draw[-, line width=1pt]  (x43)  edge   [bend left=30] (x65);
\draw[-, line width=1pt]  (x53)  edge   [bend right=30] (x62);


\draw[-, line width=1pt, color=olive]  (x26)  edge   [] (contx21);
\draw[-, line width=1pt, color=olive]  (contx21)  edge   [bend left=30] (contc11);
\draw[-, line width=1pt, color=olive]  (contc11)  edge   (contc12);
\draw[-, line width=1pt, color=olive]  (contc12)  edge   [bend left=30] (contx11);
\draw[-, line width=1pt, color=olive]  (contx11)  edge   [] (x64);

\draw[-, line width=1pt, color=olive]  (x24)  edge   [] (contx22);
\draw[-, line width=1pt, color=olive]  (contx22)  edge   [bend left=20] (contc13);
\draw[-, line width=1pt, color=olive]  (contc13)  edge   (contc14);
\draw[-, line width=1pt, color=olive]  (contc14)  edge   [bend right=10] (conty11);
\draw[-, line width=1pt, color=olive]  (conty11)  edge   [] (x16);

\draw[-, line width=1pt, color=olive]  (x33)  edge   [bend right=20] (contc21);
\draw[-, line width=1pt, color=olive]  (contc21)  edge   (contc22);
\draw[-, line width=1pt, color=olive]  (contc22)  edge   [bend right=30] (contx12);
\draw[-, line width=1pt, color=olive]  (contx12)  edge   [] (x63);

\draw[-, line width=1pt, color=olive]  (x35)  edge   [] (conty22);
\draw[-, line width=1pt, color=olive]  (conty22)  edge   [bend right=30] (contc23);
\draw[-, line width=1pt, color=olive]  (contc23)  edge   (contc24);
\draw[-, line width=1pt, color=olive]  (contc24)  edge   [bend left=10] (conty12);
\draw[-, line width=1pt, color=olive]  (conty12)  edge   [] (x11);

\draw[-, line width=1pt, color=olive]  (x13)  edge   [bend right=30] (contc26);
\draw[-, line width=1pt, color=olive]  (contc26)  edge   (contc25);
\draw[-, line width=1pt, color=olive]  (contc25)  edge   [bend right=30] (x61);

\draw[-, line width=1pt, color=olive]  (x14)  edge   [bend left=30] (contc16);
\draw[-, line width=1pt, color=olive]  (contc16)  edge   (contc15);
\draw[-, line width=1pt, color=olive]  (contc15)  edge   [bend left=30] (x66);

\end{tikzpicture}

\caption{An example of the construction for a formula $\phi$ containing the two clauses $c_1=(x_1\lor x_2\lor \lnot y_1)$ and $c_2=(\lnot x_1 \lor y_1 \lor y_2)$. The bold (olive resp.) edges represent the edges added in the first (second resp.) step of the edge-adding procedure.}\label{fig:hardness-example}
\end{figure}

We now present some properties of the constructed graph $G$ that will be used in the reduction.
In the following we follow the naming conventions depicted in Figure~\ref{fig:hardness-gadgets}(c).

\begin{claim}\label{cl:hardness-n1-deletions}
    Let $S \subseteq V(G)$ with $|S| \leq n_1$ and $G-S$ having a unique minimum vertex cover.
    It holds that $|S \cap \{u_i^2,u_i^3,u_i^5,u_i^6\}| = 1$ for all $i \in [n_1]$.
\end{claim}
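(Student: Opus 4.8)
The plan is to prove that $|S \cap \{u_i^2,u_i^3,u_i^5,u_i^6\}| \ge 1$ for every $i \in [n_1]$ and then combine these inequalities with the budget $|S| \le n_1$. Since the sets $\{u_i^2,u_i^3,u_i^5,u_i^6\}$ are pairwise disjoint across the $n_1$ distinct $x$-gadgets, once each of them is known to meet $S$, a pigeonhole argument forces each to meet $S$ in exactly one vertex (and forces $S$ to lie entirely inside these sets). Thus the statement reduces to the contrapositive claim: \emph{if $S \cap \{u_i^2,u_i^3,u_i^5,u_i^6\} = \emptyset$ for some $i$, then $G-S$ has at least two distinct minimum vertex covers}, contradicting feasibility of $S$. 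I would establish this by a short case analysis driven entirely by the local structure of the $x_i$-gadget around $u_i^1$ and $u_i^4$ (\Cref{fig:hardness-gadgets}(c)), using that $u_i^3$ and $u_i^6$ are leaves, that $u_i^2$ (resp.\ $u_i^5$) has exactly the two neighbours $u_i^1,u_i^3$ (resp.\ $u_i^4,u_i^6$), and that $u_i^1u_i^4$ is an edge of $G$.

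Assume $S \cap \{u_i^2,u_i^3,u_i^5,u_i^6\} = \emptyset$. First, if $u_i^1 \in S$, then in $G-S$ the only surviving neighbour of $u_i^2$ is $u_i^3$, so $\{u_i^2,u_i^3\}$ forms an isolated edge; any minimum vertex cover of $G-S$ uses exactly one of its endpoints, and swapping that endpoint yields a second minimum cover, so uniqueness fails. The symmetric argument applies when $u_i^4 \in S$, using the isolated edge $\{u_i^5,u_i^6\}$. It remains to treat the main case $u_i^1,u_i^4 \notin S$. Here the edge $u_i^1u_i^4$ survives in $G-S$, so every minimum vertex cover $M$ of $G-S$ contains $u_i^1$ or $u_i^4$; say $u_i^1 \in M$. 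Since the path $u_i^1-u_i^2-u_i^3$ is intact, minimality forces $M$ to contain exactly one of $u_i^2,u_i^3$: it cannot contain neither (the edge $u_i^2u_i^3$ would be uncovered) nor both (removing $u_i^3$, whose only edge $u_i^2u_i^3$ is covered by $u_i^2$, would give a smaller cover). Toggling this single vertex, i.e.\ replacing $u_i^2$ by $u_i^3$ or vice versa, produces a set $M'$ of the same size that is still a vertex cover, because the only edges incident to the toggled vertices are $u_i^1u_i^2$ (covered by $u_i^1 \in M'$) and $u_i^2u_i^3$ (covered by the toggled vertex). Hence $M \ne M'$ are two minimum covers and uniqueness again fails.

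Combining the three cases establishes $S \cap \{u_i^2,u_i^3,u_i^5,u_i^6\} \neq \emptyset$ for every $i$, after which the counting argument above finishes the proof. I expect the only genuine subtlety, and the step to argue most carefully, to be the swap in the last case: one must verify that the toggled set $M'$ remains a vertex cover of \emph{all} of $G-S$ and not merely of the local gadget, and that it is distinct from $M$; this is exactly where the degree-$2$ property of $u_i^2,u_i^5$ and the leaf property of $u_i^3,u_i^6$ enter. Everything else is bookkeeping; in particular the argument is insensitive to whether $w_i$, $z_i$, or any vertices outside the $i$-th gadget lie in $S$, since it only relies on the persistence of the edge $u_i^1u_i^4$ and of the two pendant paths $u_i^1-u_i^2-u_i^3$ and $u_i^4-u_i^5-u_i^6$.
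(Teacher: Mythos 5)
Your proposal is correct and follows essentially the same route as the paper's proof: a uniqueness-violating swap of endpoints along the pendant paths $u_i^1\text{--}u_i^2\text{--}u_i^3$ and $u_i^4\text{--}u_i^5\text{--}u_i^6$, combined with a pigeonhole argument over the $n_1$ disjoint gadget sets against the budget $|S|\le n_1$. The only (harmless) organizational difference is that you argue at-least-one directly for the four-element sets $\{u_i^2,u_i^3,u_i^5,u_i^6\}$ via a three-case contrapositive, whereas the paper first establishes exactly one vertex of $S$ among all six bold vertices $\{u_i^1,\dots,u_i^6\}$ and then separately rules out $u_i^1,u_i^4\in S$ by the same swap idea.
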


\begin{proofclaim}
    Recall that by construction the graph $G$ contains $n_1$ copies of the $x$-gadget.
    In the following, let $U$ denote the unique minimum vertex cover of $G-S$.

    We first show that for all $i \in [n_1]$, $S$ contains exactly one vertex among vertices $\{u_i^1,u_i^2,u_i^3,u_i^4,u_i^5,u_i^6\}$ in $x_i$-gadget.
    Towards a contradiction, assume that there exists a variable~$x_i$ such that $S$ contains no such vertices from the $x_i$-gadget.
    Then, it holds that set $U$ must contain at least one of the vertices in $\{u_i^1,u_i^4\}$.
    Assume that $u_i^1 \in U$. In that case, $U$ must also contain any one vertex among $\{u_i^2,u_i^3\}$, and both options are valid,
    contradicting the uniqueness of $U$. The case $u_i^4 \in U$ is analogous.

    Now assume that there exists $i \in [n_1]$ such that $u_i^1 \in S$.
    In that case due to the previous paragraph it holds that $u_i^2, u_i^3 \notin S$.
    Similarly to before, $U$ must contain any one vertex among $\{u_i^2,u_i^3\}$, and both options are valid,
    contradicting the uniqueness of $U$. The case $u_i^4 \in U$ is analogous.
\end{proofclaim}

Next, we have the following observation.

\begin{observation}\label{obs:hardness-minimum}
    Let $S \subseteq V(G)$ such that $S$ contains a single vertex per $x$-gadget and let $U$ be a vertex cover of $G-S$.
    It holds that $U$ contains at least:
    \begin{itemize}
        \item  $3$ vertices from each $c$-gadget,
        \item  $\frac{3q}{2}$ vertices from each $y$-gadget,
        \item  $\frac{3p}{2}+4$ vertices from each $x$-gadget. 
    \end{itemize}
    Moreover, any vertex cover of $G-S$ that contains exactly as many vertices from each gadget as specified above is minimum.
\end{observation}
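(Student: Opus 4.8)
The plan is to prove both parts by establishing, for each gadget \emph{separately}, a lower bound on how many of its vertices any vertex cover of $G-S$ must use, and then summing these bounds. The three gadget types partition $V(G)$, and since $S$ contains exactly one vertex per $x$-gadget and no vertex of any $c$- or $y$-gadget (cf.\ \Cref{fig:hardness-gadgets}), every internal edge of a $c$- or $y$-gadget survives in $G-S$, while each $x$-gadget loses exactly one vertex. For any vertex cover $U$ of $G-S$ and any gadget $g$, the set $U\cap g$ must cover all internal edges of $g$ remaining in $G-S$, so $|U\cap g|$ is at least the minimum vertex cover number of the corresponding (possibly one-vertex-deleted) gadget. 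It therefore suffices to bound these three local vertex cover numbers from below by $3$, $\tfrac{3q}{2}$, and $\tfrac{3p}{2}+4$ respectively.

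For the $c$-gadget I would use that $\{\ell^1_c,\ell^2_c,\ell^3_c,w_c\}$ induces a $K_4$; since every vertex cover of $K_4$ has at least three vertices, $|U\cap g|\ge 3$. For the $y$-gadget, observe that it consists of a cycle $C_q$ on the outer colored vertices, one middle vertex $w_i^{\cdot}$ per cycle edge forming a triangle with its two endpoints, and one pendant $z_i^{\cdot}$ per middle vertex. I would exhibit a matching of size $\tfrac{3q}{2}$ by taking the $q$ pendant edges (each middle vertex to its leaf) together with a perfect matching of the even cycle $C_q$ (here $q$ is four times the number of appearances of $y_i$, hence even), giving $q+\tfrac{q}{2}=\tfrac{3q}{2}$ pairwise disjoint edges; as any vertex cover meets every matching edge, $|U\cap g|\ge \tfrac{3q}{2}$. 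The $x$-gadget has the same cycle--triangle--pendant structure on an outer cycle of length $p+2$ (the $p$ colored vertices together with $u_i^1,u_i^4$), plus the two pendant paths $u_i^1u_i^2u_i^3$ and $u_i^4u_i^5u_i^6$. Here I would build a matching of size $\tfrac{3p}{2}+5$ by combining the $p+2$ pendant edges, a perfect matching of the even cycle $C_{p+2}$ (contributing $\tfrac{p+2}{2}$ edges), and the two edges $u_i^2u_i^3$, $u_i^5u_i^6$. Removing the single deleted vertex of $S$ destroys at most one matching edge, so at least $\tfrac{3p}{2}+4$ of these disjoint internal edges survive in $G-S$, forcing $|U\cap g|\ge \tfrac{3p}{2}+4$.

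Finally, for the ``moreover'' part I would sum the three bounds: since the gadgets partition the vertices and $|U'|=\sum_g|U'\cap g|$ for every vertex cover $U'$ of $G-S$, every vertex cover has size at least $\Sigma:=3m+\sum_{y_i}\tfrac{3q_i}{2}+\sum_{x_i}\bigl(\tfrac{3p_i}{2}+4\bigr)$. A vertex cover that meets each local bound with equality has size exactly $\Sigma$ and is therefore minimum. The main obstacle I anticipate is pinning down the $x$-gadget count: one must verify that the claimed collection is genuinely a matching of size $\tfrac{3p}{2}+5$ (which uses the parity of $p$), and argue carefully that deleting one vertex costs at most one matching edge, so the bound drops precisely to $\tfrac{3p}{2}+4$ and no lower. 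The $c$- and $y$-gadget bounds and the final summation are then routine.
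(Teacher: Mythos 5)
Your proof is correct, and it is worth noting that the paper itself states \Cref{obs:hardness-minimum} \emph{without} any proof, so there is no paper argument to compare against; your write-up supplies exactly the kind of justification the authors presumably had in mind. The three local bounds all check out: $\{\ell_c^1,\ell_c^2,\ell_c^3,w_c\}$ does induce a $K_4$, giving the bound $3$; in a $y$-gadget the $q$ pendant edges $w_i^jz_i^j$ together with a perfect matching of the outer cycle $C_q$ (legitimate since $q$ is four times the number of appearances, hence even) form $\tfrac{3q}{2}$ pairwise disjoint internal edges; and in an $x$-gadget the $p+2$ pendant $wz$-edges, the $\tfrac{p+2}{2}$ cycle-matching edges, and the two path edges $u_i^2u_i^3$, $u_i^5u_i^6$ are indeed pairwise vertex-disjoint, totalling $\tfrac{3p}{2}+5$, of which at most one is destroyed by the single deleted vertex (a vertex lies in at most one edge of a matching), yielding the bound $\tfrac{3p}{2}+4$ for \emph{any} choice of the deleted vertex --- which matches the generality of the observation's hypothesis. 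Two small points in your favor: first, your reading of the hypothesis (that $S$ meets each $x$-gadget in exactly one vertex and contains no vertex of any $c$- or $y$-gadget) is the intended one, as the observation is only ever invoked under \Cref{cl:hardness-n1-deletions}; second, your handling of the ``moreover'' part is exactly right --- since the gadgets partition $V(G)$ and internal edges can only be covered by gadget vertices, any cover meeting every local bound with equality attains the global lower bound $\Sigma$ and is therefore minimum, with no need to prove that such a cover exists (the claim is conditional). Your anticipated obstacle about the $x$-gadget arithmetic dissolves on inspection: the outer cycle has $p+2$ vertices with $p$ even, so the parity works, and one can confirm tightness against the paper's explicit cover (all blue vertices incl.\ $u_i^4$, all $w$'s, plus $u_i^2$), which has size $\tfrac{p}{2}+1+(p+2)+1=\tfrac{3p}{2}+4$.
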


In the following claim we focus on the $x_i$-gadget, for any $i\in[n_1]$, and specify which of its edges 

\begin{claim}\label{cl:hardness-u1-or-u4}
    Let $S \subseteq V(G)$ with $|S| \leq n_1$ and $G-S$ having a unique minimum vertex cover $U \subseteq V(G-S)$.
    For every $i \in [n_1]$, if $S \cap \{u_i^2,u_i^3\} \neq \emptyset$ then $u_i^1 \in U$, else if $S \cap \{u_i^5,u_i^6\} \neq \emptyset$ then $u_i^4 \in U$.
\end{claim}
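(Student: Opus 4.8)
The plan is to prove the two implications symmetrically and to let the uniqueness hypothesis do all the work, rather than analysing the global structure of the $x_i$-gadget. I will carry out the first implication in detail; the second is the mirror argument obtained by interchanging the roles $u_i^1 \leftrightarrow u_i^4$, $u_i^2 \leftrightarrow u_i^5$, $u_i^3 \leftrightarrow u_i^6$. First I would invoke \Cref{cl:hardness-n1-deletions}, whose proof in fact guarantees both that $u_i^1, u_i^4 \notin S$ and that exactly one vertex of $\{u_i^2, u_i^3, u_i^5, u_i^6\}$ lies in $S$. Hence, assuming $S \cap \{u_i^2, u_i^3\} \neq \emptyset$, the single deleted vertex is one of $u_i^2, u_i^3$, so $u_i^5, u_i^6 \notin S$ and the entire pendant path $u_i^4 - u_i^5 - u_i^6$ survives in $G-S$; moreover the ring edge $u_i^1 u_i^4$ is present, since neither of its endpoints is deleted.

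The key step is to show $u_i^4 \notin U$ using uniqueness, after which $u_i^1 \in U$ is immediate. Suppose for contradiction that $u_i^4 \in U$. Then the edge $u_i^4 u_i^5$ is already covered by $u_i^4$, so covering the remaining path edge $u_i^5 u_i^6$ reduces to a free binary choice: exactly one of $u_i^5, u_i^6$ lies in $U$. It cannot be both, because $u_i^6$ is a leaf whose only edge $u_i^5 u_i^6$ would then be doubly covered, and a minimum cover would drop the redundant leaf; and it must be at least one, to cover $u_i^5 u_i^6$. Swapping that vertex for the other one produces a vertex cover of exactly the same cardinality that is distinct from $U$: concretely, $(U \setminus \{u_i^5\}) \cup \{u_i^6\}$ (or its reverse) still covers $u_i^4 u_i^5$ via $u_i^4$ and $u_i^5 u_i^6$ via the inserted vertex, and touches no other edges. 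This contradicts the uniqueness of the minimum vertex cover. Therefore $u_i^4 \notin U$, and since the ring edge $u_i^1 u_i^4$ must be covered, we conclude $u_i^1 \in U$, as claimed. The \emph{else if} branch is handled by the same argument run on the surviving path $u_i^1 - u_i^2 - u_i^3$: if $u_i^1 \in U$ then covering $u_i^2 u_i^3$ is a free choice forcing non-uniqueness, so $u_i^1 \notin U$, and the ring edge $u_i^1 u_i^4$ then forces $u_i^4 \in U$.

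I expect the only delicate points to be bookkeeping rather than conceptual. One must record that $u_i^5$ and $u_i^6$ (respectively $u_i^2, u_i^3$) carry no edges other than those of their pendant path, so that the exchange is purely local and disturbs neither the clause-gadget edges nor the inter-gadget edges added in the two-step edge-adding procedure; this holds precisely because these are internal pendant vertices introduced only inside the $x_i$-gadget. One should also note explicitly that the swapped set remains a \emph{minimum} cover, which is clear as it has the same size as $U$. The pleasant feature of this approach is that it needs neither the exact minimum-cover counts of \Cref{obs:hardness-minimum} nor any global case analysis of the gadget: the intact pendant path by itself rules out $u_i^4$ from every unique minimum cover. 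The main obstacle, such as it is, is merely to phrase the exchange cleanly so that it covers both sub-possibilities ($u_i^5 \in U$ or $u_i^6 \in U$) uniformly and to state the symmetric argument for the second implication.
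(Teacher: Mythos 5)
Your proof is correct and takes essentially the same approach as the paper's: both rest on \Cref{cl:hardness-n1-deletions} and then contradict the uniqueness of $U$ via the free choice between $u_i^5$ and $u_i^6$ on the surviving pendant path (the paper assumes $u_i^1 \notin U$ and deduces $u_i^4 \in U$, whereas you assume $u_i^4 \in U$ directly, which is a logically equivalent framing of the same contradiction). Your explicit exchange $(U \setminus \{u_i^5\}) \cup \{u_i^6\}$ simply spells out what the paper compresses into ``both options are valid.''
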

\begin{proofclaim}
Fix an $i \in [n_1]$ for which we prove the statement.
Recall that by \Cref{cl:hardness-n1-deletions} it holds that $|S \cap \{u_i^2,u_i^3,u_i^5,u_i^6\}| = 1$.
Consider the case where $S \cap \{u_i^2,u_i^3\} \neq \emptyset$.
For the sake of contradiction, assume that $u_i^1 \notin U$, which implies that $u_i^4 \in U$.
Then, $U$ must also contain any one vertex among $\{u_i^5,u_i^6\}$, and both options are valid,
contradicting the uniqueness of $U$.
The case $S \cap \{u_i^5,u_i^6\} \neq \emptyset$ is analogous.
\end{proofclaim}

We now focus our attention on the $c$-gadgets. In the following we follow the naming conventions depicted in Figure~\ref{fig:hardness-gadgets}(a).

\begin{claim}\label{cl:hardness-c-gadgets-3vc}
    Let $S \subseteq V(G)$ with $|S| \leq n_1$ and $G-S$ having a unique minimum vertex cover $U \subseteq V(G-S)$.
    It holds that $|U \cap \{\ell_c^1,\ell_c^2,\ell_c^3\}| = 2$ for any $c$-gadget in $G$.
\end{claim}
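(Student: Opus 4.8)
The plan is to combine the global structure forced by \Cref{cl:hardness-n1-deletions} with a purely local analysis of a single $c$-gadget. First I would observe that \Cref{cl:hardness-n1-deletions} pins down $S$ almost completely: it contributes exactly one vertex to each of the $n_1$ copies of the $x$-gadget, namely one of $\{u_i^2,u_i^3,u_i^5,u_i^6\}$, and since $|S|\le n_1$ this already accounts for all of $S$. Consequently $S$ is disjoint from every $c$-gadget, so each $c$-gadget survives untouched in $G-S$. In particular the subgraph of $G-S$ induced on $\{\ell_c^1,\ell_c^2,\ell_c^3,w_c\}$ is a $K_4$, while $z_c$ remains a pendant vertex attached only to $w_c$.

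Next I would invoke \Cref{obs:hardness-minimum}. Since $S$ contains a single vertex per $x$-gadget, the observation applies, and because the per-gadget lower bounds it lists are simultaneously attainable, the minimum vertex cover size of $G-S$ equals their sum. As $U$ is a \emph{minimum} vertex cover, it must meet each lower bound with equality; in particular $|U\cap\{\ell_c^1,\ell_c^2,\ell_c^3,w_c,z_c\}|=3$ for the fixed $c$-gadget.

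The heart of the argument is then the local case analysis of which three gadget vertices can cover all of its internal edges. I would first force $w_c\in U$: if $w_c\notin U$, then covering the $K_4$ requires all three of $\ell_c^1,\ell_c^2,\ell_c^3$, and covering the edge $w_cz_c$ additionally requires $z_c$, giving four gadget vertices in $U$ and contradicting the count of exactly three. Hence $w_c\in U$. With $w_c$ present, the edge $w_cz_c$ and every edge incident to $w_c$ is already covered, but the triangle on $\ell_c^1,\ell_c^2,\ell_c^3$ still must be covered, which needs at least two of its vertices; since only two further gadget vertices remain in the budget, exactly two of the $\ell_c^j$ lie in $U$. This yields $|U\cap\{\ell_c^1,\ell_c^2,\ell_c^3\}|=2$, as claimed.

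I do not expect a serious obstacle here: the uniqueness of $U$ is not even needed, only its minimality, and the external edges leaving the $\ell_c^j$ (added by the edge-adding procedure) are irrelevant to the internal count. The one point requiring slight care is justifying that $U$ meets the $c$-gadget bound with \emph{equality} rather than merely ``at least three''; this is exactly where I rely on \Cref{obs:hardness-minimum} providing tight, simultaneously achievable lower bounds, so that a minimum cover cannot overspend in any single gadget.
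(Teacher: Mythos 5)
There is a genuine gap, and it sits exactly where you claim the argument is safest. Your proof rests on the step ``the per-gadget lower bounds of \Cref{obs:hardness-minimum} are simultaneously attainable, hence the minimum vertex cover size of $G-S$ equals their sum, hence $U$ meets the $c$-gadget bound with equality.'' But \Cref{obs:hardness-minimum} only gives the one-way implication (a cover meeting every bound exactly \emph{is} minimum); it does not assert that such a cover exists for the given $S$. In this construction, a cover that is tight on every gadget encodes precisely a truth assignment that $1$-in-$3$ satisfies every clause consistently with the deletions in $S$ --- that is, attainability of the bounds is essentially equivalent to the satisfiability condition the whole reduction is trying to characterize. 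Assuming it here is unproven (and, in the backward direction of the reduction, circular). Concretely, if the three variable-gadget neighbors $v_{i_1}^\alpha, v_{i_2}^\beta, v_{i_3}^\gamma$ of some $c$-gadget all lie outside a cover, then the cross edges force all of $\ell_c^1,\ell_c^2,\ell_c^3$ into it, plus one of $w_c,z_c$, so that gadget costs $4$; when no tight configuration exists globally, even a \emph{minimum} cover can be of this form. This also shows that your closing remark --- that only minimality, not uniqueness, is needed --- is false: minimality alone does not exclude the all-three-$\ell$'s case.

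The paper's proof avoids the global counting entirely and is purely local, using uniqueness where you tried to use minimality. After noting via \Cref{cl:hardness-n1-deletions} that $S$ misses the gadget (which you do correctly), it observes $|U \cap \{\ell_c^1,\ell_c^2,\ell_c^3\}| \ge 2$ since these vertices induce a triangle, and then rules out the value $3$ as follows: if all three $\ell_c^j$ are in $U$, the edge $w_c z_c$ forces one of $w_c, z_c$ into $U$, and since all edges at $w_c$ other than $w_c z_c$ are already covered by the $\ell_c^j$, swapping $w_c$ for $z_c$ (or vice versa) yields a second vertex cover of the same size --- contradicting the uniqueness of $U$. To repair your proof you would replace the tightness argument with exactly this exchange argument; the rest of your local analysis is then unnecessary.
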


\begin{proofclaim}
    Fix a $c$-gadget for which we prove the statement.
    By \Cref{cl:hardness-n1-deletions}, it holds that $S$ does not contain any of its vertices.
    Observe that $|U \cap \{\ell_c^1,\ell_c^2,\ell_c^3\}| \ge 2$, as those three vertices induce a $K_3$.
    Assume that $|U \cap \{\ell_c^1,\ell_c^2,\ell_c^3\}| > 2$.
    In this case $U$ must also contain at least one vertex in $\{w_c,z_c\}$,
    and both options are valid.
    This contradicts the uniqueness of $U$.
\end{proofclaim}

The next property we need concerns the $x$- and $y$-gadgets. Since these gadgets behave in the same way in what follows, we will henceforth denote them as $v$-gadgets. That is, a $v$-gadget can be either a $x$- or a $y$-gadget. We will also use the superscript and subscript notations introduced in Figures~\ref{fig:hardness-gadgets}(b) and~(c) for the $v$-gadgets. 

\begin{claim}\label{cl:hardness-xy-gadgets-alternate}
    Let $S \subseteq V(G)$ with $|S| \leq n_1$ and $G-S$ having a unique minimum vertex cover $U \subseteq V(G-S)$.
    Let $\mathcal{V}$ be any $v$-gadget and let $W$ be the intersection of $U$ and $V(\mathcal{V})$. Then either 
    \begin{itemize}
        \item $W$ contains all the red and no blue vertices of $\mathcal{V}$, or
        \item $W$ contains all the blue and no red vertices of $\mathcal{V}$. 
    \end{itemize}
\end{claim}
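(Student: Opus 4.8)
The plan is to combine a rigid local gadget structure with the global tightness supplied by \Cref{obs:hardness-minimum}. First I would isolate the relevant substructure common to both kinds of $v$-gadgets: the colored vertices lie on an even cycle $\mathcal{C}$ whose colors alternate by construction, every edge $\{a,b\}$ of $\mathcal{C}$ is the base of a triangle whose apex is a white vertex $w$, and each such apex $w$ carries a pendant leaf $z$ of degree one. The point I would stress here is that the white vertices $w$ and the leaves $z$ receive no edges at all during the edge-adding procedure (only colored vertices do), so the local picture around each triangle is untouched by the rest of $G$.

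The first key step is to force every apex $w$ into $U$. Suppose some $w \notin U$; then covering the pendant edge $\{w,z\}$ forces $z \in U$, and since $z$ has degree one, the set $U' = (U \setminus \{z\}) \cup \{w\}$ is again a vertex cover (all edges at $w$ are now covered by $w$, and no other edge is affected) of the same size, hence another minimum vertex cover distinct from $U$. This contradicts the uniqueness of $U$, so all apexes $w$ lie in $U$.

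Next I would pin down how many colored vertices of $\mathcal{V}$ belong to $U$. Since the per-gadget lower bounds of \Cref{obs:hardness-minimum} are simultaneously achievable by a vertex cover that is therefore minimum, the minimum vertex cover size of $G-S$ equals their sum; as $U$ is minimum and meets each gadget in at least its lower bound, it must meet \emph{every} gadget in exactly its lower bound. For a $y$-gadget this budget is $\frac{3q}{2}$, of which the $q$ apexes already consume $q$; the remaining $\frac{q}{2}$ vertices, drawn from the colored vertices and the now-redundant leaves $z$, must cover all $q$ cycle edges of $\mathcal{C}$. A leaf $z$ covers no cycle edge, and $\mathcal{C}$ has length $q$, so at least $\frac{q}{2}$ colored vertices are required; hence all $\frac{q}{2}$ remaining vertices are colored and form a minimum vertex cover of $\mathcal{C}$. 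Finally, a minimum vertex cover of an even cycle is exactly one of its two color classes, which here are precisely the set of all red and the set of all blue vertices of $\mathcal{V}$, yielding the claimed dichotomy.

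The step I expect to be the main obstacle is the analogous bookkeeping for the $x$-gadget, since there one must verify that the extra ``$+4$'' in its lower bound $\frac{3p}{2}+4$ is entirely consumed by the two pendant $u$-paths and the triangles incident to $u_i^1$ and $u_i^4$, taking into account that one of $u_i^2,u_i^3,u_i^5,u_i^6$ has already been deleted into $S$ by \Cref{cl:hardness-n1-deletions}. Once that accounting confirms the colored cycle of the $x$-gadget still receives exactly half its length, the same even-cycle argument applies verbatim and the vertices $u_i^1$ (red) and $u_i^4$ (blue) are simply treated as ordinary cycle vertices, so exactly one of them is selected according to whether $\mathcal{V}$ is ``all red'' or ``all blue''. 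Everything else reduces to the clean fact that an even cycle has exactly two minimum vertex covers.
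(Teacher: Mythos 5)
Your first step (forcing every apex $w$ into $U$ by the swap $U' = (U \setminus \{z\}) \cup \{w\}$) is sound, but the second step contains a genuine gap: you assert that the per-gadget lower bounds of \Cref{obs:hardness-minimum} are ``simultaneously achievable'', so that the minimum vertex cover of $G-S$ equals their sum and $U$ meets every gadget exactly at its bound. \Cref{obs:hardness-minimum} does not give you this: it only states the per-gadget lower bounds and the \emph{conditional} that a cover meeting all of them exactly is minimum; it does not assert that such a cover exists. Simultaneous achievability is not automatic, because a cover meeting each gadget's bound must additionally cover the inter-gadget edges (the edges from the $\ell_c^j$'s into the variable gadgets and the step-two edges between variable gadgets), and whether that is possible for the given $S$ is essentially equivalent to the residual formula (with the $x$-variables fixed by $S$) being $1$-in-$3$ satisfiable. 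At the point where \Cref{cl:hardness-xy-gadgets-alternate} is invoked --- the reverse direction of the reduction, where $S$ is an arbitrary \MUVCShort solution --- this satisfiability is exactly what remains to be proved, so you cannot assume it. If no assignment-like cover exists, the unique minimum vertex cover of $G-S$ may exceed the sum of the bounds, your budget of $\frac{q}{2}$ colored vertices per $y$-gadget evaporates, and the even-cycle argument (which is itself fine: a minimum vertex cover of an even cycle is one of its two color classes) no longer applies. The same objection hits the $x$-gadget accounting you deferred.

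The fix is available inside your own framework and is how the paper argues: stay local and use uniqueness twice. By \Cref{cl:hardness-n1-deletions}, $S$ contains no colored vertices, so every edge between an adjacent red--blue pair forces at least one of the two into $U$. If some adjacent pair $v_i^\alpha, v_i^{\alpha+1}$ were \emph{both} in $U$, then both triangle sides at the apex $w_i^\alpha$ are already covered, and the pendant edge $w_i^\alpha z_i^\alpha$ can be covered by either of $w_i^\alpha$ or $z_i^\alpha$ interchangeably --- swapping one for the other yields a second minimum vertex cover, contradicting uniqueness (in your notation: you proved $w_i^\alpha \in U$, whence $z_i^\alpha \notin U$ by minimality, and $(U \setminus \{w_i^\alpha\}) \cup \{z_i^\alpha\}$ is a distinct minimum cover). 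Hence exactly one endpoint of every cycle edge lies in $U$, including the pair $\{u_i^1, u_i^4\}$ in the $x$-gadget, and a set meeting every edge of an even alternating cycle exactly once is precisely one of the two color classes. This eliminates the global counting, the achievability assumption, and the $x$-gadget bookkeeping in one stroke.
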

\begin{proofclaim}
    By \Cref{cl:hardness-n1-deletions}, it holds that $S$ does not contain any red or blue vertex.
    Clearly, $W$ must contain at least one vertex from each pair of adjacent red and blue vertices of $\mathcal{V}$, as otherwise $U$ is not a vertex cover.
    Thus, it suffices to show that $W$ does not contain both the red and blue vertices of any pair of adjacent colored vertices of $\mathcal{V}$.
    Assume that it does, that is, assume that $W$ contains both the vertices $v_i^\alpha$ and $v_i^{\alpha+1}$ (for some $\alpha \in [p-1]$ or $\alpha \in [q-1]$).
    Then, $W$ must also contain at least one vertex from $\{w_i^\alpha,z_i^\alpha\}$, and both options are valid.
    The case where $\mathcal{V}$ is the $x_i$-gadget and $W$ contains both the vertices in $\{u_i^1,u_i^4\}$ is treated in the same way.
    In any case, we obtain a contradiction to the uniqueness of $U$. 
\end{proofclaim}



The final property we observe concerns the interplay between the clause and the variable gadgets.
To ease the exposition of what follows, we introduce some additional notation.
Consider a $c$-gadget of $G$, and let $v_{i_1}^\alpha$, $v_{i_2}^\beta$, and $v_{i_3}^\gamma$ be vertices belonging in
the $v_{i_1}$-, $v_{i_2}$-, and $v_{i_3}$-gadgets respectively,
such that $\ell_c^1 v_{i_1}^\alpha \in E(G)$, $\ell_c^2 v_{i_2}^\beta \in E(G)$, and $\ell_c^3 v_{i_3}^\gamma \in E(G)$.
The subgraph of $G$ that is induced by this $c$-gadget and the vertices $v_{i_1}^\alpha$, $v_{i_2}^\beta$, and $v_{i_3}^\gamma$ will henceforth be named the \emph{important $c$-subgraph} of $G$.
Notice that due to \Cref{cl:hardness-n1-deletions}, any set $S \subseteq V(G)$ with $|S| \leq n_1$ and $G-S$ having a unique minimum vertex cover
does not include any vertices appearing in the important $c$-subgraph.
Consequently, in this case the important $c$-subgraph of $G-S$ refers to the same graph.

\begin{claim}\label{cl:hardness-x-c-gadgets}
    Let $S \subseteq V(G)$ with $|S| \leq n_1$ and $G-S$ having a unique minimum vertex cover $U \subseteq V(G-S)$.
    Consider the important $c$-subgraph of $G-S$ for some $c$-gadget.
    It holds that $|U \cap \{v_{i_1}^\alpha,v_{i_2}^\beta,v_{i_3}^\gamma\}| = 1$.  
\end{claim}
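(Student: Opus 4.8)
The plan is to combine the structural facts already established about the $c$-gadget with a single local exchange argument that exploits the uniqueness of $U$. First I would record the consequences of the earlier claims. By \Cref{cl:hardness-n1-deletions} the set $S$ avoids every vertex of the important $c$-subgraph, so all of its edges survive in $G-S$; in particular the three edges $\ell_c^1 v_{i_1}^\alpha$, $\ell_c^2 v_{i_2}^\beta$, and $\ell_c^3 v_{i_3}^\gamma$ are present. By \Cref{cl:hardness-c-gadgets-3vc} exactly two of $\ell_c^1,\ell_c^2,\ell_c^3$ lie in $U$; since the three literal vertices play symmetric roles in the $c$-gadget, I may assume without loss of generality that $\ell_c^3\notin U$. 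To cover the edge $\ell_c^3 v_{i_3}^\gamma$ we then must have $v_{i_3}^\gamma\in U$, which already yields $|U\cap\{v_{i_1}^\alpha,v_{i_2}^\beta,v_{i_3}^\gamma\}|\ge 1$. I would also observe that, since the cover restricted to the $c$-gadget has size exactly three by \Cref{obs:hardness-minimum} and already contains two $\ell$-vertices, the third must be $w_c$ (it is the only remaining vertex covering $\ell_c^3 w_c$); hence $w_c\in U$ and $z_c\notin U$.

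The heart of the argument is to rule out a second $v$-vertex in $U$. Suppose towards a contradiction that $|U\cap\{v_{i_1}^\alpha,v_{i_2}^\beta,v_{i_3}^\gamma\}|\ge 2$; then one of $v_{i_1}^\alpha,v_{i_2}^\beta$ lies in $U$, and by symmetry I may assume $v_{i_1}^\alpha\in U$ (recall $\ell_c^1\in U$). I would then consider the exchanged set $U'=(U\setminus\{\ell_c^1\})\cup\{\ell_c^3\}$, which has the same cardinality as $U$ and differs from it. The key step is to verify that $U'$ is again a vertex cover of $G-S$: the only edges whose coverage can change are those incident to $\ell_c^1$ or $\ell_c^3$. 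The triangle edges are covered by $\{\ell_c^2,\ell_c^3\}\subseteq U'$, the edges joining $w_c$ to the three literal vertices are covered by $w_c\in U'$, the edge $\ell_c^3 v_{i_3}^\gamma$ is covered by $\ell_c^3\in U'$, and --- crucially --- the edge $\ell_c^1 v_{i_1}^\alpha$ exposed by removing $\ell_c^1$ is covered by the assumed $v_{i_1}^\alpha\in U'$. Thus $U'$ is a minimum vertex cover distinct from $U$, contradicting its uniqueness.

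This contradiction forces $|U\cap\{v_{i_1}^\alpha,v_{i_2}^\beta,v_{i_3}^\gamma\}|=1$, completing the proof. The only delicate point I anticipate is the verification that the swap does not disturb coverage elsewhere in $G-S$, and this is exactly where the construction pays off: each $\ell_c^j$ has degree $3$ inside the $c$-gadget and receives a single external edge in the first step of the edge-adding procedure, so it has precisely one edge leaving the gadget. Consequently the exchange affects only the edges listed above and no global bookkeeping is required. It is worth emphasizing that the hypothesis $v_{i_1}^\alpha\in U$ is precisely what repairs the single external edge uncovered by deleting $\ell_c^1$, which is the structural reason why having two of the three $v$-vertices in $U$ is exactly what enables an alternative minimum cover.
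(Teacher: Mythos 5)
Your proof is correct, but its second half takes a genuinely different route from the paper. Both arguments establish $|W| \ge 1$ the same way: by \Cref{cl:hardness-c-gadgets-3vc} one $\ell$-vertex is missing from $U$, and its single external edge forces the attached $v$-vertex into $U$. For $|W| \le 1$, however, the paper stays with a pure coverage argument: if $v_{i_1}^\alpha, v_{i_2}^\beta \in U$, then by the alternation property (\Cref{cl:hardness-xy-gadgets-alternate}) the oppositely colored neighbours $v_{i_1}^{\alpha-1}$ and $v_{i_2}^{\beta+1}$ lie outside $U$, and the edge $v_{i_1}^{\alpha-1} v_{i_2}^{\beta+1}$ added in the \emph{second} step of the edge-adding procedure is left uncovered, so $U$ is not even a vertex cover. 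You instead invoke uniqueness directly via the swap $U' = (U \setminus \{\ell_c^1\}) \cup \{\ell_c^3\}$, and this is sound: $w_c \in U$ is forced by the edge $\ell_c^3 w_c$ alone, each $\ell_c^j$ has exactly one edge leaving the gadget by construction, and the assumed $v_{i_1}^\alpha \in U$ covers the exposed external edge, so $U'$ is a second minimum cover, contradicting uniqueness. Your version is more local and elementary --- it needs neither \Cref{cl:hardness-xy-gadgets-alternate} nor the step-two inter-variable edges --- whereas the paper's version shows those step-two edges doing exactly the job they were built for and yields the stronger conclusion that two $v$-vertices already contradict $U$ being a cover at all, not merely its uniqueness. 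One small caution: your aside that $U$ meets the $c$-gadget in \emph{exactly} three vertices does not follow directly from \Cref{obs:hardness-minimum}, which only provides per-gadget lower bounds together with a sufficient condition for minimality; fortunately you never use $z_c \notin U$, and the only fact your exchange needs, $w_c \in U$, is already justified by your parenthetical coverage argument, so no gap results.
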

\begin{proofclaim}
    Let $W = U \cap \{v_{i_1}^\alpha,v_{i_2}^\beta,v_{i_3}^\gamma\}$ and observe that $|W| > 0$.
    Indeed, let us assume that $|W|=0$.
    We have from \Cref{cl:hardness-c-gadgets-3vc} that at least one among $\ell_c^1,\ell_c^2,\ell_c^3$, say $\ell_c^1$, does not belong to $U$.
    Then, we have that neither $\ell_c^1$ nor its adjacent vertex $v_{i_1}^\alpha$ belong to $U$,
    contradicting the fact that $U$ is a vertex cover of $G-S$.
    It thus suffices to show that $|W|<2$.
    Towards a contradiction, assume that $|W|\geq 2$ and let, w.l.o.g., $\{v_{i_1}^\alpha,v_{i_2}^\beta\} \subseteq W$.
    It follows by \Cref{cl:hardness-xy-gadgets-alternate} that $v_i^{\alpha-1}, v_j^{\beta+1} \notin U$.
    But $v_i^{\alpha-1} v_j^{\beta+1} \in E(G-S)$, contradicting the fact that $U$ is a vertex cover of $G-S$. 
\end{proofclaim}

By considering both Claims~\ref{cl:hardness-c-gadgets-3vc} and~\ref{cl:hardness-x-c-gadgets},
we get that $U$ will contain exactly two out of the three ``outer'' vertices for each $c$-gadget, and exactly one of the adjacent vertices of the $v$-gadgets that are linked to this $c$-gadget. Let us denote this one vertex by $\nu$. The subtle detail that is implied by the proofs of these two claims is that the neighbor of $\nu$ in the $c$-gadget we are considering is exactly the ``outer'' vertex of this gadget that is not included in $U$. Allow us to formally state this. 

\begin{corollary}\label{cor:hardness-bij}
    Let $S \subseteq V(G)$ with $|S| \leq n_1$ and $G-S$ having a unique minimum vertex cover $U \subseteq V(G-S)$.
    Consider any $c$-gadget and the corresponding important $c$-subgraph of $G$.
    Let $W = U\cap \{\ell_c^1,\ell_c^2,\ell_c^3,v_{i_1}^\alpha,v_{i_2}^\beta,v_{i_3}^\gamma\}$.
    Then either
    \begin{itemize}
        \item $W=\{\ell_c^1,\ell_c^2, v_{i_3}^\gamma\}$, or
        \item $W=\{\ell_c^1,\ell_c^3,v_{i_2}^\beta\}$, or
        \item $W=\{\ell_c^2,\ell_c^3,v_{i_1}^\alpha\}$.
    \end{itemize}
\end{corollary}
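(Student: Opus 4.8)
The plan is to derive \Cref{cor:hardness-bij} as a direct consequence of the two claims immediately preceding it, namely \Cref{cl:hardness-c-gadgets-3vc} and \Cref{cl:hardness-x-c-gadgets}, together with the basic structure of the important $c$-subgraph. First I would observe that by \Cref{cl:hardness-c-gadgets-3vc} we have $|U \cap \{\ell_c^1,\ell_c^2,\ell_c^3\}| = 2$, so exactly one of the three outer literal vertices is \emph{not} in $U$; and by \Cref{cl:hardness-x-c-gadgets} we have $|U \cap \{v_{i_1}^\alpha,v_{i_2}^\beta,v_{i_3}^\gamma\}| = 1$, so exactly one of the three adjacent variable-gadget vertices is in $U$. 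Thus $|W| = 3$ in every case, and the three listed options are precisely the three ways of choosing which outer vertex is excluded. The content of the corollary is therefore the claim that these two choices are \emph{linked}: the excluded outer vertex $\ell_c^t$ must be exactly the neighbor of the single included variable vertex.

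The key step is establishing this linkage via the covering constraint on the three ``spoke'' edges $\ell_c^1 v_{i_1}^\alpha$, $\ell_c^2 v_{i_2}^\beta$, and $\ell_c^3 v_{i_3}^\gamma$ of the important $c$-subgraph. Recall from \Cref{cl:hardness-n1-deletions} that $S$ contains no vertex of the important $c$-subgraph, so all three edges survive in $G-S$ and must be covered by $U$. The argument I would give is the following: say the outer vertex not in $U$ is $\ell_c^1$ (the other two cases are symmetric). The edge $\ell_c^1 v_{i_1}^\alpha$ must then be covered by its other endpoint, forcing $v_{i_1}^\alpha \in U$. Since by \Cref{cl:hardness-x-c-gadgets} $U$ contains exactly one of the three variable vertices, this single included vertex is $v_{i_1}^\alpha$, and hence $v_{i_2}^\beta, v_{i_3}^\gamma \notin U$. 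Combining with $\ell_c^2, \ell_c^3 \in U$ (the two included outer vertices) gives $W = \{\ell_c^2, \ell_c^3, v_{i_1}^\alpha\}$, which is exactly the third listed option. Running the same reasoning for the cases where $\ell_c^2$ or $\ell_c^3$ is the excluded outer vertex yields the first two options respectively.

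I would present this as a short case analysis on which outer vertex is excluded, noting that the three cases are pairwise symmetric so only one needs to be written out in full. The main (and really only) obstacle is to make sure the indices are aligned correctly, i.e.\ that the excluded literal $\ell_c^t$ is genuinely adjacent to $v_{i_t}^\cdot$ and not to a different variable vertex; this is guaranteed by the definition of the important $c$-subgraph, where the edges are set up precisely as $\ell_c^1 v_{i_1}^\alpha$, $\ell_c^2 v_{i_2}^\beta$, $\ell_c^3 v_{i_3}^\gamma$. Since this is purely a combinatorial bookkeeping consequence of two already-established claims and a single covering observation, I do not expect any genuine difficulty here; the corollary essentially just records the conjunction of the two claims in a form convenient for the later correctness argument of the reduction, where one reads off that each satisfied clause corresponds to exactly one literal being ``selected'' by the unique vertex cover.
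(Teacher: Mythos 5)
Your proposal is correct and takes essentially the same route as the paper: the paper states the corollary without a separate proof, noting only that the linkage between the excluded outer vertex and the included variable vertex is ``implied by the proofs'' of Claims~\ref{cl:hardness-c-gadgets-3vc} and~\ref{cl:hardness-x-c-gadgets}, and your explicit spoke-edge covering argument (the excluded $\ell_c^t$ forces its neighbor $v_{i_t}$ into $U$, which by \Cref{cl:hardness-x-c-gadgets} must be the unique included variable vertex) is precisely that implicit detail, already appearing in the $|W|>0$ part of the proof of \Cref{cl:hardness-x-c-gadgets}. Your appeal to \Cref{cl:hardness-n1-deletions} to ensure the spoke edges survive in $G-S$ matches the paper's remark preceding \Cref{cl:hardness-x-c-gadgets}, so nothing is missing.
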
 

We are now ready to present our reduction. We first assume that we have a yes-instance of the \UQPSAT problem. That is, we have a truth-assignment $\tau$ of the variables of type $x$ which can be uniquely extended into a truth-assignment $\sigma$ of the variables of type $x$ and $y$ that results in $\phi$ being $1$-in-$3$ satisfied. We will show that the constructed graph $G=(V,E)$ has a \MUVCShort of order exactly~$n_1$. By \Cref{cl:hardness-n1-deletions}, it suffices to provide a set $S\subseteq V$ such that $G'=G-S$ has a unique minimum vertex cover~$U$, and $|S|=n_1$. We proceed as follows: for each $i\in[n_1]$, if $\tau(x_i)=\true$ ($\tau(x_i)=\false$ resp.), we include $u_i^5$ ($u_i^2$ resp.) in $S$. The set $S$ is comprised by only these vertices. It follows that $|S|=n_1$. It remains to show that $G'$ has a unique minimum vertex cover. Before proceeding, observe that every vertex denoted by a $w$, with any possible subscript and/or superscript, is contained in the unique minimum vertex cover of $G'$ (if it exists). Indeed, every edge $wz$ (for every possible subscript and/or superscript of $z$) must be covered, and any vertex cover that contains $z$ instead of $w$ is of order at least as big as a vertex cover that contains $w$. Thus, in what follows we will assume that these vertices are included in the unique minimum vertex cover and ignore them. 
Consider now the set $U$ built in the following fashion: 
\begin{enumerate}
    \item For every $c$-gadget, include in $U$ 
    the two vertices from $\{\ell_c^1,\ell_c^2,\ell_c^3\}$ that correspond to literals that do not satisfy the clause $c$ according to $\sigma$.
    \item For every $i\in[n_1]$, include in $U$ 
    all the blue (red resp.) vertices and the vertex $u_i^2$ ($u_i^5$ resp.) if $\sigma(x_i)=\true$ ($\sigma (x_i)=\false$ resp.). Also, for every $i\in[n_2]$, include in $U$ 
    all the blue (red resp.) vertices if $\sigma(y_i)=\true$ ($\sigma (y_i)=\false$ resp.).
\end{enumerate}
We claim that the set $U$ is indeed the unique minimum vertex cover of $G'$. Since it is straightforward to check that $U$ is indeed a vertex cover, we will focus on its minimality and its uniqueness. 
In order to prove that $U$ is minimum it suffices to observe that each gadget includes a minimum number of vertices, according to the values provided in \Cref{obs:hardness-minimum}. 

We finally argue about the uniqueness of $U$. Towards a contradiction, assume that there exists a set $U'$ that is also a minimum vertex cover of $G'$. Recall that by $v$-gadget we mean either a $x$- or a $y$-gadget. Observe first that any minimum vertex cover of $G'$ should include, for any $v$-gadget, either all its blue or all its red vertices. This holds true for both $U$ and $U'$.
Now, consider the assignment $\sigma'$ such that $\sigma'(v)=\true$ if $U'$ contains all the blue vertices of the $v$-gadget and $\sigma'(v)=\false$, if $U'$ contains all the red vertices of the $v$-gadget (where $v$ is a variable either of type $x$ or $y$). 
We claim that if $U'$ exists, then:
\begin{itemize}
    \item $\sigma(x_i)=\sigma'(x_i)$, for all $i \in [n_1]$,
    \item there exists $j \in [n_2]$ such that $\sigma(y_j)\neq \sigma'(y_j)$, and 
    \item $\sigma'$ sets exactly one literal of each clause to $\true$. 
\end{itemize}
If this holds it is a contradiction to the uniqueness of $\sigma$.

We begin by proving the first item. 
Assume that it is not true. W.l.o.g., assume that $\sigma (x_i)=\true$ and $\sigma' (x_i)=\false$, for some $i \in [n_1]$. Then $U$ includes all the blue and $U'$ includes all the red vertices of the $x_i$-gadget. In this case we claim that $|U'|>|U|$. Indeed, since $\sigma (x_i)=\true$, we have that $u_i^5 \notin V(G')$. Therefore, the intersection of $U'$ with the vertices of the $x_i$-gadget must include all its red vertices, all its $w$ vertices, one vertex from the set $\{u_i^2,u_i^3\}$; 
and one vertex from the set $\{u_i^4,u_i^5\}$. 
Therefore, there exists a gadget where $U'$ includes more than the minimum number of vertices specified in \Cref{obs:hardness-minimum}, which means that $|U|<|U'|$.

Next we prove the second item.
Let us assume it is not true, i.e., $\sigma(v)=\sigma'(v)$, for all variables. Since $U$ is assumed different from $U'$, there exists a $c$-gadget $C$ such that $U$ and $U'$ have different intersections with its vertices. 
Let us consider the important $c$-subgraph that corresponds to $C$. 
Since $U$ and $U'$ include the same vertices from all variable gadgets, we have that $U \cap \{ v_{i_1}^\alpha,v_{i_2}^\beta,v_{i_3}^\gamma\} = U' \cap \{ v_{i_1}^\alpha,v_{i_2}^\beta,v_{i_3}^\gamma\} $. Additionally, by the construction of $U$, we know that exactly one of $\{ v_{i_1}^\alpha,v_{i_2}^\beta,v_{i_3}^\gamma\}$ belongs in $U$. W.l.o.g., assume that $U' \cap \{ v_{i_1}^\alpha,v_{i_2}^\beta,v_{i_3}^\gamma\} = U \cap \{ v_{i_1}^\alpha,v_{i_2}^\beta,v_{i_3}^\gamma\} = \{ v_{i_1}^\alpha \}$. Since $U$ and $U'$ are minimum vertex covers, we have that they include both $\ell_c^2$ and $\ell_c^3$. In order for $U'$ to be minimum, it can include at most one extra vertex from $C$. This must be the vertex $w_c$ (as otherwise it needs to include both $\ell_c^1$ and $z_c$). However, this is exactly the same as $U$. This contradicts the assumption that $U$ and $U'$ do not include the same vertices from $C$. Therefore, it must exists $j \in [n_2]$ such that $\sigma(y_j)\neq\sigma'(y_j)$.

It remains to show the third item.
Consider a clause $c$ along with its $c$-gadget and the corresponding important $c$-subgraph. 
Assume that $\sigma'$ does not satisfy $c$. That is, all literals of $c$ are false by $\sigma'$. Then, $U' \cap \{ v_{i_1}^\alpha,v_{i_2}^\beta,v_{i_3}^\gamma\}=\emptyset$ (by the definition of $\sigma'$). Then $U'$ must include all of $\ell_c^1$, $\ell_c^1$, $\ell_c^1$ and one of the $\{w_c,z_c\}$, contradicting the minimality of~$U'$. 
It remains to show that $|U' \cap \{ v_{i_1}^\alpha,v_{i_2}^\beta,v_{i_3}^\gamma\}| < 2$. Assume otherwise and, w.l.o.g., let $v_{i_1}^\alpha,v_{i_2}^\beta \in U'$. Recall that $U'$ cannot contain two consecutive vertices from any variable gadget as otherwise it will be bigger than $U$. Therefore, $v_i^{\alpha-1},v_j^{\beta+1} \notin U'$. Thus, $U'$ does not cover the edge $v_i^{\alpha-1} v_j^{\beta+1}$, a contradiction to $U'$ being a vertex cover. 

To sum up, we have managed to construct two truth-assignments $\sigma$ and $\sigma'$, which are both extensions of $\tau$. This contradicts the uniqueness of $\sigma$. Therefore, $U$ is unique, completing the first direction of the reduction.


For the opposite direction, assume that we have a solution~$S$ of \MUVCShort of order $n_1$. That is, the graph $G'=G-S$ has a unique minimum vertex cover $U$. We will construct a truth-assignment $\tau$ of the variables of type $x$ that is uniquely extended into a truth assignment $\sigma$ of the variables of type $x$ and $y$ that results in $\phi$ being $1$-in-$3$ satisfied. 
First, it follows from \Cref{cl:hardness-n1-deletions} that for every $i \in [n_1]$, the set $S$ contains exactly one vertex $u$ of $x_i$;
in particular, this vertex is either one of $\{u_i^2, u_i^3\}$ or one of $\{u_i^5, u_i^6\}$.
Then, from \Cref{cl:hardness-u1-or-u4} we have that either $u_i^1 \in U$ and, by \Cref{cl:hardness-xy-gadgets-alternate},
the same holds true for all the red vertices of the $x_i$-gadget, or $u_i^4 \in U$ and the same holds true for all the blue vertices of the $x_i$-gadget. 

We consider the truth-assignment $\tau$ such that for each $i\in[n_1]$ we have $\tau(x_i)=\false$ if $\{u_i^2,u_i^3\}\cap S\neq \emptyset$ and $\tau(x_i)=\true$ if $\{u_i^4,u_i^5\}\cap S\neq \emptyset$. We claim that $\tau$ is uniquely extended into the desired truth-assignment $\sigma$. 

Towards a contradiction, assume there exist two different truth-assignments $\sigma$ and $\sigma'$ of the variables of $\phi$, both extending $\tau$, and such that $\phi$ is $1$-in-$3$ satisfied by both of them.
We will prove that this results into two different minimum vertex covers of $G'$. 
Consider now the set $U$ built in the following fashion: 
\begin{enumerate}
    \item For every $c$-gadget, include in $U$ 
    the two vertices from $\{\ell_c^1,\ell_c^2,\ell_c^3\}$ that correspond to literals that do not satisfy the clause $c$ according to $\sigma$.
    \item For every $i\in[n_1]$, include in $U$ 
    all the blue (red resp.) vertices and the vertex $u_i^2$ ($u_i^5$ resp.) if $\sigma(x_i)=\true$ ($\sigma (x_i)=\false$ resp.). Also, for every $i\in[n_2]$, include in $U$ 
    all the blue (red resp.) vertices if $\sigma(y_i)=\true$ ($\sigma (y_i)=\false$ resp.).
    \item Finally we include all $w$ vertices.
\end{enumerate}
The same way, we define $U'$ by replacing $\sigma$ by $\sigma'$.

Since $\sigma$ and $\sigma'$ are assumed different, we have that $U \neq U'$. It now suffices to prove that both $U$ and $U'$ are minimum vertex cover sets of $G'$. It is straightforward to observe that both $U$ and $U'$ are indeed vertex covers of $G'$. Finally, both $U$ and~$U'$ are using the minimum possible number of vertices from each gadget, according to \Cref{obs:hardness-minimum}. Therefore, $G'$ has two different minimum vertex covers which is a contradiction.

It remains to prove that there exists an extension $\sigma$ of the assignment $\tau$. 
Assume that we have the vertex cover $U$ of $G-S$. We set $\sigma(v) = \true$ if $U$ contains all blue vertices from the $v$-gadget and $\sigma(v) = \false$ otherwise. We claim that $\phi$ is $1$-in-$3$ satisfied by $\sigma$. Consider a clause $c$ in $\phi$ and let $\{ v_{i_1}^\alpha,v_{i_2}^\beta,v_{i_3}^\gamma\}$ be the vertices that are adjacent to the vertices of the $c$-gadget.
Recall that, by construction, any  $v\in \{v_{i_1}^\alpha,v_{i_2}^\beta,v_{i_3}^\gamma\}$ is blue if the corresponding variable appears in $c$ positively and red if the corresponding variable appears in $c$ negatively. Therefore, $v \in  \{v_{i_1}^\alpha,v_{i_2}^\beta,v_{i_3}^\gamma\} \cap U$ if and only if the literal that corresponds to $v$ in $c$ has been set true by $\sigma$. By \Cref{cor:hardness-bij} we know that exactly one of the vertices $\{v_{i_1}^\alpha,v_{i_2}^\beta,v_{i_3}^\gamma\}$ will be included in $U$. Thus $\phi$ is $1$-in-$3$ satisfied by $\sigma$. This completes the $\Sigma_2^P$-hardness for
\MUVCShort.

For \PAUVCShort, the proof is almost the same.
We just need to argue that the solution $S$ we constructed (starting from the solution of \UQPSAT) together with the minimum vertex cover $U$ of $G-S$ can give us a minimum vertex cover of $G$. That is, $S\cup U$ is a minimum vertex cover of $G$. This is indeed true as we need to contain the vertices of $S$ in order to cover the $u_i^2u_i^3$ (if $S\cap \{ u_i^2,u_i^3\} \neq \emptyset$) and $u_i^5u_i^6$ (if $S\cap \{ u_i^5,u_i^6\} \neq \emptyset$) edges, and it is always minimum because, for the rest of the gadgets, we are using the minimum number of vertices. 
\end{proof}

\fi

\ifshort
\begin{sketch}
We first argue about \MUVCShort belonging in $\Sigma_2^p$. Recall that a decision problem is in $\Sigma_2^P$ if and only if it can be decided by a non-deterministic Turing machine with the added use of an \NP-oracle. Given a graph $G=(V,E)$ and integer $k$, assume we have guessed a set $S\subseteq V$ such that $|S|\leq k$ and $G'=G-S$ has a unique minimum vertex cover $U$. Then, in order to verify that $U$ is indeed as required, it suffices to solve \PAUVCShort on $G'$ for $k=0$, which can be done in polynomial time with the help of an \NP-oracle~\cite{horiyama2024pauvc}. So, in what follows we focus on proving that \MUVCShort is $\Sigma_2^P$-hard for planar graphs of maximum degree $5$. Observe that slight modifications in our proof can lead to the same hardness result for the same family of graphs for \PAUVCShort.

We present a reduction from \UQPSAT~\cite{demaine18sigma2}. In that problem, we are given a 3CNF formula $\phi$ on the set of variables $\{x_1,\dots,x_{n_1},y_1,\dots,y_{n_2}\}$ and clauses $C=\{c_1,\dots,c_m\}$. We say that variables in $\{x_1,\dots,x_{n_1}\}$ ($\{y_1\dots,y_{n_2}\}$ resp.) are of \textit{type $x$} (\textit{type $y$} resp.). The task is to find a truth-assignment of the variables of type $x$ such that there exists a unique truth-assignment of the variables of type $y$ where each clause of $C$ is satisfied by exactly one of its literals. We will construct a graph $G$ which has an \MUVCShort of order $n_1$ if and only if $\phi$ is a yes-instance of \UQPSAT.

\begin{figure}[!t]
\centering

\subfloat[The $c$-gadget. The three outer vertices correspond to the three literals of $c$.]{
\scalebox{0.6}{
\begin{tikzpicture}[inner sep=0.6mm]

	\node[draw, circle, line width=1pt, fill=white](x1) at (0,0)  [label=left: $\ell^1_c$]{};
    \node[draw, circle, line width=1pt, fill=white](x2) at (5,0)  [label=right: $\ell^2_c$]{};
    \node[draw, circle, line width=1pt, fill=white](x3) at (2.5,4)  [label=above: $\ell^3_c$]{};
    \node[draw, circle, line width=1pt, fill=white](x4) at (2.5,2)  [label=left: $w_c$]{};
    \node[draw, circle, line width=1pt, fill=white](x5) at (2.5,1)  [label=left: $z_c$]{};

	\draw[-, line width=1pt]  (x1) -- (x2);
    \draw[-, line width=1pt]  (x2) -- (x3);
    \draw[-, line width=1pt]  (x3) -- (x1);
    \draw[-, line width=1pt]  (x1) -- (x4);
    \draw[-, line width=1pt]  (x2) -- (x4);
    \draw[-, line width=1pt]  (x3) -- (x4);
    \draw[-, line width=1pt]  (x4) -- (x5);

\end{tikzpicture}
}
}\hspace{5pt}
\subfloat[The $y_i$-gadget.]{
\scalebox{0.6}{
\begin{tikzpicture}[inner sep=0.6mm]

    \node[draw, circle, line width=1pt, fill=red](x1) at (2.75,0.00)  [label=right: $y_i^3$]{};
    \node[draw, circle, line width=1pt, fill=blue](x2) at (2.23,1.62)  [label=right: $y_i^4$]{};
    \node[draw, circle, line width=1pt, fill=red](x3) at (0.85,2.62)  [label=above: $y_i^5$]{};
    \node[draw, circle, line width=1pt, fill=blue](x4) at (-0.85,2.62)  [label=above: $y_i^6$]{};
    \node[draw, circle, line width=1pt, fill=red](x5) at (-2.23,1.62)  [label=left: $y_i^7$]{};
    \node[draw, circle, line width=1pt, fill=red](x7) at (-2.23,-1.62)  []{};
    \node[draw, circle, line width=1pt, fill=blue](x8) at (-0.85,-2.62)  [label=below: $y_i^q$]{};
    \node[draw, circle, line width=1pt, fill=red](x9) at (0.85,-2.62)  [label=below: $y_i^1$]{};
    \node[draw, circle, line width=1pt, fill=blue](x10) at (2.23,-1.62)  [label=right: $y_i^2$]{};

    \draw[-, line width=1pt]  (x1) -- (x2);
    \draw[-, line width=1pt]  (x2) -- (x3);
    \draw[-, line width=1pt]  (x3) -- (x4);
    \draw[-, line width=1pt]  (x4) -- (x5);
    \draw[-, line width=1pt]  (x7) -- (x8);
    \draw[-, line width=1pt]  (x8) -- (x9);
    \draw[-, line width=1pt]  (x9) -- (x10);
    \draw[-, line width=1pt]  (x10) -- (x1);

    \node[draw, circle, line width=1pt, fill=white](y1) at (1.90,0.61)  []{};
    \node[draw, circle, line width=1pt, fill=white](y2) at (1.23,1.64)  []{};
    \node[draw, circle, line width=1pt, fill=white](y3) at (0.00,2.00)  []{};
    \node[draw, circle, line width=1pt, fill=white](y4) at (-1.23,1.64)  [label=right: $w_i^6$]{};
    \node[draw, circle, line width=1pt, fill=white](y7) at (-1.23,-1.64)  []{};
    \node[draw, circle, line width=1pt, fill=white](y8) at (0.00,-2.00)  [label=right: $w_i^q$]{};
    \node[draw, circle, line width=1pt, fill=white](y9) at (1.23,-1.64)  [label=above: $w_i^1$]{};
    \node[draw, circle, line width=1pt, fill=white](y10) at (1.90,-0.61)  []{};

    \node[draw, circle, line width=1pt, fill=white](z1) at (1.19,0.38)  []{};
    \node[draw, circle, line width=1pt, fill=white](z2) at (0.77,1.02)  []{};
    \node[draw, circle, line width=1pt, fill=white](z3) at (0.00,1.25)  []{};
    \node[draw, circle, line width=1pt, fill=white](z4) at (-0.77,1.02)  [label=below: $z_i^6$]{};
    \node[draw, circle, line width=1pt, fill=white](z7) at (-0.77,-1.02)  []{};
    \node[draw, circle, line width=1pt, fill=white](z8) at (0.00,-1.25)  [label=above: $z_i^q$]{};
    \node[draw, circle, line width=1pt, fill=white](z9) at (0.77,-1.02)  [label=above: $z_i^1$]{};
    \node[draw, circle, line width=1pt, fill=white](z10) at (1.19,-0.38)  []{};

    \draw[-, line width=1pt]  (x1) -- (y1);
    \draw[-, line width=1pt]  (x2) -- (y1);
    \draw[-, line width=1pt]  (y1) -- (z1);
    
    \draw[-, line width=1pt]  (x2) -- (y2);
    \draw[-, line width=1pt]  (x3) -- (y2);
    \draw[-, line width=1pt]  (y2) -- (z2);
    
    \draw[-, line width=1pt]  (x3) -- (y3);
    \draw[-, line width=1pt]  (x4) -- (y3);
    \draw[-, line width=1pt]  (y3) -- (z3);

    \draw[-, line width=1pt]  (x4) -- (y4);
    \draw[-, line width=1pt]  (x5) -- (y4);
    \draw[-, line width=1pt]  (y4) -- (z4);

    \draw[-, line width=1pt]  (x7) -- (y7);
    \draw[-, line width=1pt]  (x8) -- (y7);
    \draw[-, line width=1pt]  (y7) -- (z7);

    \draw[-, line width=1pt]  (x8) -- (y8);
    \draw[-, line width=1pt]  (x9) -- (y8);
    \draw[-, line width=1pt]  (y8) -- (z8);

    \draw[-, line width=1pt]  (x9) -- (y9);
    \draw[-, line width=1pt]  (x10) -- (y9);
    \draw[-, line width=1pt]  (y9) -- (z9);

    \draw[-, line width=1pt]  (x10) -- (y10);
    \draw[-, line width=1pt]  (x1) -- (y10);
    \draw[-, line width=1pt]  (y10) -- (z10);

    \path (x5) -- (-2.75,0.00) node [black, midway, sloped] {\Large$\dots$};
    \path (-2.75,0.00) -- (-2.23,-1.62) node [black, midway, sloped] {\Large$\dots$};
    
\end{tikzpicture}
}
}
\subfloat[The $x_i$-gadget.]{
\scalebox{0.65}{
\begin{tikzpicture}[inner sep=0.6mm]

    \node[draw, circle, line width=0.5pt, fill=blue](x1) at (2.75,0.00)  [label=right: $x_i^2$]{};
    \node[draw, circle, line width=0.5pt, fill=red](x2) at (2.23,1.62)  [label=right: $x_i^3$]{};
    \node[draw, circle, line width=0.5pt, fill=blue](x3) at (0.85,2.62)  [label=above right: $x_i^4$]{};
    \node[draw, circle, line width=0.5pt, fill=red](x4) at (-0.85,2.62)  [label=above left: $x_i^5$]{};
    \node[draw, circle, line width=0.5pt, fill=blue](x5) at (-2.23,1.62)  [label=left: $x_i^6$]{};
    \node[draw, circle, line width=0.5pt, fill=blue](x7) at (-2.23,-1.62)  [label=left: $x_i^p$]{};
    \node[draw, circle, line width=2pt, fill=red](x8) at (-0.85,-2.62)  [label=below: $u_i^1$]{};
    \node[draw, circle, line width=2pt, fill=blue](x9) at (0.85,-2.62)  [label=below: $u_i^4$]{};
    \node[draw, circle, line width=0.5pt, fill=red](x10) at (2.23,-1.62)  [label=right: $x_i^1$]{};

    \draw[-, line width=0.5pt]  (x1) -- (x2);
    \draw[-, line width=0.5pt]  (x2) -- (x3);
    \draw[-, line width=0.5pt]  (x3) -- (x4);
    \draw[-, line width=0.5pt]  (x4) -- (x5);
    \draw[-, line width=2pt]  (x7) -- (x8);
    \draw[-, line width=2pt]  (x8) -- (x9);
    \draw[-, line width=2pt]  (x9) -- (x10);
    \draw[-, line width=0.5pt]  (x10) -- (x1);

    \node[draw, circle, line width=0.5pt, fill=white](y1) at (1.90,0.61)  []{};
    \node[draw, circle, line width=0.5pt, fill=white](y2) at (1.23,1.64)  []{};
    \node[draw, circle, line width=0.5pt, fill=white](y3) at (0.00,2.00)  []{};
    \node[draw, circle, line width=0.5pt, fill=white](y4) at (-1.23,1.64)  [label=right: $w_i^5$]{};
    \node[draw, circle, line width=0.5pt, fill=white](y7) at (-1.23,-1.64)  []{};
    \node[draw, circle, line width=2pt, fill=white](y8) at (0.00,-2.00)  [label=right: $w_i$]{};
    \node[draw, circle, line width=2pt, fill=white](y9) at (1.23,-1.64)  [label=above: $w_i^0$]{};
    \node[draw, circle, line width=0.5pt, fill=white](y10) at (1.90,-0.61)  []{};

    \node[draw, circle, line width=0.5pt, fill=white](z1) at (1.19,0.38)  []{};
    \node[draw, circle, line width=0.5pt, fill=white](z2) at (0.77,1.02)  []{};
    \node[draw, circle, line width=0.5pt, fill=white](z3) at (0.00,1.25)  []{};
    \node[draw, circle, line width=0.5pt, fill=white](z4) at (-0.77,1.02)  [label=below: $z_i^5$]{};
    \node[draw, circle, line width=0.5pt, fill=white](z7) at (-0.77,-1.02)  []{};
    \node[draw, circle, line width=2pt, fill=white](z8) at (0.00,-1.25)  [label=above: $z_i$]{};
    \node[draw, circle, line width=2pt, fill=white](z9) at (0.77,-1.02)  [label=above: $z_i^0$]{};
    \node[draw, circle, line width=0.5pt, fill=white](z10) at (1.19,-0.38)  []{};

    \draw[-, line width=0.5pt]  (x1) -- (y1);
    \draw[-, line width=0.5pt]  (x2) -- (y1);
    \draw[-, line width=0.5pt]  (y1) -- (z1);
    
    \draw[-, line width=0.5pt]  (x2) -- (y2);
    \draw[-, line width=0.5pt]  (x3) -- (y2);
    \draw[-, line width=0.5pt]  (y2) -- (z2);
    
    \draw[-, line width=0.5pt]  (x3) -- (y3);
    \draw[-, line width=0.5pt]  (x4) -- (y3);
    \draw[-, line width=0.5pt]  (y3) -- (z3);

    \draw[-, line width=0.5pt]  (x4) -- (y4);
    \draw[-, line width=0.5pt]  (x5) -- (y4);
    \draw[-, line width=0.5pt]  (y4) -- (z4);

    \draw[-, line width=0.5pt]  (x7) -- (y7);
    \draw[-, line width=2pt]  (x8) -- (y7);
    \draw[-, line width=0.5pt]  (y7) -- (z7);

    \draw[-, line width=2pt]  (x8) -- (y8);
    \draw[-, line width=2pt]  (x9) -- (y8);
    \draw[-, line width=2pt]  (y8) -- (z8);

    \draw[-, line width=2pt]  (x9) -- (y9);
    \draw[-, line width=2pt]  (x10) -- (y9);
    \draw[-, line width=2pt]  (y9) -- (z9);

    \draw[-, line width=0.5pt]  (x10) -- (y10);
    \draw[-, line width=0.5pt]  (x1) -- (y10);
    \draw[-, line width=0.5pt]  (y10) -- (z10);

    \path (x5) -- (-2.75,0.00) node [black, midway, sloped] {\Large$\dots$};
    \path (-2.75,0.00) -- (-2.23,-1.62) node [black, midway, sloped] {\Large$\dots$};

    \node[draw, circle, line width=2pt, fill=white](u2) at (-1.85,-2.62)  [label=below: $u_i^2$]{};
    \node[draw, circle, line width=2pt, fill=white](u3) at (-2.85,-2.62)  [label=below: $u_i^3$]{};
    \node[draw, circle, line width=2pt, fill=white](u5) at (1.85,-2.62)  [label=below: $u_i^5$]{};
    \node[draw, circle, line width=2pt, fill=white](u6) at (2.85,-2.62)  [label=below: $u_i^6$]{};

    \draw[-, line width=2pt]  (x8) -- (u2);
    \draw[-, line width=2pt]  (u2) -- (u3);
    \draw[-, line width=2pt]  (x9) -- (u5);
    \draw[-, line width=2pt]  (u5) -- (u6);
    
\end{tikzpicture}
}
}
\caption{The gadgets used in the proof of \Cref{thm:sigma_2P}. The inner colored vertices of the first appearance of $y_i$ are $y_i^2$ and $y_i^3$.}\label{fig:hardness-gadgets}
\end{figure}

To construct the graph $G$, we first build an auxiliary graph $H$ as follows. We define a \textit{variable} (\textit{clause} resp.) vertex for each variable (clause resp.) in $\phi$. We then add an edge between a variable and a clause vertex if the corresponding variable appears in the corresponding clause; let $H$ be the resulting graph. Observe that $H$ is a planar graph (due to the ``planarity'' of $\phi$). We now go from $H$ to $G$. First, replace each clause vertex $c$ of $H$ by a copy of the $c$-gadget shown in Figure~\ref{fig:hardness-gadgets}(a). Then, replace each variable vertex by either a $y$-gadget or a $x$-gadget, shown in Figures~\ref{fig:hardness-gadgets}(b) and~\ref{fig:hardness-gadgets}(c) respectively, according to the type of the corresponding variable in $\phi$. The $x$- and $y$-gadgets are similar. Consider the $y_i$-gadget, i.e., the gadget corresponding to the variable $y_i$ that appears in $\phi$. It will have four colored vertices (see Figure~\ref{fig:hardness-gadgets}(b)) for each appearance of $y_i$ in $\phi$; two vertices red and two blue. That is, the index $q$ that appears in Figure~\ref{fig:hardness-gadgets}(b) is equal to four times the number of appearances of $y_i$ in $\phi$. 
Moreover, going in an anti-clockwise fashion, the colored vertices that correspond to each appearance of the $y_i$ variable will alternate, starting with a red and finishing with a blue; 
for the $j$-th appearance of variable $y_i$, we say that vertices $y^{4(j-1)+2}_i$ and $y^{4(j-1)+3}_i$ denote its \emph{inner colored vertices} 
(see Figure~\ref{fig:hardness-gadgets}(b) for an example). The inner blue (red resp.) vertex included in the gadget for an appearance of the variable $y_i$ will model that this variable is set to false (true resp.), while the other inner colored vertices will serve as points of additional connection between the gadgets. 
The same holds true for the $x_i$-gadget which, in addition, contains an extra set of colored vertices together with two pending paths, illustrated by the bold vertices and edges in Figure~\ref{fig:hardness-gadgets}(c). 

At this stage, all the original edges of $H$ are removed. We will now add the new edges between the gadgets through a two-step \textit{edge-adding procedure}. First we deal with the edges connecting the $c$-gadgets to the $x$- and/or $y$-gadgets.  
Consider a clause $c$ and its corresponding $c$-gadget and assume that, in the graph $H$ there was an edge between the clause vertex $c$ and the variable vertex of type $x$. 
Moreover, let $x$ be the $j$-th appearance of the $x$ variable in $\phi$ (according to a carefully chosen ordering of the variables. 
Then, going anti-clockwise, we locate the $j$-th quadruple $Q$ of colored vertices of the $x$-gadget. We then add an edge between any vertex of the $c$-gadget that is currently of degree three and the blue (red resp.) inner vertex of $Q$ if this appearance of $x$ is positive (negative resp.). 
We repeat the same procedure for all the edges of $H$ that are between the clause vertex $c$ and any variable vertex of type $y$. Once we are done with the clause vertex $c$, we move on and repeat this procedure for every clause vertex of $H$. This completes the first step of the edge-adding procedure. 

In the second step, we connect the $x$- and/or $y$-gadgets whose corresponding variables appear in a common clause. To ease the exhibition, and since we treat these gadgets in the same way, we will assume we only have to deal with $x$-gadgets. So, consider a clause gadget $c$, with the corresponding clause being comprised of three literals on the variables $x_{i_1}$, $x_{i_2}$, and $x_{i_3}$ (for some $i_1,i_2,i_3 \in [n_1])$.
According to the construction of $G$ up to this point, there are 
\begin{itemize}
    \item a $c$-gadget, containing the vertices $\ell_c^1$, $\ell_c^2$, and $\ell_c^3$, and
    \item the $x_{i_1}$, $x_{i_2}$, and $x_{i_3}$-gadgets, containing some inner colored vertices $x_{i_1}^\alpha$, $x_{i_2}^\beta$, and $x_{i_3}^\gamma$ respectively
    such that $G$ contains the edges $\ell_c^1 x_{i_1}^\alpha, \ell_c^2 x_{i_2}^\beta$, and $\ell_c^3 x_{i_3}^\gamma$.
\end{itemize}
Note that since $x_{i_1}^\alpha$, $x_{i_2}^\beta$, and $x_{i_3}^\gamma$ are inner colored vertices, and according to the first step of the edge-adding procedure, the two colored neighbors of these vertices that lie in the $x_{i_1}$, $x_{i_2}$, and $x_{i_3}$-gadgets respectively are currently of degree $4$. The second step of the edge-adding procedure consists in adding the edges $x_{i_3}^{\gamma-1} x_{i_1}^{\alpha+1}$, $x_{i_1}^{\alpha-1} x_{i_2}^{\beta+1}$, and $x_{i_2}^{\beta-1} x_{i_3}^{\gamma+1}$. This step is repeated for every clause gadget $c$.  

The termination of the edge-adding procedure marks the end of the construction of $G$. Observe that by carefully choosing the ordering used in the first step and bending the edges added in the second step of the edge-adding procedure to closely shadow the preexisting edges of $G$, and due to the planarity of $H$, we can also ensure the planarity of~$G$.    

We are now ready to sketch our reduction. Assume that we have a yes-instance of \UQPSAT testified by the truth-assignment $\sigma$. We will show that the constructed graph $G=(V,E)$ has a \MUVCShort of order exactly~$n_1$. Due to the $x$-gadgets, it suffices to provide a set $S\subseteq V$ such that $G'=G-S$ has a unique minimum vertex cover~$U$, and $|S|=n_1$. We proceed as follows: for each $i\in[n_1]$, if $\tau(x_i)=\true$ ($\tau(x_i)=\false$ resp.), we include $u_i^5$ ($u_i^2$ resp.) in $S$. It follows that $|S|=n_1$. It remains to show that $G'=G-S$ has a unique minimum vertex cover $U$. Observe that for every variable gadget, if $U$ contains a blue (red resp.) vertex, then it should also contain all other blue (red resp.) vertices, and no red (blue resp.) vertex of the same gadget. In particular, $U$ will contain exactly all the $w$s, the blue (red resp.) vertices of the $x$-gadgets and $y$-gadgets that are set to true (false resp.) and two out the three outer vertices of each $c$-gadget whose corresponding literals do not satisfy $c$. The uniqueness and minimiality of this $U$ is guaranteed by the formula $\phi$ being $1$-in-$3$ satisfied by $\sigma$ and vice versa. 
\end{sketch}
\fi

\bibliographystyle{abbrv}
\bibliography{ref}

\end{document}